\documentclass[]{article}
\usepackage{sectsty}
\usepackage{xcolor}
\usepackage{soul}
\usepackage{cancel}
\usepackage{amsmath}
\usepackage{amsthm}
\usepackage{amssymb}
\usepackage{units}
\PassOptionsToPackage{normalem}{ulem}
\usepackage{ulem}
\usepackage[margin=1.2in]{geometry}
\usepackage[unicode=true,pdfusetitle,
bookmarks=true,bookmarksnumbered=true,bookmarksopen=false,
breaklinks=false,pdfborder={0 0 0},pdfborderstyle={},backref=false,colorlinks=true,psdextra]
{hyperref}
\usepackage{algpseudocode}
\usepackage{algorithm}
\usepackage{bm}
\usepackage[all,cmtip]{xy}
\usepackage{natbib}
\bibliographystyle{plainnat}
\usepackage{graphicx}
\usepackage{wrapfig}

\newtheorem{thm}{Theorem}

\newtheorem{lemma}[thm]{Lemma}
\newtheorem{prop}[thm]{Proposition}

\newtheorem{conj}[thm]{Conjecture}
\newtheorem{cor}[thm]{Corollary}

\newcommand*{\mat}[1]{{\footnotesize \left(\begin{matrix} #1\end{matrix}\right)}}

\algnewcommand\algorithmicinput{\textbf{Input:}}
\algnewcommand\Input{\item[\algorithmicinput]}
\algnewcommand\algorithmicoutput{\textbf{Output:}}
\algnewcommand\Output{\item[\algorithmicoutput]}
\algnewcommand\Continue{\textbf{Continue}}

\newcommand*{\plog}{\partial \log}

\newcommand*{\dbeta}[1]{\frac{d {#1}}{d\beta}}
\newcommand*{\pbeta}[1]{\frac{\partial {#1}}{\partial\beta}}
\newcommand*{\bb}[1]{\mathbb{#1}}
\newcommand*{\norm}[1]{\left\| #1 \right\|}
\DeclareMathOperator*{\supp}{supp}
\DeclareMathOperator*{\eig}{eig}

\ifdefined\colorcalcuations
\newcommand*{\y}{{\color{violet} y}}
\newcommand*{\yp}{{\color{blue} y'}}
\newcommand*{\ypp}{{\color{magenta} y''}}
\newcommand*{\xhat}{{\color{cyan} \hat{x}}}
\newcommand*{\xhatp}{{\color{red} \hat{x}'}}
\newcommand*{\xhatpp}{{\color{teal} \hat{x}''}}
\newcommand*{\x}{{\color{purple} x}}
\newcommand*{\xp}{{\color{olive} x'}}
\newcommand*{\xpp}{{\color{gray} x''}}
\else
\newcommand*{\y}{y}
\newcommand*{\yp}{y'}
\newcommand*{\ypp}{y''}
\newcommand*{\xhat}{\hat{x}}
\newcommand*{\xhatp}{\hat{x}'}
\newcommand*{\xhatpp}{\hat{x}''}
\newcommand*{\x}{x}
\newcommand*{\xp}{x'}
\newcommand*{\xpp}{x''}
\fi

\newcommand*{\compilefigs}{}

\title{The Information Bottleneck's Ordinary Differential Equation: First-Order Root-Tracking for the IB}
\author{Shlomi Agmon$^1$\thanks{This work was partially supported by the ISF under grant 1641/21.}}

\date{%
	$^1$ Formerly with the School of Computer Science and Engineering, \\
	The Hebrew University of Jerusalem,
	Jerusalem,
	Israel\\
	Email: shlomi.agmon@mail.huji.ac.il \\
	[2ex]%
	\today
}

\begin{document}

\maketitle

\begin{abstract}
	The Information Bottleneck (IB) is a method of lossy compression of relevant information. Its rate-distortion (RD) curve describes the fundamental tradeoff between input compression and the preservation of relevant information embedded in the input. However, it conceals the underlying dynamics of optimal input encodings. We argue that these typically follow a piecewise smooth trajectory when input information is being compressed, as recently shown in RD. These smooth dynamics are interrupted when an optimal encoding changes qualitatively, at a \textit{bifurcation}. By leveraging the IB's intimate relations with RD, we provide substantial insights into its solution structure, highlighting caveats in its finite-dimensional treatments. Sub-optimal solutions are seen to collide or exchange optimality at its bifurcations.
	
	Despite the acceptance of the IB and its applications, there are surprisingly few techniques to solve it numerically, even for finite problems whose distribution is known. 
	We derive anew the IB's first-order Ordinary Differential Equation, which describes the dynamics underlying its optimal tradeoff curve. 
	To exploit these dynamics, we not only detect IB bifurcations but also identify their type in order to handle them accordingly. 
	Rather than approaching the IB's optimal curve from sub-optimal directions, the latter allows us to follow a solution's trajectory along the optimal curve under mild assumptions. 
	We thereby translate an understanding of IB bifurcations into a surprisingly accurate numerical algorithm.
\end{abstract}

\renewcommand*{\thefootnote}{\fnsymbol{footnote}}
\footnotetext[0]{ 
	The author is grateful to Or Ordentlich for helpful conversations and for his support, and to Noam and Dafna Agmon for their relentless support throughout this journey. The author thanks the late Naftali Tishby for insightful conversations and Etam Benger for his involvement during the early stages of this work. }
\renewcommand*{\thefootnote}{\arabic{footnote}}
\setcounter{footnote}{0}

\medskip 
\textit{Keywords}: 
	the Information Bottleneck, 
	Bifurcations, 
	Ordinary Differential Equation, 
	Numerical Approximation.

\section{Introduction}
\label{sec:introduction}

The Information Bottleneck (IB) describes the fundamental tradeoff between the compression of information on an \textit{input} $X$ to the preservation of relevant information on a hidden \textit{reference} variable $Y$. 
Formally, let $X$ and $Y$ be random variables defined respectively on finite \textit{source} and \textit{label alphabets} $\mathcal{X}$ and $\mathcal{Y}$, and let $p_{Y|X}(\y|\x)p_X(\x)$ be their joint probability distribution\footnote{ Without loss of generality, we may assume that $p(\x) > 0$ for every $x\in \mathcal{X}$, and so $p_{Y|X}$ is well-defined.}, or $p(\y|\x)p(\x)$ for short. 
One seeks \citep{tishby1999} to maximize the information $I(Y; \hat{X})$ over all Markov chains $Y \longleftrightarrow X \longleftrightarrow \hat{X}$, subject to a constraint on the mutual information $I(X; \hat{X}) := \bb{E}_{p(\xhat|\x) p(\x)} \log \frac{p(\xhat|\x)}{p(\xhat)}$,
\begin{equation}			\label{eq:IB-curve-def}
	I_Y(I_X) := \max_{p(\xhat|\x)} \left\{ I(Y; \hat{X}): \; I(X; \hat{X}) \leq I_X \right\} \;.
\end{equation}
The latter maximization is over conditional probability distributions or \textit{encoders} $p(\xhat|\x)$.
The graph of $I_Y(I_X)$ is the \textit{IB curve}. 
We write $T := |\hat{\cal{X}}|$, for a codebook or \textit{representation alphabet} $\hat{\cal{X}}$. 
An encoder $p(\xhat|\x)$ which achieves the maximum in \eqref{eq:IB-curve-def} is \textit{IB optimal} or simply \textit{optimal}. 

Written in a Lagrangian\footnote{ Normalization constraints are omitted for clarity.} formulation $\mathcal{L} := I(X; \hat{X}) - \beta \; I(Y; \hat{X})$ with $\beta > 0$, \citep{tishby1999} showed that a necessary condition for extrema in \eqref{eq:IB-curve-def} is that the \textit{IB Equations} hold. Namely,
\begin{align}
	p(\xhat|\x)	&= \frac{p(\xhat)}{Z(\x,\beta)} \exp{\left\{ -\beta \; D_{KL}\big[p(\y|\x) || p(\y|\xhat)\big] \right\}} \;,		\label{eq:IB-eq-encoder} \\
	p(\y|\xhat) 	&= \sum_{x} p(\y|\x) p(\x|\xhat) \;, \quad \text{and} \label{eq:IB-eq-decoder} \\
	p(\xhat)		&= \sum_{x} p(\xhat|\x) p(\x) \;.			\label{eq:IB-eq-marginal}
\end{align}
In these, $Z(\x, \beta) := \sum_{\hat{x}} p(\xhat) \exp{\left\{ -\beta D_{KL}\left[p(\y|\x) || p(\y|\xhat)\right] \right\}}$ is the \textit{partition function}, $p(\x|\xhat)$ in \eqref{eq:IB-eq-decoder} is defined by the Bayes rule $\nicefrac{p(\xhat|\x) p(\x)}{p(\xhat)}$, and $D_{KL}$ is the Kullback-Leibler divergence, $D_{KL}[p||q] := \sum_i p(i) \log \nicefrac{p(i)}{q(i)}$. 
The IB Equations \eqref{eq:IB-eq-encoder}-\eqref{eq:IB-eq-marginal} are a necessary condition for an extremum of $\mathcal{L}$ also when it is considered as a functional in three independent families of normalized distributions $\{p(\xhat|\x)\}$, $\{p(\y|\xhat)\}$ and $\{p(\xhat)\}$, \cite[Section 3.3]{tishby1999}, rather than in $\{p(\xhat|\x)\}$ alone.
While satisfying them is necessary to achieve the curve \eqref{eq:IB-curve-def}, it is not sufficient. 
Indeed, Equations \eqref{eq:IB-eq-encoder}-\eqref{eq:IB-eq-marginal} have solutions that do not achieve curve \eqref{eq:IB-curve-def}, and so are \textit{sub-optimal}. 
This results in sub-optimal IB curves, which intersect or \textit{bifurcate} as the multiplier $\beta$ varies (see Section 3.4 there). 

Iterating over the IB Equations \eqref{eq:IB-eq-encoder}-\eqref{eq:IB-eq-marginal} is essentially Blahut-Arimoto's algorithm variant for the IB (BA-IB) due to \citep{tishby1999}, brought here as Algorithm \ref{algo:BA-IB}. 
While the minimization problem \eqref{eq:IB-curve-def} can be solved exactly in special cases, \citep[Section IV]{witsenhausen1975conditional}, exact solutions of an arbitrary finite IB problem whose distribution is known are usually obtained nowadays using BA-IB. 
cf., \cite[Section 3]{zaidi2020information} for a survey on other computation approaches. 
We write $BA_\beta$ for a single iteration of the BA-IB Algorithm \ref{algo:BA-IB}.
Since $BA_\beta$ encodes an iteration over the IB Equations \eqref{eq:IB-eq-encoder}-\eqref{eq:IB-eq-marginal}, then an encoder $p(\xhat|\x)$ is its fixed point, $BA_\beta\left[p(\xhat|\x)\right] = p(\xhat|\x)$, if and only if it satisfies the IB Equations.
Or equivalently, if $p(\xhat|\x)$ is a root of the \textit{IB operator}
\begin{equation}			\label{eq:IB-operator-def}
	F := Id - BA_\beta \;,
\end{equation}
in a manner similar to \cite{agmon2021critical}. We shall then call it an \textit{IB root}. 
\citeauthor{agmon2021critical} used a similar formulation of rate-distortion (RD) and its relations in \citep{bachrach2003} to the IB, to show that BA-IB suffers from \textit{critical slowing down} near \textit{critical points}, where the marginal $p(\xhat)$ of a representor $\xhat$ in an optimal encoder vanishes gradually. 
That is, the number of BA-IB iterations required till convergence increases dramatically as one approaches such points.

Formulating fixed points of an iterative algorithm as operator roots can also be leveraged for computational purposes in a constrained-optimization problem, as noted recently by \cite{agmon2022RTRD} for RD. 
Indeed, let $F(\cdot, \beta)$ be a differentiable operator on $\bb{R}^n$ for some $n > 0$, $F: \bb{R}^{n} \times \bb{R} \to \bb{R}^{n}$, where $\beta$ is a (real) constraint parameter. 
Suppose now that $(\bm{x}, \beta)$ is a root of $F$, 
\begin{equation}		\label{eq:root-of-functional-eq-implicit}
	F(\bm{x}, \beta) = \bm{0} \;,
\end{equation}
such that $\bm{x} = \bm{x}(\beta)$ is a differentiable function of $\beta$.
Write $D_{\bm{x}} F := \big( \tfrac{\partial}{\partial x_j}F_i \big)_{i, j}$ for its Jacobian matrix, and $D_\beta F := \big( \tfrac{\partial}{\partial \beta} F_i \big)_i$ for its vector of partial derivatives with respect to $\beta$. The point $(\bm{x}, \beta)$ of evaluation is omitted whenever understood. 
As is often discussed along with the Implicit Function Theorem, e.g., \citep{de2014implicit}, applying the multivariate chain rule to $F\left(\bm{x}(\beta), \beta\right)$ in \eqref{eq:root-of-functional-eq-implicit} yields an implicit ordinary differential equation (ODE)
\begin{equation}			\label{eq:implicit-beta-ODE}
	D_{\bm{x}} F \; \tfrac{d\bm{x}}{d\beta} = -D_\beta F \;,
\end{equation}
for the roots of $F$. 
Plugging in explicit expressions for the first-order derivative tensors $D_{\bm{x}} F$ and $D_\beta F$, one can specialize \eqref{eq:implicit-beta-ODE} to a particular setting, which allows one to compute the \textit{implicit derivatives} $\tfrac{d\bm{x}}{d\beta}$ numerically. 
While \cite{agmon2022RTRD} discovered the \textit{RD ODE} this way, they showed that \eqref{eq:implicit-beta-ODE} can be generalized to arbitrary order under suitable differentiability assumptions. 
Namely, they showed that the derivatives $\tfrac{d^l\bm{x}}{d\beta^l}$ implied by $F = \bm{0}$ \eqref{eq:root-of-functional-eq-implicit} can be computed via a recursive formula, for an arbitrary-order $l > 0$. 
By specializing this with the higher derivatives of Blahut's algorithm \citep{blahut1972}, they obtained a family of numerical algorithms for following the path of an optimal RD root (Part I there). 

In this work, we specialize the implicit ODE \eqref{eq:implicit-beta-ODE} to the IB. 
Namely, we plug into \eqref{eq:implicit-beta-ODE} the first-order derivatives of the IB operator $Id - BA_\beta$ \eqref{eq:IB-operator-def} to obtain the \textit{IB ODE}, and then use it to reconstruct the path of an optimal IB root, in a manner similar to \cite{agmon2022RTRD}. 
This is not to be confused with the gradient flow (of arbitrary encoders) towards an optimal root at a fixed $\beta$ value, described at \cite[Equation (6)]{gedeon2012mathematical} by an ODE, which is a different optimization approach. 
In contrast, the implicit Equation \eqref{eq:implicit-beta-ODE} describes how a root evolves \textit{with} $\beta$. 
So, in principle, one may compute an optimal IB root once and then follow its evolution along the IB curve \eqref{eq:IB-curve-def}. 
While the discovery of the IB ODE is due to \cite{agmon2022thesis}, we derive it here anew in a form that is better suited for computational (and other) purposes, especially when there are fewer possible labels $\mathcal{Y}$ than input symbols $\mathcal{X}$, as often is the case. 
To that end, we consider several natural choices of a coordinate system for the IB in Section \ref{sec:coords-exchange-for-the-IB} and compare their properties. 
This allows us to make an apt choice for the ODE's variable $\bm{x}$ in \eqref{eq:implicit-beta-ODE}. 
In Section \ref{sec:IB-ODE}, we present the IB ODE in these coordinates (Theorem \ref{thm:IB-ODE}).
This enables one to numerically compute the first-order implicit derivatives at an IB root, if it can be written as a differentiable function in $\beta$. 
So long that an optimal root remains differentiable, a simple way to reconstruct its trajectory is by taking small steps at a direction determined by the IB ODE. This is \textit{Euler's method} for the IB. 
The error accumulated by Euler's method from the true solution path is roughly proportional to the step size, when small enough. 
For comparison, reverse deterministic annealing \citep{rose1990deterministic} with BA-IB is nowadays common for computing IB roots. 
The dependence of its error on the step size is roughly the same as in Euler's method. 
This is discussed in Section \ref{sec:euler-method}, where we combine Euler's method with BA-IB to obtain a modified numerical method whose error decreases at a faster rate than either of the above.

However, the differentiability of optimal IB roots breaks where the solution changes qualitatively. 
Such a point is often called a phase transition in the IB literature, or a \textit{bifurcation} --- namely, a point where there is a change in the problem's number of solutions. 
e.g., \cite[Section 2.3]{kuznetsov2004elements} for basic definitions. 
As noted already by \citeauthor{tishby1999}, their existence in the IB stems from restricting the cardinality of the representation alphabet $\hat{\mathcal{X}}$. 
Indeed, the gap between achieving the IB curve \eqref{eq:IB-curve-def} to merely satisfying the fixed-point equations \eqref{eq:IB-eq-encoder}-\eqref{eq:IB-eq-marginal} lies in understanding the solution structure of the IB operator \eqref{eq:IB-operator-def}, or equivalently its bifurcations. 
While IB bifurcations were analyzed in several works, including \citep{gedeon2012mathematical, zaslavsky2019thesis, ngampruetikorn2021perturbation} and others, little is known about the practical value of understanding them. \citep{wu2020learnability, wu2020phase} showed that they correspond to the onset of learning new classes, while \citep{agmon2021critical} showed that they inflict a hefty computational cost to BA-IB. 
Following \cite{agmon2022RTRD}, this work demonstrates that understanding bifurcations can be translated to a new numerical algorithm to solve the IB. 
To that end, merely detecting a bifurcation along a root's path does not suffice. 
But rather, it is also necessary to identify its type, as this allows one to handle the bifurcation accordingly. 
One can then continue following the path dictated by the IB ODE.

Almost all of the literature on IB bifurcations is based on a perturbative approach, in a manner similar to \cite[Section IV.C]{rose1994mapping}. 
That is, suppose that the first variation\footnote{ For finite IB problems, condition \eqref{eq:first-variational-deriv-of-Lagrangian} boils down to requiring that the gradient of $\mathcal{L}$ vanishes, while condition \eqref{eq:second-variational-deriv-of-Lagrangian} is equivalent to requiring that its Hessian matrix has a non-trivial kernel, as both are conditions on directional derivatives. e.g., \cite{giaquinta2004calculus}. } 
\begin{equation}			\label{eq:first-variational-deriv-of-Lagrangian}
	\frac{\partial}{\partial \epsilon} \mathcal{L}\left[ p(\xhat|\x) + \epsilon \Delta p(\xhat|\x); \beta \right]\Big\rvert_{\epsilon=0} 
\end{equation}
of the IB Lagrangian $\mathcal{L}$ vanishes, for every perturbation $\Delta p(\xhat|\x)$. 
This condition is necessary for extremality and implies the IB Equations \eqref{eq:IB-eq-encoder}-\eqref{eq:IB-eq-marginal}, \citep{tishby1999}. 
Then, $\big(p(\xhat|\x), \beta \big)$ is said to be a \textit{phase transition} only if there exists a particular direction $\Delta q(\xhat|\x)$ at which $p(\xhat|\x)$ can be perturbed without affecting the Lagrangian's value to second order,
\begin{equation}			\label{eq:second-variational-deriv-of-Lagrangian}
	\frac{\partial^2}{\partial \epsilon^2} \mathcal{L}\left[ p(\xhat|\x) + \epsilon \Delta q(\xhat|\x); \beta \right]\Big\rvert_{\epsilon=0} = 0 \;.
\end{equation}
\cite{gedeon2012mathematical, wu2020learnability, wu2020phase} and \cite{ngampruetikorn2021perturbation} take such an approach. 
\cite{zaslavsky2019thesis} similarly analyzes one type of IB bifurcation. 

While a perturbative approach is common in analyzing phase transitions, it has several shortcomings when applied to the IB, as noted by \cite{agmon2022thesis}. 
First, the IB's Lagrangian $\mathcal{L}$ is constant on a linear manifold of encoders $p(\xhat|\x)$, \cite[Section 3.1]{gedeon2012mathematical}, and so condition \eqref{eq:second-variational-deriv-of-Lagrangian} leads to false-detections. 
While this was considered there and in its sequel \citep{parker2022symmetry_breaking} by giving subtle conditions on the nullity of the second variation in \eqref{eq:second-variational-deriv-of-Lagrangian}, in practice it is difficult to tell whether a particular direction $\Delta q(\xhat|\x)$ is in the kernel due to a bifurcation or due to other reasons, as they note. 
Second, note that a finite IB problem can be written as an \textit{infinite} RD problem, \citep{harremoes2007information}. 
As discussed in Section \ref{sec:IB-bifurcations}, representing an IB root by a finite-dimensional vector leads to inherent subtleties in its computation. 
Among other things, these may well result in a bifurcation \textit{not} being detectable under certain circumstances (Section \ref{sub:discontinuous-IB-bifs}). 
To our understanding, many of the difficulties that hindered the understanding of IB bifurcations throughout the years are, in fact, artifacts of finite dimensionality. 
Third, conditions \eqref{eq:first-variational-deriv-of-Lagrangian}-\eqref{eq:second-variational-deriv-of-Lagrangian} do not suffice to reveal the type of the bifurcation, information which is necessary for handling it when following a root's path. 
While \cite[Section 2.9]{parker2022symmetry_breaking} give conditions for identifying the type, these partially agree with our findings and do not suggest a straightforward way for handling a bifurcation.

Rather than imposing conditions on the scalar functional $\mathcal{L}$, our approach to IB bifurcations follows that of \citep{agmon2022RTRD} for RD. 
That is, we rely on the fact that the IB's local extrema are fixed points of an iterative algorithm, and so they also satisfy a vector equation $F = \bm{0}$ \eqref{eq:root-of-functional-eq-implicit}. 
We shall now consider a toy problem to motivate our approach. 
``\textit{Bifurcation Theory can be briefly described by the investigation of problem \eqref{eq:root-of-functional-eq-implicit} in a neighborhood of a root where $D_{\bm{x}} F$ is singular}'', \citep{kielhofer2011bifurcation}. 
Indeed, recall that if $D_{\bm{x}} F$ is non-singular at a root $(\bm{x}_0, \beta_0)$, then by the Implicit Function Theorem (IFT), there exists a function $\bm{x}(\beta)$ through the root, $\bm{x}(\beta_0) = \bm{x}_0$, which satisfies $F\big(\bm{x}(\beta), \beta\big) = \bm{0}$ \eqref{eq:root-of-functional-eq-implicit} at the vicinity of $\beta_0$. 
The function $\bm{x}(\beta)$ is then not only unique at some neighborhood of $(\bm{x}_0, \beta_0)$, but further, $\bm{x}(\beta)$ inherits the differentiability properties of $F$, \cite[I.1.7]{kielhofer2011bifurcation}. 
In particular, if the operator $F$ is real-analytic in its variables --- as with the IB operator \eqref{eq:IB-operator-def} --- then so is its root $\bm{x}(\beta)$. 
While a bifurcation can occur only if $D_{\bm{x}} F$ is singular, singularity is not sufficient for a bifurcation to occur. 
For example, the roots of the operator 
\begin{equation}						\label{eq:toy-example-operator-def}
	F(x,y; \beta) := (x - \beta, 0) 
\end{equation}
on $\bb{R}^2$ consist of the vertical line $x = \beta$, $\{(\beta, y): \; y \in \bb{R}\}$, for every $\beta \in \bb{R}$. 
For a fixed $y$, each such root is real-analytic in $\beta$. 
However, one cannot deduce this directly from the IFT, as the Jacobian $\mat{1 & 0 \\ 0 & 0}$ of $F$ \eqref{eq:toy-example-operator-def} is always singular. 
Note, however, that in this particular example, the $x$ coordinate alone suffices to describe the problem's dynamics, and so its $y$ coordinate is redundant. 
One can ignore the $y$ coordinate by considering the ``reduction'' $\tilde{F}(x; \beta) := x - \beta$ of $F$ to $\bb{R}^1$. 
Further, discarding $y$ also removes or \textit{mods-out} the direction $\mat{0\\1}$ from $\ker D_{\bm{x}} F$, which does not pertain to a bifurcation in this case. 
This results in the non-singular Jacobian matrix $\mat{1}$ of $\tilde{F}$, and so it is now possible to invoke the IFT on the reduced problem. 
The root guaranteed by the IFT can always be considered in $\bb{R}^2$ by putting back a redundant $y$ coordinate at some fixed value. 
\cite{agmon2022RTRD} used a similarly defined \textit{reduction} of finite RD problems to show that their dynamics are piecewise real-analytic under mild assumptions. 

The intuition behind our approach is similar to \cite[Section III]{harremoes2007information}, who observed that ``\textit{in the IB one can also get rid of irrelevant variables \underline{within} the model}''.
Nevertheless, the details differ. 
Mathematically, we consider\footnote{ This formulation can be made precise by using the \textit{tangent space} of a differentiable manifold, e.g., \cite[Section 3]{lee2013manifolds}. However, that shall not be necessary. } the \textit{quotient} $\nicefrac{V}{W}$ of a vector space $V$ by its subspace $W$. 
Elements of $V$ are identified in the quotient if they differ by an element of $W$: $v_1 \sim v_2 \Leftrightarrow v_1 - v_2 \in W$, for $v_1, v_2\in V$. 
This way, one ``mods-out'' $W$, collapsing it to a single point in the quotient vector space $\nicefrac{V}{W}$. 
The resulting problem is smaller and so easier to handle, whether for theoretical or practical purposes. 
This is how the one-dimensional vector space $\ker D_{\bm{x}} F$ in our toy example \eqref{eq:toy-example-operator-def} was reduced to the trivial $\ker D_{\bm{x}} \tilde{F} = \{\bm{0}\}$. 
However, one needs to understand the solution structure, for example, to ensure that the directions in $W$ are not due to a bifurcation. 
We note in passing that $\nicefrac{V}{W}$ has a simple geometric interpretation as the translations of $W$ in $V$, in a manner reminiscent of its better-known counterparts of quotient groups and rings. e.g., \cite[Section 10.2]{dummit2004abstract_alg}. 
To keep things simple, however, we shall not use quotients explicitly. 
Instead, the reader may simply consider the sequel as a removal of redundant coordinates. 
For, we shall only remove coordinates that the reader does not care about anyway, as in the above toy example. 

To achieve this approach, one needs to consider the IB in a coordinate system that permits a simple reduction as in \eqref{eq:toy-example-operator-def}, and to understand its solution structure. 
We achieve these in Section \ref{sec:IB-bifurcations} by exploiting two properties of the IB which are often overlooked. 
First, proceeding with the coordinates-exchange of Section \ref{sec:coords-exchange-for-the-IB}, the intimate relations \citep{harremoes2007information, bachrach2003} of the IB with RD suggest a ``minimally sufficient'' coordinates system for the IB, just as the $x$ axis is for problem \eqref{eq:toy-example-operator-def}. 
\textit{Reducing} an IB root to these coordinates is a natural extension of reduction in RD, \citep{agmon2022RTRD}. 
Reduction of IB roots facilitates a clean treatment of IB bifurcations. 
These are roughly divided into \textit{continuous} and \textit{discontinuous} bifurcations, in Subsections \ref{sub:continuous-IB-bifs} and \ref{sub:discontinuous-IB-bifs} respectively. 
While understanding continuous bifurcations is straightforward, the IB's relations with RD allow us to understand the discontinuous bifurcation examples of which we are aware as a \textit{support switching bifurcation} in RD, by leveraging \cite[Section 6]{agmon2022RTRD}. 
A second property is the analyticity of the IB operator \eqref{eq:IB-operator-def}, which stems from the analyticity of the IB Equations \eqref{eq:IB-eq-encoder}-\eqref{eq:IB-eq-marginal}. 
By building on the first property, analyticity leads us to argue that the Jacobian of the IB operator \eqref{eq:IB-operator-def} is generally non-singular (Conjecture \ref{conj:BA-IB-Jacob-in-decoder-coords-is-nonsingular-at-reduced-root}) when considered in reduced coordinates as above. 
As an immediate consequence, the dynamics underlying the IB curve \eqref{eq:IB-curve-def} are piecewise real-analytic in $\beta$, in a manner similar to RD. 
Indeed, the fact that there exist dynamics underlying the IB curve \eqref{eq:IB-curve-def} in the first place can arguably be attributed to analyticity; cf., the discussion following Conjecture \ref{conj:BA-IB-Jacob-in-decoder-coords-is-nonsingular-at-reduced-root}. 
Combining both properties sheds light on several subtle yet important practical caveats in solving the IB (Subsection \ref{sub:discontinuous-IB-bifs}) due to using finite-dimensional representations of its roots.  
These subtleties are compatible with our numerical experience. 
The results here suggest that, unlike RD, the IB is inherently infinite-dimensional, even for finite problems.

\begin{figure}[h!]
	\centering
	\vspace*{10pt}
	\ifdefined\compilefigs
	\includegraphics[width=.8\textwidth]{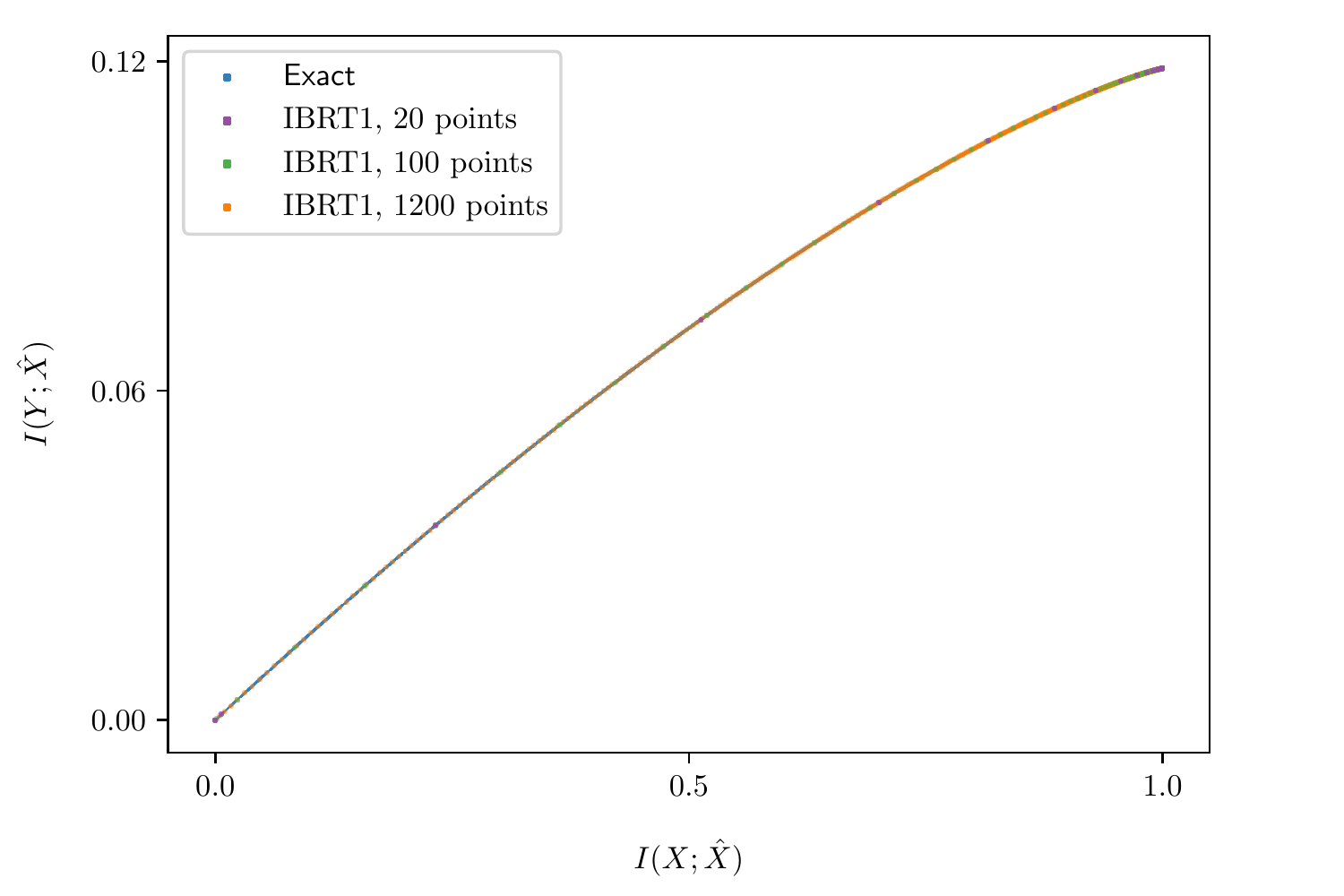}
	\else
	\includegraphics[width=.8\textwidth]{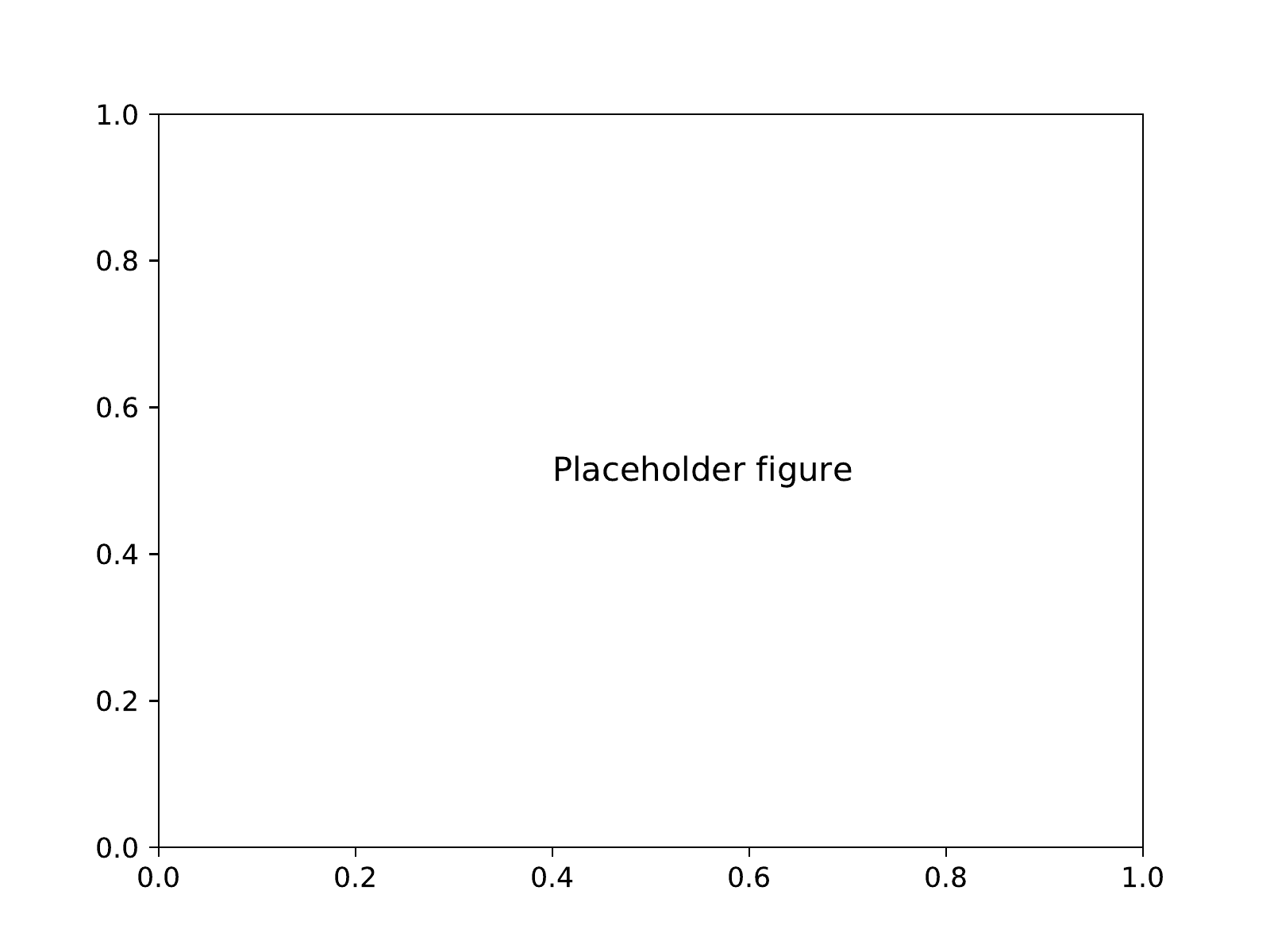}
	\fi
	\caption{
		\textbf{The approximate IB curves yielded by Algorithm \ref{algo:IBRT1}, using the IB ODE \eqref{eq:IB-beta-ODE-in-decoder-coords}}.
		Despite the algorithm's approximation errors (Section \ref{sub:IBRT1-numerical-results}), the approximate curves it yields are visually indistinguishable from the true IB curve \eqref{eq:IB-curve-def}, even on relatively few grid points. 
		The reasons for this are discussed below (Section \ref{sub:IBRT1-discussion}). 
		\newline 
		Our First-order Root-Tracking for the IB (IBRT1) Algorithm \ref{algo:IBRT1} was used to approximate the optimal IB roots of a binary symmetric channel with crossover probability $0.3$ and a uniform source, BSC$(0.3)$, for several grid densities. 
		The points in the information plane yielded from these approximations are plotted on top of the problem's exact solution (see Appendix \ref{sec:analytical-IB-sol-for-BSC-appendix}).
	}
	\label{fig:IBRT1-IB-curve-for-several-densitires}
\end{figure}

Finally, Section \ref{sec:IBRT1-algo} combines the modified Euler method of Section \ref{sec:euler-method} with the understanding of IB bifurcations in Section \ref{sec:IB-bifurcations}, to obtain Algorithm \ref{algo:IBRT1} (IBRT1) for following the path of an optimal IB root, in Subsection \ref{sub:IBRT1-algo-spec}. That is, First-order Root-Tracking for the IB. 
For simplicity, we focus mainly on continuous IB bifurcations, as these are the ones most often encountered in practice; 
cf., the comments in Subsection \ref{sub:IBRT1-discussion}. 
The resulting approximations in the information plane are surprisingly close to the true IB curve \eqref{eq:IB-curve-def}, even on relatively sparse grids (i.e., with large step sizes), as seen in Figure \ref{fig:IBRT1-IB-curve-for-several-densitires}. 
See Subsection \ref{sub:IBRT1-numerical-results} for the numerical results underlying the latter. 
The reasons for this are discussed in Subsection \ref{sub:IBRT1-discussion}, along with the algorithm's basic properties. 
Unlike BA-IB, which suffers from an increased computational cost near bifurcations, our Algorithm \ref{algo:IBRT1} suffers from a reduced accuracy there, in a manner similar to root-tracking for RD, \citep{agmon2022RTRD}. 

With that, we note that there are standard techniques in Bifurcation Theory for handling a non-trivial kernel of $D_{\bm{x}} F$ at a root. 
For example, the \textit{Lyapunov-Schmidt reduction} replaces the high-dimensional problem $F = \bm{0}$ \eqref{eq:root-of-functional-eq-implicit} on $\bb{R}^n$ by a smaller but equivalent problem $\Phi = \bm{0}$, where $\Phi(\cdot, \beta)$ maps vectors in the (right) kernel of $D_{\bm{x}} F$ to vectors in its left kernel. 
To achieve this, it separates the kernel- and non-kernel directions of the problem, essentially handling each at its turn. 
e.g., \cite[Theorem I.2.3]{kielhofer2011bifurcation} or \cite[Section 9.7]{teschl2020topics}. 
This technique is generic, as it does not rely on any particular property of the problem at hand. 
As such, it is considerably more involved than removing redundant coordinates\footnote{ Applied to our toy problem \eqref{eq:toy-example-operator-def} for instance, Lyapunov-Schmidt reduces $F = \bm{0}$ \eqref{eq:root-of-functional-eq-implicit} to choosing a continuously differentiable function $\Phi$ on the $y$-axis there, which is obtained by first solving for $x = \beta$ (see the proof of \cite[Theorem I.2.3]{kielhofer2011bifurcation} for details). However, since $y$ is redundant in this example, then solving for $\Phi$ can provide no useful information on the dynamics of its roots. }, which requires an understanding of the solution structure. 
In contrast, reduction in the IB is straightforward. 
For the purpose of following a root's path, carrying on with redundant kernel directions is burdensome, computationally expensive, and sensitive to approximation errors. 
\cite{parker2022symmetry_breaking} use a variant of the Lyapunov-Schmidt reduction to consider IB bifurcations due to symmetry breaking. 
While our findings are in agreement with theirs' for continuous IB bifurcations, they differ for discontinuous bifurcations (see Subsections \ref{sub:continuous-IB-bifs} and \ref{sub:discontinuous-IB-bifs}). 

\medskip 
\paragraph*{Notations.}
Vectors are written in boldface $\bm{x}$, scalars in a regular font $x$. 
A distribution $p$ pertaining to a particular Lagrange multiplier value $\beta$ (in Equations \eqref{eq:IB-eq-encoder}-\eqref{eq:IB-eq-marginal}) is denoted with a subscript, $p_\beta$. 
The \textit{probability simplex} on a set $S$ is denoted $\Delta[S]$ (see Section \ref{sub:IB-as-an-RD-problem-and-non-singularity-conj}). 
The \textit{support} of a probability distribution $p$ on $S$ is $\supp p := \{s\in S: \; p(s) \neq 0\}$. 
The \textit{source}, \textit{label} and \textit{representation} alphabets of an IB problem are denoted $\mathcal{X}, \mathcal{Y}$ and $\hat{\mathcal{X}}$, respectively; we write $T:= |\hat{\mathcal{X}}|$. 
$\delta$ denotes Dirac's delta function, $\delta_{i, j} = 1$ if $i = j$ and zero otherwise.

\begin{algorithm}
	\caption{Blahut-Arimoto for the Information Bottleneck (BA-IB), \cite{tishby1999}.}
	\label{algo:BA-IB}
	\begin{algorithmic}[1]
		\Function{BA-IB}{$p_0(\xhat|\x); p_{Y|X} \; p_X, \beta$}
		\Input
		\Statex An initial encoder $p_0(\xhat|\x)$, a problem definition $p(\y|\x)p(\x)$, and $\beta > 0$.
		\Output
		\Statex A fixed point $p(\xhat|\x)$ of the IB Equations.
		\State Initialize $i \gets 0$.
		\Repeat		\nonumber
		\State $p_{i}(\xhat) \leftarrow \sum_{\x} p_i(\xhat|\x)p(\x)$	\label{eq:IB-BA-cluster_marginal}
		\State $p_i(\x|\xhat) \leftarrow \nicefrac{p_i(\xhat|\x)p(\x)}{p_i(\xhat)}$	\label{eq:IB-BA-bayes-for-computing-inverse-enc}
		\State $p_{i}(\y|\xhat) \leftarrow \sum_{\x} p(\y|\x)p_i(\x|\xhat)$	\label{eq:IB-BA-decoder-eq}
		\State $Z_{i}(\x,\beta) \leftarrow \sum_{\xhat} p_{i}(\xhat) \exp{\big\{ -\beta \; D_{KL}\big[p(\y|\x) || p_{i}(\y|\xhat)\big] \big\}}$		\label{eq:IB-BA-partition-func}
		\State $p_{i+1}(\xhat|\x) \leftarrow \frac{p_{i}(\xhat)}{Z_{i}(\x,\beta)} \exp \big\{ -\beta \; D_{KL}\big[p(\y|\x) || p_{i}(\y|\xhat)\big] \big\}$		\label{eq:IB-BA-new-direct-enc}
		\State $i \gets i + 1$
		\Until{convergence.}
		\EndFunction
	\end{algorithmic}
\end{algorithm}

\newpage 
\section{Coordinates exchange for the IB}
\label{sec:coords-exchange-for-the-IB}

Just as a point in the plane can be described by different coordinate systems, so can IB roots.
As demonstrated recently by \cite{agmon2022RTRD} for the related rate-distortion theory, picking the right coordinates matters when analyzing its bifurcations. 
The same holds also for the IB. 
Our primary motivations for exchanging coordinates are to reduce computational costs and to mod-out irrelevant kernel directions, as explained in Section \ref{sec:introduction}. 
In this Section, we discuss three natural choices of a coordinate system for parametrizing IB roots and the reasoning behind our choice for the sequel before setting to derive the IB ODE in the following Section \ref{sec:IB-ODE}. 
This work is complemented by the later Subsection \ref{sub:IB-as-an-RD-problem-and-non-singularity-conj}, which facilitates a transparent analysis of IB bifurcations. 

\medskip
IB roots have been classically parameterized in the literature by (direct) encoders $p(\xhat|\x)$, following \cite{tishby1999}. 
Considering the BA-IB Algorithm \ref{algo:BA-IB} reveals two other natural choices, illustrated by Equation \eqref{eq:coordinate-sets-parameterizing-an-IB-root} below. 
First, an encoder $p(\xhat|\x)$ determines a \textit{cluster marginal} $p(\xhat)$ and an \textit{inverse encoder} $p(\x|\xhat)$, via \algref{algo:BA-IB}{eq:IB-BA-cluster_marginal} and \algref{algo:BA-IB}{eq:IB-BA-bayes-for-computing-inverse-enc}, respectively. 
These can be interpreted geometrically as $p(\xhat)$-weighted points $q_{\xhat}(x)$ in the simplex $\Delta[\cal{X}]$ of $X$, so long that these are well-defined, $\forall \xhat \; p(\xhat) \neq 0$.
No more than $|\mathcal{X}| + 1$ points in the simplex are required to represent an IB root, \citep{witsenhausen1975conditional}.
The latter is readily seen to analyze the IB in these coordinates\footnote{ Although known among IB practitioners, this reference has generally escaped broader attention.}, though it pre-dates \citep{tishby1999}. 
Second, an inverse encoder determines a \textit{decoder} $p(\y|\xhat)$, via \algref{algo:BA-IB}{eq:IB-BA-decoder-eq}. Along with the cluster marginal, $\big( p(\y|\xhat), p(\xhat) \big)$ can be similarly interpreted as $p(\xhat)$-weighted points $r_{\xhat}(y)$ in the simplex $\Delta[\mathcal{Y}]$ of $Y$.
This choice of coordinates is implied already by \cite[Theorem 5]{tishby1999}.
Cycling around Equation \eqref{eq:coordinate-sets-parameterizing-an-IB-root}, a decoder $\big( p(\y|\xhat), p(\xhat)\big)$ determines via \algref{algo:BA-IB}{eq:IB-BA-partition-func} and \algref{algo:BA-IB}{eq:IB-BA-new-direct-enc} a new encoder, which may differ from the one with which we have started. 
For notational simplicity, we shall usually write $\big( p(\y|\xhat), p(\xhat) \big)$ rather than $\big( r_{\xhat}(\y), p(\xhat) \big)$ for decoder coordinates (similarly, for inverse-encoder coordinates).

\begin{equation}			\label{eq:coordinate-sets-parameterizing-an-IB-root}
\xymatrix{
		&	p(\xhat|\x) \ar@(l, u)[dl]_{\algref{algo:BA-IB}{eq:IB-BA-cluster_marginal}, \algref{algo:BA-IB}{eq:IB-BA-bayes-for-computing-inverse-enc}}		&	\\
		\big( p(\x|\xhat), p(\xhat) \big) \ar@(d, d)[rr]_{\algref{algo:BA-IB}{eq:IB-BA-decoder-eq}}	&&
		\big( p(\y|\xhat), p(\xhat) \big) \ar@(u, r)[ul]_{\algref{algo:BA-IB}{eq:IB-BA-partition-func}, \algref{algo:BA-IB}{eq:IB-BA-new-direct-enc}}
	}
\end{equation}

The above allows us to define \textit{three} BA operators as the composition of three consecutive maps in Equation \eqref{eq:coordinate-sets-parameterizing-an-IB-root}, encoding an iteration of Algorithm \ref{algo:BA-IB}.
When starting at an encoder $p(\xhat|\x)$, its output is a newly-defined encoder. 
Similarly, when starting at one of the other two vertices, it sends an inverse-encoder $\big( p(\x|\xhat), p(\xhat) \big)$ or a decoder pair $\big( p(\y|\xhat), p(\xhat) \big)$ to newly-defines one. 
By abuse of notation, we denote all three compositions by $BA_\beta$, with the choice of coordinates system mentioned accordingly. Indeed, these are representations of a single BA-IB iteration in three different coordinate systems, and so may be considered as distinct representations of the same operator. 
For completeness, $BA_\beta$ in decoder coordinates is spelled out explicitly at Equation \eqref{eq:BA-operator-def-in-decoder-coords-appendix} in Appendix \ref{sec:BA-IB-in-decoder-coords-appendix}. 
A newly-defined encoder (or inverse-encoder or decoder) at a cycle's completion need not generally equal the one at which we have started. 
These are equal precisely at IB roots, when the IB Equations \eqref{eq:IB-eq-encoder}-\eqref{eq:IB-eq-marginal} hold. 
Therefore, the choice of a coordinates system does \textit{not} matter then, and so moving around Equation \eqref{eq:coordinate-sets-parameterizing-an-IB-root} from one vertex to another yields different parameterizations of the same root, at least when $\forall \xhat \; p(\xhat) \neq 0$. 
In particular, this shows that the inverse-encoders $q_{\xhat}$ in $\Delta[\cal{X}]$ of an IB root are in bijective correspondence with its decoders $r_{\xhat}$ in $\Delta[\cal{Y}]$, an observation which shall come in handy at Section \ref{sec:IB-bifurcations}. 

\medskip
Next, we consider how well each of these coordinate systems can serve for following the path of an IB root. 
The minimal number of symbols $\xhat$ needed to write down an IB root typically varies with the constraints; cf., \cite[Section 3.4]{tishby1999} or \cite[Section II.A]{witsenhausen1975conditional}. 
Therefore, inverse-encoder and decoder coordinates are better suited than encoder coordinates for considering the dynamics of a root with $\beta$, as they allow us to consider its evolution via a varying number of points in a fixed space, $\Delta[\mathcal{X}]$ or $\Delta[\mathcal{Y}]$, respectively. 
Indeed, a direct encoder $p(\xhat|\x)$ can be interpreted geometrically as a point in the $|\mathcal{X}|$-fold product $\Delta[\hat{\mathcal{X}}]^{\cal{X}}$ of simplices $\Delta[\hat{\mathcal{X}}]$, \cite[Section 2]{gedeon2012mathematical}.
So, if a particular symbol $\xhatp$ is not in use anymore, $p(\xhatp) = 0$, then one is forced to choose between replacing $\Delta[\hat{\mathcal{X}}]$ by a smaller space $\Delta[\hat{\mathcal{X}} \setminus \{\xhatp\}]$, to carrying on with a redundant symbol $\xhatp$. 
The latter leads to non-trivial kernels in the IB due to duplicate clusters\footnote{ In addition to the IB's ``perpetual kernel'', \citep{gedeon2012mathematical, parker2022symmetry_breaking}.} (e.g., Section 3.1 there), making it difficult to tell whether a particular kernel direction pertains to a bifurcation. 
In contrast, when considered in decoder coordinates, for example, an IB root is nothing but $p(\xhat)$-weighted paths $r_1, \dots, r_T$ in $\Delta[\mathcal{Y}]$, with $\beta \mapsto r_{\xhat}(\beta)$ a path for each $r_{\xhat}$.
And so, once a symbol $\xhatp$ is not needed anymore, then one can discard the path $r_{\xhatp}$ without replacing the underlying space $\Delta[\mathcal{Y}]$.
This permits the clean treatment of IB bifurcations in Section \ref{sec:IB-bifurcations}.

The computational cost of solving the first-order ODE \eqref{eq:implicit-beta-ODE} for $\tfrac{d\bm{x}}{d\beta}$ numerically depends on $\dim \bm{x}$. 
Much of this cost is due to computing a linear pre-image under $D_{\bm{x}} F$, which is of order $O(\dim \bm{x})^3$, \cite[Section 28.4]{cormen2022introduction}. cf., Section \ref{sec:IBRT1-algo}. 
Representing an IB root on $T$ clusters in encoder coordinates requires\footnote{ With the coordinates considered as independent variables, ignoring normalization constraints.} $|\mathcal{X}|\cdot T$ dimensions, in inverse-encoder coordinates $(|\mathcal{X}| + 1)\cdot T$ dimensions, and in decoder coordinates $(|\mathcal{Y}| + 1)\cdot T$ dimensions.
Thus, when there are fewer possible labels $\mathcal{Y}$ than input symbols $\mathcal{X}$, then the computational cost is lowest in decoder coordinates.

\begin{figure}[h!]
	\centering
	\vspace*{10pt}
	\ifdefined\compilefigs
	\includegraphics[width=.65\textwidth]{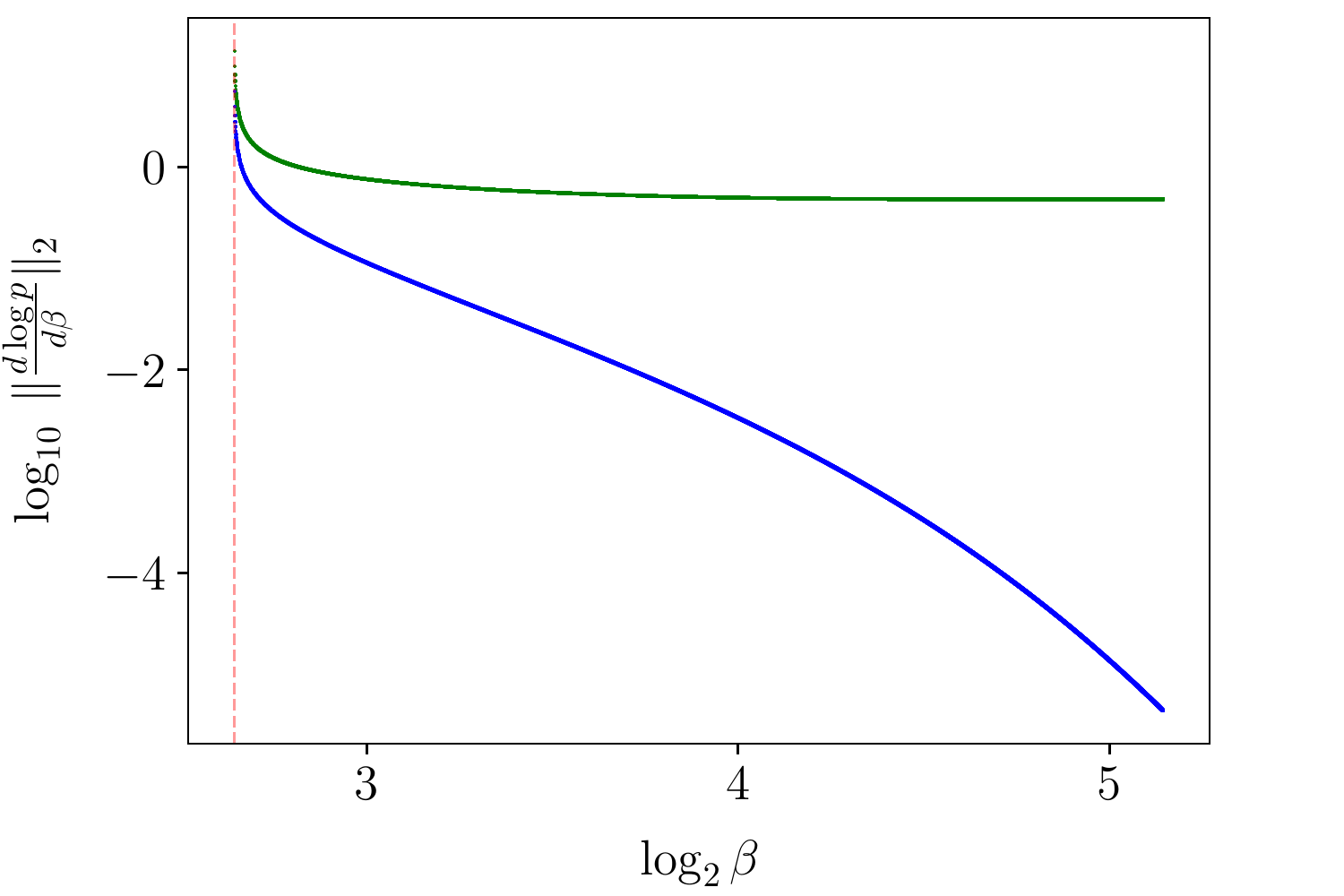}
	\else
	\includegraphics[width=.65\textwidth]{figs/empty_figure}
	\fi
	\caption{
		\textbf{Derivatives' norm by coordinate system}, for the exact solution of BSC(0.3) with a uniform source, as in Figure \ref{fig:IBRT1-IB-curve-for-several-densitires}; see Appendix \ref{sec:analytical-IB-sol-for-BSC-appendix}.
		The derivative's $L_2$-norm is plotted in green for encoder coordinates and blue for decoder coordinates.
		The solution barely changes at high $\beta$ values, and so the derivative in \textit{decoder} coordinates is smaller (see main text). 
		Nevertheless, the derivative in \textit{encoder} coordinates does not vanish then, due to Equation \eqref{eq:coordinates-exchange-dec-to-enc-explicit}.
		At low $\beta$ values, however, the derivative in either coordinate system may generally be large. 
		Both vanish to the left of the bifurcation in this problem (dashed red vertical), as the solution there is trivial (single-clustered). 
		The derivatives diverge near the bifurcation (to its right) regardless of the coordinate system, as might be expected by the implicit ODE \eqref{eq:implicit-beta-ODE} --- see also Section \ref{sub:IBRT1-algo-spec}. 
	}
	\label{fig:norm-of-analytical-derivs-for-BSC}
\end{figure}

A-priori, one might expect that derivatives with respect to $\beta$ vanish when the solution barely changes, regardless of the choice of coordinates system.
For example, at a very large ``$\beta = \infty$'' value, an obvious IB root is the diagonal encoder\footnote{ That is, set $\hat{\mathcal{X}} := \mathcal{X}$ and $p(\xhat|\x) := 1$ if $\xhat = \x$ and 0 otherwise. }, as can be seen by a direct examination of the IB Equations \eqref{eq:IB-eq-encoder}-\eqref{eq:IB-eq-marginal}. 
It consists of one IB cluster of weight (or \textit{mass}) $p(\x)$ at $p_{Y|X=x} \in \Delta[\mathcal{Y}]$ for each $\x\in \mathcal{X}$, and so one might expect that it would barely change so long that $\beta$ is very large. 
However, the logarithmic\footnote{ See Section \ref{sec:IB-ODE} below on the use of logarithmic derivatives.} derivative $\tfrac{d\log p_\beta(\xhat|\x)}{d\beta}$ in encoder coordinates need \textit{not} vanish even when the derivatives $\tfrac{d\log p_\beta(\y|\xhat)}{d\beta}$ and $\tfrac{d\log p_\beta(\xhat)}{d\beta}$ in decoder coordinates do, as seen to the right of Figure \ref{fig:norm-of-analytical-derivs-for-BSC}.
Indeed, given the derivative in decoder coordinates, one can exchange it to encoder coordinates by
\begin{multline}								\label{eq:coordinates-exchange-dec-to-enc-explicit}
	\dbeta{\log p_\beta(\xhat|\x)} = 
	J_{\text{dec}}^{\text{enc}} \; \dbeta{\log p_\beta(\yp|\xhatp)} +
	J_{\text{mrg}}^{\text{enc}} \; \dbeta{\log p_\beta(\xhatp)} 
	\\ - D_{KL}\big[p(\y|\x) || p_\beta(\y|\xhat)\big] 
	+ \sum_{\xhatpp} p_\beta(\xhatpp|\x) D_{KL}\big[p(\y|\x) || p_\beta(\y|\xhatpp)\big] \;,
\end{multline}
where $J_{\text{dec}}^{\text{enc}}$ and $J_{\text{mrg}}^{\text{enc}}$ are the two coordinate exchange Jacobian matrices of orders $(T\cdot |\mathcal{X}|)\times (T\cdot |\mathcal{Y}|)$ and $(T\cdot |\mathcal{X}|)\times T$ respectively, given by Equations \eqref{eq:coordinates-exchange-dec-to-enc-enc-in-terms-of-dec:appendix} and \eqref{eq:coordinates-exchange-dec-to-enc-enc-in-terms-of-marginal:appendix} in Appendix \ref{subsub:decoder-to-encoder-coords-jacobian-appendix}. 
And so, $\tfrac{d\log p_\beta(\xhat|\x)}{d\beta}$ would often be non-zero even if both $\tfrac{d\log p_\beta(\y|\xhat)}{d\beta}$ and $\tfrac{d\log p_\beta(\xhat)}{d\beta}$ vanish. 
This unintuitive behavior of the derivative in encoder coordinates is due to the explicit dependence of the IB's encoder Equation \eqref{eq:IB-eq-encoder} on $\beta$. This dependence is the source of the last two terms in Equation \eqref{eq:coordinates-exchange-dec-to-enc-explicit} (see Equation \eqref{eq:coordinates-exchange-dec-to-enc-enc-in-terms-of-beta:appendix}). 
The comparison between encoder and inverse-encoder coordinates can be seen to be similar. See Appendix \ref{sub:coordinates-exchange-jacobians-appendix} for further details.

\medskip
In light of the above, we proceed with decoder coordinates in the sequel.

\medskip
\section{Implicit derivatives at an IB root and the IB's ODE}
\label{sec:IB-ODE}

We now specialize the implicit ODE \eqref{eq:implicit-beta-ODE} (of Section \ref{sec:introduction}) to the IB, using the decoder coordinates of the previous Section \ref{sec:coords-exchange-for-the-IB}. 
This allows us to compute first-order implicit derivatives at an IB root (Theorem \ref{thm:IB-ODE}) at a remarkable accuracy, under one primary assumption. Namely, that the root is a differentiable function of $\beta$. 
While differentiability breaks at IB bifurcations (Section \ref{sec:IB-bifurcations}), this allows to reconstruct a solution path from its local approximations in the following Section \ref{sec:euler-method}, so long that it holds.

\medskip
To simplify calculations, we take the logarithm $\big(\log p(\y|\xhat), \log p(\xhat) \big)$ of the decoder coordinates of Section \ref{sec:coords-exchange-for-the-IB} as our variables. 
Exchanging the $BA_\beta$ operator to log-decoder coordinates is immediate, by writing $\log BA_\beta[\exp{\left(\log p(\y|\xhat)\right)}, \exp{\left(\log p(\xhat)\right)} ]$. 
For short, we denote it $BA_\beta[\log p(\y|\xhat), \log p(\xhat)]$ when in these coordinates, by abuse of notation. 
Similarly, exchanging the IB ODE (below) back to non-logarithmic coordinates is immediate, via $\tfrac{d}{d\beta} \log p = \tfrac{1}{p} \tfrac{d}{d\beta} p$. 
In Section \ref{sec:IBRT1-algo} we shall assume that $p(\xhat)$ never vanishes.
To ensure that taking logarithms is well-defined, we also require\footnote{ While a decoder $p(\y|\xhat)$ may have a well-defined derivative $\dbeta{}p(\y|\xhat)$ even without this requirement, the calculation details below would differ. } that no coordinate of $p(\y|\xhat)$ vanishes. 
A sufficient condition for that is that $p(\y|\x) > 0$ for every $\x$ and $\y$ (Lemma \ref{lemma:sufficient-condition-for-IB-decoder-to-never-vanish} in Appendix \ref{sec:BA-IB-in-decoder-coords-appendix}). 

Next, define a variable $\bm{x} \in \bb{R}^{T\cdot(|\mathcal{Y}| + 1)}$ as the concatenation of the vector $\big( \log p_\beta(\y|\xhat) \big)_{\y\in \mathcal{Y}, \xhat\in \hat{\mathcal{X}}}$ with $\big(\log p_\beta(\xhat) \big)_{\xhat\in \hat{\mathcal{X}}}$. 
Differentiating $\nicefrac{\partial}{\partial \log p}$ with respect to log-probabilities is given by $p \cdot \tfrac{\partial}{\partial p}$, by the chain rule\footnote{ Defining $u := \log p$, the $u$-derivative of $f(p)$ is given by $\tfrac{df}{du} = \tfrac{df}{dp} \tfrac{dp}{du}$, or equivalently $\tfrac{df}{d\log p} = p\cdot \tfrac{df}{dp}$. See also Appendix \ref{sub:BA-IB-jacobian-appendix:calculation-setup-and-goals} for a gentler treatment. }. 
This gives meaning to the Jacobian matrix $D_{\bm{x}} (\cdot)$ with respect to our logarithmic variable $\bm{x}$. 
The Jacobian $D_{\log p(\y|\xhat), \log p(\xhat)} BA_\beta$ of a single Blahut-Arimoto iteration in these log-decoder coordinates is a square matrix of order $T\cdot (|\mathcal{Y}| + 1)$. 
Its $(T\cdot |\mathcal{Y}|)\times (T\cdot |\mathcal{Y}|)$ upper-left block (below) corresponds to perturbations in BA's output log-decoder $\log p(\y|\xhat)$ due to varying an input log-decoder $\log p(\yp|\xhatp)$. 
Since we prime input but not output coordinates, this is to say that the \textit{columns} of this block are indexed\footnote{ Alternatively, one can enumerate the label and representation alphabets explicitly, $\mathcal{Y} := \{\y_1, \dots, \y_{|\mathcal{Y}|}\}$ and $\hat{\mathcal{X}} := \{\xhat_1, \dots, \xhat_T\}$. This allows to replace $(\y, \xhat)$ and $(\yp, \xhatp)$ throughout by $(\y_i, \xhat_j)$ and $(\y_k, \xhat_l)$, respectively, with $i, k=1,\dots, |\mathcal{Y}|$ and $j, l = 1, \dots, T$.
} by pairs $(\yp, \xhatp)$ and its \textit{rows} by $(\y, \xhat)$. 
Its $(T\cdot |\mathcal{Y}|)\times T$ upper-right block corresponds to perturbations in BA's output log-decoder $\log p(\y|\xhat)$ due to varying an input log-marginal $\log p(\xhatp)$. 
That is, its columns are indexed by $\xhatp$ and rows by $(\y, \xhat)$. 
Similarly, for the bottom-left and bottom-right blocks, of respective sizes $T \times (T\cdot |\mathcal{Y}|)$ and $T\times T$.
See \eqref{eq:BA-Jacob-wrt-decoder-coords-as-block-matrix-implicit} ff., in Appendix \ref{sub:BA-IB-jacobian-appendix:decoder-deriv-matrix}, and the end-result at Equation \eqref{eq:BA-Jacob-wrt-decoder-coords-as-block-matrix-explicit} there. 
Explicitly, when evaluated at an IB root $\big( \log p(\y|\xhat), \log p(\xhat) \big)$, BA's Jacobian matrix is given by
\begin{multline}		\label{eq:BA-Jacob-wrt-decoder-coords-at-main-text}
	D_{\log p(\y|\xhat), \log p(\xhat)} BA_\beta[\log p(\y|\xhat), \log p(\xhat)] = \\[10pt]
\left(
	\begin{array}{c|c}
		\beta \cdot \sum_{\xhatpp, \ypp} 
		\left( \delta_{\xhatpp, \xhatp} - \delta_{\xhat, \xhatp} \right)
		\cdot \left[1 - \tfrac{\delta_{\ypp, \y} }{p_{\beta}(\y|\xhat)} \right]
		C(\xhat, \xhatpp; \beta)_{\yp, \ypp} 		&
		\left( 1 - \beta \right) \cdot \sum_{\ypp} \Big[ 
		1 - \tfrac{\delta_{\ypp, \y}}{p_{\beta}(\y|\xhat)} 
		\Big] B(\xhat, \xhatp; \beta)_{\ypp}	\\	\\
		\hline	\\
		\beta \cdot \Big[
		\delta_{\xhat, \xhatp} \; p_{\beta}(\yp|\xhat) -
		B(\xhat, \xhatp; \beta)_{\yp}
		\Big] &
		\left(1 - \beta \right) \cdot 
		\Big[ \delta_{\xhat, \xhatp} - A(\xhat, \xhatp; \beta) \Big]
	\end{array}
\right)
\end{multline}
where $\delta_{i, j} = 1$ if $i = j$ and is 0 otherwise. 
As mentioned above, primed coordinates $\yp$ and $\xhatp$ index the columns, and un-primed coordinates $\y$ and $\xhat$ the rows. Indices $\ypp$ and $\xhatpp$ with more than a single prime are summation variables. 
$A, B$ and $C$ are a scalar, a vector, and a matrix, each involving two IB clusters.  They are defined by,
\begin{equation}			\label{eq:B_defs-for-BA-Jacob-wrt-decoder-coords}
	\begin{split}
		A(\xhat, \xhatp; \beta) 			:= &\sum_{\xpp} p_\beta(\xhatp|\xpp) p_\beta(\xpp|\xhat) \;,	\\
		B(\xhat, \xhatp; \beta)_{\y} 		:= &\sum_{\xpp} p(\y|\xpp) p_\beta(\xhatp|\xpp) p_\beta(\xpp|\xhat) \;, \quad \text{and} \\
		C(\xhat, \xhatp; \beta)_{\y, \yp} 	:= &\sum_{\xpp} p(\y|\xpp) p(\yp|\xpp) p_\beta(\xhatp|\xpp) p_\beta(\xpp|\xhat) \;.
	\end{split}
\end{equation}
In these, $\y$ indexes $B$ and the rows of $C$, $\yp$ the columns of $C$, and $\xpp$ is a summation variable. 
These $(\xhat, \xhatp)$-labeled tensors have only $|\mathcal{Y}|$ entries along each axis, thanks to the choice of decoder coordinates. 
$A$ and $B$ can be expressed in terms of $C$ via some obvious relations; see Equation~\eqref{eq:B_C_defs-for-BA-Jacob-wrt-decoder-coords-appendix} and below in Appendix \ref{sub:BA-IB-jacobian-appendix:decoder-deriv-matrix}.
Appendix \ref{sub:BA-IB-jacobian-appendix:calculation-setup-and-goals} elaborates on the mathematical subtleties involved in calculating the Jacobian \eqref{eq:BA-Jacob-wrt-decoder-coords-at-main-text}.
See also Equation \eqref{eq:BA-Jacob-wrt-decoder-coords-as-block-matrix-explicit-implementation-friendly} in Appendix \ref{sub:BA-IB-jacobian-appendix:decoder-deriv-matrix} for an implementation-friendly form of \eqref{eq:BA-Jacob-wrt-decoder-coords-at-main-text}.

Together with $D_\beta BA_\beta$ (Equations \eqref{eq:partial-beta-deriv-calcs-beta-deriv-of-output-margianl} and \eqref{eq:partial-beta-deriv-calcs-beta-deriv-of-output-decoder} in Appendix \ref{sub:BA-IB-beta-deriv-appendix-wrt-decoder-coords}), we have both of the first-order derivative tensors of $BA_\beta$ in log-decoder coordinates. 
This allows us to specialize the implicit ODE \eqref{eq:implicit-beta-ODE} (of Section \ref{sec:introduction}) to the IB, in terms of our variable $\bm{x}$. By abuse of notation, we write $\big( \log p_\beta(\y|\xhat), \log p_\beta(\xhat) \big)_{\y, \xhat}$ for its $|\mathcal{Y}|\cdot T + T$ coordinates, and similarly for its derivatives vector $\bm{v}$ \eqref{eq:IB-beta-ODE-notation-for-implicit-deriv} below. 

\begin{thm}[The IB's ODE]			\label{thm:IB-ODE}
	Let $\big( p(\y|\xhat), p(\xhat) \big)$ be an IB root, 
	and suppose that it can be written as a differentiable function $\beta\mapsto \big( p_\beta(\y|\xhat), p_\beta(\xhat) \big)$ in $\beta$. 
	If none of its coordinates vanish, then the vector 
	\begin{equation}			\label{eq:IB-beta-ODE-notation-for-implicit-deriv}
		\bm{v} := \left( \dbeta{\log p_\beta(\y|\xhat)}, \dbeta{\log p_\beta(\xhat)} \right)_{\y, \xhat}
	\end{equation}
	of its implicit logarithmic derivatives is well defined and satisfies an ordinary differential equation in $\beta$,
	\begin{equation}			\label{eq:IB-beta-ODE-in-decoder-coords}
		\Big( I - D_{\log p(\y|\xhat), \log p(\xhat)} BA_\beta \Big) \bm{v} = 
		\mat{
			- \sum_{\x, \xhatpp} \left[ 1 - \frac{p(\y|\x)}{p_\beta(\y|\xhat) } \right] \cdot \Big[ \delta_{\xhat, \xhatpp} - p_\beta(\xhatpp|\x) \Big] \; p_\beta(\x|\xhat) \; D_{KL}\big[ p(\y|\x) || p_\beta(\y|\xhatpp) \big] \\ \\
			\sum_{\x, \xhatpp} \Big[ \delta_{\xhat, \xhatpp} - p_\beta(\xhatpp|\x) \Big] \; p_\beta(\x|\xhat) \; D_{KL}\big[ p(\y|\x) || p_\beta(\y|\xhatpp) \big] 
		}
	\end{equation}
	where $I$ is the identity matrix of order $T\cdot (|\mathcal{Y}| + 1)$, and the Jacobian matrix $D_{\log p(\y|\xhat), \log p(\xhat)} BA_\beta$ at the given IB root is given by Equation \eqref{eq:BA-Jacob-wrt-decoder-coords-at-main-text}. 
	The right-hand side of \eqref{eq:IB-beta-ODE-in-decoder-coords} is indexed as in \eqref{eq:IB-beta-ODE-notation-for-implicit-deriv}, by $(\y, \xhat)$ at its top and $\xhat$ at its bottom coordinates. 
\end{thm}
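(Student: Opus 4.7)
The plan is to specialize the generic implicit ODE \eqref{eq:implicit-beta-ODE} to the IB operator $F := Id - BA_\beta$ \eqref{eq:IB-operator-def} written in log-decoder coordinates. Since the hypothesis gives an IB root $\bm{x}(\beta) = \big(\log p_\beta(\y|\xhat), \log p_\beta(\xhat)\big)$ that is differentiable in $\beta$ and satisfies $F(\bm{x}(\beta),\beta) = \bm{0}$, the multivariate chain rule gives
\begin{equation*}
	\big(I - D_{\bm{x}} BA_\beta\big)\,\bm{v} \;=\; D_\beta BA_\beta\;,
\end{equation*}
evaluated at the root. The left-hand side is the target LHS of \eqref{eq:IB-beta-ODE-in-decoder-coords} by plugging in the Jacobian already derived in \eqref{eq:BA-Jacob-wrt-decoder-coords-at-main-text}. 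So the only substantive work is identifying $D_\beta BA_\beta$ at an IB root with the column vector written on the RHS of \eqref{eq:IB-beta-ODE-in-decoder-coords}.

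Before that, I would dispose of the well-definedness preliminaries. The assumption that no coordinate of $p_\beta(\y|\xhat)$ or $p_\beta(\xhat)$ vanishes makes the change of variables $u := \log p$ well-defined, so $\partial/\partial\log p = p \cdot \partial/\partial p$ and the implicit-derivative vector $\bm{v}$ in \eqref{eq:IB-beta-ODE-notation-for-implicit-deriv} is well-defined. Differentiability of $\bm{x}(\beta)$ together with real-analyticity of $BA_\beta$ in both arguments justifies applying the chain rule componentwise.

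The main computation is $D_\beta BA_\beta$. Following Algorithm \ref{algo:BA-IB}, the explicit $\beta$-dependence enters only through lines \ref{eq:IB-BA-partition-func}–\ref{eq:IB-BA-new-direct-enc}, so I would compute, holding the input decoder and marginal fixed,
\begin{equation*}
	\pbeta{} p(\xhat|\x) \;=\; -p(\xhat|\x)\Big( D_{KL}\!\big[p(\y|\x)\|p(\y|\xhat)\big] + \pbeta{}\log Z(\x,\beta)\Big),
\end{equation*}
with $\pbeta{}\log Z(\x,\beta) = -\sum_{\xhatpp} p(\xhatpp|\x)\,D_{KL}\!\big[p(\y|\x)\|p(\y|\xhatpp)\big]$. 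At an IB root, the output equals the input, so propagating this $\beta$-derivative through line \ref{eq:IB-BA-cluster_marginal} yields $\pbeta{} p_\beta(\xhat) = \sum_\x p(\x)\,\pbeta{} p(\xhat|\x)$, and dividing by $p_\beta(\xhat)$ and using the Bayes identity $p_\beta(\xhat|\x)p(\x)/p_\beta(\xhat) = p_\beta(\x|\xhat)$ collapses the expression to exactly the bottom block of the RHS in \eqref{eq:IB-beta-ODE-in-decoder-coords}. Propagating through lines \ref{eq:IB-BA-bayes-for-computing-inverse-enc}–\ref{eq:IB-BA-decoder-eq} for the output decoder gives $\pbeta{} p_\beta(\y|\xhat) = \sum_\x p(\y|\x)\pbeta{}p_\beta(\x|\xhat)$; applying the quotient rule to $p_\beta(\x|\xhat) = p(\xhat|\x)p(\x)/p_\beta(\xhat)$, dividing by $p_\beta(\y|\xhat)$, and again collapsing via Bayes produces the $\big[1 - p(\y|\x)/p_\beta(\y|\xhat)\big]$ factor in the top block.

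The main obstacle is purely bookkeeping: tracking which quantities are being held fixed as inputs to $BA_\beta$ versus which have just been simplified by the fixed-point equation, and carrying out the quotient-rule cancellations so that two terms (one from $\pbeta{}p(\xhat|\x)$ and one from $\pbeta{}p_\beta(\xhat)$, the latter entering through the $1/p_\beta(\xhat)$ factor in Bayes) combine cleanly into the $\big[\delta_{\xhat,\xhatpp} - p_\beta(\xhatpp|\x)\big]$ structure seen on the RHS. Once that cancellation is verified, the Theorem follows by assembling the LHS from \eqref{eq:BA-Jacob-wrt-decoder-coords-at-main-text} and the RHS from the computation just outlined.
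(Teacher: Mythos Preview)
Your proposal is correct and follows essentially the same route as the paper: specialize the implicit ODE \eqref{eq:implicit-beta-ODE} to $F = Id - BA_\beta$ in log-decoder coordinates, take the Jacobian block \eqref{eq:BA-Jacob-wrt-decoder-coords-at-main-text} as given, and compute $D_\beta BA_\beta$ by differentiating the encoder and partition-function equations with the input decoder/marginal held fixed, then propagating through the marginal, Bayes, and decoder equations and simplifying at the fixed point --- exactly the computation carried out in Appendix~\ref{sub:BA-IB-beta-deriv-appendix-wrt-decoder-coords}. The only difference is presentational: the paper organizes the same chain-rule bookkeeping via the dependency graph \eqref{eq:dependencies-graph-for-BA-IB-variables}, whereas you describe it as forward propagation from $\partial_\beta p(\xhat|\x)$; the resulting cancellations producing the $[\delta_{\xhat,\xhatpp} - p_\beta(\xhatpp|\x)]$ and $[1 - p(\y|\x)/p_\beta(\y|\xhat)]$ factors are identical.
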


While the IB ODE was discovered by \cite{agmon2022thesis}, it is derived here anew in log-decoder coordinates due to the considerations in Section \ref{sec:coords-exchange-for-the-IB}. 
It is analogous to the RD ODE, due to \cite{agmon2022RTRD}; Corollary \ref{cor:RD-root-through-of-tangent-prob-at-reduced-IB-root-is-analytic} and around (in Section \ref{sub:IB-as-an-RD-problem-and-non-singularity-conj}) provides a relation between these two ODEs. 
We emphasize that the first assumption of Theorem \ref{thm:IB-ODE}, that the IB root is a differentiable function of $\beta$, is essential. It is comprised of two parts: (i) that the root can be written as a function of $\beta$, and (ii) that this function is differentiable. 
These are precisely the assumptions needed to compute the first-order implicit multivariate derivative $\bm{v}$ \eqref{eq:IB-beta-ODE-notation-for-implicit-deriv} at the given root, \cite[Section 2.1]{agmon2022RTRD}. 
Continuous IB bifurcations violate (ii) (Subsection \ref{sub:continuous-IB-bifs}), while discontinuous ones violate (i) (Subsection \ref{sub:discontinuous-IB-bifs}). 
In contrast, the requirement that no coordinate vanishes is a technical one, due to our choice of logarithmic coordinates. 

It is not necessary for the Jacobian of the IB operator \eqref{eq:IB-operator-def} (to the left of \eqref{eq:IB-beta-ODE-in-decoder-coords}) to be non-singular in order to solve the IB ODE numerically. 
Nevertheless, non-singularity of the Jacobian will follow from the sequel (see Conjecture \ref{conj:BA-IB-Jacob-in-decoder-coords-is-nonsingular-at-reduced-root} in Section \ref{sec:IB-bifurcations}). 
With that, the derivatives $\bm{v} = \tfrac{d}{d\beta} \big( \log p_\beta(\y|\xhat), \log p_\beta(\xhat) \big)$ \eqref{eq:IB-beta-ODE-notation-for-implicit-deriv} computed numerically from the IB ODE \eqref{eq:IB-beta-ODE-in-decoder-coords} at an exact root are remarkably accurate, as demonstrated in Figure \ref{fig:err-in-numerical-derivs-for-BSC-and-BAs-accuracy}. 
As in RD, \citep{agmon2022RTRD}, calculating implicit derivatives numerically loses its accuracy when approaching a bifurcation because the Jacobian is increasingly ill-conditioned there.
For comparison, the BA-IB Algorithm \ref{algo:BA-IB} also loses its accuracy near a bifurcation. This is a consequence of BA's critical slowing down, \citep{agmon2021critical}, just as with its corresponding RD variant. 

\begin{figure}[h!]
	\centering
	\vspace*{10pt}
	\ifdefined\compilefigs
	\includegraphics[width=1\textwidth]{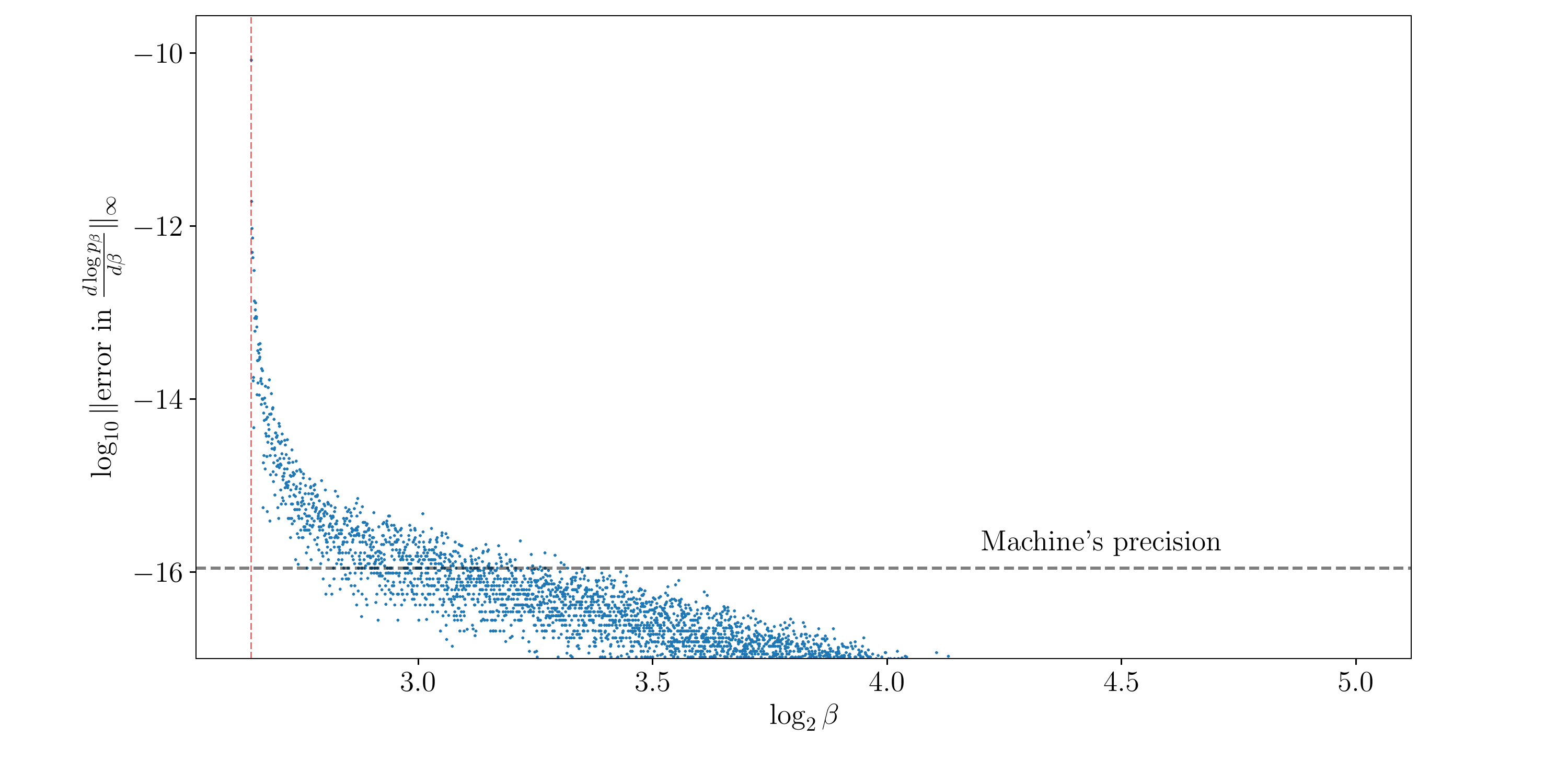}
	\includegraphics[width=1\textwidth]{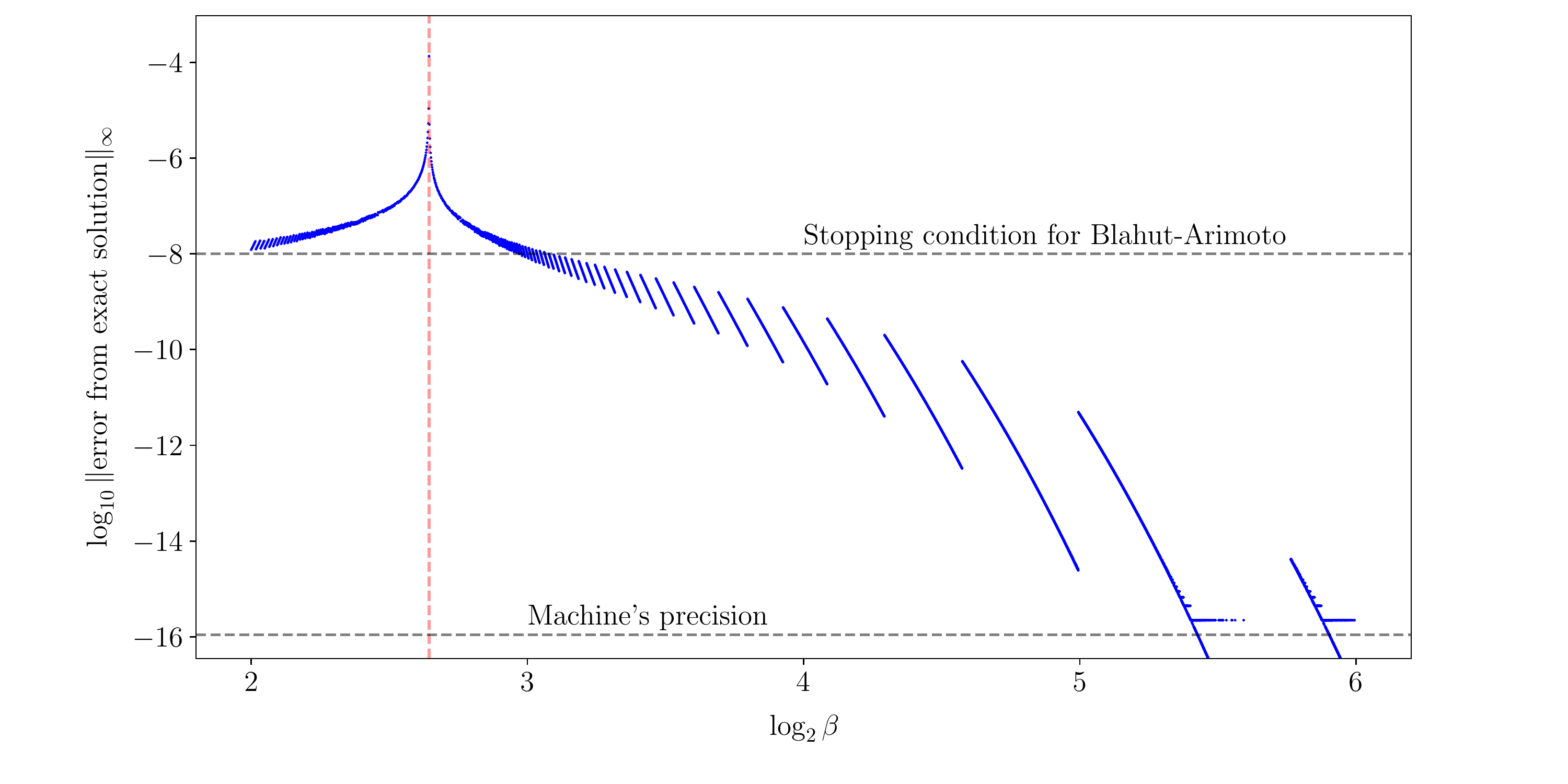}
	\else
	\includegraphics[width=1\textwidth]{figs/empty_figure}
	\includegraphics[width=1\textwidth]{figs/empty_figure}
	\fi
	\caption{
		\textbf{The implicit derivatives computed from the IB ODE \eqref{eq:IB-beta-ODE-in-decoder-coords} are very accurate, as is the BA-IB Algorithm \ref{algo:BA-IB}. However, both lose their accuracy near a bifurcation.}
		To verify their accuracy, we compared both to the exact solutions of BSC$(0.3)$ with a uniform source (see Appendix \ref{sec:analytical-IB-sol-for-BSC-appendix}).
		\textbf{Top:} Derivatives were computed at the problem's exact solution using the IB ODE \eqref{eq:IB-beta-ODE-in-decoder-coords} and compared to the problem's exact derivatives. These are accurate beyond the machine's precision, except when approaching the bifurcation (red vertical), since the Jacobian of the IB operator \eqref{eq:IB-operator-def} is ill-conditioned there. 
		\textbf{Bottom:} The $L_\infty$-errors of the solutions produced by the BA-IB Algorithm \ref{algo:BA-IB}, with a $10^{-8}$ stopping condition, and uniform initial conditions. Error is measured from the true direct encoder to avoid biases due to clusters of low mass.
		Both plots are as in \cite[Figure 2.3]{agmon2022RTRD}.
	}
	\label{fig:err-in-numerical-derivs-for-BSC-and-BAs-accuracy}
\end{figure}

Each coordinate of $\big( p(\y|\xhat), p(\xhat) \big)$ is treated by the IB ODE \eqref{eq:IB-beta-ODE-in-decoder-coords} as an independent variable. However, normalization of $p(\y|\xhat)$ imposes one constraint\footnote{ Ignoring the normalization of the marginal $p(\xhat)$.} per cluster $\xhat$. 
Thus, one might expect the behavior of BA's Jacobian \eqref{eq:BA-Jacob-wrt-decoder-coords-at-main-text} to be determined by fewer than $T\cdot \left(|\mathcal{Y}| + 1 \right)$ coordinates, at least qualitatively. 
This intuition is justified by the following Lemma \ref{lem:smaller-matrix-for-ker-of-J-in-decoder-coords}, which allows to consider the kernel of the IB operator \eqref{eq:IB-operator-def} by a smaller and simpler matrix $S$;
see Appendix \ref{sec:kernel-of-BA-IB-J-decoder-coords:appendix} for its proof. 

\begin{lemma}			\label{lem:smaller-matrix-for-ker-of-J-in-decoder-coords}
	Given an IB root as above, define a square matrix of order $T \cdot |\mathcal{Y}|$ by
	\begin{equation}			\label{eq:smaller-matrix-for-ker-of-IB-operator-in-dec-coords-main-text}
		S_{(\y, \xhat), (\yp, \xhatp)} :=
		\sum_{\x} p_\beta(\x|\xhat) \Big[ \beta \cdot \tfrac{p(\y|\x) }{p_\beta(\y|\xhat)} + \left(1 - 2\beta\right) \Big] 
			p(\yp|\x) \Big[ \delta_{\xhat, \xhatp} - p_\beta(\xhatp|\x) \Big] \;.
	\end{equation}
	Then, the nullity of the Jacobian $I - D_{\log p(\y|\xhat), \log p(\xhat)} BA_\beta$ of the IB operator \eqref{eq:IB-operator-def} equals that of $I - S$, 
	where $I$ is the identity matrix (of the respective order), and $S$ is defined by \eqref{eq:smaller-matrix-for-ker-of-IB-operator-in-dec-coords-main-text},
	\begin{equation}			\label{eq:nullity-rank-equality-with-smaller-matrix}
		\dim \ker \left( I - S \right) =
		\dim \ker \left( I - D_{\log p(\y|\xhat), \log p(\xhat)} BA_\beta \right) \;.
	\end{equation}
	Specifically, write $\bm{v} := \left(v_{\y, \xhat}\right)_{\y, \xhat}$ for a left eigenvector which corresponds to $1 \in \eig S$.
	Then, there is a bijective correspondence between the left kernels at both sides of \eqref{eq:nullity-rank-equality-with-smaller-matrix}, mapping 
	\begin{equation}			\label{eq:bijective-correspondence-between-IB-op-ker-to-eigs-of-smaller-matrix}
		\bm{v} \mapsto (\bm{v}, \bm{u}) \;,
	\end{equation}
	where $\bm{u} := \left(u_{\xhat}\right)_{\xhat}$ is defined by $u_{\xhat} := \tfrac{1 - \beta}{\beta} \cdot \sum_{\y} v_{\y, \xhat}$.
\end{lemma}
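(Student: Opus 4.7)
The plan is to exploit the natural $2\times 2$ block decomposition of the Jacobian $J := D_{\log p(\y|\xhat), \log p(\xhat)} BA_\beta$ from \eqref{eq:BA-Jacob-wrt-decoder-coords-at-main-text}, with upper-left block $J_{11}$ of size $(T|\mathcal{Y}|)\times(T|\mathcal{Y}|)$, upper-right $J_{12}$ of size $(T|\mathcal{Y}|)\times T$, lower-left $J_{21}$ of size $T\times (T|\mathcal{Y}|)$, and lower-right $J_{22}$ of size $T\times T$. Writing a left-kernel element of $I - J$ as a row vector $(\bm{v}, \bm{u})$ produces two vector equations: equation (A), indexed by decoder columns $(\yp, \xhatp)$, and equation (B), indexed by marginal columns $\xhatp$. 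I aim first to show that a suitable linear combination of (B) with the $\yp$-sum of (A) forces $\bm{u}$ to be the explicit function of $\bm{v}$ in \eqref{eq:bijective-correspondence-between-IB-op-ker-to-eigs-of-smaller-matrix}; then to substitute this back into (A) and verify entry-wise that the resulting system is exactly $\bm{v}(I - S) = 0$; and finally to check that (B) is automatically satisfied, so that the map $\bm{v}\mapsto(\bm{v}, \bm{u})$ is a bijection between the two left kernels.

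For the first step, summing equation (A) over $\yp \in \mathcal{Y}$ simplifies dramatically via tensor identities among $A, B, C$ in \eqref{eq:B_defs-for-BA-Jacob-wrt-decoder-coords} that follow from the summation rules $\sum_{\yp} p(\yp|\xpp) = 1$ and $\sum_{\xhatpp} p_\beta(\xhatpp|\xpp) = 1$, together with the decoder identity $\sum_{\xpp} p(\y|\xpp) p_\beta(\xpp|\xhat) = p_\beta(\y|\xhat)$ which holds at any IB root. In particular, $\sum_{\yp} C(\xhat, \xhatpp)_{\yp, \ypp} = B(\xhat, \xhatpp)_{\ypp}$, $\sum_{\xhatpp} A(\xhat, \xhatpp) = 1$, and $\sum_{\xhatpp} B(\xhat, \xhatpp)_\y = p_\beta(\y|\xhat)$. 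Forming the combination $\beta\cdot(\text{B}) - (1 - \beta)\cdot\bigl(\sum_{\yp}\text{A}\bigr)$ cancels all contributions involving $A$ and $B/p_\beta(\y|\xhat)$, leaving the scalar identity $\beta\, u_{\xhatp} = (1 - \beta)\sum_{\y} v_{\y, \xhatp}$, which is precisely \eqref{eq:bijective-correspondence-between-IB-op-ker-to-eigs-of-smaller-matrix}.

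Plugging this formula back into the unsummed (A) gives $v_{\yp, \xhatp} - \sum_{\y, \xhat} v_{\y, \xhat}\bigl([J_{11}]_{(\y, \xhat), (\yp, \xhatp)} + \tfrac{1-\beta}{\beta}[J_{21}]_{\xhat, (\yp, \xhatp)}\bigr) = 0$. Expanding entry-wise via \eqref{eq:BA-Jacob-wrt-decoder-coords-at-main-text} and using the symmetry $C(\xhat, \xhatp)_{\y, \yp} = C(\xhat, \xhatp)_{\yp, \y}$ matches the bracketed quantity term-by-term with the four summands obtained by distributing $\bigl[\beta\, p(\y|\x)/p_\beta(\y|\xhat) + (1 - 2\beta)\bigr]$ across $\bigl[\delta_{\xhat, \xhatp} - p_\beta(\xhatp|\x)\bigr]$ in \eqref{eq:smaller-matrix-for-ker-of-IB-operator-in-dec-coords-main-text}, so (A) becomes exactly $\bm{v}(I - S) = 0$. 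For (B), inserting the formula for $\bm{u}$ (and treating $\beta = 1$ separately, where (B) directly reads $u_{\xhatp} = 0$, consistent with the formula) reduces (B), after dividing by $(1 - \beta)$ and multiplying by $\beta$, to the $\yp$-sum of the equations $\bm{v}(I - S) = 0$, a consequence already guaranteed by (A). Hence $\bm{v}\mapsto(\bm{v}, \bm{u})$ is a well-defined linear bijection between the left kernels, giving \eqref{eq:nullity-rank-equality-with-smaller-matrix}. The main obstacle is the entry-wise reconciliation in the third step: one must carefully track how the four sub-products arising from distributing the bracket in \eqref{eq:smaller-matrix-for-ker-of-IB-operator-in-dec-coords-main-text} reorganize into the $\beta$- and $(1-\beta)$-weighted combinations of $\delta$-functions, $A, B$ and $C$ that appear in $J_{11}$ and $J_{21}$.
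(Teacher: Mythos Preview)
Your proposal is correct and follows essentially the same route as the paper's proof: both split the left-kernel condition into the two column-blocks (A) and (B), sum (A) over $\yp$, combine linearly with (B) to force the relation $u_{\xhat} = \tfrac{1-\beta}{\beta}\sum_{\y} v_{\y,\xhat}$, then substitute back into (A) to recover $\bm{v}(I-S)=0$, and finally reverse the algebra for the other direction. The only cosmetic differences are that the paper uses the combination $(\sum_{\yp}\text{A}) - (\text{B})$ (after simplification the two right-hand sides coincide, so the left-hand sides are equated directly) rather than your $\beta\cdot(\text{B}) - (1-\beta)\cdot(\sum_{\yp}\text{A})$, and you handle the $\beta=1$ edge case explicitly whereas the paper leaves it implicit.
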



In addition to offering a form more transparent than BA's Jacobian in \eqref{eq:BA-Jacob-wrt-decoder-coords-at-main-text}, Lemma \ref{lem:smaller-matrix-for-ker-of-J-in-decoder-coords} also reduces the computational cost of testing $I - D_{\log p(\y|\xhat), \log p(\xhat)} BA_\beta$ \eqref{eq:IB-beta-ODE-in-decoder-coords} for singularity, by using the smaller $I - S$ \eqref{eq:smaller-matrix-for-ker-of-IB-operator-in-dec-coords-main-text} in its place.
This makes it easier to detect upcoming bifurcations (see Conjecture \ref{conj:BA-IB-Jacob-in-decoder-coords-is-nonsingular-at-reduced-root} in Section \ref{sec:IB-bifurcations}). 
Further, one can verify directly that the IB ODE \eqref{eq:IB-beta-ODE-in-decoder-coords} indeed follows the right path.
Indeed, if the ODE is non-singular, then, by the Implicit Function Theorem, there is (locally) a unique IB root, which is a differentiable function of $\beta$. 
And so, there is a unique solution path for a numerical approximation to follow. 
Finally, we note that a relation similar to \eqref{eq:nullity-rank-equality-with-smaller-matrix} holds also for eigenvalues of $D_{\log p(\y|\xhat), \log p(\xhat)} BA_\beta$ \eqref{eq:BA-Jacob-wrt-decoder-coords-at-main-text} other than 1. This can be seen either empirically or by tracing the proof of Lemma \ref{lem:smaller-matrix-for-ker-of-J-in-decoder-coords}. 

\medskip
In Section \ref{sec:IB-bifurcations}, we shall proceed with this line of thought of removing irrelevant coordinates. 
In the following Section \ref{sec:euler-method}, we turn to reconstruct a solution path from implicit derivatives at a point, with bifurcations ignored for now.

\medskip
\section{A modified Euler method for the IB}
\label{sec:euler-method}

We follow the path of a given IB root away of bifurcation by using its implicit derivatives computed from the IB ODE \eqref{eq:IB-beta-ODE-in-decoder-coords}, of Section \ref{sec:IB-ODE}.
We follow the classic Euler method for simplicity, modifying it slightly to get the most out of the calculated derivatives. 
Improvements using more sophisticated numerical methods are left to future work. 
The detection and handling of IB bifurcations are deferred to the next Section \ref{sec:IB-bifurcations}, and thus are ignored in this Section. 

\medskip
Let $\tfrac{d\bm{x}}{d\beta} = f(\bm{x}, \beta)$ and $\bm{x}(\beta_0) = \bm{x}_0$ define an initial value problem.
In numerical approximations of ordinary differential equations (ODEs), the \textit{Euler method} for this problem is defined by setting
\begin{equation}		\label{eq:euler-method-def}
	\bm{x}_{n+1} := \bm{x}_n + \Delta \beta \cdot f(\bm{x}_n, \beta_n) \;,
\end{equation}
where $\beta_{n+1} := \beta_n + \Delta \beta$, and $| \Delta \beta |$ is the \textit{step size}.
The \textit{global truncation error} $\max_n \|\bm{x}_n - \bm{x}(\beta_n)\|_\infty$ is the largest error of the approximations $\bm{x}_n$ from the true solutions $\bm{x}(\beta_n)$.
A numerical method for solving ODEs is said to be \textit{of order $d$} if its global truncation error is of order $O(|\Delta \beta|^d)$, for step sizes $|\Delta \beta|$ small enough. 
Euler's method error analysis is a standard result, e.g., \cite[Theorem 212A]{butcher2016numerical} or \cite[Theorem 2.4]{atkinson2011numerical}, brought as Theorem \ref{thm:euler-method-error-analysis} below. 
It shows that Euler's method \eqref{eq:euler-method-def} is of order $d = 1$, under mild assumptions, as demonstrated in Figure \ref{fig:Euler-method-for-IB-dec}.  
The immediate generalization of \eqref{eq:euler-method-def} using derivatives till order $d$ is \textit{Taylor's method}, which is a method of order $d$.

\begin{thm}[Euler's method error analysis]			\label{thm:euler-method-error-analysis}
	Let an initial-value problem be defined on $\left[\beta_0, \beta_f\right]$ by $\tfrac{d\bm{x}}{d\beta} = f(\bm{x}, \beta)$ as above\footnote{ 
		The initial condition $\bm{x}_0$ in \eqref{eq:euler-method-gte-bound} is allowed to deviate from the true solution $\bm{x}(\beta_0)$. 
	}, and suppose that $f$ satisfies the Lipschitz condition with some constant $L > 0$. Namely, $\|f(\bm{x}, \beta) - f(\bm{x}', \beta)\|_\infty \leq L \cdot \|\bm{x} - \bm{x}'\|_\infty$ for every $\bm{x}, \bm{x}'$ and $\beta \in \left[\beta_0, \beta_f\right]$. 
	
	Then, Euler method's \eqref{eq:euler-method-def} global truncation error satisfies
	\begin{equation}		\label{eq:euler-method-gte-bound}
		\max_{\beta_0 \leq \beta_n \leq \beta_f} \| \bm{x}_n - \bm{x}(\beta_n) \|_\infty \leq
		e^{(\beta_f - \beta_0) L} \| \bm{x}_0 - \bm{x}(\beta_0) \|_\infty + 
		\frac{e^{(\beta_f - \beta_0) L} - 1}{L} \cdot \tfrac{1}{2} |\Delta \beta| \max_{\beta_0 \leq \beta \leq \beta_f} \left\| \tfrac{d^{2} \bm{x}(\beta)}{d\beta^{2}} \right\|_\infty \;.
	\end{equation}
\end{thm}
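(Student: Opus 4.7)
The plan is to prove Theorem~\ref{thm:euler-method-error-analysis} along the classical lines used for any one-step ODE solver: bound the local (one-step) truncation error by a Taylor expansion, then propagate the accumulated error through the iteration by a discrete Gr\"onwall-type inequality driven by the Lipschitz constant $L$. No IB-specific structure enters; the result is a standard ODE fact, so my ``proof'' is really just a compact recollection of the two ingredients.

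First I would expand the exact solution at each grid point. Since $\tfrac{d\bm{x}}{d\beta} = f(\bm{x}(\beta),\beta)$ and the right-hand side of \eqref{eq:euler-method-gte-bound} implicitly assumes that $\tfrac{d^2\bm{x}}{d\beta^2}$ exists and is bounded on $[\beta_0,\beta_f]$, Taylor's theorem with Lagrange remainder yields, for some $\xi_n \in (\beta_n,\beta_{n+1})$,
\[
\bm{x}(\beta_{n+1}) \;=\; \bm{x}(\beta_n) + \Delta\beta\, f(\bm{x}(\beta_n),\beta_n) + \tfrac{1}{2}(\Delta\beta)^2 \tfrac{d^2\bm{x}}{d\beta^2}(\xi_n).
\]
Setting $\bm{e}_n := \bm{x}_n - \bm{x}(\beta_n)$ and subtracting the Euler update \eqref{eq:euler-method-def}, the triangle inequality on $\norm{\cdot}_\infty$ together with the Lipschitz hypothesis applied to $f(\bm{x}_n,\beta_n)-f(\bm{x}(\beta_n),\beta_n)$ gives the one-step recurrence
\[
\norm{\bm{e}_{n+1}}_\infty \;\leq\; (1 + L|\Delta\beta|)\,\norm{\bm{e}_n}_\infty \;+\; \tfrac{1}{2}(\Delta\beta)^2 M, \qquad M \,:=\, \max_{\beta\in[\beta_0,\beta_f]} \norm{\tfrac{d^2\bm{x}}{d\beta^2}}_\infty .
\]

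Second, I would solve this affine scalar recurrence in closed form. Iterating from $n=0$ and summing the resulting geometric series gives
\[
\norm{\bm{e}_n}_\infty \;\leq\; (1+L|\Delta\beta|)^n\,\norm{\bm{e}_0}_\infty \;+\; \frac{(1+L|\Delta\beta|)^n - 1}{L|\Delta\beta|}\cdot \tfrac{1}{2}(\Delta\beta)^2 M .
\]
Using $1+x\leq e^x$ together with $n|\Delta\beta|\leq \beta_f-\beta_0$ yields $(1+L|\Delta\beta|)^n\leq e^{L(\beta_f-\beta_0)}$; substituting and taking the maximum over all admissible $n$ delivers exactly the bound \eqref{eq:euler-method-gte-bound}.

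There is no real obstacle: the Lipschitz hypothesis is given, and the second-derivative bound $M$ is absorbed into the right-hand side of \eqref{eq:euler-method-gte-bound}, so the only genuine work is the bookkeeping in the discrete Gr\"onwall step above, which is routine. The mildest subtlety is verifying twice-differentiability of $\bm{x}(\beta)$ on $[\beta_0,\beta_f]$; in our setting this is supplied by Theorem~\ref{thm:IB-ODE} wherever the IB root remains a differentiable function of $\beta$, i.e.\ away from bifurcations. Since the argument is textbook, I would simply cite \cite[Theorem~212A]{butcher2016numerical} or \cite[Theorem~2.4]{atkinson2011numerical} in the final write-up rather than spell it out in full.
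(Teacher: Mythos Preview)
Your proposal is correct and matches the paper's treatment: the paper does not prove this theorem at all but simply cites \cite[Theorem~212A]{butcher2016numerical} and \cite[Theorem~2.4]{atkinson2011numerical}, exactly as you suggest doing in your final write-up. Your sketch of the Taylor-remainder plus discrete Gr\"onwall argument is the standard textbook proof behind those references, so there is nothing to add.
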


\setcounter{footnote}{8}

\begin{figure}[h!]
	\centering
	\vspace*{-10pt}
	\ifdefined\compilefigs
	\includegraphics[trim={0 0 0 1cm}, clip, width=.9\textwidth]{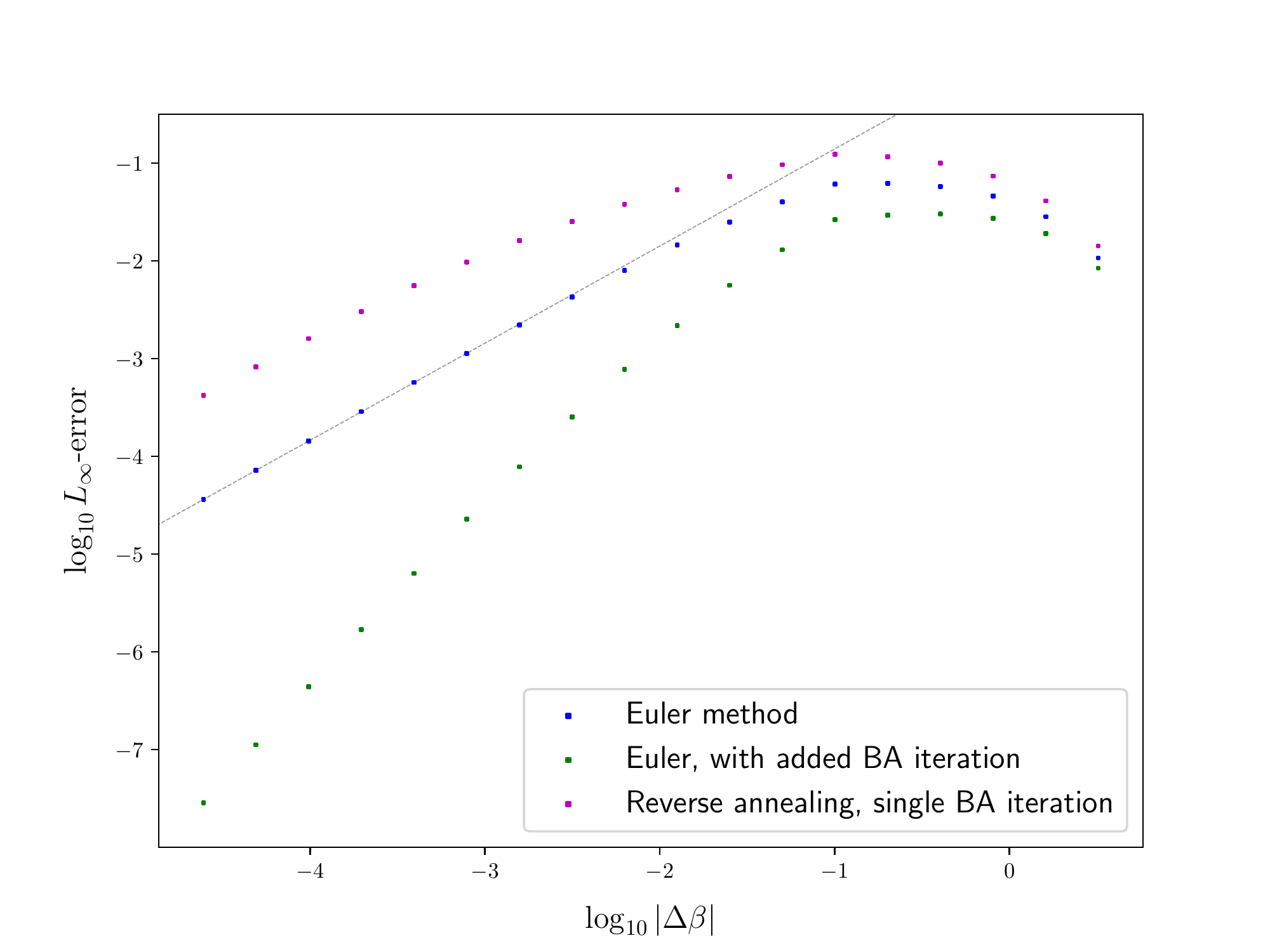}
	\else
	\includegraphics[width=.9\textwidth]{figs/empty_figure}
	\fi
	\caption{
		\textbf{Error by step-size for a vanilla Euler method using the IB ODE \eqref{eq:IB-beta-ODE-in-decoder-coords}, and with an added BA-IB iteration at each step}, for BSC$(0.3)$ with a uniform source (Appendix \ref{sec:analytical-IB-sol-for-BSC-appendix}).
		The linear regression (dashed black) of the third leftmost markers for the vanilla Euler method is of slope $0.99$ ($R^2 \simeq 1$), matching the theory's prediction almost perfectly.
		A similar regression (not shown) for Euler's method with a single added BA iteration is of nearly double slope $1.93$.
		For comparison, reverse deterministic annealing with a single BA iteration at each grid point yields a slope of $0.91$ in this example. 
		Taking a larger (pre-determined) number of iterations at each grid point pushes the error downwards, as expected. 
		Yet, the resulting slopes approach 1 as the number of iterations is increased (not shown). 
		See main text and Appendix \ref{sec:approx-error-analysis-for-BA-and-Euler-for-IB-appendix} for details.
		\newline
		The error was calculated as the supremum of the pointwise errors as in Figure \ref{fig:err-in-numerical-derivs-for-BSC-and-BAs-accuracy}, over the interval $[\beta_c + \tfrac{1}{10}, \beta_0]$ which contains no bifurcation. 
		Each method was initialized with the exact solution at $\beta_0 = 2^5$, with $\Delta \beta = -\frac{103}{32}$ halved between consecutive markers.
	}
	\label{fig:Euler-method-for-IB-dec}
	%
	%
	%
	%
	%
	%
	%
	%
\end{figure}

Specializing Euler's method to our needs, replace $\bm{x}$ in \eqref{eq:euler-method-def} above by the log-decoder coordinates of an IB root, as in Section \ref{sec:IB-ODE}.
So long that an IB root $\bm{p}_\beta := \big( p_\beta(\y|\xhat), p_\beta(\xhat) \big)$ is a differentiable function of $\beta$ in the vicinity of $\beta_n$, it can be approximated by
\begin{align}				\label{eq:first-order-approx}
\begin{split}
	\log p_{\beta_{n+1}}(\y|\xhat) &\approx
	\log p_{\beta_n}(\y|\xhat) + \Delta \beta \cdot \frac{d \log p_\beta(\y|\xhat) }{d\beta} \bigg\rvert_{ \bm{p}_{\beta_n}}  \quad \text{and} \\
	\log p_{\beta_{n+1}}(\xhat) &\approx
	\log p_{\beta_n}(\xhat) + \Delta \beta \cdot \frac{d \log p_\beta(\xhat) }{d\beta} \bigg\rvert_{ \bm{p}_{\beta_n}} \;,
\end{split}
\end{align}
where $\tfrac{d \log p_\beta(\y|\xhat) }{d\beta}$ and $\tfrac{d \log p_\beta(\xhat) }{d\beta}$ are calculated from the IB ODE \eqref{eq:IB-beta-ODE-in-decoder-coords}.
Thus, applying \eqref{eq:first-order-approx} repeatedly, we obtain an Euler method for the IB.
We shall take only negative steps $\Delta \beta < 0$ when approximating the IB, due to reasons explained in Subsection \ref{sub:discontinuous-IB-bifs} (after Proposition \ref{prop:bif-is-detectable-only-on-enough-clusters}).
In contrast to the BA-IB Algorithm \ref{algo:BA-IB}, Euler's method \eqref{eq:first-order-approx} can be used to extrapolate intermediate points, yielding a piecewise linear approximation of the root.

The problem of tracking an operator's root belongs in general to a family of hard-to-solve numerical problems --- known as \textit{stiff} --- if the problem has a bifurcation, \cite[Section 7.2]{agmon2022RTRD}. 
e.g., \cite{butcher2016numerical} or \cite{atkinson2011numerical} on stiff differential equations. 
Stopping early in the vicinity of a bifurcation restricts the computational difficulty and permits convergence guarantees. 
Early-stopping in the IB shall be handled later, in Subsection \ref{sub:continuous-IB-bifs}.
\cite[Theorem 5]{agmon2022RTRD} proves that Euler's method convergence guarantees (Theorem \ref{thm:euler-method-error-analysis}) hold for the closely related Euler method for RD with early stopping. 
While Euler's method may inadvertently switch between solution branches, the latter guarantees ensure that it indeed follows the true solution path between bifurcations, if the step size $|\Delta \beta|$ is small enough and initializing close enough to the true solution. 
While we do not dive into these details for brevity, we note that similar convergence guarantees can also be proven here. 
Alternatively, Euler's method can be ensured to follow the true solution path by noting that an optimal IB root is (strongly) stable when negative steps $\Delta\beta < 0$ are taken; these details are deferred to Subsection \ref{sub:IBRT1-discussion}, as they depend on Section \ref{sec:IB-bifurcations}. 

Following the discussion in Section \ref{sec:coords-exchange-for-the-IB}, there is a subtle disadvantage in choosing decoder coordinates as our variables compared to the other two coordinate systems there.
Indeed, recall that the IB is defined as a maximization over Markov chains $Y \longleftrightarrow X \longleftrightarrow \hat{X}$.
An (arbitrary) encoder $p(\xhat|\x)$ defines a joint probability distribution $p(\xhat|\x) p(\y|\x) p(\x)$ which is Markov.
An inverse-encoder pair also similarly defines a Markov chain. 
In contrast, an arbitrary decoder pair $\big(p(\y|\xhat), p(\xhat)\big)$ need \textit{not} necessarily define a Markov chain.
Rather, by invoking the error analysis of Euler's method, one can see that Markovity is approximated at an increasingly improved quality as the step-size $|\Delta \beta|$ in \eqref{eq:first-order-approx} becomes smaller. 
To enforce Markovity, we shall perform a single BA iteration (in decoder coordinates) after each Euler method step.
This ensures that the newly generated decoder pair satisfies the Markov condition, as it is now generated from an encoder.

As a side effect, adding a single BA-IB iteration after each Euler method step improves the approximation's quality significantly. 
By linearizing $BA_\beta$ around a fixed point, one can show that deterministic annealing with a fixed number of BA iterations per grid point is a first-order method. Thus, deterministic annealing may arguably be considered a first-order method, as is with Euler's method. 
A similar argument shows that adding a single BA iteration after each Euler method step yields a second-order method. 
However, while a larger number of added BA iterations obviously improves the approximation's quality, it does not improve the method's order. 
See Appendix \ref{sec:approx-error-analysis-for-BA-and-Euler-for-IB-appendix} for an approximate error analysis. 
The predicted orders are in good agreement with the ones found empirically, shown in Figure \ref{fig:Euler-method-for-IB-dec}. 
We note that while \citep{agmon2022RTRD} did not attempt an added BA iteration, they do discuss a variety of other improvements to root-tracking (see Section 3.4 there).

\medskip
\section{On IB bifurcations}
\label{sec:IB-bifurcations}

For a bifurcation to exist, it is necessary that the Jacobian of the IB operator \eqref{eq:IB-operator-def} would be singular, as illustrated by Figure \ref{fig:eigenvals-of-BA-IB-jacobian-for-BSC}. 
However, a priori singularity is not sufficient to detect a bifurcation (cf., \cite[Section 3.1]{gedeon2012mathematical}), nor does this allow to distinguish between bifurcations of different types.
In order to be able to exploit the IB's ODE \eqref{eq:IB-beta-ODE-in-decoder-coords} (Section \ref{sec:IB-ODE}), we shall now take a closer look into its bifurcations. 
These can be broadly classified into two types: where an optimal root is continuous in $\beta$ and where it is not. 
As noted after Theorem \ref{thm:IB-ODE}, each type violates an assumption necessary to compute implicit derivatives. 
Sections \ref{sub:continuous-IB-bifs} and \ref{sub:discontinuous-IB-bifs} provide the means to identify bifurcations, distinguish between their types, and handle them accordingly (for continuous bifurcations).
To facilitate the discussion, Section \ref{sub:IB-as-an-RD-problem-and-non-singularity-conj} considers the IB as a rate-distortion problem, following \cite{harremoes2007information} and others. 
This allows us to leverage recent insights on RD bifurcations, \citep{agmon2022RTRD}, while suggesting a ``minimally sufficient'' choice of coordinates for the IB. 
The latter permits a clean treatment of continuous IB bifurcations in Section \ref{sub:continuous-IB-bifs}.
Viewing the IB as an infinite-dimensional RD problem facilitates the understanding of its discontinuous bifurcations, which in turn highlight subtleties in its finite-dimensional coordinate systems (of Section \ref{sec:coords-exchange-for-the-IB}). 
These provide insight into the IB and are also of practical implications (Section \ref{sub:discontinuous-IB-bifs}), and so are necessary for our algorithms at Section \ref{sec:IBRT1-algo}. 

\begin{figure}[h!]
	\centering
	\vspace*{10pt}
	\ifdefined\compilefigs
	\includegraphics[width=.8\textwidth]{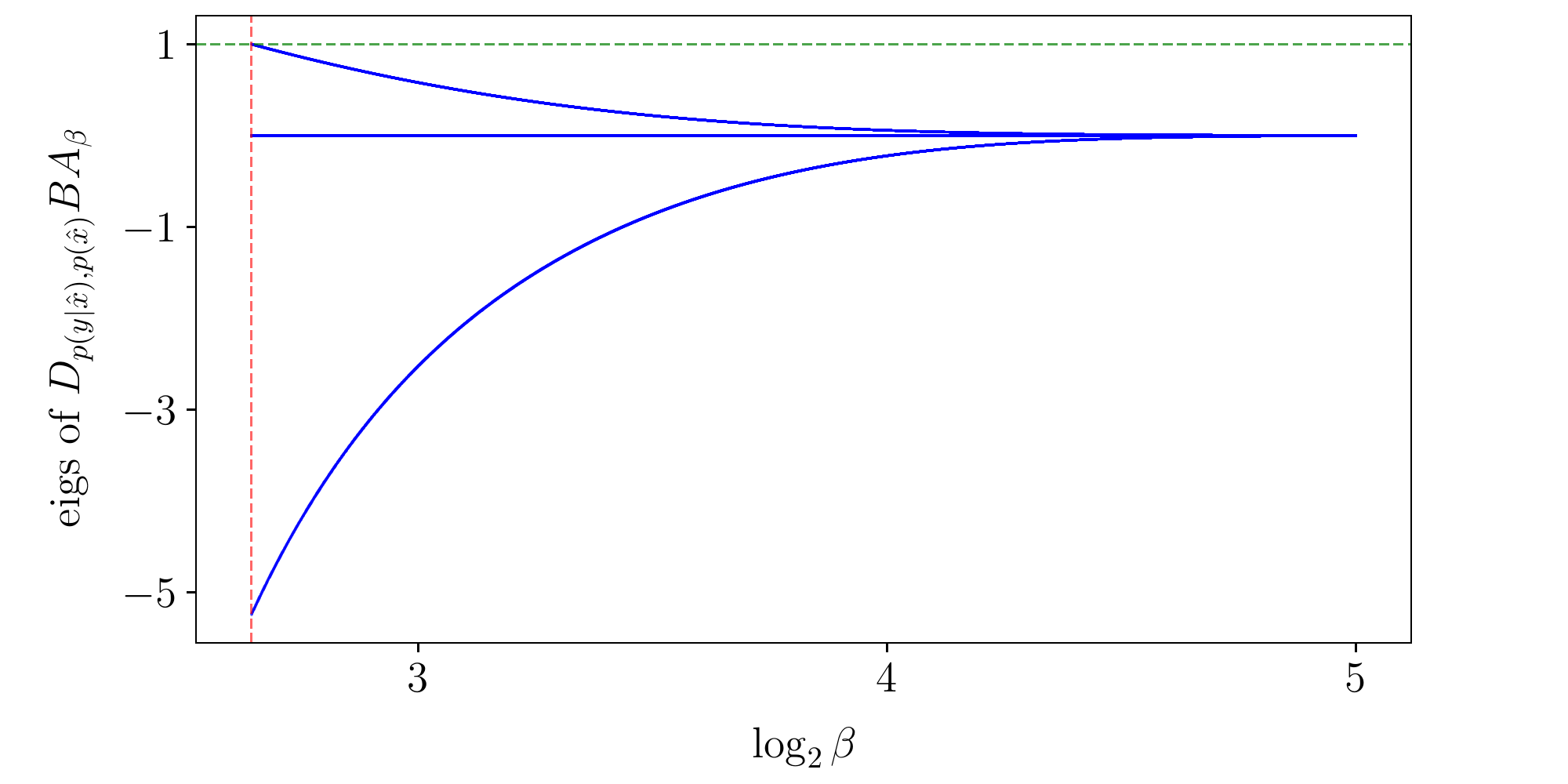}
	\else
	\includegraphics[width=.8\textwidth]{figs/empty_figure}
	\fi
	\caption{
		\textbf{While the Jacobian $D_{\log p(\y|\xhat), \log p(\xhat)} (Id - BA_\beta)$ must be singular at a bifurcation, this does not suffice to identify its type}. 
		The Jacobian eigenvalues of $BA_\beta$ \eqref{eq:BA-Jacob-wrt-decoder-coords-at-main-text} with respect to log-decoder coordinates are plotted for BSC$(0.3)$ with a uniform source, as in Figure \ref{fig:IBRT1-IB-curve-for-several-densitires}; see Appendix \ref{sec:analytical-IB-sol-for-BSC-appendix} for its exact solution.
		An eigenvalue reaches one (dashed green) precisely at the bifurcation (dashed red vertical), as expected by Conjecture \ref{conj:BA-IB-Jacob-in-decoder-coords-is-nonsingular-at-reduced-root} in Section \ref{sub:IB-as-an-RD-problem-and-non-singularity-conj}. 
		In particular, the Jacobian is increasingly ill-conditioned when approaching the bifurcation, as noted in Figure \ref{fig:err-in-numerical-derivs-for-BSC-and-BAs-accuracy} (top). 
		While this allows to detect the bifurcation, identifying its type is necessary for handling it.
	}
	\label{fig:eigenvals-of-BA-IB-jacobian-for-BSC}
\end{figure}

\medskip
\subsection{The IB as a rate-distortion problem}
\label{sub:IB-as-an-RD-problem-and-non-singularity-conj}

We now explore the intimate relation between the IB and RD, following \cite{harremoes2007information} and \cite{bachrach2003}. This leads to a ``minimally sufficient'' coordinate system for the IB, thereby completing the work of Section \ref{sec:coords-exchange-for-the-IB}. 
In this coordinate system, results \citep{agmon2022RTRD} on the dynamics of RD roots are readily considered in IB context. This leads to Conjecture \ref{conj:BA-IB-Jacob-in-decoder-coords-is-nonsingular-at-reduced-root}, that the IB operator \eqref{eq:IB-operator-def} in these coordinates is typically non-singular. 
The discussion here facilitates the treatment of IB bifurcations in the following Sections \ref{sub:continuous-IB-bifs} and \ref{sub:discontinuous-IB-bifs}.

\medskip 
First, recall a few definitions.
A \textit{rate distortion problem} on a \textit{source alphabet} $\mathcal{X}$ and a \textit{reproduction alphabet} $\hat{\mathcal{X}}$ is defined by a \textit{distortion measure}\footnote{ Also called a \textit{single-letter fidelity criterion}. It is a non-negative real-valued function on $\mathcal{X}\times \hat{\mathcal{X}}$, with no further requirements. Or equivalently, an $|\mathcal{X}|$-by-$|\hat{\mathcal{X}}|$ matrix. e.g., \cite[Section 2.2]{berger71}. } $d:\mathcal{X}\times \hat{\mathcal{X}} \to \bb{R}_{\geq 0}$ and a source distribution $p_X(x)$. 
One seeks the minimal rate $I(X; \hat{X})$ subject to a constraint $D$ on the expected distortion $\bb{E}\left[d(\x, \xhat)\right]$, \citep{shannon1948mathematical, shannon1959fidelity},
\begin{equation}			\label{eq:RD-curve-def}
	R(D) := \min_{p(\xhat|\x)} \left\{ I(X; \hat{X}): \; \bb{E}_{p(\xhat|\x) p_X(\x)}\left[d(\x, \xhat)\right] \leq D \right\} \;,
\end{equation}
known as the \textit{rate-distortion curve}.
The minimization is over \textit{test channels} $p(\xhat|\x)$. 
A test channel that attains the RD curve \eqref{eq:RD-curve-def} is called an \textit{achieving distribution}.
We say that an RD problem is \textit{finite} if both of the alphabets $\mathcal{X}$ and $\hat{\mathcal{X}}$ are finite. 
Using Lagrange multipliers for \eqref{eq:RD-curve-def} with\footnote{ Normalization constraints omitted for clarity.} $I(X; \hat{X}) + \beta \; \bb{E}[d(x, \hat{x})]$, one obtains a pair of fixed-point equations
\begin{equation}			\label{eq:RD-fixed-point-eqs}
	p(\xhat|\x) = \frac{p(\xhat) e^{-\beta \; d(\x, \xhat)}}{\sum_{\xhat} p(\xhat) e^{-\beta \; d(\x, \xhat)}}
	\quad \text{and} \quad
	p(\xhat) = \sum_{\x} p(\xhat|\x) p(\x)
\end{equation}
in the marginal $p(\xhat)$ and test channel $p(\xhat|\x)$, similar to the IB Equations \eqref{eq:IB-eq-encoder} and \eqref{eq:IB-eq-marginal}. 
Iterating over these is Blahut's algorithm for RD, \cite{blahut1972}, denoted $BA_\beta^{RD}$ here.
As with the IB \eqref{eq:IB-curve-def}, $\beta$ parametrizes the slope of the optimal curve \eqref{eq:RD-curve-def} also for RD.
See \cite{Cover2006} or \cite{berger71} for an exposition of rate-distortion theory. 

We clarify a definition needed to rewrite the IB as an RD problem. 
We define the \textit{simplex} $\Delta[S]$ on a (possibly infinite) set $S$ as the collection of finite formal convex combinations $\sum_s a_s \cdot s$ of elements of $S$. 
That is, as the $S$-indexed vectors\footnote{ Equivalently, as functions mapping each element $s$ of $S$ to a real number $a_s$. } $(a_s)_{s\in S}$ that satisfy $\sum_s a_s = 1$ and $a_s \geq 0$, with $a_s$ non-zero for only finitely many elements $s$ (the \textit{support} of $(a_s)_s$). 
Addition and multiplication are defined pointwise, as in $\sum_s a_s\cdot s + \sum_s b_s\cdot s = \sum_s (a_s+b_s)\cdot s$. 
$\Delta[S]$ is closed under finite convex combinations because the sum of finitely supported vectors is finitely supported. 
When taking $S = \{\bm{e}_1, \dots, \bm{e}_n\}$ the standard basis vectors $(\bm{e}_i)_j = \delta_{i, j}$ of $\bb{R}^n$, then one can identify the formal operations with those in $\bb{R}^n$, reducing the simplex $\Delta[S]$ to its usual definition. 
We write $r$ for an element of $\Delta[\mathcal{Y}]$. 
In particular, an element of $\Delta\big[\Delta[\mathcal{Y}]\big]$ is merely a finite convex combination $\sum_{\xhat} p(\xhat) r_{\xhat}$ of distinct\footnote{ Note that $\Delta[S]$ is a set.} probability distributions $r_{\xhat}(y) \in \Delta[\mathcal{Y}]$ on $\mathcal{Y}$. 
When setting $\hat{\mathcal{X}} \subset \Delta[\mathcal{Y}]$ to be a finite subset of distributions, $|\hat{\mathcal{X}}| < \infty$, then $\Delta[\hat{\mathcal{X}}]$ is a special case\footnote{ Unlike $\Delta[\hat{\mathcal{X}}]$ here, the decoder coordinates of Section \ref{sec:coords-exchange-for-the-IB} are \textit{not} required to have their clusters $r$ distinct.} of the decoder coordinates of Section \ref{sec:coords-exchange-for-the-IB}. 

Now, let a finite IB problem be defined by a joint probability distribution $p_{Y|X} \; p_X$, as in Section \ref{sec:introduction}.
To write it down as an RD problem, \citep{harremoes2007information, bachrach2003}, define the \textit{IB distortion measure} by
\begin{equation}			\label{eq:d_IB}
	d_{IB}(x, r) := D_{KL}\left[ p_{Y|X=x} || r \right] \;,
\end{equation}
for $x\in \mathcal{X}$, $r \in \Delta[\mathcal{Y}]$, and $p_{Y|X=x}\in \Delta[\mathcal{Y}]$ the conditional probability distribution at $X = x$. 
The distortion measure $d_{IB}$ \eqref{eq:d_IB} and $p_X$ define an RD problem on the continuous reproduction alphabet $\hat{\mathcal{X}} := \Delta[\mathcal{Y}]$. 
Minimizing the IB Lagrangian $\mathcal{L}$ (at Section \ref{sec:introduction}) is equivalent to minimizing the Lagrangian of this RD problem, \citep[Theorem 5]{harremoes2007information}. 
That is, the IB is a rate-distortion problem when considered in these coordinates\footnote{ 
	The astute reader might notice that the IB Equations \eqref{eq:IB-eq-encoder} and \eqref{eq:IB-eq-marginal} are then equivalent to RD's fixed-point Equations \eqref{eq:RD-fixed-point-eqs}, with \eqref{eq:IB-eq-decoder} implied by the IB's Markovity. 
	The IB's $Y$-information $I(Y; \hat{X})$ equals the expected distortion $\bb{E} \left[d_{IB}(\x, \xhat)\right]$ at \eqref{eq:RD-curve-def} up to a constant, \citep[Section 5]{harremoes2007information}, and so is linear in the test channel $p(r|\x)$.
}. 
IB clusters $r\in \Delta[\mathcal{Y}]$ assume the role of RD reproduction symbols, while an IB root (considered now as an RD root) is equivalently described either by the probabilities of each cluster --- namely, by a point in $\Delta\big[\Delta[\mathcal{Y}]\big]$ --- or, by a test channel $p(r|\x)$.
Unlike the finite-dimensional coordinate systems of Section \ref{sec:coords-exchange-for-the-IB}, this definition of the IB entails no subtleties due to finite-dimensionality, such as duplicate clusters (see more below). 
However, while it allows to spell out the IB explicitly as an RD problem, handling an infinite reproduction alphabet is difficult for practical purposes. 
Since no more than $|\mathcal{X}| + 1$ reproduction symbols are needed to write down an IB root (see therein), this motivates one to consider the IB's \textit{local} behavior, with clusters fixed.

So instead, one may require the reproduction symbols of $d_{IB}$ \eqref{eq:d_IB} to be in a list\footnote{ We use here a tuple $(r_{\xhat_1}, \dots, r_{\xhat_T})$ rather than a set, since the points $r_{\xhat}$ need \textit{not} be distinct a priori.}  $\left(r_{\xhat}\right)_{\xhat}$ indexed by some finite set $\hat{\mathcal{X}}$, with each $r_{\xhat}$ in $\Delta[\mathcal{Y}]$. 
This defines a finite RD problem, for which $d_{IB}$ \eqref{eq:d_IB} is merely an $|\mathcal{X}|$-by-$T$ matrix.
Yet, placing identical clusters in the list $\left(r_{\xhat}\right)_{\xhat}$ inadvertently introduces degeneracy to the matrix $d_{IB}$ \eqref{eq:d_IB}, as discussed below. 
\cite[Section 6]{bachrach2003} take $\left(r_{\xhat}\right)_{\xhat}$ to be the decoders defined by a given encoder $p(\xhat|\x)$, as in Equation \eqref{eq:coordinate-sets-parameterizing-an-IB-root} (Section \ref{sec:coords-exchange-for-the-IB}). We shall then refer to $d_{IB}$ \eqref{eq:d_IB} as \textit{the distortion matrix defined by $p(\xhat|\x)$}. 
When $p_{\beta_0}(\xhat|\x)$ is an optimal IB root then the problem $(d_{IB}, p_X)$ defined by it is called \textit{the tangent RD problem}. 
Indeed, its RD curve \eqref{eq:RD-curve-def} coincides\footnote{ Note that an optimal choice of IB clusters is already encoded into $d_{IB}$ \eqref{eq:d_IB} here. With $d_{IB}$ and $p_X$ given, solving the IB boils down to finding the optimal cluster weights $p(\xhat)$, which is an RD problem. } with the IB curve \eqref{eq:IB-curve-def} at this point. However, the curves differ outside this point since IB clusters usually vary with $\beta$, while the distortion of the tangent problem is defined at $p_{\beta_0}(\xhat|\x)$ and so is fixed. 
By definition \eqref{eq:IB-curve-def}, it follows that the IB curve is a lower envelope of the curves of its tangent RD problems, \citep[Corollary 2]{bachrach2003}.
We note that a similar construction can also be carried out in inverse-encoder coordinates; cf., \citep{witsenhausen1975conditional}.

Regardless of the formulation used to rewrite the IB as an RD problem, the associated RD problem has an expected distortion $\bb{E}[d_{IB}]$ of $I(X; Y) - I(\hat{X}; Y)$ at an IB root, \citep[Section 5]{harremoes2007information}, \citep[Lemma 8]{bachrach2003}. 
That is, the IB is a lossy compression method that strives to preserve the relevant information $I(\hat{X}; Y)$.  
Due to the Markov condition, information on $Y$ is available only through $X$. 
Thus, one may intuitively consider the IB as a lossy compression method of the information on $Y$ that is embedded in $X$. 
These intimate relations between the IB and RD suggest that studying bifurcations in either context could be leveraged to understand the other. 
Bifurcations in finite RD problems are discussed at length in \cite[Section 6]{agmon2022RTRD}.
To facilitate the study of IB bifurcations in the sequel (Sections \ref{sub:continuous-IB-bifs} and \ref{sub:discontinuous-IB-bifs}) using results from RD, we need a ``minimally-sufficient'' coordinate system for the IB.

Consider an IB root in decoder coordinates as finitely-many $p(\xhat)$-weighted points $r_{\xhat}(\y)$ in $\Delta[\mathcal{Y}]$, as in Section \ref{sec:coords-exchange-for-the-IB}.
Exchanging \textit{to} decoder coordinates (Equation \eqref{eq:coordinate-sets-parameterizing-an-IB-root} there) is well-defined so long that there are no zero mass clusters, $\forall \xhat \; p(\xhat) \neq 0$. 
Yet, even then, the points $r_{\xhat}$ in $\Delta[\mathcal{Y}]$ yielded by Equations \algref{algo:BA-IB}{eq:IB-BA-cluster_marginal} through \algref{algo:BA-IB}{eq:IB-BA-decoder-eq} need \textit{not} be distinct. 
Namely, they may yield identical clusters $r_{\xhat} = r_{\xhatp}$ at distinct indices $\xhat \neq \xhatp$.
This leads to a discussion of structural symmetries of the IB (its degeneracies), which is not of use for our purposes; cf., \cite{gedeon2012mathematical}.
To avoid such subtleties, we shall say that an IB root is \textit{reduced} if it has no zero-mass clusters, $\forall \xhat \; p(\xhat) \neq 0$, and all its clusters are distinct, $\xhat = \xhatp \Leftrightarrow r_{\xhat} = r_{\xhatp}$. 
A root that is not reduced is \textit{degenerate} or \textit{represented degenerately}. 
An IB root can be reduced by removing clusters of zero mass and merging identical clusters of distinct indices --- see Algorithm \ref{algo:root-reduction} in Section \ref{sub:continuous-IB-bifs} below. 
It is straightforward to see from the IB Equations \eqref{eq:IB-eq-encoder}-\eqref{eq:IB-eq-marginal} that reduction preserves the property of being an IB root.
Similarly, reducing a root does not change its location in the information plane. 
So, a root achieves the IB curve \eqref{eq:IB-curve-def} if and only if its reduction does.
Therefore, reduction decreases the dimension in which the problem is considered while preserving all its essential properties.
This allows to represent an IB root on the smallest number of clusters possible --- its \textit{effective cardinality} --- by factoring out the IB's structural symmetries. cf., \citep[2.3 in Chapter 7]{zaslavsky2019thesis}, upon which this definition is based. 

While the purpose of reduction is to mod-out redundant kernel coordinates (see Section \ref{sec:introduction}), it highlights the differences between the various IB definitions found in the literature\footnote{ e.g., both of the IB formulations \citep{harremoes2007information, witsenhausen1975conditional} do not impose an a priori restriction on the number of clusters. The former does not enable one to encode duplicate clusters, while the latter does. The formulation \citep{tishby1999} ignores these subtleties altogether. \citep{gedeon2012mathematical, parker2022symmetry_breaking} consider the IB on a pre-determined number of possibly duplicate clusters. }, bringing to light a subtle caveat of finite dimensionality. 
To see this, note that reduction could have been defined above in terms of the other coordinate systems of the IB. 
Its definition in inverse-encoder coordinates is nearly identical to that above, while defining it in encoder coordinates is a straightforward exercise. 
Since the coordinate systems of Section \ref{sec:coords-exchange-for-the-IB} are equivalent at an IB root (without zero-mass clusters), the precise definition does not matter then. 
Each of these parameterizations encodes the coordinates $r(y)$ of a root's clusters $r$ using a finite-dimensional vector $\bm{x}$ (note Equation \eqref{eq:coordinate-sets-parameterizing-an-IB-root}). 
This enables one to represent duplicate clusters $\xhat \neq \xhatp$ with $r_{\xhat} = r_{\xhatp}$, and obliges one to choose the order in which clusters are being encoded into the coordinates of $\bm{x}$. 
A finite-dimensional representation $\bm{x}$ of an IB root is invariant to interchanging clusters $\xhat \neq \xhatp$ precisely when they are identical, $r_{\xhat} = r_{\xhatp}$. 
The IB's functionals (e.g., its $X$ and $Y$-information) are invariant to any cluster permutation. 
cf., \citep{gedeon2012mathematical, parker2022symmetry_breaking}. 
Both of these structural symmetries result from using a finite-dimensional parameterization, with the former eliminated by reduction. 
In contrast, the elements of $\Delta\left[\mathcal{Y}\right]$ are distinct by definition (since $\Delta\left[\mathcal{Y}\right]$ is a set), and so parametrizing the IB by points in $\Delta\left[\Delta[\mathcal{Y}]\right]$ does not permit identical clusters. 
An element $\sum_r p(r) r$ of $\Delta\left[\Delta[\mathcal{Y}]\right]$ assigns a probability mass $p(r)$ to \textit{every} point $r$ in $\Delta[\mathcal{Y}]$, with only finitely-many points $r$ supported. 
Thus, it implicitly encodes all the entries $r(y)$ of every probability distribution $r \in \Delta[\mathcal{Y}]$ in a ``one size fits all'' approach, giving no room for the choices above. 
This leads us to argue that the IB's structural symmetries are \textit{not} an inherent property but rather an artifact of using its finite-dimensional representations. 

In rate-distortion, the \textit{reduction} of a finite RD problem is defined similarly, \cite[Section 3.1]{agmon2022RTRD}, by removing a symbol $\xhat$ from the reproduction alphabet $\hat{\mathcal{X}}$ and its column $d(\cdot, \xhat)$ from the distortion matrix once it is not in use anymore (of zero mass).
A distortion matrix $d$ is \textit{non-degenerate} if its columns are distinct, $d(\cdot, \xhat) \neq d(\cdot, \xhatp)$ for all $\xhat \neq \xhatp$.
Non-degeneracy arises naturally when considering the RD problem tangent to a given IB root $p(\xhat|\x)$.
Indeed, the distortion matrix $d_{IB}$ \eqref{eq:d_IB} defined by $p(\xhat|\x)$ has duplicate columns if the root has identical clusters, while the other direction holds under mild assumptions\footnote{ If the $|\mathcal{X}|$ vectors $p_{Y|X=x}$ span $\bb{R}^{|\mathcal{Y}|}$, then $D_{KL}\left[ p_{Y|X=x} || r_{\xhat} \right] = D_{KL}\left[ p_{Y|X=x} || r_{\xhatp} \right]$ for all $x$ implies that $r_{\xhat} = r_{\xhatp}$.}.
Under these assumptions, the distortion matrix induced by an IB root $p(\xhat|\x)$ is reduced and non-degenerate precisely when $p(\xhat|\x)$ is a reduced IB root.

Reduction in RD provides the means to show that the dynamics underlying the RD curve \eqref{eq:RD-curve-def} are piecewise analytic in $\beta$, \citep{agmon2022RTRD}, under mild assumptions.
Just as in definition \eqref{eq:IB-operator-def} of the IB operator, \cite[Eq. (5)]{agmon2021critical} similarly define the \textit{RD operator} $Id - BA_\beta^{RD}$ in terms of Blahut's algorithm for RD, \citep{blahut1972}. 
By using their Theorem 1, \cite[Section 3.1]{agmon2022RTRD} observed that reducing a finite RD problem to the support\footnote{ The \textit{support} of a distribution $p(\xhat)$ is defined by $\supp p(\xhat) := \{\xhat : \; p(\xhat) > 0\}$. } of a given RD root mods-out redundant kernel coordinates if the distortion measure is finite and non-degenerate. 
That is, the Jacobian $D(Id - BA_\beta^{RD})$ of the RD operator on the reduced problem is then non-singular\footnote{ When considered in the right coordinates system; see therein for details.}, just as with our toy problem \eqref{eq:toy-example-operator-def} in Section \ref{sec:introduction}.
By the Implicit Function Theorem, there is therefore a unique RD root of the reduced problem through the given one; this root is real-analytic in $\beta$ (details there).
Considering this for the RD problem tangent to a reduced IB root immediately yields the following, 

\begin{cor}			\label{cor:RD-root-through-of-tangent-prob-at-reduced-IB-root-is-analytic}
	Let $p_{\beta_0}(\xhat|\x)$ be a reduced IB root of a finite IB problem defined by $p_{Y|X} \; p_X$, such that the matrix $p_{Y|X}$ is of rank $|\mathcal{Y}|$.
	Then, near $\beta_0$, there is a unique function continuous in $\beta$, which is a root of the tangent RD problem through $p_{\beta_0}(\xhat|\x)$; it is real-analytic in $\beta$.
\end{cor}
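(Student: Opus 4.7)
The plan is to invoke the Implicit Function Theorem on the RD operator $Id - BA_\beta^{RD}$ associated with the tangent RD problem, after verifying that the reduction hypothesis eliminates the structural kernel directions. The whole argument reduces to matching the assumptions of the RD result \citep[Section 3.1]{agmon2022RTRD} cited in the paragraph above the corollary.

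First I would unpack the tangent RD problem $(d_{IB}, p_X)$ defined by $p_{\beta_0}(\xhat|\x)$ and check that it is itself reduced and non-degenerate in the sense of RD. Since $p_{\beta_0}(\xhat|\x)$ is a reduced IB root, its cluster marginal $p_{\beta_0}(\xhat)$ is strictly positive on every $\xhat \in \hat{\mathcal{X}}$, and its decoders $r_{\xhat} = p_{\beta_0}(\y|\xhat)$ are pairwise distinct points in $\Delta[\mathcal{Y}]$. The hypothesis $\operatorname{rank} p_{Y|X} = |\mathcal{Y}|$ is exactly the condition invoked in the footnote preceding the statement: the vectors $\{p_{Y|X=x}\}_{x\in\mathcal{X}}$ span $\bb{R}^{|\mathcal{Y}|}$, so $r_{\xhat} \neq r_{\xhatp}$ forces $D_{KL}[p_{Y|X=x} \| r_{\xhat}] \neq D_{KL}[p_{Y|X=x} \| r_{\xhatp}]$ for some $x$. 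Hence the columns of the matrix $d_{IB}(\x,\xhat) = D_{KL}[p_{Y|X=x}\|r_{\xhat}]$ are distinct, i.e., $d_{IB}$ is non-degenerate. The rank assumption also guarantees that each $r_{\xhat}$, being a convex combination of the rows of $p_{Y|X}$, has full support on $\mathcal{Y}$, so $d_{IB}$ is finite at $p_{\beta_0}(\xhat|\x)$.

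Next I would reduce the tangent RD problem to $\supp p_{\beta_0}(\xhat) = \hat{\mathcal{X}}$ (a trivial reduction here, since the marginal is nowhere zero) and apply the RD non-singularity result of \citep{agmon2022RTRD}: under finiteness and non-degeneracy of the distortion, the Jacobian $D(Id - BA_\beta^{RD})$ of the RD operator at the reduced root is non-singular in the appropriate coordinates. Because $BA_\beta^{RD}$ is a composition of the real-analytic operations in the RD fixed-point Equations \eqref{eq:RD-fixed-point-eqs} with the parameter $\beta$ appearing analytically in an exponential, the operator $Id - BA_\beta^{RD}$ is jointly real-analytic in its variables and in $\beta$. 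The (analytic) Implicit Function Theorem then yields a unique function $\beta \mapsto p_\beta(\xhat|\x)$ defined on a neighborhood of $\beta_0$, passing through $p_{\beta_0}(\xhat|\x)$, satisfying $BA_\beta^{RD}\bigl[p_\beta(\xhat|\x)\bigr] = p_\beta(\xhat|\x)$, and real-analytic in $\beta$. Uniqueness within continuous functions through the given root is a direct byproduct of the IFT.

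The main obstacle is bookkeeping: one must be careful that the notion of ``reduced'' used for the IB root coincides (after passing through $d_{IB}$) with the notion used for RD, and that the IFT is invoked in the coordinate system in which the quoted RD non-singularity holds — i.e., after modding out the redundant kernel directions inherent to RD's finite-dimensional representation. Once the finiteness and non-degeneracy of $d_{IB}$ have been established from the rank-$|\mathcal{Y}|$ hypothesis (which is the substantive content here), the remainder is a direct citation of the RD machinery.
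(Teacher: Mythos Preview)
Your proposal is correct and follows essentially the same route as the paper, which treats the corollary as immediate from the preceding paragraph: the rank-$|\mathcal{Y}|$ hypothesis ensures the tangent RD problem $(d_{IB}, p_X)$ is non-degenerate, reduction of the IB root ensures the RD root is fully supported, and then the non-singularity result of \citep{agmon2022RTRD} together with the analytic Implicit Function Theorem yield the conclusion. One minor imprecision: your assertion that the rank assumption alone guarantees each $r_{\xhat}$ has full support on $\mathcal{Y}$ (and hence that $d_{IB}$ is finite) does not follow in general---a convex combination of rows of a rank-$|\mathcal{Y}|$ stochastic matrix can still vanish at some coordinate---though the paper itself does not spell out the finiteness of $d_{IB}$ at this point either, so this is not a divergence from the paper's argument.
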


Corollary \ref{cor:RD-root-through-of-tangent-prob-at-reduced-IB-root-is-analytic} shows that the local approximation of an IB problem (the roots of its tangent RD problem) is guaranteed to be as well-behaved as one could hope for, provided that the IB is viewed in the right coordinates system.
Note, however, that the RD root through $p_{\beta_0}(\xhat|\x)$ of the tangent problem does \textit{not} in general coincide with the IB root outside of $\beta_0$ since the IB distortion $d_{IB}$ \eqref{eq:d_IB} varies along with the clusters that define it.
However, when the IB clusters are fixed, then one might expect that the Jacobian \eqref{eq:BA-Jacob-wrt-decoder-coords-at-main-text} of $BA_\beta$ in log-decoder coordinates would be the same as the Jacobian of its RD variant.
Indeed, the Jacobian matrix of $BA^{RD}_\beta$ is the $T\times T$ bottom-right sub-block of the Jacobian \eqref{eq:BA-Jacob-wrt-decoder-coords-at-main-text} of $BA_\beta$, up to a multiplicative factor.
cf., Equations (5)-(6) in \citep{agmon2021critical}, Equations \eqref{eq:B_defs-for-BA-Jacob-wrt-decoder-coords} and \eqref{eq:BA-Jacob-wrt-decoder-coords-at-main-text} in Section \ref{sec:IB-ODE}, and \eqref{eq:BA-Jacob-wrt-decoder-coords-as-block-matrix-implicit} in Appendix \ref{sub:BA-IB-jacobian-appendix:decoder-deriv-matrix}. 

As in RD, we argue that reduction in the IB also provides the means to show that the dynamics underlying the optimal curve \eqref{eq:IB-curve-def} are piecewise analytic in $\beta$. 
Corollary \ref{cor:RD-root-through-of-tangent-prob-at-reduced-IB-root-is-analytic} concludes that, under mild assumptions, through every reduced IB root passes a unique real-analytic RD root. 
However, its crux is that the Jacobian of the RD operator $Id - BA_\beta^{RD}$ is non-singular at a reduced root. 
Due to the IB's close relations with RD, and since reduction in the IB is a natural extension of reduction in RD, we argue that the same is also to be expected of the IB operator $Id - BA_\beta$ \eqref{eq:IB-operator-def} in decoder coordinates. 
To see this, note that IB roots are finitely supported, \citep[Lemma 2.2(i)]{witsenhausen1975conditional}, and so one may take finitely-supported probability distributions $\Delta\big[\Delta[\mathcal{Y}]\big]$ for the IB's optimization variable. 
Thus, the IB's $BA_\beta$ operator in decoder coordinates (of Section \ref{sec:coords-exchange-for-the-IB}) may be considered as an operator on $\Delta\big[\Delta[\mathcal{Y}]\big]$. 
Next, consider the RD problem defined by $p_X$ and $d_{IB}$ \eqref{eq:d_IB} on the continuous reproduction alphabet $\Delta[\mathcal{Y}]$, as in \citep{harremoes2007information}. 
This defines on $\Delta\big[\Delta[\mathcal{Y}]\big]$ also the BA operator $BA_\beta^{RD}$ for RD. 
Now that both BA operators are considered on an equal footing, we note the following. 
First, while $BA_\beta^{RD}$ iterates over\footnote{ To see this, plug the IB distortion measure $d_{IB}$ \eqref{eq:d_IB} into the Equations \eqref{eq:RD-fixed-point-eqs} defining $BA_\beta^{RD}$.} the IB Equations \eqref{eq:IB-eq-encoder} and \eqref{eq:IB-eq-marginal}, its IB variant $BA_\beta$ iterates also over the decoder Equation \eqref{eq:IB-eq-decoder}. 
The latter Equation is a necessary condition\footnote{ For comparison, only $p(\y|\xhat) = \sum_x p(\y|\x, \xhat) p(\x|\xhat)$ holds for an arbitrary triplet $(Y, X, \hat{X})$ of random variables.} for $Y \to X \to \hat{X}$ to be Markov, and so can be understood as enforcement of Markovity. 
That is, IB roots are RD roots with an extra constraint. 
Second, by \cite[Theorem 1]{agmon2021critical}, reducing $Id - BA_\beta^{RD}$ from the continuous reproduction alphabet $\Delta[\mathcal{Y}]$ to a root of finite support renders it non-singular, under mild assumptions. 
Due to the similarity between these operators, and since reduction in the IB is a natural extension of reduction in RD, this suggests that reducing $Id - BA_\beta$ \eqref{eq:IB-operator-def} from $\Delta\big[\Delta[\mathcal{Y}]\big]$ to a root's effective cardinality should also render it non-singular. 
In line with the discussion of Section \ref{sec:introduction} on reduction, we therefore state the following, 

\begin{conj}			\label{conj:BA-IB-Jacob-in-decoder-coords-is-nonsingular-at-reduced-root}
	The Jacobian matrix $I - D_{\log p(\y|\xhat), \log p(\xhat)} BA_\beta$ at \eqref{eq:IB-beta-ODE-in-decoder-coords} of the IB operator \eqref{eq:IB-operator-def} in log-decoder coordinates is non-singular at reduced IB roots so long that it is well-defined, except perhaps at points of bifurcation.
\end{conj}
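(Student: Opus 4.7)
My plan is to combine Lemma \ref{lem:smaller-matrix-for-ker-of-J-in-decoder-coords} with the RD reformulation of the IB developed earlier in this subsection. Lemma \ref{lem:smaller-matrix-for-ker-of-J-in-decoder-coords} already reduces the question to whether the smaller matrix $I - S$ of order $T \cdot |\mathcal{Y}|$ (defined in \eqref{eq:smaller-matrix-for-ker-of-IB-operator-in-dec-coords-main-text}) is non-singular at a reduced IB root. I would then relate $S$ to the Jacobian of the RD operator $Id - BA_\beta^{RD}$ associated with the RD formulation of the IB on the continuous reproduction alphabet $\Delta[\mathcal{Y}]$ with distortion $d_{IB}$ \eqref{eq:d_IB}, and import the non-singularity result of \cite[Theorem 1]{agmon2021critical}, as adapted in \cite[Section 3.1]{agmon2022RTRD}.

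Concretely, I would view a reduced IB root as a finitely supported element of $\Delta\big[\Delta[\mathcal{Y}]\big]$ whose support consists of the distinct clusters $\{r_{\xhat}\}_{\xhat}$. Under the rank hypothesis on $p_{Y|X}$ used in Corollary \ref{cor:RD-root-through-of-tangent-prob-at-reduced-IB-root-is-analytic}, distinctness of the clusters translates into non-degeneracy of the tangent distortion matrix $d_{IB}$, so the RD reduction results of \cite{agmon2022RTRD} yield non-singularity of the reduced RD Jacobian away from bifurcations of the tangent problem. Since the $T\times T$ marginal block of \eqref{eq:BA-Jacob-wrt-decoder-coords-at-main-text} agrees, up to a scalar factor, with the Jacobian of $Id - BA_\beta^{RD}$ on cluster weights (as noted after Corollary \ref{cor:RD-root-through-of-tangent-prob-at-reduced-IB-root-is-analytic}), I would interpret a putative kernel vector $\bm{v} = (v_{\y, \xhat})$ of $I - S$ as a first-order perturbation of the supported points in $\Delta\big[\Delta[\mathcal{Y}]\big]$, lift it to an admissible variation of the reduced RD problem, and conclude $\bm{v} = \bm{0}$ by RD non-singularity --- except when a tangent RD bifurcation occurs, which must then be identified with an IB bifurcation via \eqref{eq:first-variational-deriv-of-Lagrangian}--\eqref{eq:second-variational-deriv-of-Lagrangian} and the equivalence of the IB and tangent RD Lagrangians at the root.

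The main obstacle is the rigorous transfer between the finite-dimensional log-decoder coordinates (of size $T(|\mathcal{Y}|+1)$ modulo normalization) and the infinite-dimensional RD picture on $\Delta\big[\Delta[\mathcal{Y}]\big]$. Two points are delicate. First, enforcing the decoder Equation \eqref{eq:IB-eq-decoder} inside $BA_\beta$ --- a constraint absent from $BA_\beta^{RD}$ --- must not manufacture new kernel directions beyond those of the RD Jacobian. The natural mechanism is an Implicit Function Theorem argument on $\Delta\big[\Delta[\mathcal{Y}]\big]$: the non-singular reduced RD Jacobian produces a unique real-analytic RD root through the given IB root, and one then checks that Markovity \eqref{eq:IB-eq-decoder} is automatically preserved along this branch, since the supporting points of $\Delta\big[\Delta[\mathcal{Y}]\big]$ are themselves decoders. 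Second, one must ensure that every surviving kernel direction of $I - S$ at a reduced root does correspond to an IB bifurcation, rather than to a hidden degeneracy reintroduced by the finite-dimensional parametrization. I expect both to be the hard part, because they hinge on disentangling perturbations of cluster weights from perturbations of cluster positions within $\Delta[\mathcal{Y}]$ --- precisely the entanglement that reduction is designed to mod out, but which returns whenever an IB root is represented by a finite-dimensional vector as in Section \ref{sec:coords-exchange-for-the-IB}.
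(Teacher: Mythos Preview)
The statement you are attempting to prove is explicitly a \emph{conjecture} in the paper, not a theorem. Immediately after stating it, the paper writes: ``Conjecture~\ref{conj:BA-IB-Jacob-in-decoder-coords-is-nonsingular-at-reduced-root} is compatible with our numerical experience. However, we leave its proof to future work.'' The paper offers only heuristic support: an analyticity argument (a real-analytic operator $F$ on $\bb{R}^n$ generically has isolated roots and non-singular Jacobian, by a B\'ezout-type intuition on truncated power series), together with the observation that the RD operator $Id - BA_\beta^{RD}$ is non-singular at reduced RD roots and that the IB operator is closely related. No proof is given or claimed.

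Your proposal is therefore not a comparison target but a genuine research plan beyond the paper's scope. The approach you sketch --- reduce via Lemma~\ref{lem:smaller-matrix-for-ker-of-J-in-decoder-coords} to $I-S$, then lift a putative kernel vector to a perturbation in the RD picture on $\Delta\big[\Delta[\mathcal{Y}]\big]$ and invoke the RD non-singularity of \cite{agmon2021critical,agmon2022RTRD} --- is in the spirit of the paper's motivation for the conjecture. You have also correctly identified the two obstructions the paper itself flags implicitly: (i) the extra Markovity constraint \eqref{eq:IB-eq-decoder} distinguishes $BA_\beta$ from $BA_\beta^{RD}$, so RD non-singularity does not transfer for free; and (ii) the entanglement of cluster-weight and cluster-position perturbations in any finite-dimensional representation. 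The paper does not resolve either; it only suggests examining the matrix $S$ \eqref{eq:smaller-matrix-for-ker-of-IB-operator-in-dec-coords-main-text} directly as a possible route. Your plan is reasonable as a research direction, but as written it is an outline with acknowledged gaps, not a proof --- which is exactly the status the paper assigns to this statement.
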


The intuition behind this conjecture stems from analyticity, as follows.
The IB operator $Id - BA_\beta$ \eqref{eq:IB-operator-def} is real-analytic, since each of the Equations \algref{algo:BA-IB}{eq:IB-BA-cluster_marginal}-\algref{algo:BA-IB}{eq:IB-BA-new-direct-enc} defining it (in the BA-IB Algorithm \ref{algo:BA-IB}) is real-analytic in its variables. 
For a root $\bm{x}_0$ of a real-analytic operator $F$, one might expect that, in general, (i) no roots other than $\bm{x}_0$ exist in its vicinity and that (ii) $D_{\bm{x}} F\rvert_{\bm{x}_0}$ has no kernel.
That is, unless the operator is degenerate at $\bm{x}_0$ in some manner or $\bm{x}_0$ is a bifurcation. 
To see this, recall \cite[IX.3]{dieudonne1969foundations} that a real-valued function $F_i$ in $\bm{x} \in \bb{R}^n$ is \textit{real analytic} in some open neighborhood of $\bm{x}_0$ if it is a power series in $\bm{x} = (x_1, \dots, x_n)$, within some\footnote{ While a strictly positive radius of convergence is needed here, we omit these details for clarity.} radius of convergence. 
For every practical purpose, one may replace $F_i$ by a polynomial in $(x_1, \dots, x_n)$ when $\bm{x}$ is close enough to the base-point $\bm{x}_0$, by truncating the power series. 
Viewed this way, a root of an operator $F(\bm{x}) = \big(F_1(\bm{x}), \dots, F_n(\bm{x})\big)$ is nothing but a solution of $n$ polynomial equations in $n$ variables. However, a square polynomial system typically has only isolated roots, which is (i). 
This is best understood in terms of B{\'e}zout's Theorem; see \cite[6 in IV.4]{gowers2008princeton} for example.
For (ii), a vector $\bm{v}$ is in $\ker D_{\bm{x}} F$ precisely when it is orthogonal to each of the gradients $\nabla F_i$.
However, $\nabla F_i$ is the vector of the first-order monomial coefficients of $x_1, \dots, x_n$ in $F_i$.
In a general position, these $n$ coefficient vectors $\nabla F_1, \dots, \nabla F_n$ are linearly-independent, and so $\bm{v}$ must vanish as claimed.
If $F$ is degenerate such that $F_i = F_j$ for particular $i \neq j$, for example, then both points fail, of course. 
cf., also \cite[I.2]{coolidge1959treatise} for (i) and (ii). 
This intuition accords with the comments of \cite[Section 2.4]{berger71} on RD: ``\textit{usually, each point on the rate distortion curve [...] is achieved by a unique conditional probability assignment. However, if the distortion matrix exhibits certain form of symmetry and degeneracy, there can be many choices of [a minimizer]}''. 
Indeed, the fact that the dynamics underlying the RD curve \eqref{eq:RD-curve-def} are piecewise real-analytic (under mild assumptions), \citep{agmon2022RTRD}, can be similarly understood to stem from the analyticity of the RD operator $Id - BA_\beta^{RD}$. 

Subject to Conjecture \ref{conj:BA-IB-Jacob-in-decoder-coords-is-nonsingular-at-reduced-root}, a Jacobian eigenvalue of the IB operator \eqref{eq:IB-operator-def} must vanish gradually\footnote{ Note that BA's Jacobian \eqref{eq:BA-Jacob-wrt-decoder-coords-at-main-text} is continuous in the root at which it is evaluated.} as one approaches a bifurcation, causing the critical slowing down of the BA-IB Algorithm \ref{algo:BA-IB}, \citep{agmon2021critical}.
When an IB root traverses a bifurcation in which its effective cardinality decreases, then it is not reduced anymore.
One can then handle the bifurcation by reducing the root anew.
To ensure proper handling by the bifurcation's type, we consider the latter closely in Sections \ref{sub:continuous-IB-bifs} and \ref{sub:discontinuous-IB-bifs} below.
In a nutshell, following the IB's ODE \eqref{eq:IB-beta-ODE-in-decoder-coords} along with proper handling of its bifurcations is the idea behind our Algorithm \ref{algo:IBRT1} (Section \ref{sec:IBRT1-algo}), for approximating the IB numerically.

Conjecture \ref{conj:BA-IB-Jacob-in-decoder-coords-is-nonsingular-at-reduced-root} is compatible with our numerical experience. However, we leave its proof to future work. 
To that end, one could examine closely the smaller matrix $S$ \eqref{eq:smaller-matrix-for-ker-of-IB-operator-in-dec-coords-main-text} (of Lemma \ref{lem:smaller-matrix-for-ker-of-J-in-decoder-coords} in Section \ref{sec:IB-ODE}) for example. 
However, even if Conjecture \ref{conj:BA-IB-Jacob-in-decoder-coords-is-nonsingular-at-reduced-root} were violated, then one could detect that easily by inspecting the Jacobian's eigenvalues. 
Conjecture \ref{conj:BA-IB-Jacob-in-decoder-coords-is-nonsingular-at-reduced-root} also implies that IB roots are locally unique outside of bifurcations when presented in a reduced form. Non-uniqueness of optimal roots is detectable by inspecting the Jacobian's eigenvalues --- see Corollary \ref{cor:convexity-of-IB-optimal-roots-at-beta} in Section \ref{sub:discontinuous-IB-bifs} and the discussion following it. 
cf., \cite[Section 6.3]{agmon2022RTRD} for the respective discussion in RD. 
With that, most of the results in Sections \ref{sub:continuous-IB-bifs} and \ref{sub:discontinuous-IB-bifs} below do not depend on the validity of Conjecture \ref{conj:BA-IB-Jacob-in-decoder-coords-is-nonsingular-at-reduced-root}.

\medskip
\subsection{Continuous IB bifurcations: cluster-vanishing and cluster-merging}
\label{sub:continuous-IB-bifs}

Following \cite{agmon2022thesis}, we consider the evolution of IB roots which are a continuous function of $\beta$.
By representing an IB root in its reduced form (Section \ref{sub:IB-as-an-RD-problem-and-non-singularity-conj}), it is evident that there are two types of \textit{continuous} IB bifurcations. 
We provide a practical heuristic (Algorithm \ref{algo:root-reduction}) for identifying and handling such bifurcations.
The discussion here is complemented by Subsection \ref{sub:discontinuous-IB-bifs} below, which considers the case where continuity does not hold.

\medskip
The evolution of an IB root in $\beta$ obeys the ODE \eqref{eq:IB-beta-ODE-in-decoder-coords} so long that it can be written as a differentiable function in $\beta$, as in Theorem \ref{thm:IB-ODE}.
Considering the root in decoder coordinates, this amounts to an evolution of a $T$-tuple of points $r_{\xhat}$ in $\Delta[\mathcal{Y}]$ and their weights $p(\xhat)$. These typically traverse the simplex smoothly as the constraint $\beta$ is varied, as demonstrated in Figure \ref{fig:exact-sol-of-BSC}.
We now consider two cases where this evolution does not obey the ODE \eqref{eq:IB-beta-ODE-in-decoder-coords}, due to violating differentiability.

\begin{figure}[h!]
	\centering
	\vspace*{10pt}
	\ifdefined\compilefigs
	\includegraphics[width=.75\textwidth]{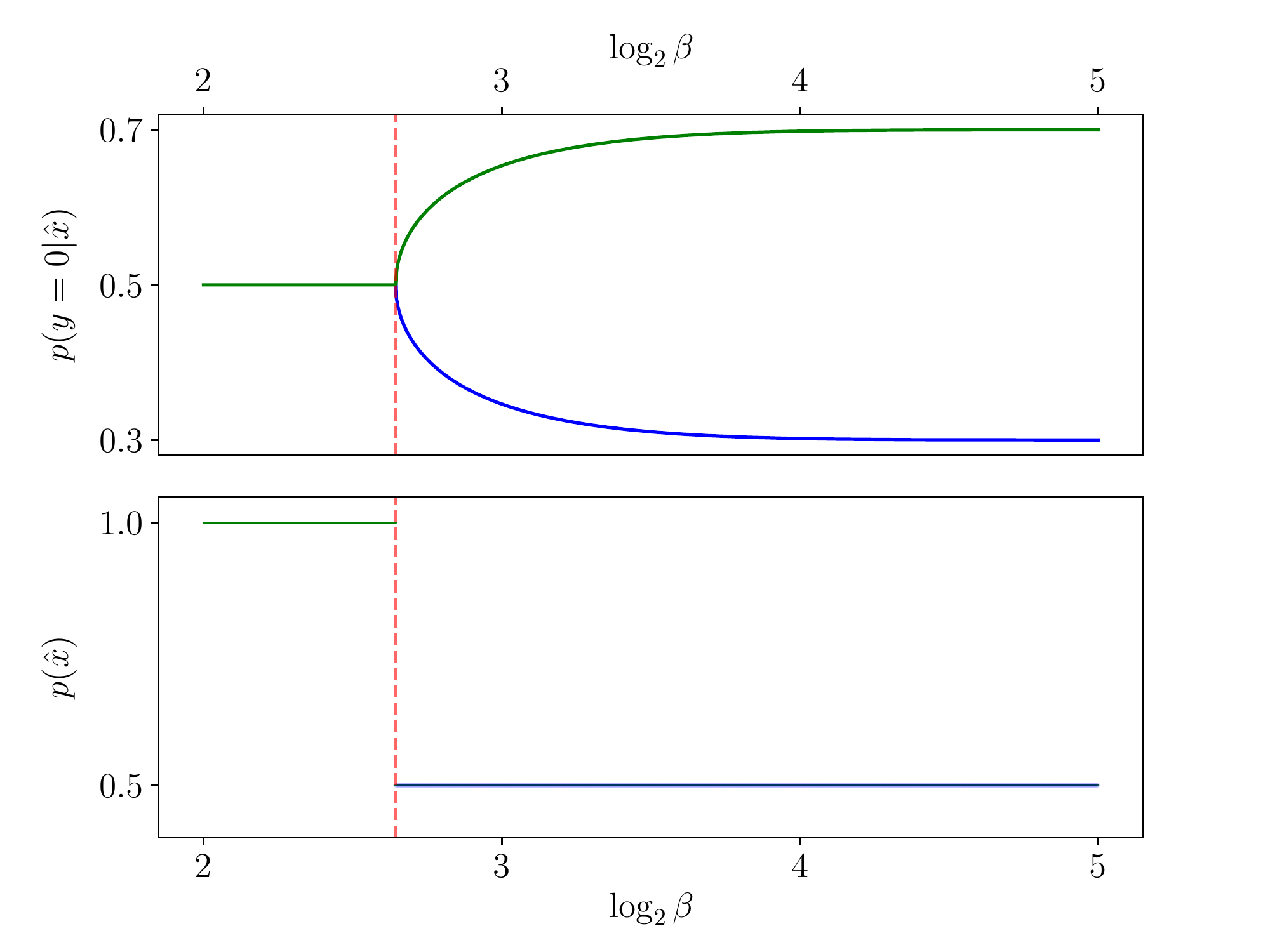}
	\else
	\includegraphics[width=.75\textwidth]{figs/empty_figure}
	\fi
	\caption{
		\textbf{A cluster-merging bifurcation.} The reduced form of the optimal IB root in decoder coordinates as a function of $\beta$, for the exact solution of BSC$(0.3)$ with a uniform source, as in Figure \ref{fig:IBRT1-IB-curve-for-several-densitires} (see Appendix \ref{sec:analytical-IB-sol-for-BSC-appendix}).
		At high enough $\beta$, the root is comprised of two clusters (in green and blue), each of a marginal probability $\tfrac{1}{2}$. The clusters collide at $\beta_c = 6\nicefrac{1}{4}$ (dashed red vertical) and merge to one, yielding the trivial solution --- a single cluster of probability 1 at $p_Y$. 
		Carefully note that only a \textit{single} IB root is plotted here, in its reduced form, with one cluster to the left of $\beta_c$ and two to the right. 
		The violation of clusters' differentiability at $\beta_c$ can be observed visually (top), and the root is otherwise real-analytic in $\beta$, as can be deduced from Figure \ref{fig:eigenvals-of-BA-IB-jacobian-for-BSC}.
		Since the trivial solution is an IB root for every $\beta > 0$ (not shown), then $\beta_c$ is indeed a bifurcation, where the trivial and non-trivial roots intersect. 
		To see this, consider the degenerate form of the trivial solution on \textit{two} copies of $p_Y$, each of probability $\tfrac{1}{2}$. 
		The marginals $p(\xhat)$ (bottom) appear to be discontinuous at $\beta_c$ because the root was reduced before plotted (the latter degenerate form of the trivial root is \textit{not} plotted to the left of $\beta_c$). 
	}
	\label{fig:exact-sol-of-BSC}
\end{figure}

Consider an optimal IB root in its reduced form (see Section \ref{sub:IB-as-an-RD-problem-and-non-singularity-conj}). Namely, consider the reduced form of a root that achieves the IB curve \eqref{eq:IB-curve-def}.
Suppose that its decoders $r_{\xhat}$ and weights $p(\xhat)$ are continuous in $\beta$. 
Then, a qualitative change in the root can occur only if either (i) two (or more) of its clusters collide or (ii) the marginal probability $p(\xhat)$ of a cluster $\xhat$ vanishes. In either case, the minimal number of points in $\Delta[\mathcal{Y}]$ required to represent the root decreases. That is, its effective cardinality decreases\footnote{ A qualitative change where a root's effective cardinality increases is obtained by merely reversing the dynamics in $\beta$ of (i) or (ii) above.}.
We call the first a \textit{cluster-merging bifurcation} and the second a \textit{cluster-vanishing bifurcation}. Or, \textit{continuous bifurcations} collectively.
Both types were observed already at \cite[Section IV.C]{rose1994mapping} in the related setting of RD problems with a continuous source alphabet. 

At a continuous bifurcation, IB roots of distinct effective cardinalities collide and merge into one, as discussed in Section \ref{sub:discontinuous-IB-bifs} below. 
Specifically, one root achieves the minimal value of the IB Lagrangian and so is stable, while the other root is sub-optimal.
Thus, continuous IB bifurcations are \textit{pitchfork bifurcations}\footnote{ 
	Strictly speaking, several copies of the root of larger effective cardinality collide at a continuous bifurcation. 
	When two clusters $r \neq r'$ collide in (i), the root itself is invariant to interchanging their coordinates after the collision but not before it, breaking the IB's first structural symmetry discussed in Subsection \ref{sub:IB-as-an-RD-problem-and-non-singularity-conj}. 
	Interchanging the coordinates of $r$ and $r'$ (and their marginals) before the collision yields two distinct copies of essentially the same root. 
	For (ii), the IB's functionals (e.g., its $X$ and $Y$-information) do not depend on the coordinates $\left( r(y) \right)_y$ of a vanished cluster $r$, rendering these redundant. cf., \citep[Section 3.1]{gedeon2012mathematical}. 
	Before the cluster $r$ vanishes, there is one copy of the root for each index $\xhat$, with $r$ placed at its $\xhat$ coordinates. 
	Considered in reduced coordinates, these coincide to a single copy after the cluster vanishes. 
	This breaks the second structural symmetry. 
}, e.g., \cite[Section 3.4]{strogatz2018nonlinear}, in accordance with \cite{parker2022symmetry_breaking}. 
Even though the optimal root is continuous in $\beta$ (by assumption), its differentiability is lost at the point of bifurcation, as noted after Theorem \ref{thm:IB-ODE} and demonstrated in Figure \ref{fig:exact-sol-of-BSC}. 
Among the two, cluster-vanishing bifurcations are more frequent in practice than cluster-merging. This can be understood by considering cluster trajectories in the simplex. In a general position, one might expect clusters to seldom be at the same ``time'' and place ($\beta$ and $r\in \Delta[\mathcal{Y}]$). 

With that, we note that cluster vanishing bifurcations \textit{cannot} be detected directly by standard local techniques (i.e., considering the derivative's kernel directions at the bifurcation), whether considering the Hessian of the IB's loss function as in \citep{gedeon2012mathematical} or the Jacobian of the IB operator \eqref{eq:IB-operator-def} as here. 
The technical reason for this is as follows, while the root cause underlying it is discussed in Subsection \ref{sub:discontinuous-IB-bifs} (after Proposition \ref{prop:bif-is-detectable-only-on-enough-clusters}). 
Observe that the $I(Y; \hat{X})$ and $I(X; \hat{X})$ functionals do not depend on the coordinates $\left( r(y) \right)_y$ of clusters $r$ of zero mass. 
Thus, the directions corresponding to these coordinates are always in the kernel regardless of whether evaluating at a bifurcation or not, and so cannot be used to detect a bifurcation\footnote{ The direction corresponding to a cluster's marginal is useless when one does not know which cluster $\left( r(y) \right)_y$ to pick.}. 
Indeed, with its dynamics in $\beta$ reversed, ``a new symbol grows continuously from zero mass'' in a cluster-vanishing bifurcation, as \cite[IV.C]{rose1994mapping} comments in a related setting. 
It is then not clear a priori which point in $\Delta[\mathcal{Y}]$ should be chosen for the new symbol, rendering the perturbative condition at Equation \eqref{eq:second-variational-deriv-of-Lagrangian} difficult to test. 
In accordance with this, \cite[Section 5]{gedeon2012mathematical} offers a perturbative condition for detecting arbitrary IB bifurcations, while \cite[3.2 in Part III]{zaslavsky2019thesis} offers a condition for detecting cluster-merging bifurcations by analyzing cluster stability. 
However, both conditions are equivalent (Appendix \ref{sec:equivalent-conds-for-cluster-splitting-appendix}), and so must detect the same type of bifurcations. 
In contrast, a cluster-splitting (or merging) bifurcation is straightforward to detect because the stability of a particular cluster $\xhat$ is a property of the root itself --- see Appendix \ref{sec:equivalent-conds-for-cluster-splitting-appendix} and the references therein for details. 

One may wonder whether bifurcations exist in the IB for the same reason as they do in RD.
As in the IB, RD problems typically have many sub-optimal curves, \cite[Section 6.1]{agmon2022RTRD}. 
While bifurcations in the IB stem from restricting the effective cardinality\footnote{ At least for continuous IB bifurcations.}, \cite[Section 3.4]{tishby1999}, in RD they stem from the various restrictions that a reproduction alphabet has. e.g., a reproduction alphabet $\hat{\mathcal{X}} := \{r_1, r_2, r_3\}$ of an RD problem may be restricted to the distinct subsets $\{r_1, r_2\}$ and $\{r_2, r_3\}$, usually yielding distinct sub-optimal RD curves; cf., \cite[Figure 6.1]{agmon2022RTRD}.
In contrast to RD, the IB's distortion $d_{IB}$ \eqref{eq:d_IB} defined by a root's clusters is determined a posteriori by the problem's solution rather than a priori by the problem's definition.
As a result, both reasons for the existence of bifurcations coincide. 
To see this, consider the IB as an RD problem whose reproduction symbols $\hat{\mathcal{X}}$ are a finite subset of $\Delta[\mathcal{Y}]$ which is allowed to vary (e.g., as if defining the tangent RD problem anew at each $\beta$). 
Distinct restrictions of a reproduction alphabet $\hat{\mathcal{X}}$ can be forced to agree by altering the symbols themselves, so long that they are of the same size. 
For example, restricting the set $\{r_1, r_2, r_3\}$ of reproduction symbols to $\{r_1, r_2\}$ is the same as restricting it to $\{r_2, r_3\}$ instead, and then replacing $r_3$ with $r_1\in \Delta[\mathcal{Y}]$ in the restricted problem\footnote{ This is not to be confused with cluster permutations, which change the order among clusters but do \textit{not} alter the symbols themselves.}.

\begin{algorithm}
	\caption{Root reduction for the IB}
	\begin{algorithmic}[1]
		\Function{Reduce root}{$p(\y|\xhat), p(\xhat); \delta_1, \delta_2$}
		\Input
		\Statex An approximate IB root $\big( p(\y|\xhat), p(\xhat) \big)$ in decoder coordinates,
		\Statex a cluster-mass threshold $0 < \delta_1 < 1$ and a cluster-merging threshold $0 < \delta_2 < 1$.
		\Output{An approximate IB root $\big( \tilde{p}(\y|\xhat), \tilde{p}(\xhat) \big)$ at its effective cardinality.}
		\For{$\xhat$}
		\If{$p(\xhat) < \delta_1$}										\label{algo:root-reduction:nearly-vanished-cluster}
		\Comment{Delete clusters of near-zero mass.}
		\State delete the coordinates of $\xhat$, from $p(\xhat)$ and $p(\y|\xhat)$.
		\EndIf
		\EndFor 
		\State $p(\xhat) \gets $ normalize $p(\xhat)$
		\Comment{Preserve normalization, in case clusters were removed.}
		\newline
		\For{$\xhat \neq \xhatp$}
		\If{$\| p(\y|\xhat) - p(\y|\xhatp)\|_{\infty} < \delta_2$}		\label{algo:root-reduction:distance-between-points}
		\Comment{Merge nearly-identical points in $\Delta[\mathcal{Y}]$.}
		\State $p(\xhat) \gets p(\xhat) + p(\xhatp)$
		\State delete the coordinates of $\xhatp$, from $p(\xhat)$ and $p(\y|\xhat)$.
		\EndIf
		\EndFor \newline
		\State \Return $\big( p(\y|\xhat), p(\xhat) \big)$
		\EndFunction
	\end{algorithmic}
	\label{algo:root-reduction}
\end{algorithm}

The dynamical point of view above, considering an IB root as weighted points traversing $\Delta[\mathcal{Y}]$, offers a straightforward way to identify and handle continuous IB bifurcations. It is spelled out as our root-reduction Algorithm \ref{algo:root-reduction}. 
For cluster-vanishing bifurcations, one can set a small threshold value $\delta_1 > 0$ and consider the cluster $\xhat$ as vanished if $p(\xhat) < \delta_1$ (step \algref{algo:root-reduction}{algo:root-reduction:nearly-vanished-cluster}), as in \cite[Section 3.1]{agmon2022RTRD}. 
Similarly, for cluster-merging bifurcations, one can set a small threshold $\delta_2 > 0$ and consider the clusters $\xhat \neq \xhatp$ to have merged if $\| r_{\xhat} - r_{\xhatp} \|_{\infty} < \delta_2$ (step \algref{algo:root-reduction}{algo:root-reduction:distance-between-points}).
A vanished cluster is then erased (and merged clusters replaced by one), resulting in an approximate IB root on fewer clusters.
This not only identifies continuous IB bifurcations but also handles them, since the output of the root-reduction Algorithm \ref{algo:root-reduction} is a numerically-reduced root, represented in its effective cardinality.
To re-gain accuracy, we shall later invoke the BA-IB Algorithm \ref{algo:BA-IB} on the reduced root, as part of Algorithm \ref{algo:IBRT1} (in Section \ref{sec:IBRT1-algo}).
We note that one should pick the thresholds $\delta_1$ and $\delta_2$ small enough to avoid false detections, and yet not too small so as to cause mis-detections. Mis-detections are handled later, using the heuristic Algorithm \ref{algo:handle-ODE-singularity} (Subsection \ref{sub:IBRT1-algo-spec}). 

Using the root-reduction Algorithm \ref{algo:root-reduction} allows one to stop early in the vicinity of a bifurcation when following the path of an IB root. 
As mentioned in Section \ref{sec:euler-method}, early-stopping restricts the computational difficulty of root-tracking, \citep{agmon2022RTRD}. 
Further, reducing the root \textit{before} invoking BA-IB (Algorithm \ref{algo:BA-IB}) allows to avoid BA's critical slowing down, \citep{agmon2021critical}. 
For, reduction removes the nearly-vanished Jacobian eigenvalues that pertain to the nearly-vanished (or nearly-merged) cluster(s), which are the cause of BA's critical slowing down. 
cf., Proposition \ref{prop:bif-is-detectable-only-on-enough-clusters} (Section \ref{sub:discontinuous-IB-bifs}) and the discussion around it. 
See also \cite[Figure 3.1(C) and Section 3.2]{agmon2022RTRD} for the respective behavior in RD. 
Finally, we comment that the root-reduction Algorithm \ref{algo:root-reduction} can also be implemented in the other two coordinate systems of Section \ref{sec:coords-exchange-for-the-IB}.

\medskip
\subsection{Discontinuous IB bifurcations and linear curve segments}
\label{sub:discontinuous-IB-bifs}

In the previous Subsection \ref{sub:continuous-IB-bifs}, we considered continuous IB bifurcations --- namely, when the clusters $r_{\xhat} \in \Delta[\mathcal{Y}]$ and weights $p(\xhat)$ of an IB root are continuous functions of $\beta$. 
By exploiting the intimate relations between the IB and RD (Section \ref{sub:IB-as-an-RD-problem-and-non-singularity-conj}), we now consider IB bifurcations where these \textit{cannot} be written as a continuous function of $\beta$.
Although in our experience discontinuous bifurcations are infrequent in practice, the theory they evoke has several subtle yet important consequences, with practical implications for computing IB roots (in Section \ref{sec:IBRT1-algo}). 
We start with several examples before diving into the theory.

\begin{figure}[h]
	\centering
	\vspace*{10pt}
	\ifdefined\compilefigs
	\includegraphics[width=1\textwidth]{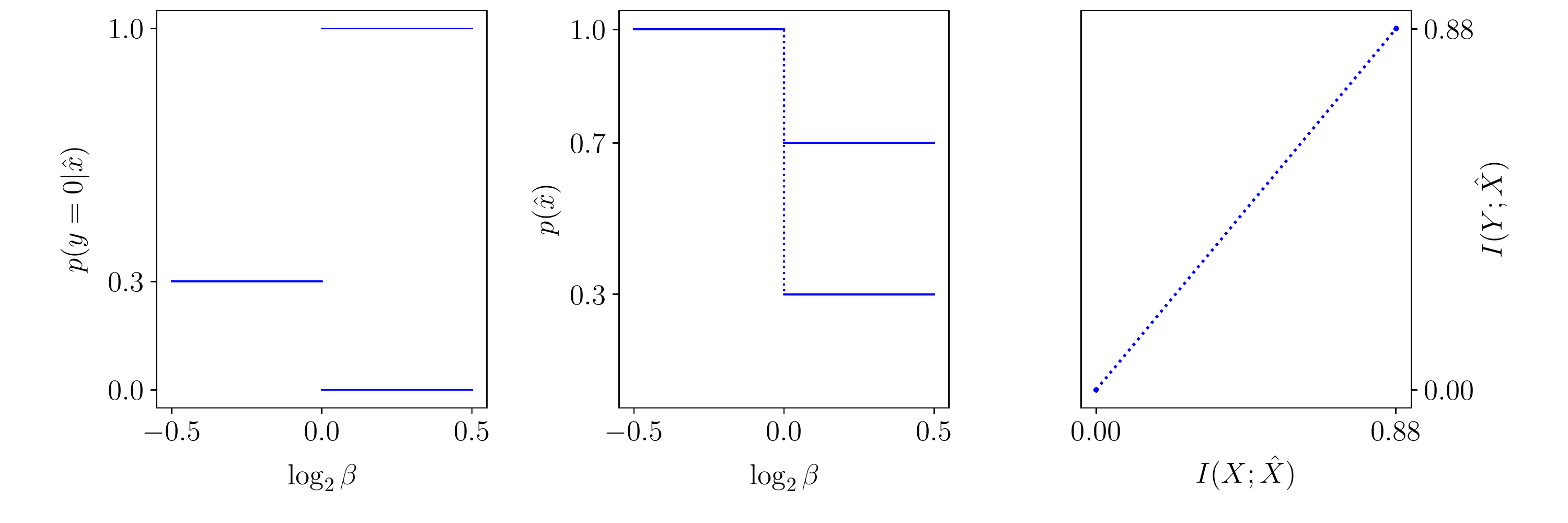}
	\else
	\includegraphics[width=1\textwidth]{figs/empty_figure}
	\fi
	\caption{
		\textbf{A discontinuous IB bifurcation} at $\beta_c = 1$, of the problem defined by $p_{Y|X} \; p_X = \mat{0.3 & \\ & 0.7}$.
		\textbf{Left}: to the left of $\beta_c$, the optimal solution is the trivial one, supported on the IB cluster $p_Y$. To the right it is supported on the boundary points $(1, 0)$ and $(0, 1)$ of $\Delta[\mathcal{Y}]$. 
		\textbf{Middle}: the marginals are constant, except at the point of bifurcation. 
		Any convex combination of the trivial and non-trivial roots is optimal there (dotted).
		That is, this is a \textbf{support-switching bifurcation} as in RD, \cite[Figure 6.2]{agmon2022RTRD}.
		\textbf{Right}: the IB curve exhibits a linear segment of slope $\nicefrac{1}{\beta_c} = 1$, connecting the image of the trivial solution in the information plane (bottom-left) to that of the non-trivial one (top-right).
		See comments in the main text.
	}
	\label{fig:support-switching-bif-in-IB}
\end{figure}

\medskip 
The examples of discontinuous IB bifurcations of which we are aware can be understood in RD context as follows. 
Consider the IB as an RD problem on the continuous reproduction alphabet $\Delta[\mathcal{Y}]$, with IB roots parametrized by points in $\Delta\left[\Delta[\mathcal{Y}]\right]$ (see Section \ref{sub:IB-as-an-RD-problem-and-non-singularity-conj}). 
In RD, the existence of linear curve segments is well known, \citep{berger71}. See, for example, Figure 2.7.6 in the latter and its reproduction \cite[Figure 6.2]{agmon2022RTRD}. 
\cite[Section 6.5]{agmon2022RTRD} offers an explanation of these in terms of a \textit{support-switching bifurcation}. 
Namely, a bifurcation where two RD roots of distinct supports exchange optimality at a particular multiplier value $\beta_c$. 
Both roots evolve smoothly in $\beta$, while only exchanging optimality at the bifurcation. 
At $\beta_c$ itself, every convex combination of these two roots is also an RD root. 
This is manifested by a linear segment of slope $-\beta_c$ in the RD curve (see panels E and F in \cite[Figure 6.2]{agmon2022RTRD}).
In particular, the optimal RD root \textit{cannot} be written as a continuous function of $\beta$.
The sudden emergence of an entire segment of RD roots is best understood in light of B{\'e}zout's Theorem; cf., point (i) in the discussion following Conjecture \ref{conj:BA-IB-Jacob-in-decoder-coords-is-nonsingular-at-reduced-root}. 

For one example of linear curve segments in the IB, say that a matrix $M$ \textit{decomposes} if it can be written (non-trivially) as a block matrix by permuting its rows or columns. In light of the above, we have the following refinement of \cite[Theorem 2.6]{witsenhausen1975conditional},

\begin{thm}			\label{thm:linear-curve-section-for-decomposable-probs-wit}
	The IB curve \eqref{eq:IB-curve-def} has a linear segment at $\beta = 1$ if and only if the problem's definition $p_{Y|X} \; p_X$ decomposes.
\end{thm}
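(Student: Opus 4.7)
The plan rests on the identity
\[
\mathcal{L}\big|_{\beta=1} \;=\; I(X;\hat{X}) - I(Y;\hat{X}) \;=\; I(X;\hat{X}\mid Y) \;\geq\; 0,
\]
which follows from the chain rule of mutual information together with the forward Markov chain $Y \to X \to \hat{X}$. At $\beta = 1$ the Lagrangian is this non-negative quantity, and it vanishes precisely when the reverse chain $X \to Y \to \hat{X}$ also holds, i.e., $p(\hat{x}\mid x) = p(\hat{x}\mid y)$ for every $(x,y)$ with $p(x,y) > 0$. A linear segment of the IB curve \eqref{eq:IB-curve-def} at $\beta=1$ is an arc of IB-optimal roots along which $\mathcal{L}|_{\beta=1}$ is constant; since the infimum of $\mathcal{L}|_{\beta=1}$ equals $0$ (attained by the trivial encoder), such roots must saturate the Lagrangian and therefore satisfy both Markov chains.

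For the \emph{if} direction, suppose $p_{Y|X}\,p_X$ decomposes into $k \geq 2$ blocks, and let $B:\mathcal{X}\to\{1,\dots,k\}$ be the block indicator, which is equivalently a deterministic function of $Y$. Any encoder of the form $p(\hat{x}\mid x) = p(\hat{x}\mid B(x))$ satisfies both Markov chains, so $I(X;\hat{X}) = I(B;\hat{X}) = I(Y;\hat{X})$. This attains $\mathcal{L}|_{\beta=1} = 0$, hence IB-optimality at $\beta = 1$. Varying $p(\hat{x}\mid b)$ from trivial to the identity sweeps $I(B;\hat{X})$ continuously across $[0, H(B)]$, tracing the diagonal $I_Y = I_X$ on this interval as a linear segment of the IB curve.

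For the \emph{only if} direction, I first localize any slope-$1$ segment to the origin: the IB curve is concave, passes through $(0,0)$, and lies below the diagonal $I_Y \leq I_X$ (by data processing on $Y\to X\to\hat{X}$). Hence its right-derivative at $0$ is bounded by $1$ and, being the maximal slope by concavity, forces a slope-$1$ segment to span some interval $[0, I_X^*]$ with $I_X^* > 0$. Picking any optimal $\hat{X}$ attaining $(I_X^*, I_X^*)$ gives $I(X;\hat{X}\mid Y) = 0$ and hence both Markov chains, so $p(\hat{x}\mid x) = p(\hat{x}\mid y)$ on $\supp p_{XY}$. Consequently the map $x \mapsto p(\cdot\mid x)$ is constant on each connected component of the bipartite graph $\{(x,y):p(x,y)>0\}$. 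Since $I(X;\hat{X}) = I_X^* > 0$, this map is non-constant, forcing the graph to have at least two components; their vertex partitions of $\mathcal{X}$ and $\mathcal{Y}$ furnish the required decomposition of $p_{Y|X}\,p_X$.

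The main obstacle I anticipate is the concavity-based localization (making sure the slope-$1$ segment indeed emanates from the origin rather than appearing at some interior $I_X > 0$) together with the connected-components step that converts a pointwise equality on the support into a block decomposition of the joint distribution. Both are routine but deserve care regarding edge cases such as $\hat{x}$-values of zero mass and the attainment of the maximum in \eqref{eq:IB-curve-def}; the reasoning is cleanly organized by the viewpoint of the IB as an RD problem developed in Subsection \ref{sub:IB-as-an-RD-problem-and-non-singularity-conj}, so that the segment is recognized as a support-switching bifurcation between the trivial root and the block-indicator root.
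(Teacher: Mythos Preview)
The paper does not supply its own proof of this theorem; it states the result as a refinement of \cite[Theorem~2.6]{witsenhausen1975conditional} and immediately proceeds to examples, so there is no in-paper argument to compare against. Your approach is the standard self-contained information-theoretic argument and is correct: the identity $\mathcal{L}|_{\beta=1}=I(X;\hat X)-I(Y;\hat X)=I(X;\hat X\mid Y)$ (valid under the forward chain $Y\to X\to\hat X$) reduces the question to characterizing when a nontrivial encoder can achieve the reverse chain $X\to Y\to\hat X$; the block-indicator construction for the \emph{if} direction and the concavity/bipartite-components argument for the \emph{only if} direction are both sound.

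Two minor remarks. First, your localization step is tighter than you seem to realize: since $I_Y(0)=0$ and $I_Y(I_X)\le I_X$, concavity forces any slope-$1$ portion of the curve to coincide with the diagonal $I_Y=I_X$, so the obstacle you flag (a stray slope-$1$ segment at interior $I_X$) cannot arise. Second, the paper frames the phenomenon as a support-switching bifurcation between the trivial root and a multi-cluster root (Figure~\ref{fig:support-switching-bif-in-IB}) in the spirit of \cite{agmon2022RTRD}; your proof does not need this RD viewpoint, but it is consistent with it, since the block-indicator encoder is exactly the nontrivial root that exchanges optimality with the trivial one at $\beta_c=1$.
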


Figure \ref{fig:support-switching-bif-in-IB} demonstrates a simple decomposable problem, exhibiting a support-switching bifurcation at $\beta_c = 1$ between the trivial and non-trivial roots there. 
For other examples, a symmetric binary erasure channel can also be seen to exhibit a support-switching bifurcation, \cite[IV.B]{witsenhausen1975conditional}, which is manifested by a linear segment of slope $\nicefrac{1}{\beta_c}$, for $\beta_c > 1$. 
Similarly, also for Hamming channels with a uniform input, \cite[IV.E]{witsenhausen1975conditional} and \citep{witsenhausen1974entropy}, whose problem definition $p_{Y|X} \; p_X$ is of full support. 
We argue that in the IB, support-switching bifurcations exhibit the same behavior as in RD. 
That is, two roots that evolve smoothly in $\beta$ and exchange optimality at the bifurcation. 
While the sequel can justify this in general, there is a simple way to see this in practice. 
Namely, following the two roots of Figure \ref{fig:support-switching-bif-in-IB} through the bifurcation\footnote{ That is, following the trivial root of Figure \ref{fig:support-switching-bif-in-IB} from left to right and the non-trivial one from right to left, through the bifurcation at $\beta_c = 1$ there.} by using BA-IB with deterministic annealing, \citep{rose1990deterministic}. 
As deterministic annealing usually follows a solution branch continuously, this immediately reveals either root at the region where it is sub-optimal (not displayed). 

A support-switching bifurcation evidently has similar characteristics\footnote{ Strictly speaking, the two roots do not intersect as in a classical transcritical, and so a support-switching bifurcation should perhaps be classified as an \textit{imperfect transcritical}.} to a \textit{transcritical} bifurcation; e.g., \cite[Section 3.2]{strogatz2018nonlinear}. 
This extends the results of \citep{parker2022symmetry_breaking}, who conclude that\footnote{Theorem 5 in \citep{parker2022symmetry_breaking} says that the bifurcations detected by their Theorem 3 are degenerate rather than transcritical. It is then concluded ``that the bifurcation guaranteed by Theorem 3 is [generically] pitchfork-like''.} bifurcations in the IB ``are only of pitchfork type''. 
To see the reason for this discrepancy, note that they employ the mathematical machinery in \citep{golubitsky1988singularities} of bifurcations under symmetry. 
As pitchfork bifurcations are ``common in physical problems that have a symmetry'', \citep[Section 3.4]{strogatz2018nonlinear}, then detecting only pitchforks by using the above machinery might not come as a surprise. 
Both \citep{gedeon2012mathematical} and its sequel \citep{parker2022symmetry_breaking} consider the IB's symmetry to interchanging the coordinates of identical clusters\footnote{ e.g., \cite[Definition 1(1)]{parker2022symmetry_breaking}.}. 
However, this is a structural symmetry of the IB which stems from representing IB roots by finite-dimensional vectors (Subsection \ref{sub:IB-as-an-RD-problem-and-non-singularity-conj}), and is broken at continuous IB bifurcations (Subsection \ref{sub:continuous-IB-bifs}). 
On the other hand, discontinuous IB bifurcations need not break this symmetry, as can be seen by inspecting the roots of Figure \ref{fig:support-switching-bif-in-IB} closely\footnote{ The trivial solution to the left of $\beta_c$ (Figure \ref{fig:support-switching-bif-in-IB}, left panel) may be given a degenerate bi-clustered representation, which is fully supported on $p_Y$ but has a second cluster $r$ of zero-mass. One may choose $r \neq p_Y$, in which case the root possesses no symmetry to interchanging cluster coordinates, at either side of the bifurcation there.}. 

A few convexity results from rate-distortion theory are needed to consider discontinuous bifurcations in general. 
These have subtle practical implications, which are of interest in their own right.

\begin{thm}[Theorem 2.4.2 in \cite{berger71}]				\label{thm:convexity-of-achieving-dists-in-RD-berger}
	The set of conditional probability distributions $p(\xhat|\x)$ which achieve a point $(D, R(D))$ on the rate-distortion curve \eqref{eq:RD-curve-def} is convex.
\end{thm}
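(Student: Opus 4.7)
The plan is to exploit two classical properties of the quantities appearing in the RD optimization problem \eqref{eq:RD-curve-def}: linearity of the expected distortion in the test channel, and convexity of the mutual information as a function of the test channel (for a fixed source distribution). Together these suffice to show that any convex combination of achieving distributions is itself feasible and attains the same rate.

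Concretely, suppose $p_1(\xhat|\x)$ and $p_2(\xhat|\x)$ both achieve the point $(D, R(D))$, and let $\lambda \in [0,1]$. Define
\begin{equation*}
p_\lambda(\xhat|\x) := \lambda \, p_1(\xhat|\x) + (1-\lambda)\, p_2(\xhat|\x) \;,
\end{equation*}
which is again a valid test channel since $\Delta[\hat{\mathcal{X}}]$ is convex in each coordinate $\x$. First I would verify feasibility. Since $\bb{E}_{p(\xhat|\x)p(\x)}\left[d(\x,\xhat)\right] = \sum_{\x,\xhat} p(\x)p(\xhat|\x) d(\x,\xhat)$ is linear in $p(\xhat|\x)$ with $p_X$ fixed, one immediately obtains
\begin{equation*}
\bb{E}_{p_\lambda \, p_X}\left[d(\x,\xhat)\right] = \lambda \bb{E}_{p_1 p_X}\left[d\right] + (1-\lambda)\bb{E}_{p_2 p_X}\left[d\right] \leq \lambda D + (1-\lambda) D = D \;,
\end{equation*}
so $p_\lambda$ lies in the feasible set of \eqref{eq:RD-curve-def}.

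Next I would invoke convexity of mutual information. With $p_X$ fixed, the functional $p(\xhat|\x) \mapsto I(X;\hat{X})$ is convex; this is standard, e.g., Cover and Thomas, and follows from the log-sum inequality applied to the induced joint distribution and its marginal $p(\xhat) = \sum_\x p(\xhat|\x) p(\x)$ (which is itself linear in $p(\xhat|\x)$). Applying this convexity yields
\begin{equation*}
I_{p_\lambda}(X;\hat{X}) \leq \lambda \, I_{p_1}(X;\hat{X}) + (1-\lambda)\, I_{p_2}(X;\hat{X}) = \lambda R(D) + (1-\lambda) R(D) = R(D) \;.
\end{equation*}
On the other hand, feasibility of $p_\lambda$ and the definition of $R(D)$ as a minimum give $I_{p_\lambda}(X;\hat{X}) \geq R(D)$. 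Combining both inequalities forces equality, so $p_\lambda$ also achieves $(D, R(D))$, proving convexity of the achieving set.

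This proof is essentially routine, and I do not anticipate a substantive obstacle. The only ingredient beyond linearity that one should state carefully is the convexity of $I(X;\hat{X})$ in the channel $p(\xhat|\x)$ for fixed source $p_X$; if one prefers a self-contained derivation, it can be obtained by writing $I = \sum_{\x,\xhat} p(\x) p(\xhat|\x) \log \tfrac{p(\xhat|\x)}{p(\xhat)}$ and applying the log-sum inequality to each fixed $\xhat$ after the split $p_\lambda(\xhat|\x) = \lambda p_1(\xhat|\x) + (1-\lambda)p_2(\xhat|\x)$, with the matching split of the marginal.
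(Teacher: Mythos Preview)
Your argument is correct and is precisely the standard textbook proof: linearity of the expected distortion in the channel plus convexity of $I(X;\hat X)$ in the channel (for fixed $p_X$), combined with optimality of $R(D)$, pin $I_{p_\lambda}$ to $R(D)$. Note that the paper does not supply its own proof of this statement; it is quoted verbatim as Theorem~2.4.2 of \cite{berger71} and used as an input to the subsequent corollaries, so there is nothing further to compare against.
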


Viewing the IB as an RD problem as in \citep{harremoes2007information} immediately yields an identical result for the IB,

\begin{cor}											\label{cor:convexity-of-achieving-dists-at-point-on-IB-curve}
	The set of IB encoders that achieve a point $(I_X, I_Y)$ on the IB curve \eqref{eq:IB-curve-def} is convex.
\end{cor}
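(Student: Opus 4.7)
The plan is to derive this from Berger's convexity theorem (Theorem \ref{thm:convexity-of-achieving-dists-in-RD-berger}) by invoking the rate-distortion formulation of the IB developed in Subsection \ref{sub:IB-as-an-RD-problem-and-non-singularity-conj}.

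First, I would rewrite the IB as the RD problem on the reproduction alphabet $\Delta[\mathcal{Y}]$ with distortion $d_{IB}(x, r) = D_{KL}[p_{Y|X=x}\|r]$ and source distribution $p_X$. An IB encoder $p(\xhat|\x)$, together with its induced decoder $p(\y|\xhat)$, corresponds to a test channel $\tilde{p}(r|\x)$ on $\Delta[\mathcal{Y}]$, obtained by assigning to each $\xhat$ the reproduction symbol $r_{\xhat} := p(\y|\xhat)$ and transporting the encoder's mass accordingly; conversely, every test channel with finite support in $\Delta[\mathcal{Y}]$ yields an IB encoder once a labeling of its support is chosen as the representation alphabet. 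Under this correspondence, the rate is preserved, $I(X; \hat{X}) = I(X; R)$, while the IB's relevant information satisfies $I(Y; \hat{X}) = I(X;Y) - \bb{E}_{p(\x)\tilde{p}(r|\x)}[d_{IB}(\x,r)]$, as recalled after \eqref{eq:d_IB}.

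Second, an IB encoder achieves the point $(I_X, I_Y)$ on the IB curve \eqref{eq:IB-curve-def} if and only if the corresponding RD test channel achieves the point $(I(X;Y) - I_Y, I_X)$ on the RD curve \eqref{eq:RD-curve-def} of $(p_X, d_{IB})$. Applying Theorem \ref{thm:convexity-of-achieving-dists-in-RD-berger}, the set of achieving RD test channels is convex, so for any two such $\tilde{p}_1, \tilde{p}_2$ and $\alpha \in [0,1]$ the mixture $\tilde{p}_\alpha := \alpha \tilde{p}_1 + (1-\alpha)\tilde{p}_2$ is again achieving. Pulling back via the correspondence yields an IB encoder that achieves $(I_X, I_Y)$, establishing the claim.

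The only real obstacle is bookkeeping around the representation alphabet: two IB encoders achieving the same point may a priori live on distinct finite alphabets $\hat{\mathcal{X}}_1, \hat{\mathcal{X}}_2$, or induce different decoders, so their ``convex combination'' is not immediately well-defined as an encoder on a single fixed $\hat{\mathcal{X}}$. This is handled cleanly by working with test channels valued in $\Delta[\mathcal{Y}]$ throughout: the mixture $\tilde{p}_\alpha$ is supported on the union of the two supports and thus corresponds to an IB encoder on the disjoint-union alphabet $\hat{\mathcal{X}}_1 \sqcup \hat{\mathcal{X}}_2$, where clusters from $\hat{\mathcal{X}}_1$ carry mass scaled by $\alpha$ and those from $\hat{\mathcal{X}}_2$ by $1-\alpha$. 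Any resulting redundant or coincident clusters can then be absorbed by root reduction (Algorithm \ref{algo:root-reduction}), which preserves the location in the information plane; this avoids fixing a common $\hat{\mathcal{X}}$ a priori and makes the convexity statement unambiguous.
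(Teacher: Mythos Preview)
Your proposal is correct and follows essentially the same route as the paper: pass to the RD formulation on $\Delta[\mathcal{Y}]$ via Harremo\"es--Tishby, apply Berger's convexity theorem, and read the result back as an IB statement. The one point the paper makes explicit that you gloss over is that Theorem~\ref{thm:convexity-of-achieving-dists-in-RD-berger} is stated for \emph{finite} reproduction alphabets, whereas here it is invoked on the continuous alphabet $\Delta[\mathcal{Y}]$; the paper notes that Berger's proof goes through verbatim for finitely-supported test channels, and observes that $\supp \tilde{p}_\alpha \subseteq \supp \tilde{p}_1 \cup \supp \tilde{p}_2$ is finite. Your disjoint-union bookkeeping is effectively an alternative way to patch the same gap (restrict to the finite union alphabet before applying Berger), but you should say so explicitly rather than applying the theorem directly to the continuous-alphabet problem.
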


The proof is provided below for completeness. 
We note that a version of Corollary \ref{cor:convexity-of-achieving-dists-at-point-on-IB-curve} in inverse-encoder coordinates can also be synthesized from the ideas leading to Theorem 2.3 in \cite{witsenhausen1975conditional}.

\begin{proof}[Proof of Corollary \ref{cor:convexity-of-achieving-dists-at-point-on-IB-curve}]
	Consider a finite IB problem $p_{Y|X} \; p_X$ as an RD problem $\left(d_{IB}, p_X\right)$ on the continuous reproduction alphabet $\Delta[\mathcal{Y}]$, as defined by \eqref{eq:d_IB} in Section \ref{sub:IB-as-an-RD-problem-and-non-singularity-conj}. 
	As noted above, its encoders (or test channels) are conditional probability distributions $p(r|\x)$, with $r \in \Delta[\mathcal{Y}]$, supported on finitely many coordinates $(r, \x)$. 

	Let $p_1(r|\x)$ and $p_2(r|\x)$ be encoders achieving a point $(I_X, I_Y)$ on the IB curve \eqref{eq:IB-curve-def}. 
	By \cite[Theorem 5]{harremoes2007information}, these may be considered as test channels achieving the curve \eqref{eq:RD-curve-def} of the RD problem $\left(d_{IB}, p_X\right)$. 
	The reproduction symbols $r\in \Delta[\mathcal{Y}]$ supporting\footnote{ Defined here $\supp p(r|\x) := \supp p(r)$, where $p(r)$ is defined from $p(r|\x)$ by marginalization, as in \eqref{eq:IB-eq-marginal}. } a convex combination $p_\lambda := \lambda \cdot p_1 + (1 - \lambda) \cdot p_2$, $0 \leq \lambda \leq 1$, are contained in the the supports of $p_1$ and $p_2$, $\supp p_\lambda \subseteq \supp p_1 \cup \supp p_2$, and so $p_\lambda$ is finitely-supported. 
	Although \citeauthor{berger71}'s Theorem \ref{thm:convexity-of-achieving-dists-in-RD-berger} assumes that the reproduction alphabet is finite, one can readily see that its proof works just as well when the distributions involved are finitely-supported. 
	Thus, by Theorem \ref{thm:convexity-of-achieving-dists-in-RD-berger}, $p_\lambda$ achieves the above point on the RD curve \eqref{eq:RD-curve-def}. 
	Since this point $(I_X, I_Y)$ is on the IB curve \eqref{eq:IB-curve-def}, then $p_\lambda$ is an optimal IB root. 
\end{proof}

The RD curve \eqref{eq:RD-curve-def} is the envelope of lines of slope $-\beta$ and intercept $\min_{p(\xhat|\x)} \big( I(X; \hat{X}) + \beta \; \bb{E}[d(\x, \xhat)] \big)$ along the $R$-axis, e.g., \citep{berger71}. 
Thus, Theorem \ref{thm:convexity-of-achieving-dists-in-RD-berger} can be generalized by considering the achieving distributions that pertain to a particular slope value rather than to a particular curve point $(D, R(D))$ --- see \cite[Section 6.3]{agmon2022RTRD}.

\begin{thm}[Theorem 20 in \cite{agmon2022RTRD}]			\label{thm:convexity-of-achieving-distribution-from-RTRD}
	For any $\beta > 0$ value, the set of distributions achieving the RD curve \eqref{eq:RD-curve-def} that correspond to $\beta$ is convex.
\end{thm}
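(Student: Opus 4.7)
The plan is to reduce the statement to the convexity of the Lagrangian $\mathcal{L}_\beta(p) := I(X;\hat{X}) + \beta\,\mathbb{E}_{p(\hat{x}|x)p(x)}[d(x,\hat{x})]$ as a functional of the test channel $p(\hat{x}|x)$, for fixed source distribution $p_X$ and fixed $\beta>0$. Since the RD curve \eqref{eq:RD-curve-def} is the upper envelope of lines with slope $-\beta$ and intercept $\min_{p(\hat{x}|x)} \mathcal{L}_\beta(p)$ along the rate axis, a test channel ``corresponds to $\beta$'' in the sense of the statement precisely when it attains this minimum. Denote the minimum value by $m_\beta$; the goal is to show that $\{p: \mathcal{L}_\beta(p)=m_\beta\}$ is convex.

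The first step is to observe that $\mathbb{E}_{p(\hat{x}|x)p(x)}[d(x,\hat{x})] = \sum_{x,\hat{x}} p(x)\,d(x,\hat{x})\,p(\hat{x}|x)$ is \emph{linear} in $p(\hat{x}|x)$. The second step is to invoke the standard fact that, for a fixed source distribution $p_X$, the mutual information $I(X;\hat{X})$ is a convex functional of the test channel $p(\hat{x}|x)$ (e.g., Cover and Thomas). Combining these two facts, $\mathcal{L}_\beta$ is convex in $p(\hat{x}|x)$ for any $\beta>0$.

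With convexity in hand, the main argument is a one-liner: given two minimizers $p_1, p_2$ of $\mathcal{L}_\beta$ and $\lambda\in[0,1]$, set $p_\lambda := \lambda p_1 + (1-\lambda) p_2$, which is itself a valid test channel. Then
\begin{equation*}
m_\beta \;\leq\; \mathcal{L}_\beta(p_\lambda) \;\leq\; \lambda\,\mathcal{L}_\beta(p_1) + (1-\lambda)\,\mathcal{L}_\beta(p_2) \;=\; m_\beta \;,
\end{equation*}
where the first inequality holds by the definition of $m_\beta$ as the minimum, and the second by convexity of $\mathcal{L}_\beta$. Hence $\mathcal{L}_\beta(p_\lambda) = m_\beta$, so $p_\lambda$ also corresponds to $\beta$, proving convexity of the set.

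The proof is essentially routine once the right functional is identified; no step is a genuine obstacle. The only care needed is in the opening reduction, to argue cleanly that ``achieving the RD curve at slope $-\beta$'' is the \emph{same} set as the $\operatorname{argmin}$ of $\mathcal{L}_\beta$ — this is immediate from the envelope characterization of $R(D)$ but is worth stating explicitly since, unlike Theorem \ref{thm:convexity-of-achieving-dists-in-RD-berger} which fixes a curve \emph{point}, we are now fixing a \emph{slope}, and in the presence of a linear segment of the RD curve these correspond to a whole family of points $(D, R(D))$ rather than just one. In that degenerate case the convex set produced here is strictly larger than any of the sets yielded by Theorem \ref{thm:convexity-of-achieving-dists-in-RD-berger}, which is precisely what makes the slope-based version the right tool for analyzing support-switching bifurcations in Subsection \ref{sub:discontinuous-IB-bifs}.
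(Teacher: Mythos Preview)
Your proof is correct. Note, however, that the paper does not actually prove this theorem: it is stated as a citation (Theorem 20 in \cite{agmon2022RTRD}) and invoked without proof. So there is no ``paper's own proof'' to compare against here. Your argument via convexity of $I(X;\hat{X})$ in the test channel plus linearity of the expected distortion is the standard route and is exactly what one would expect the cited reference to contain; the identification of ``corresponding to $\beta$'' with $\operatorname{argmin}\mathcal{L}_\beta$ via the envelope characterization is also correctly handled.
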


As with Corollary \ref{cor:convexity-of-achieving-dists-at-point-on-IB-curve}, we immediately have an identical result for roots achieving the IB curve \eqref{eq:IB-curve-def},

\begin{cor}												\label{cor:convexity-of-IB-optimal-roots-at-beta}
	For any $\beta > 0$ value, the set of optimal IB encoders that correspond to $\beta$ is convex.
\end{cor}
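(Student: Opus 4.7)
The plan is to mirror the proof of Corollary \ref{cor:convexity-of-achieving-dists-at-point-on-IB-curve} essentially verbatim, simply substituting Theorem \ref{thm:convexity-of-achieving-distribution-from-RTRD} for Theorem \ref{thm:convexity-of-achieving-dists-in-RD-berger} at the critical step. The idea is to transfer the convexity statement from rate-distortion theory into IB context via the reduction \cite[Theorem 5]{harremoes2007information} of the IB to an RD problem on the continuous reproduction alphabet $\Delta[\mathcal{Y}]$ with distortion $d_{IB}$ \eqref{eq:d_IB}, as was developed in Subsection \ref{sub:IB-as-an-RD-problem-and-non-singularity-conj}.

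First, I would recall that an IB encoder $p(\xhat|\x)$ corresponds to a test channel $p(r|\x)$ for the tangent RD problem $(d_{IB}, p_X)$, with reproduction symbols $r \in \Delta[\mathcal{Y}]$. By \cite[Theorem 5]{harremoes2007information}, minimizing the IB Lagrangian at a given $\beta$ is equivalent to minimizing the Lagrangian of this RD problem at the same multiplier. Consequently, the IB encoders that achieve the IB curve \eqref{eq:IB-curve-def} at slope $\nicefrac{1}{\beta}$ are precisely the test channels achieving the RD curve \eqref{eq:RD-curve-def} at slope $-\beta$. Given two such encoders $p_1(r|\x)$ and $p_2(r|\x)$ and any $\lambda \in [0, 1]$, form the convex combination $p_\lambda := \lambda p_1 + (1-\lambda) p_2$. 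Since $\supp p_\lambda \subseteq \supp p_1 \cup \supp p_2$, this combination remains finitely supported on $\Delta[\mathcal{Y}]$.

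Applying Theorem \ref{thm:convexity-of-achieving-distribution-from-RTRD} to $p_1, p_2$ and $p_\lambda$ at slope $-\beta$ yields that $p_\lambda$ also achieves the RD curve at this slope; translating back via \cite[Theorem 5]{harremoes2007information}, $p_\lambda$ achieves the IB curve at slope $\nicefrac{1}{\beta}$, so it is an optimal IB encoder corresponding to $\beta$, as needed.

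The only subtlety, and the main obstacle to address, is that Theorem \ref{thm:convexity-of-achieving-distribution-from-RTRD} is stated for finite RD problems in \cite{agmon2022RTRD}, whereas $d_{IB}$ is defined on the \emph{infinite} alphabet $\Delta[\mathcal{Y}]$. This is handled exactly as in the proof of Corollary \ref{cor:convexity-of-achieving-dists-at-point-on-IB-curve}: one inspects the proof of Theorem \ref{thm:convexity-of-achieving-distribution-from-RTRD} to verify that it carries through for test channels whose output distribution is finitely supported, since IB roots are finitely supported \cite[Lemma 2.2(i)]{witsenhausen1975conditional} and the support of $p_\lambda$ is contained in the (finite) union of supports of $p_1$ and $p_2$. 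No new calculation is required beyond this finite-support check, making the corollary an essentially immediate consequence of the RD companion theorem.
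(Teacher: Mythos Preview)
Your proposal is correct and follows essentially the same approach as the paper: the paper does not spell out a proof but simply states that the argument is identical to that of Corollary~\ref{cor:convexity-of-achieving-dists-at-point-on-IB-curve}, with Theorem~\ref{thm:convexity-of-achieving-distribution-from-RTRD} in place of Theorem~\ref{thm:convexity-of-achieving-dists-in-RD-berger}. Your write-up makes this substitution explicit and correctly handles the finite-support subtlety in the same manner as the earlier corollary.
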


See also \cite[IV]{witsenhausen1975conditional} for an argument in inverse-encoder coordinates. 
In particular, note the duality technique leading to (b) and (c) in Theorem 4.1 there.
This duality boils down to describing a compact convex set in the plane by its lines of support, as in the observation leading to Theorem \ref{thm:convexity-of-achieving-distribution-from-RTRD}. 
Commensurate with the IB being a special case of RD, Corollary \ref{cor:convexity-of-IB-optimal-roots-at-beta} can also be proven directly from the IB's definitions in direct-encoder terms, \cite{benger2019}.
Note that the requirement that the IB root indeed achieves the curve is necessary. Otherwise one could take convex combinations with the trivial IB root\footnote{ One can verify directly that this satisfies the IB Equations \eqref{eq:IB-eq-encoder}-\eqref{eq:IB-eq-marginal} for every $\beta > 0$.} $p(r|x) = \delta_{r, p_Y}$. This yields absurd since the trivial root contains no information on either $X$ or $Y$.

As in \cite[Section 6.3]{agmon2022RTRD}, the convexity of optimal IB roots (Corollary \ref{cor:convexity-of-IB-optimal-roots-at-beta}) has several important consequences.
For one, unlike the (local) bifurcations we have considered so far, bifurcation theory also has \textit{global bifurcations}. 
These are ``bifurcations that cannot be detected by looking at small neighborhoods of fixed points'', \citep[Section 2.3]{kuznetsov2004elements}.
From convexity, it immediately follows that

\begin{cor}				\label{cor:no-global-bifs-in-IB}
	There are no global bifurcations in finite IB problems.
\end{cor}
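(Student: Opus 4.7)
The plan is to derive Corollary \ref{cor:no-global-bifs-in-IB} as an immediate consequence of the convexity result in Corollary \ref{cor:convexity-of-IB-optimal-roots-at-beta}, by arguing that convexity forces any multi-valuedness of the optimal-root structure to be locally visible at the IB operator \eqref{eq:IB-operator-def}. Since the cited definition from \citep[Section 2.3]{kuznetsov2004elements} characterizes a global bifurcation as one that \emph{cannot} be detected by looking at small neighborhoods of fixed points, it suffices to show that every qualitative change in the set of optimal IB roots is in fact accompanied by a local signature at some root.

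Concretely, the first step is to fix any $\beta_c > 0$ that is a candidate bifurcation point, and to let $\mathcal{O}_{\beta_c}$ denote the set of optimal IB encoders corresponding to $\beta_c$. By Corollary \ref{cor:convexity-of-IB-optimal-roots-at-beta}, $\mathcal{O}_{\beta_c}$ is convex. In particular, if $\mathcal{O}_{\beta_c}$ contains two distinct roots $\bm{p}_1 \neq \bm{p}_2$, then the entire line segment $\bm{p}_\lambda := \lambda \bm{p}_1 + (1-\lambda) \bm{p}_2$, $\lambda \in [0,1]$, lies in $\mathcal{O}_{\beta_c}$, and therefore $F(\bm{p}_\lambda, \beta_c) = 0$ along the whole segment, where $F = Id - BA_{\beta}$ is the IB operator \eqref{eq:IB-operator-def}.

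The second step is to translate this convex structure into local information. Differentiating $F(\bm{p}_\lambda, \beta_c) = 0$ in $\lambda$ shows that the direction $\bm{p}_2 - \bm{p}_1$ lies in $\ker D_{\bm{x}} F$ at every $\bm{p}_\lambda$. Thus an arbitrarily small neighborhood of any point on the segment contains a non-trivial continuum of fixed points, and in particular a non-trivial kernel direction of $D_{\bm{x}} F$ is present there. This is detectable by purely local analysis at $\bm{p}_\lambda$, so whatever qualitative change occurs at $\beta_c$ is witnessed locally. A symmetric argument handles the situation in which $\mathcal{O}_{\beta_c}$ is a singleton but distinct branches of optimal roots approach $\beta_c$ from either side: by the continuity of the IB curve \eqref{eq:IB-curve-def} and the compactness of the encoder simplex, the two branch-limits both lie in $\mathcal{O}_{\beta_c}$, reducing that case to the previous one.

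The main conceptual obstacle is simply pinning down what ``global bifurcation'' means in our parametric operator-root setting, since the terminology originates in dynamical systems. Once one accepts the Kuznetsov characterization quoted in the excerpt --- that a bifurcation is global precisely when it eludes local inspection at fixed points --- the argument reduces to the observation above that convexity of $\mathcal{O}_{\beta_c}$ forces any non-trivial optimal multiplicity at $\beta_c$ to produce a detectable non-trivial $\ker D_{\bm{x}} F$. Hence every IB bifurcation is, by definition, local, and Corollary \ref{cor:no-global-bifs-in-IB} follows.
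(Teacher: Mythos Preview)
Your proposal is correct and follows essentially the same approach as the paper: the paper's own justification is the one-sentence observation that, by Corollary~\ref{cor:convexity-of-IB-optimal-roots-at-beta}, multiple optimal roots at a given $\beta$ force a kernel vector of $D_{\bm{x}}(Id - BA_\beta)$ along the connecting segment, which is locally detectable. Your write-up is a more detailed version of this same argument, with the differentiation in $\lambda$ making explicit what the paper leaves implicit.
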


Indeed, if at a given $\beta$ value there exists more than one optimal root, then the Jacobian of the IB operator $Id - BA_\beta$ \eqref{eq:IB-operator-def} must have a kernel vector pointing along the line connecting these optimal roots, by Corollary \ref{cor:convexity-of-IB-optimal-roots-at-beta}.

With that comes an important practical caveat.	
Corollaries \ref{cor:convexity-of-achieving-dists-at-point-on-IB-curve} and \ref{cor:convexity-of-IB-optimal-roots-at-beta} hold for the IB when parametrized by points in $\Delta\left[\Delta[\mathcal{Y}]\right]$. 
However, the above kernel vector (which exists due to convexity) may not be detectable if an IB root is improperly represented by a finite-dimensional vector.
For example, consider the bifurcation in Figure \ref{fig:support-switching-bif-in-IB}, where a line segment at $\beta_c$ connects the trivial (single-clustered) root to the 2-clustered root. 
Obviously, the bifurcation there cannot be detected by the Jacobian of the IB operator \eqref{eq:IB-operator-def} when it is computed on $T = 1$ clusters (Jacobian of order $1\cdot (|\mathcal{Y}| + 1)$). 
Indeed, the root of effective cardinality two cannot be represented on a single cluster, and so the line segment connecting it to the trivial root does not exist in a 1-clustered representation.
This is demonstrated in Figure \ref{fig:support-switching-bif-eigs}, which compares Jacobian eigenvalues at reduced representations to those at 2-clustered representations. 
The same reasoning gives the following necessary condition,

\begin{prop}[A necessary condition for detectability of IB bifurcations]			\label{prop:bif-is-detectable-only-on-enough-clusters}
	A bifurcation at $\beta_c$ in a finite IB problem which involves roots of effective cardinalities $T_1$ and $T_2$ is detectable by a non-zero vector in $\ker (I - D_{\log p(\y|\xhat), \log p(\xhat)} BA_{\beta_c})$ only if the latter is evaluated at a representation on at least $\max\{T_1, T_2\}$ clusters. 
\end{prop}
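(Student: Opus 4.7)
The plan is to show that detecting the bifurcation through a non-zero kernel vector requires the coordinate system to accommodate \emph{both} bifurcating roots, and hence at least $\max\{T_1, T_2\}$ clusters. Without loss of generality, take $T_2 = \max\{T_1, T_2\} \geq T_1$, and let $\bm{x}_1, \bm{x}_2$ denote the two bifurcating roots of respective effective cardinalities $T_1$ and $T_2$. By the very definition of effective cardinality (Subsection \ref{sub:IB-as-an-RD-problem-and-non-singularity-conj}), $\bm{x}_2$ cannot be written in any finite-dimensional representation on fewer than $T_2$ clusters.

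Next, I would identify the kernel direction responsible for detecting the bifurcation, splitting by type. In the discontinuous (support-switching) case, both $\bm{x}_1$ and $\bm{x}_2$ achieve the IB curve at $\beta_c$, so Corollary \ref{cor:convexity-of-IB-optimal-roots-at-beta} implies that the entire segment $\{\lambda \bm{x}_1 + (1 - \lambda)\bm{x}_2 : \lambda \in [0,1]\}$ consists of optimal IB encoders, each a zero of $F := Id - BA_{\beta_c}$. Differentiating $F\bigl(\lambda \bm{x}_1 + (1-\lambda)\bm{x}_2\bigr) = \bm{0}$ in $\lambda$ shows that the segment's tangent $\bm{x}_2 - \bm{x}_1$ lies in $\ker D_{\bm{x}} F$ at every point along it. In the continuous (cluster-vanishing or cluster-merging) case of Subsection \ref{sub:continuous-IB-bifs}, a branch of cardinality-$T_2$ roots emerges from $\bm{x}_1$ at $\beta_c$, and in a $T_2$-clustered parameterization the tangent to this branch at $\beta_c$ is a kernel direction of $D_{\bm{x}} F$ with non-trivial components in the coordinates of the emerging (or newly-split) cluster.

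In both cases, the bifurcation-detecting kernel vector inherently uses coordinates that parameterize $\bm{x}_2$ or the branch leading to it. Since those coordinates are by construction unavailable when the Jacobian is evaluated on $T < T_2$ clusters, no non-zero vector in the kernel on $T$ clusters can point along this direction, and the bifurcation is therefore invisible to $\ker(I - D_{\log p(\y|\xhat), \log p(\xhat)} BA_{\beta_c})$ there.

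The main obstacle is making precise the notion of a kernel vector ``corresponding to'' the given bifurcation as opposed to reflecting an unrelated singularity (e.g., the structural symmetries or another, coexisting bifurcation). Subject to Conjecture \ref{conj:BA-IB-Jacob-in-decoder-coords-is-nonsingular-at-reduced-root}, this issue does not arise at reduced representations, where any non-trivial kernel is forced to come from a genuine bifurcation; in non-reduced representations, one first mods-out the structural kernel directions described in Subsection \ref{sub:IB-as-an-RD-problem-and-non-singularity-conj} and then attributes the remaining kernel to the bifurcation, after which the argument above applies component-wise.
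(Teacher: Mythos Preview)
Your proposal is correct, and its first paragraph already \emph{is} the paper's entire proof: once $T_2 = \max\{T_1,T_2\}$, a root of effective cardinality $T_2$ by definition cannot be represented on fewer than $T_2$ clusters, so in a $T$-clustered representation with $T < T_2$ there is no second root and hence no bifurcation to detect. The paper stops there.

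Everything from your second paragraph onward---the case split into discontinuous versus continuous bifurcations, the identification of the specific kernel direction via Corollary~\ref{cor:convexity-of-IB-optimal-roots-at-beta} or via the tangent to the emerging branch, and the closing discussion of how to isolate the ``bifurcation-detecting'' kernel vector from structural ones---is unnecessary for this proposition. The statement is purely a necessary condition, so you need not exhibit or characterize the kernel vector at all; it suffices that the phenomenon (two colliding roots) is absent in the smaller representation. Your extra machinery is not wrong, but it proves more than required and introduces the ``main obstacle'' you flag at the end, which the paper's direct argument sidesteps entirely.
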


Indeed, suppose that $T_1 \lneqq T_2$ (the conclusion is trivial if $T_1 = T_2$). 
By definition, a root of effective cardinality $T_2$ does not exist in representations with less than $T_2$ clusters. Thus, there is no bifurcation in a $T$-clustered representation if $T < T_2$, and so there is then nothing to detect.
As a special case of this argument, note that Conjecture \ref{conj:BA-IB-Jacob-in-decoder-coords-is-nonsingular-at-reduced-root} (Section \ref{sub:IB-as-an-RD-problem-and-non-singularity-conj}) implies that the Jacobian is non-singular in a $T_1$-clustered representation of the $T_1$-clustered root (namely, at its reduced representation). 
With that, we have observed numerically that the eigenvalues of $D_{\log p(\y|\xhat), \log p(\xhat)} BA_{\beta}$ do \textit{not} depend on the representation's dimension if computed on strictly\footnote{ Computing on one cluster more than the effective cardinality makes sense considering \cite[Theorem 2.6.1]{berger71} or \cite[Lemma 2.2(i)]{witsenhausen1975conditional}, for example.} more clusters than the effective cardinality. Rather, only the eigenvalues' multiplicities vary by dimension.
We omit practical caveats on exchanging between the coordinate systems of Section \ref{sec:coords-exchange-for-the-IB} for brevity. 

\begin{figure}[h]
	\centering
	\vspace*{10pt}
	\ifdefined\compilefigs
	\includegraphics[width=.7\textwidth]{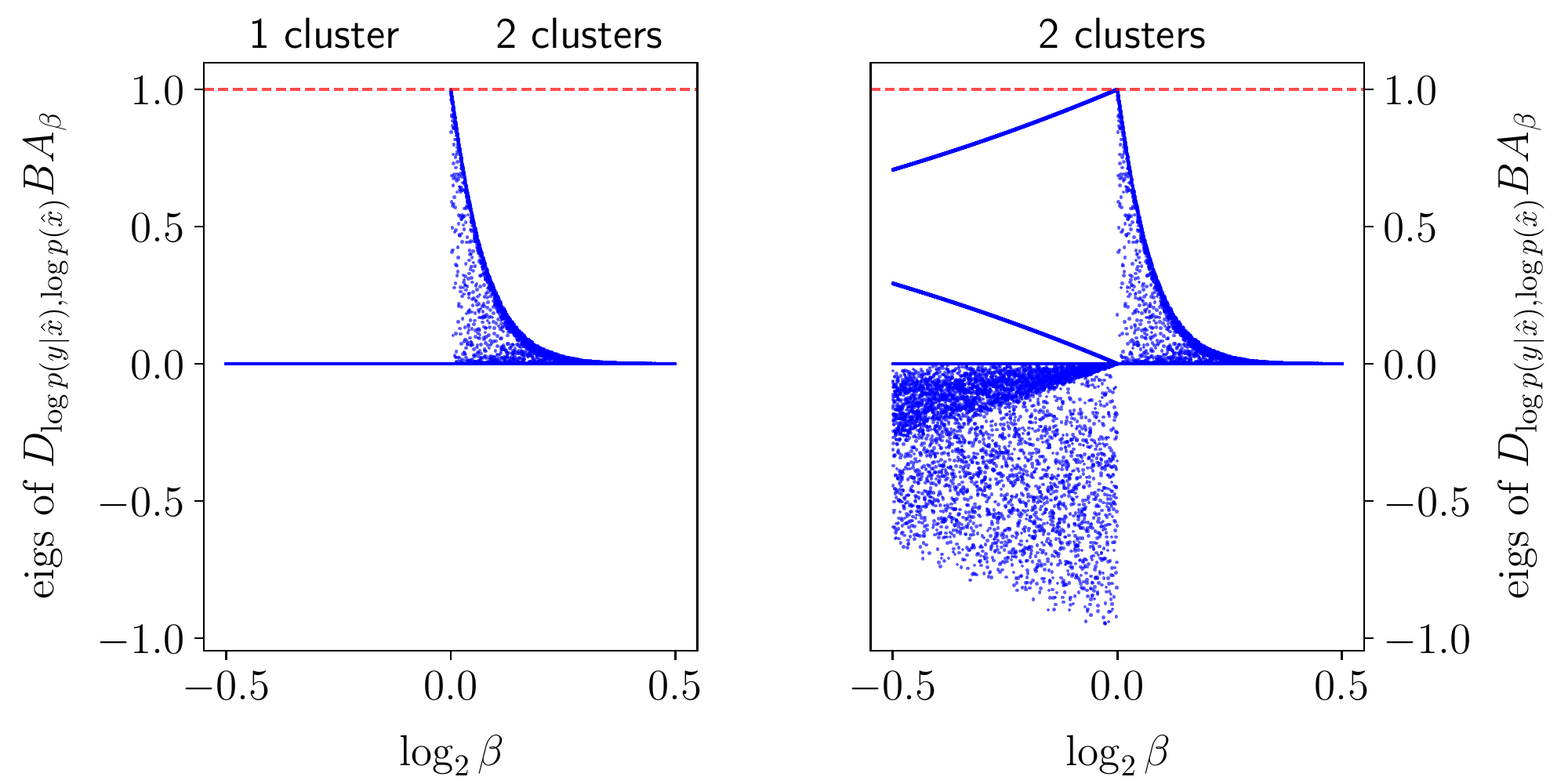}
	\else
	\includegraphics[width=.7\textwidth]{figs/empty_figure}
	\fi
	\caption{
		\textbf{Bifurcations can be detected by $BA_\beta$'s Jacobian only if computed on enough clusters}.
		The approximate eigenvalues of $D_{\log p(\y|\xhat), \log p(\xhat)} BA_\beta$ are plotted by the representation's dimension for the problem in Figure \ref{fig:support-switching-bif-in-IB}.
		The eigenvalues are evaluated at solutions obtained by the BA-IB Algorithm \ref{algo:BA-IB} (stopping condition = $10^{-9}$), initialized anew at random for each $\beta$. 
		While the random initializations account for much of the eigenvalues' spread, they reveal the solution's behavior through its various approximations. 
		Other factors which contribute to this spread are the degeneracy of the solutions (when $\beta < 1$, to the right), BA's loss of accuracy near the bifurcation (Figure \ref{fig:err-in-numerical-derivs-for-BSC-and-BAs-accuracy} bottom), and the decoders' proximity to the simplex boundaries (see Equation \eqref{eq:BA-Jacob-wrt-decoder-coords-at-main-text}).
		\textbf{Left}: 
		when computed at reduced representations (on $T=1$ clusters to the left, $T=2$ to the right), then the eigenvalues at the trivial solution give no indication of the upcoming bifurcation (at $\beta < 1$), unlike the eigenvalues at the 2-clustered root ($\beta > 1$). 
		\textbf{Right}: 
		the bifurcation's presence is clearly noticed also at the trivial solution ($\beta < 1$) when evaluated at its degenerate 2-clustered representations. 
		Indeed, the trivial solution is then represented on the same number of clusters ($T = 2$) as the root to the right ($\beta > 1$) --- see Proposition \ref{prop:bif-is-detectable-only-on-enough-clusters}. 
		However, due to the bifurcation, the eigenvalues' trajectories are not smooth at $\beta_c = 1$. 
		\textbf{Both}: 
		a similar dependency on the representation's dimension also exists in the other bifurcation examples in this paper (though without a spread of eigenvalues).
	}
	\label{fig:support-switching-bif-eigs}
\end{figure}

To complete the discussion on continuous bifurcations (Section \ref{sub:continuous-IB-bifs}), we argue that cluster-merging and cluster-vanishing are indeed bifurcations, where IB roots of distinct effective cardinalities collide and merge into one. 
We offer two ways to see this.
First, using the inverse-encoder\footnote{ Inverse-encoder and decoder coordinates are interchangeable here. Indeed, as noted in Section \ref{sec:coords-exchange-for-the-IB}, the inverse-encoders of an IB root (with no zero-mass clusters) are in bijective correspondence with its decoders. } formulation of the IB in \citep[Section II.A]{witsenhausen1975conditional}, one can consider an optimization problem in which the number of IB clusters is constrained explicitly. 
By the arguments therein, the constrained problem has an optimal root (due to compactness), which achieves the optimal curve of the constrained problem. The latter curve must be sub-optimal if fewer clusters are allowed than needed to achieve the IB curve \eqref{eq:IB-curve-def}. 
Thus, whenever the effective cardinality of an optimal root (in the un-constrained problem) decreases, it must therefore collide with an optimal root of the constrained IB problem, by Corollary \ref{cor:convexity-of-IB-optimal-roots-at-beta}. 
This accords with \cite[Section 3.4]{tishby1999}, which describes IB bifurcations as a separation of optimal and sub-optimal IB curves according to their effective cardinalities. 
Second, consider the reduced form of an IB root at the point of a continuous bifurcation.
Since its effective cardinality decreases there strictly, say from $T_2$ to $T_1$, then the root can be represented on $T_1$ clusters at the bifurcation itself. 
However, the Jacobian of the IB operator \eqref{eq:IB-operator-def} in log-decoder coordinates is non-singular when represented on $T_1$ clusters, as noted after Proposition \ref{prop:bif-is-detectable-only-on-enough-clusters}. 
Thus, by the Implicit Function's Theorem, there is a unique IB root on $T_1$ clusters through this point. It exists at both sides of the bifurcation (above and below the critical point). 
When represented on $T_2$ clusters, however, the latter intersects at the bifurcation with the root of effective cardinality $T_2$, and so the two roots collide and merge there to one. 
This argument is identical to \cite[Section 6.2]{agmon2022RTRD}, which proves that distinct RD roots collide and merge at cluster-vanishing bifurcations in RD. 

The arguments above imply that cluster vanishing bifurcations cannot be detected directly by considering kernel directions of the IB operator \eqref{eq:IB-operator-def} at the bifurcation, as argued in Subsection \ref{sub:continuous-IB-bifs}. 
Indeed, consider a continuous bifurcation, where roots $\bm{p}_1$ and $\bm{p}_2$ of respective effective cardinalities $T_1 < T_2$ intersect. 
These are paths in $\Delta\left[\Delta[\mathcal{Y}]\right]$ that coincide at the bifurcation itself, $\bm{p}_1(\beta_c) = \bm{p}_2(\beta_c)$, and so in particular are of the same effective cardinality $T_1$ there. 
Asking whether a bifurcation is detectable amounts to considering the evaluation of $\ker D(Id - BA_\beta)$ at a finite-dimensional representation (or ``projection'') of $\bm{p}$. 
The Jacobian $D(Id - BA_\beta)$ of the IB operator \eqref{eq:IB-operator-def} is non-singular at $\beta_c$ when evaluated on $T_1$-clusters in log-decoder coordinates, as noted after Proposition \ref{prop:bif-is-detectable-only-on-enough-clusters}. 
We argue that evaluating it on representations with more clusters $T \gneqq T_1$ does \textit{not} allow to detect the bifurcation, not even if $T \geq T_2$. 
See Appendix \ref{sec:degeneracies-of-IB-operator} for a formal argument. 
Intuitively, this follows because picking a degenerate representation amounts to duplicating clusters of the reduced representation or adding clusters of zero mass;	 cf., \textit{reduction} in Subsection \ref{sub:IB-as-an-RD-problem-and-non-singularity-conj}. 
Introducing degeneracies to a reduced root adds no information about the problem at hand. 

Due to the above, cluster-vanishing bifurcations cannot be detected by following a root $\bm{p}_1$ of effective cardinality $T_1$ through a ``cluster growing'' bifurcation, but only by following a root $\bm{p}_2$ with $T_2 > T_1$ till its collision with $\bm{p}_1$. 
As discussed after Conjecture \ref{conj:BA-IB-Jacob-in-decoder-coords-is-nonsingular-at-reduced-root} (Subsection \ref{sub:IB-as-an-RD-problem-and-non-singularity-conj}), the Jacobian of $Id - BA_\beta$ in reduced log-decoder coordinates can then be used to indicate an upcoming collision of $\bm{p}_2$ with $\bm{p}_1$, in addition to the root-reduction Algorithm \ref{algo:root-reduction}. 
The exact same arguments as above apply also to cluster-merging bifurcations. However, as noted in Subsection \ref{sub:continuous-IB-bifs} (and Appendix \ref{sec:equivalent-conds-for-cluster-splitting-appendix}), the stability of a particular IB cluster $\xhat$ is a property of the root itself. Thus, these are detectable by standard local techniques at the point bifurcation. 
Unlike continuous bifurcations, discontinuous bifurcations are inherently detectable due to the line segment in $\Delta\left[\Delta[\mathcal{Y}]\right]$ connecting the roots at the bifurcation (Corollary \ref{cor:convexity-of-IB-optimal-roots-at-beta}), so long that the IB root is represented on sufficiently many clusters (Proposition \ref{prop:bif-is-detectable-only-on-enough-clusters}) --- see Figure \ref{fig:support-switching-bif-eigs}. 
These results make sense, considering that cluster-vanishing bifurcations seem to appear more frequently in practice than other types. 
Intuitively, branching from a suboptimal root $\bm{p}_1$ to an optimal one $\bm{p}_2$ is harder than the other way around, just as learning new relevant information is harder than discarding it. 
Cases where both directions are equally difficult are the exception, as one might expect. 
This is consistent with the later discussion in Subsection \ref{sub:IBRT1-discussion} on the stability of optimal IB roots (Appendix \ref{sec:optimal-IB-root-is-Lyapunov-stable-in-decreasing-beta}). 

When following the path of a reduced IB root (as in Section \ref{sec:euler-method}), one would like to ensure that its bifurcations are indeed detectable by BA's Jacobian. 
Due to the caveats above, it is computationally preferable\footnote{ While one can compute BA's Jacobian on more clusters than necessary, that increases computational costs and may introduce numerical subtleties. } to follow its path as the effective cardinality decreases rather than increases.
As a result, we take only negative step sizes $\Delta \beta < 0$, since the effective cardinality of an optimal IB root cannot decrease with $\beta$. 
To see this, first note that the IB curve $I_Y(I_X)$ \eqref{eq:IB-curve-def} is concave, and so its slope $\nicefrac{1}{\beta}$ cannot increase with $I_X$. That is, $\beta$ cannot decrease with $I_X$.
Second, note that allowing more clusters cannot decrease the $X$-information $\sum_{\xhat} p(\xhat) H\big( p(\x|\xhat) \big)$ achieved by the IB's optimization variables. Indeed, a $T$-clustered variable $\big( p(\x|\xhat), p(\xhat)\big)$ (\textit{not} necessarily a root) can always be considered as $(T+1)$-clustered, by adding a cluster of zero mass.
cf., the construction at \cite[II.A]{witsenhausen1975conditional}.
Thus, the effective cardinality of an optimal root cannot decrease as the constraint $I_X$ on the $X$-information is relaxed.
When both points are combined, the effective cardinality cannot decrease with $\beta$, as argued. 
In contrast to the IB, we note that the behavior of RD problems is more complicated, e.g., \cite[Example 2.7.3 and Problems 2.8-2.10]{berger71}, since the distortion of each reproduction symbol is fixed a priori.

Finally, we proceed with the argument of Section \ref{sub:continuous-IB-bifs} for the case of discontinuous IB bifurcations. 
That is, consider the reduced form of an optimal IB root, and suppose that either its decoders or its weights (or both) cannot be written as a continuous function of $\beta$ in the vicinity of $\beta_c$. 
Write $r^+_{\xhat}$ and $r^-_{\xhat}$ for its distinct decoders as $\beta \to \beta_c^+$ and $\beta \to \beta_c^-$, respectively.
Similarly, $p^+(\xhat)$ and $p^-(\xhat)$ for its non-zero weights. 
Consider the tangent RD problem on the reproduction alphabet $\hat{\mathcal{X}} := \{r^+_{\xhat}\}_{\xhat} \cup \{r^-_{\xhat}\}_{\xhat} \subset \Delta[\mathcal{Y}]$, as in Section \ref{sub:IB-as-an-RD-problem-and-non-singularity-conj}; cf., \cite[Section V]{agmon2021critical}, upon which this argument is based.
By construction, the IB coincides with its tangent RD problem at the two points $\big(r^+_{\xhat}, p^+(\xhat)\big)$ and $\big(r^-_{\xhat}, p^-(\xhat)\big)$.
Since both points achieve the optimal curve at the same slope value $\nicefrac{1}{\beta_c}$, then the linear segment of distributions connecting these points is also optimal, by Theorem \ref{thm:convexity-of-achieving-distribution-from-RTRD}. 
Alternatively, one could apply Corollary \ref{cor:convexity-of-IB-optimal-roots-at-beta} directly to the IB problem. 
Either way, there exists a line segment of optimal IB roots, which pertain to the given slope value. 
In summary,

\begin{thm}
	Let a finite IB problem have a discontinuous bifurcation at $\beta_c > 0$. 
	Then, its IB curve \eqref{eq:IB-curve-def} has a linear segment of slope $\nicefrac{1}{\beta_c}$.
\end{thm}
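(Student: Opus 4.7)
The plan is to formalize the sketch given in the paragraph preceding the theorem, whose essential ingredients are Corollary~\ref{cor:convexity-of-IB-optimal-roots-at-beta} (convexity of the set of optimal roots at a given $\beta$) and the fact that a discontinuous bifurcation, by definition, forces two genuinely distinct optimal roots at $\beta_c$. First, I would fix the reduced-form optimal root $\bm{p}_\beta = (r_{\xhat}(\beta), p_\beta(\xhat))$ and extract one-sided limits: by compactness of the closure of the simplex $\Delta[\mathcal{Y}]$ and of the marginal simplex, every sequence $\beta_n \to \beta_c^{\pm}$ has a convergent subsequence, and discontinuity guarantees that at least two distinct accumulation points $\bm{p}^+$ and $\bm{p}^-$ exist. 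Continuity of the IB Equations \eqref{eq:IB-eq-encoder}--\eqref{eq:IB-eq-marginal} in both the distribution and $\beta$ shows that both $\bm{p}^+$ and $\bm{p}^-$ satisfy the IB Equations at $\beta_c$; and since $\bm{p}_\beta$ is optimal on each side of $\beta_c$ and the values $I_X(\bm{p}_\beta)$, $I_Y(\bm{p}_\beta)$ depend continuously on the underlying distribution, both limit roots achieve the IB curve \eqref{eq:IB-curve-def} at slope $\nicefrac{1}{\beta_c}$.

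Second, I would apply Corollary~\ref{cor:convexity-of-IB-optimal-roots-at-beta}: since $\bm{p}^+$ and $\bm{p}^-$ are both optimal at $\beta_c$, every convex combination $\bm{p}_\lambda := \lambda \bm{p}^+ + (1-\lambda) \bm{p}^-$, $\lambda \in [0,1]$, is also an optimal IB encoder corresponding to $\beta_c$. Each $\bm{p}_\lambda$ therefore minimizes the same Lagrangian value, so $I_X(\bm{p}_\lambda) - \beta_c \, I_Y(\bm{p}_\lambda)$ is constant in $\lambda$. Equivalently, all image points $\big(I_X(\bm{p}_\lambda), I_Y(\bm{p}_\lambda)\big)$ lie on a single line of slope $\nicefrac{1}{\beta_c}$ in the information plane. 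The mapping $\lambda \mapsto \big(I_X(\bm{p}_\lambda), I_Y(\bm{p}_\lambda)\big)$ is continuous since mutual information is continuous in the joint distribution, so as $\lambda$ varies in $[0,1]$ the images trace a connected subset of this line that is contained in the IB curve.

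Third, I would verify that the endpoints $\bm{p}^+$ and $\bm{p}^-$ map to \emph{distinct} points in the information plane, so as to obtain an honest segment rather than a single point. If they coincided in the plane, the optimal $(I_X, I_Y)$ would extend continuously across $\beta_c$ despite a jump in the underlying reduced root. This would contradict the local uniqueness and analyticity provided by Corollary~\ref{cor:RD-root-through-of-tangent-prob-at-reduced-IB-root-is-analytic} and Conjecture~\ref{conj:BA-IB-Jacob-in-decoder-coords-is-nonsingular-at-reduced-root} applied to the tangent RD problem: away from bifurcation the curve is analytic, and the discontinuous jump in the reduced root between $\bm{p}^+$ and $\bm{p}^-$ must therefore register as a kink (or worse) in the curve, incompatible with continuity of its tangent slope unless the two endpoints are separated. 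Combining this with the connected line-image above yields the claimed linear segment of slope $\nicefrac{1}{\beta_c}$.

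\textbf{Expected main obstacle.} The cleanest and most delicate point is the \emph{non-triviality} of the segment, i.e., that $\bm{p}^+$ and $\bm{p}^-$ do not collapse to the same $(I_X, I_Y)$. Everything else is an immediate application of convexity (Corollary~\ref{cor:convexity-of-IB-optimal-roots-at-beta}) and continuity of the IB functionals. The argument via the tangent RD problem sketched by the author avoids this issue by invoking Theorem~\ref{thm:convexity-of-achieving-distribution-from-RTRD} for the full achieving set at $\beta_c$ on a common reproduction alphabet $\{r^+_{\xhat}\} \cup \{r^-_{\xhat}\}$; in my approach I would instead lean on the analyticity arguments of Subsection~\ref{sub:IB-as-an-RD-problem-and-non-singularity-conj} to rule out a degenerate segment. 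Secondary care is also needed to ensure that the Lagrangian's infimum is indeed attained at $\bm{p}^\pm$, which requires lower semicontinuity of the Lagrangian and compactness of the relevant slice of the encoder space.
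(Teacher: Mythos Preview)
Your first two steps are essentially the paper's own argument: take one-sided limits $\bm{p}^{\pm}$, observe they are both optimal at $\beta_c$, and invoke convexity (the paper offers both Theorem~\ref{thm:convexity-of-achieving-distribution-from-RTRD} on the tangent RD problem with reproduction alphabet $\{r_{\xhat}^+\}\cup\{r_{\xhat}^-\}$ and Corollary~\ref{cor:convexity-of-IB-optimal-roots-at-beta} directly, just as you do). So on the main line you are aligned with the paper.

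Where you diverge is step~3, and here you are trying to prove more than the paper claims. Immediately after the theorem the paper states explicitly: ``A priori, the IB roots $(r_{\xhat}^+, p^+(\xhat))$ and $(r_{\xhat}^-, p^-(\xhat))$ \emph{may} achieve the same point in the information plane, in which case the linear curve segment is of length zero. However, we are unaware of such examples.'' In other words, the theorem as stated is meant to allow a degenerate (single-point) segment; non-triviality is left open. So your ``expected main obstacle'' is not actually part of the theorem you are asked to prove.

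Separately, your proposed argument for non-triviality has a gap. Continuity of $(I_X, I_Y)$ across $\beta_c$ does not by itself contradict Corollary~\ref{cor:RD-root-through-of-tangent-prob-at-reduced-IB-root-is-analytic} or Conjecture~\ref{conj:BA-IB-Jacob-in-decoder-coords-is-nonsingular-at-reduced-root}: those results concern uniqueness and analyticity of the \emph{root} in a neighbourhood, not of the curve's image, and nothing prevents two distinct reduced roots from mapping to the same information-plane point at an isolated $\beta_c$ while the curve remains $C^1$ there. You would also be leaning on a conjecture to close the argument. If you wish to pursue non-triviality, that would be a genuine strengthening of the paper's result, but it requires a different idea than the analyticity appeal you sketch.
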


Unless the decoder sets $\{r^+_{\xhat}\}_{\xhat}$ and $\{r^-_{\xhat}\}_{\xhat}$ are identical, then this is a support-switching bifurcation, as in Figure \ref{fig:support-switching-bif-in-IB}; cf., \cite[Section 6.5]{agmon2022RTRD}.
A priori, the IB roots $\big(r^+_{\xhat}, p^+(\xhat)\big)$ and $\big(r^-_{\xhat}, p^-(\xhat)\big)$ \textit{may} achieve the same point in the information plane, in which case the linear curve segment is of length zero. However, we are unaware of such examples.
Yet, even if such bifurcations exist, they would be detectable by the Jacobian of BA-IB (when represented on enough clusters), subject to Conjecture \ref{conj:BA-IB-Jacob-in-decoder-coords-is-nonsingular-at-reduced-root}.

\newpage
\section{First-order root-tracking for the Information Bottleneck}
\label{sec:IBRT1-algo}

Gathering the results of Sections \ref{sec:coords-exchange-for-the-IB} through \ref{sec:IB-bifurcations}, we can now not only follow the evolution of an IB root along the first-order equation \eqref{eq:IB-beta-ODE-in-decoder-coords}, but can also identify and handle IB bifurcations. 
This is summarized by our First-order Root-Tracking Algorithm \ref{algo:IBRT1} for the IB (IBRT1) in Section \ref{sub:IBRT1-algo-spec}, with some numerical results in Section \ref{sub:IBRT1-numerical-results}. 
Section \ref{sub:IBRT1-discussion} discusses the basic properties of IBRT1, and mainly the surprising quality of approximations of the IB curve \eqref{eq:IB-curve-def} that it produces, as seen in Figure \ref{fig:IBRT1-IB-curve-for-several-densitires}. 
We focus on continuous bifurcations (Section \ref{sub:continuous-IB-bifs}), as in our experience, these are far more frequent than discontinuous ones and are straightforward to handle. cf., Section \ref{sub:IBRT1-discussion}. 

\medskip 
\subsection{The IBRT1 Algorithm \ref{algo:IBRT1}}
\label{sub:IBRT1-algo-spec}

To assist the reader, we first present a simplified version in Algorithm \ref{algo:sIBRT1}, with edge-cases handled at Algorithm \ref{algo:handle-ODE-singularity} --- clarifications follow. 
These two combined form our IBRT1 Algorithm \ref{algo:IBRT1}, specified below. 

\medskip 
\begin{algorithm}
	\caption{Simplified First-order Root-Tracking for the IB}
	\begin{algorithmic}[1]
		\Function{sIBRT1}{$p_{Y|X} \; p_X, \beta_0, p_{\beta_0}(\xhat|\x); \Delta\beta, \delta_1, \delta_2$}
		\Input
		\Statex An IB problem definition $p_{Y|X} \; p_X$ with $\forall x \; p_X(x) > 0$. 
		\Statex A reduced IB-optimal root $p_{\beta_0}(\xhat|\x)$ at $\beta_0$. A step size $\Delta \beta < 0$.
		\Statex Cluster-mass threshold $\delta_1$ and cluster-merging threshold $\delta_2$, with $0 < \delta_i < 1$.
		\Output{Approximations $\tilde{\bm{p}}_{\beta_n}$ of the optimal IB roots $\bm{p}_{\beta_n}$ at $\beta_n := \beta_0 + n \Delta\beta$.}
		\State Initialize $\beta \gets \beta_0$ and $results \gets \{\}$.
		\State Initialize $\tilde{\bm{p}} := \big(\tilde{p}(\xhat|\x), \tilde{p}(\x|\xhat), \tilde{p}(\y|\xhat), \tilde{p}(\xhat)\big)$ from $p_{\beta_0}(\xhat|\x)$, via Equations \algref{algo:BA-IB}{eq:IB-BA-cluster_marginal}-\algref{algo:BA-IB}{eq:IB-BA-decoder-eq}.		\label{algo:sIBRT1:dists-init}
		\While{$\beta > |\Delta \beta|$ and $\left| \supp \tilde{p}(\xhat) \right| > 1$}			\label{algo:sIBRT1:stopping-cond}
			\Comment{See main text on stopping condition.}
			\State Append $\tilde{\bm{p}}$ to $results$.
			\State $\bm{v} := \Big( \tfrac{d \log \tilde{p}(\y|\xhat) }{d\beta}, \tfrac{d \log \tilde{p}(\xhat) }{d\beta} \Big) \gets $ solve the IB ODE \eqref{eq:IB-beta-ODE-in-decoder-coords} at $\tilde{\bm{p}}$.		\label{algo:sIBRT1:solve-ODE}
			\State $\tilde{p}(\y|\xhat) \gets \tilde{p}(\y|\xhat) \exp\left(\Delta\beta \cdot \tfrac{d \log \tilde{p}(\y|\xhat) }{d\beta} \right)$			\label{algo:sIBRT1:exponentiating-linear-approx1}
			\State $\tilde{p}(\xhat) \gets \tilde{p}(\xhat) \exp\left(\Delta\beta \cdot \tfrac{d \log \tilde{p}(\xhat) }{d\beta} \right)$					\label{algo:sIBRT1:exponentiating-linear-approx2}
			\Comment{Exponentiate the linear approximations \eqref{eq:first-order-approx}.}
			\State $\big(\tilde{p}(\y|\xhat), \tilde{p}(\xhat)\big) \gets \text{normalize } \big(\tilde{p}(\y|\xhat), \tilde{p}(\xhat)\big)$		\label{algo:sIBRT1:normalize-after-linear-approx}
			\State $old\_dim \gets \dim \tilde{p}(\xhat)$
			\State $\big(\tilde{p}(\y|\xhat), \tilde{p}(\xhat)\big) \gets \Call{Reduce root}{\tilde{p}(\y|\xhat), \tilde{p}(\xhat); \delta_1, \delta_2}$.		\label{algo:sIBRT1:reducing-root}
			\Comment{Root reduction Algorithm \ref{algo:root-reduction}.}
			\If{$old\_dim \neq \dim \tilde{p}(\xhat)$}
			\Comment{Root was reduced due to bifurcation.}
			\State $\tilde{p}(\xhat|\x) \gets$ the encoder defined by $\big(\tilde{p}(\y|\xhat), \tilde{p}(\xhat)\big)$, via Equations \algref{algo:BA-IB}{eq:IB-BA-partition-func}-\algref{algo:BA-IB}{eq:IB-BA-new-direct-enc}.	\label{algo:sIBRT1:enc-defined-by-reduced-root}
			\State $\tilde{\bm{p}} \gets \Call{BA-IB}{\tilde{p}(\xhat|\x); p_{Y|X} \; p_X, \beta + \Delta \beta}$. \label{algo:sIBRT1:BA-IB-regain-accuracy-after-reduction}
			\begin{flushright}
				\Comment{Ensure accuracy of the reduced root, using BA-IB Algorithm \ref{algo:BA-IB} till convergence.}
			\end{flushright}
			\EndIf
			\State $\beta \gets \beta + \Delta \beta$.
			\State $\tilde{\bm{p}} \gets BA_\beta(\tilde{p}(\y|\xhat), \tilde{p}(\xhat))$		\label{algo:sIBRT1:BA-step-to-enforce-Markovity}
			\Comment{A single BA-IB iteration in decoder coordinates.}
		\EndWhile
		\State Append $\tilde{\bm{p}}$ to $results$.
		\State \Return $results$.
		\EndFunction
	\end{algorithmic}
	\label{algo:sIBRT1}
\end{algorithm}

We now elaborate on the main steps of the Simplified First-order Root-Tracking for the IB (Algorithm \ref{algo:sIBRT1}), which follows Root-Tracking for RD, Algorithm 3 in \citep{agmon2022RTRD}. 
Its purpose is to follow the path of a given IB root $p_{\beta_0}(\xhat|\x)$ in a finite IB problem. 
The initial condition $p_{\beta_0}(\xhat|\x)$ is required to be reduced and IB-optimal. 
Its optimality is needed below to ensure that the path traced by the algorithm is indeed optimal. 
The step-size $\Delta \beta$ is negative, for reasons explained in Section \ref{sub:discontinuous-IB-bifs} (Proposition \ref{prop:bif-is-detectable-only-on-enough-clusters} ff.). 
The cluster-mass and cluster-merging thresholds are as in the root-reduction Algorithm \ref{algo:root-reduction} (Section \ref{sub:continuous-IB-bifs}). 

Denote $\tilde{\bm{p}}$ (line \ref{algo:sIBRT1:dists-init} of Algorithm \ref{algo:sIBRT1}) for the distributions generated from an encoder (cf., Equation \eqref{eq:coordinate-sets-parameterizing-an-IB-root} in Section \ref{sec:coords-exchange-for-the-IB}). 
Algorithm \ref{algo:sIBRT1} iterates over grid points $\tilde{\bm{p}}$, with each \textbf{while} iteration generating the reduced form of the next grid point, as follows. 
On line \ref{algo:sIBRT1:solve-ODE}, evaluate the IB ODE \eqref{eq:IB-beta-ODE-in-decoder-coords} at the current root $\tilde{\bm{p}}$, solving the linear equations numerically. 
By Conjecture \ref{conj:BA-IB-Jacob-in-decoder-coords-is-nonsingular-at-reduced-root} (Section \ref{sub:IB-as-an-RD-problem-and-non-singularity-conj}), the IB ODE has a unique numerical solution $\bm{v}$ if $\tilde{\bm{p}}$ is a reduced root and not a bifurcation. 
Lines \ref{algo:sIBRT1:exponentiating-linear-approx1} and \ref{algo:sIBRT1:exponentiating-linear-approx2} approximate the root at the next grid point at $\beta + \Delta \beta$, by exponentiating Euler-method's step \eqref{eq:first-order-approx} (Section \ref{sec:euler-method}). 
Normalization is enforced on line \ref{algo:sIBRT1:normalize-after-linear-approx}, since it is assumed throughout. 
Off-grid points can be generated by repeating lines \ref{algo:sIBRT1:exponentiating-linear-approx1} through \ref{algo:sIBRT1:normalize-after-linear-approx} for intermediate $\Delta \beta$ values if desired. 
The approximate root at $\beta + \Delta \beta$ is reduced on line \ref{algo:sIBRT1:reducing-root}, by invoking the root-reduction Algorithm \ref{algo:root-reduction} (Section \ref{sub:continuous-IB-bifs}). 
Note that Algorithm \ref{algo:root-reduction} returns its input root unmodified unless reducing it numerically. 
If reduced, then the root is a vector of a lower dimension --- either a cluster mass $p(\xhat)$ has nearly vanished or distinct clusters have nearly merged. 
To re-gain accuracy, we invoke (on line \ref{algo:sIBRT1:BA-IB-regain-accuracy-after-reduction}) the Blahut-Arimoto Algorithm \ref{algo:BA-IB} for the IB till convergence, on the encoder defined at line \ref{algo:sIBRT1:enc-defined-by-reduced-root} by the reduced root. 
Although BA-IB is invoked near a bifurcation, this does \textit{not} incur a hefty computational cost due to its critical slowing-down, \citep{agmon2021critical}  --- see comments at the bottom of Section \ref{sub:continuous-IB-bifs}. 
Invoking BA (on line \ref{algo:sIBRT1:BA-IB-regain-accuracy-after-reduction}) \textit{before} reducing (on line \ref{algo:sIBRT1:reducing-root}) would have inflicted a hefty computational cost to BA-IB due to the nearby bifurcation. 
Finally, a single BA-IB iteration in decoder coordinates is invoked on the approximate root (line \ref{algo:sIBRT1:BA-step-to-enforce-Markovity}), whether reduced earlier or not. 
This enforces Markovity while improving the algorithm's order (see Section \ref{sec:euler-method}, and Figure \ref{fig:Euler-method-for-IB-dec} in particular). 
Algorithm \ref{algo:sIBRT1} continues this way (line \ref{algo:sIBRT1:stopping-cond}) until the approximate solution is trivial (single-clustered), or $\beta$ is non-positive. 
In the IB, the trivial solution is always optimal for tradeoff values $\beta < 1$. However, here $\beta$ plays the role of the ODE's independent variable instead. 
Thus, we allow Algorithm \ref{algo:sIBRT1} to continue beyond $\beta = 1$, so long that\footnote{ The condition $\beta > |\Delta \beta|$ is required on line \ref{algo:sIBRT1:stopping-cond}, to ensure that the target $\beta$ value of the \textit{next} grid point is non-negative. } $\beta > 0$ (which we assume throughout). 
This shall be useful for overshooting --- see below. 

With that, there are caveats in Algorithm \ref{algo:sIBRT1}, which stem from passing too far or close to a bifurcation. 
For one, suppose that the error accumulated from the true solution is too large for a bifurcation to be detected. 
The approximations generated by the algorithm will then overshoot the bifurcation. Namely, proceeding with more clusters than needed until the conditions for reduction are met later on (see Section \ref{sub:IBRT1-discussion} below), as demonstrated by the two sparse grids in Figure \ref{fig:IBRT1-decoders-for-several-densities} (Section \ref{sub:IBRT1-numerical-results}). 
For another, suppose that the current grid point $\tilde{\bm{p}}$ is too close to a bifurcation. 
This might happen due to a variety of numerical reasons --- e.g., thresholds $\delta_1, \delta_2$ too small, or due to the particular grid layout. 
The coefficients matrix\footnote{ That is, the Jacobian $D \left( Id - BA_\beta \right)$ of the IB operator \eqref{eq:IB-operator-def} in log-decoder coordinates.} $I - D_{\log p(\y|\xhat), \log p(\xhat)} BA_\beta$ of the IB ODE \eqref{eq:IB-beta-ODE-in-decoder-coords} would then be ill-conditioned (cf., Conjecture \ref{conj:BA-IB-Jacob-in-decoder-coords-is-nonsingular-at-reduced-root} ff. in Section \ref{sub:IB-as-an-RD-problem-and-non-singularity-conj}), typically resulting in very large implicit numerical derivatives $\bm{v}$ (on line \ref{algo:sIBRT1:solve-ODE}). 
Any inaccuracy\footnote{ e.g., due to the accumulated approximation error or due to the error caused by computing implicit derivatives in the vicinity of a bifurcation (see Figure \ref{fig:err-in-numerical-derivs-for-BSC-and-BAs-accuracy} top, in Section \ref{sec:IB-ODE}).} in  $\bm{v}$ might then send the next grid point astray, derailing the algorithm from there on. 
Indeed, the derivatives $\tfrac{d\bm{x}}{d\beta} = - (D_{\bm{x}} F)^{-1} D_\beta F$ defined by\footnote{ Note that $D_{\bm{x}} F$ here is always non-singular outside bifurcations, due to Conjecture \ref{conj:BA-IB-Jacob-in-decoder-coords-is-nonsingular-at-reduced-root} and the use of reduced coordinates. } the implicit ODE \eqref{eq:implicit-beta-ODE} are in general unbounded near a bifurcation of $F$. 
This can be seen in Figure \ref{fig:norm-of-analytical-derivs-for-BSC} (Section \ref{sec:coords-exchange-for-the-IB}) for example, where the derivatives ``explode'' at the bifurcation's vicinity. 
See also \cite[Section 7.2]{agmon2022RTRD} on the computational difficulty incurred by a bifurcation. 
While overshooting a bifurcation is not a significant concern for our purposes (see Section \ref{sub:IBRT1-discussion}), passing too close to one is. 
The latter is important, especially when the step size $|\Delta \beta|$ is small. 
While decreasing $|\Delta \beta|$ generally improves the error of Euler's method, it also makes it easier for the approximations to come close to a bifurcation, thus potentially worsening the approximation dramatically if it derails. 
This motivates one to consider how singularities of the IB ODE \eqref{eq:IB-beta-ODE-in-decoder-coords} should be handled. 

\begin{algorithm}
	\caption{A heuristic for handling singularities of the IB ODE \eqref{eq:IB-beta-ODE-in-decoder-coords}}
	\begin{algorithmic}[1]
		\Function{Handle singularity}{$p_{Y|X} \; p_X, \big( \tilde{p}(\y|\xhat), \tilde{p}(\xhat) \big), \bm{v}, \beta$}
		\Input
		\Statex An IB problem definition $p_{Y|X} \; p_X$, with $\forall x \; p_X(x) > 0$.
		\Statex An approximate root $\big( \tilde{p}(\y|\xhat), \tilde{p}(\xhat) \big)$ of the given problem, near a singularity of the IB ODE \eqref{eq:IB-beta-ODE-in-decoder-coords}. 
		\Statex Approximate numerical derivatives $\bm{v} := \Big( \tfrac{d \log \tilde{p}(\y|\xhat) }{d\beta}, \tfrac{d \log \tilde{p}(\xhat) }{d\beta} \Big)$ at the given root.
		\Statex The $\beta > 0$ value of the next (output) grid point. 
		\Output{An approximate IB root $\tilde{\bm{p}}$ at $\beta$ on one fewer cluster.}
		\State $\xhatp, \xhatpp \gets$ the two indices $\xhat$ of largest $\left\| \tfrac{d \log \tilde{p}(\y|\xhat) }{d\beta} \right\|_\infty$ value (norm of $\y$-indexed vectors).		\label{algo:handle-ODE-singularity:identification}
		\State $\tilde{p}(\y|\xhatp) \gets \tfrac{1}{2} \cdot \big( \tilde{p}(\y|\xhatp) + \tilde{p}(\y|\xhatpp) \big)$
		\Comment{Replace fastest-moving clusters by their mean.}		\label{algo:handle-ODE-singularity:merge-two-clusters}
		\State Erase $\xhatpp$ from the decoder $\tilde{p}(\y|\xhat)$.
		\State $\tilde{p}(\xhatp) \gets \tilde{p}(\xhatp) + \tilde{p}(\xhatpp)$
		\State Erase $\xhatpp$ from the marginal $\tilde{p}(\xhat)$.		\label{algo:handle-ODE-singularity:erase-leftover-coords}
		\State $\tilde{p}(\xhat|\x) \gets$ the encoder generated from $(\tilde{p}(\y|\xhat), \tilde{p}(\xhat))$, via Equations \algref{algo:BA-IB}{eq:IB-BA-partition-func}-\algref{algo:BA-IB}{eq:IB-BA-new-direct-enc}.		\label{algo:handle-ODE-singularity:new-enc-after-merging}
		\begin{flushright}
			\Comment{A new encoder on one cluster \textit{less} than the input.}
		\end{flushright}
		\State $\tilde{\bm{p}} \gets \Call{BA-IB}{\tilde{p}(\xhat|\x); p_{Y|X} \; p_X, \beta}$.			\label{algo:handle-ODE-singularity:BA-IB-after-merging}
		\Comment{Re-gain accuracy, by the BA-IB Algorithm \ref{algo:BA-IB}.}
		\State \Return $\tilde{\bm{p}}$
		\EndFunction
	\end{algorithmic}
	\label{algo:handle-ODE-singularity}
\end{algorithm}

Next, we elaborate on our heuristic for handling singularities of the IB ODE \eqref{eq:IB-beta-ODE-in-decoder-coords}, brought as Algorithm \ref{algo:handle-ODE-singularity}. 
The inputs of this heuristic are defined as in Algorithm \ref{algo:sIBRT1}. 
It starts with the assumption that the coefficients matrix $I - D_{\log p(\y|\xhat), \log p(\xhat)} BA_\beta$ of the IB ODE \eqref{eq:IB-beta-ODE-in-decoder-coords} is nearly-singular at the current grid point $\tilde{\bm{p}}$ due to\footnote{ While a priori the Jacobian $D_{\log p(\y|\xhat), \log p(\xhat)} (Id - BA_\beta)$ may be singular also due to other reasons, by Conjecture \ref{conj:BA-IB-Jacob-in-decoder-coords-is-nonsingular-at-reduced-root} it is non-singular at the approximations generated so far since they are assumed to be in their reduced form. cf., Section \ref{sub:IB-as-an-RD-problem-and-non-singularity-conj}. } a nearby bifurcation. 
As a result, the implicit derivatives $\bm{v}$ at $\tilde{\bm{p}}$ are not to be used directly to extrapolate the next grid point, as explained above. 
Instead, we use them to identify the two\footnote{ While this can be refined to handle more than two fast-moving clusters at once, that is not expected to be necessary for typical bifurcations. } fastest moving clusters, on line \ref{algo:handle-ODE-singularity:identification} of Algorithm \ref{algo:handle-ODE-singularity}. 
These are replaced by a single cluster (lines \ref{algo:handle-ODE-singularity:merge-two-clusters} through \ref{algo:handle-ODE-singularity:erase-leftover-coords}), resulting in an approximate root on one fewer cluster. 
To re-gain accuracy, the BA-IB Algorithm \ref{algo:BA-IB} is then invoked (at line \ref{algo:handle-ODE-singularity:BA-IB-after-merging}) on the encoder generated (at line \ref{algo:handle-ODE-singularity:new-enc-after-merging}) from the latter root, thereby generating the next grid point. 
If the fast-moving clusters have merged (in the true solution) by the following grid point, then the output of Algorithm \ref{algo:handle-ODE-singularity} will be an IB-optimal root if its input grid point is so. 
Namely, the branch followed by the algorithm remains an optimal one. 
Otherwise, if these clusters merge shortly after the next grid point, then Algorithm \ref{algo:handle-ODE-singularity} yields a sub-optimal branch. 
However, optimality is re-gained shortly afterward since the sub-optimal branch collides and merges with the optimal one in continuous IB bifurcations (Section \ref{sub:discontinuous-IB-bifs}). 
Figure \ref{fig:IBRT1-decoders-for-several-densities} below demonstrates Algorithm \ref{algo:handle-ODE-singularity}. 
cf., the similar heuristic \cite[Section 3.2]{agmon2022RTRD} in root-tracking for RD, which may also lose optimality near a bifurcation and re-gain it shortly after. 

The heuristic Algorithm \ref{algo:handle-ODE-singularity} is motivated by cluster-merging bifurcations. 
In these, the implicit derivatives are very large only\footnote{ Note that cluster masses barely change in the vicinity of a cluster-merging, till the point of bifurcation itself. } at the coordinates $\tfrac{d \log {p}(\y|\xhat) }{d\beta}$ of the points colliding in $\Delta[\mathcal{Y}]$. 
While intended for cluster-merging bifurcations, this heuristic works nicely in practice also for cluster-vanishing ones. 
To see why, note that one can always add a cluster of zero-mass to an IB root without affecting the root's essential properties, regardless of its coordinates in $\Delta[\mathcal{Y}]$; cf., Section \ref{sub:IB-as-an-RD-problem-and-non-singularity-conj} on reduction in the IB. 
Therefore, a numerical algorithm may, in principle, do anything with the coordinates $p(\y|\xhat) \in \Delta[\mathcal{Y}]$ of a nearly-vanished cluster $\xhat$, $p(\xhat) \simeq 0$, without affecting the approximation's quality too much. 
Thus, for numerical purposes, one may treat a cluster-vanishing bifurcation as a cluster-merging one. 
Conversely, in a cluster-merging bifurcation, a numerical algorithm may, in principle, zero the mass of one cluster while adding it to the remaining cluster. Again, without affecting the approximation's quality too much. 
To conclude, for numerical purposes, cluster-vanishing is very similar to cluster-merging. 
A variety of treatments between these extremities may be possible by a numerical algorithm. 
Empirically, we have observed that our ODE-based algorithm treats both as cluster-merging bifurcations. 
To our understanding, this is because our algorithm operates in decoder coordinates, unlike the BA-IB Algorithm \ref{algo:BA-IB}, for example, which operates in encoder coordinates. 

Finally, we combine the simplified root-tracking Algorithm \ref{algo:sIBRT1} with the heuristic Algorithm \ref{algo:handle-ODE-singularity} for handling singularities, yielding our IBRT1 Algorithm \ref{algo:IBRT1}. 
It follows the lines of simplified Algorithm \ref{algo:sIBRT1}, except that after solving for the implicit derivatives on line \ref{algo:IBRT1:solve-ODE}, we test the IB ODE \eqref{eq:IB-beta-ODE-in-decoder-coords} for singularity. 
To that end, we propose to use the matrix $S$ \eqref{eq:smaller-matrix-for-ker-of-IB-operator-in-dec-coords-main-text} (from Lemma \ref{lem:smaller-matrix-for-ker-of-J-in-decoder-coords} in Section \ref{sec:IB-ODE}), since its order $T\cdot |\mathcal{Y}|$ is smaller than the order $T\cdot \left( |\mathcal{Y}| + 1 \right)$ of the ODE's coefficients matrix. 
This might make it computationally cheaper to test for singularity (on lines \ref{algo:IBRT1:S-matrix-to-compute-kernel} and \ref{algo:IBRT1:singularity-test}). 
Our heuristic Algorithm \ref{algo:handle-ODE-singularity} is invoked (on line \ref{algo:IBRT1:handle-sigularity}) if the ODE \eqref{eq:IB-beta-ODE-in-decoder-coords} is found to be nearly-singular, otherwise proceeding as in Algorithm \ref{algo:sIBRT1}.

\begin{algorithm}
	\caption{First-order Root-Tracking for the IB (IBRT1)}
	\begin{algorithmic}[1]
		\Function{IBRT1}{$p_{Y|X} \; p_X, \beta_0, p_{\beta_0}(\xhat|\x); \Delta\beta, \delta_1, \delta_2, \delta_3$}
		\Input
			\Statex An IB problem definition $p_{Y|X} \; p_X$ with $\forall x \; p_X(x) > 0$. 
			\Statex A reduced IB-optimal root $p_{\beta_0}(\xhat|\x)$ at $\beta_0$. A step size $\Delta \beta < 0$.
			\Statex Thresholds $0 < \delta_1, \delta_2 < 1$ for the root-reduction Algorithm \ref{algo:root-reduction} (cluster mass and merging).
			\Statex A threshold $0 < \delta_3 < 1$ for eigenvalues' singularity.
		\Output{Approximations $\tilde{\bm{p}}_{\beta_n}$ of the optimal IB roots $\bm{p}_{\beta_n}$ at $\beta_n := \beta_0 + n \Delta\beta$.}
		\State Initialize $\beta \gets \beta_0$ and $results \gets \{\}$.
		\State Initialize $\tilde{\bm{p}} := \big(\tilde{p}(\xhat|\x), \tilde{p}(\x|\xhat), \tilde{p}(\y|\xhat), \tilde{p}(\xhat)\big)$ from $p_{\beta_0}(\xhat|\x)$, via Equations \algref{algo:BA-IB}{eq:IB-BA-cluster_marginal}-\algref{algo:BA-IB}{eq:IB-BA-decoder-eq}.
		\While{$\beta > |\Delta \beta|$ and $\left| \supp \tilde{p}(\xhat) \right| > 1$}
			\State Append $\tilde{\bm{p}}$ to $results$.
			\State $\bm{v} := \Big( \tfrac{d \log \tilde{p}(\y|\xhat) }{d\beta}, \tfrac{d \log \tilde{p}(\xhat) }{d\beta} \Big) \gets $ solve the IB ODE \eqref{eq:IB-beta-ODE-in-decoder-coords} at $\tilde{\bm{p}}$.		\label{algo:IBRT1:solve-ODE}
			\State $eigs \gets \eig \left( I - S \right)\big\rvert_{\tilde{\bm{p}}}$		\label{algo:IBRT1:S-matrix-to-compute-kernel}
			\Comment{Test ODE for singularity, using $S$ \eqref{eq:smaller-matrix-for-ker-of-IB-operator-in-dec-coords-main-text} from Lemma \ref{lem:smaller-matrix-for-ker-of-J-in-decoder-coords}.}
			\If{$\left(\min_{v\in eigs} |v|\right) < \delta_3$}			\label{algo:IBRT1:singularity-test}
				\Comment{ODE is nearly-singular.}
				\State $\tilde{\bm{p}} \gets \Call{Handle singularity}{p_{Y|X} \; p_X, \big( \tilde{p}(\y|\xhat), \tilde{p}(\xhat) \big), \bm{v}, \beta + \Delta \beta}$ 		\label{algo:IBRT1:handle-sigularity}
				\begin{flushright}
					\Comment{Handle otherwise undetected singularity using Algorithm \ref{algo:handle-ODE-singularity}.}
				\end{flushright}
			\Else
				\State $\tilde{p}(\y|\xhat) \gets \tilde{p}(\y|\xhat) \exp\left(\Delta\beta \cdot \tfrac{d \log \tilde{p}(\y|\xhat) }{d\beta} \right)$
				\State $\tilde{p}(\xhat) \gets \tilde{p}(\xhat) \exp\left(\Delta\beta \cdot \tfrac{d \log \tilde{p}(\xhat) }{d\beta} \right)$
				\State $\big(\tilde{p}(\y|\xhat), \tilde{p}(\xhat)\big) \gets \text{normalize } \big(\tilde{p}(\y|\xhat), \tilde{p}(\xhat)\big)$
				\State $old\_dim \gets \dim \tilde{p}(\xhat)$
				\State $\big(\tilde{p}(\y|\xhat), \tilde{p}(\xhat)\big) \gets \Call{Reduce root}{\tilde{p}(\y|\xhat), \tilde{p}(\xhat); \delta_1, \delta_2}$.
				\If{$old\_dim \neq \dim \tilde{p}(\xhat)$}
					\State $\tilde{p}(\xhat|\x) \gets$ encoder defined from $\big(\tilde{p}(\y|\xhat), \tilde{p}(\xhat)\big)$, via Equation \algref{algo:BA-IB}{eq:IB-BA-partition-func}-\algref{algo:BA-IB}{eq:IB-BA-new-direct-enc}.
					\State $\tilde{\bm{p}} \gets \Call{BA-IB}{\tilde{p}(\xhat|\x); p_{Y|X} \; p_X, \beta + \Delta \beta}$.
				\EndIf
			\EndIf
			\State $\beta \gets \beta + \Delta \beta$.
			\State $\tilde{\bm{p}} \gets BA_\beta(\tilde{p}(\y|\xhat), \tilde{p}(\xhat))$
		\EndWhile
		\State Append $\tilde{\bm{p}}$ to $results$.
		\State \Return $results$.
		\EndFunction
	\end{algorithmic}
	\label{algo:IBRT1}
\end{algorithm}

\flushbottom 
\newpage 

\medskip 
\subsection{Numerical results for the IBRT1 Algorithm \ref{algo:IBRT1}}
\label{sub:IBRT1-numerical-results}

To demonstrate the IBRT1 Algorithm \ref{algo:IBRT1}, we present the numerical results used to approximate the IB curve in Figure \ref{fig:IBRT1-IB-curve-for-several-densitires} (Section \ref{sec:introduction}) --- see Section \ref{sub:IBRT1-discussion} below on the approximation quality and the algorithm's basic properties. 
This example was chosen both because it has an analytical solution (Appendix \ref{sec:analytical-IB-sol-for-BSC-appendix}) and because it allows one to get a good idea of the bifurcation handling added (in Section \ref{sub:IBRT1-algo-spec}) on top of the modified Euler method (from Section \ref{sec:euler-method}). 
The source code used to generate these results is provided for readers who wish to examine the details (bottom of Section \ref{sec:introduction}). 

\begin{figure}[h!]
	\centering
	\vspace*{10pt}
	\ifdefined\compilefigs
	\includegraphics[width=1\textwidth]{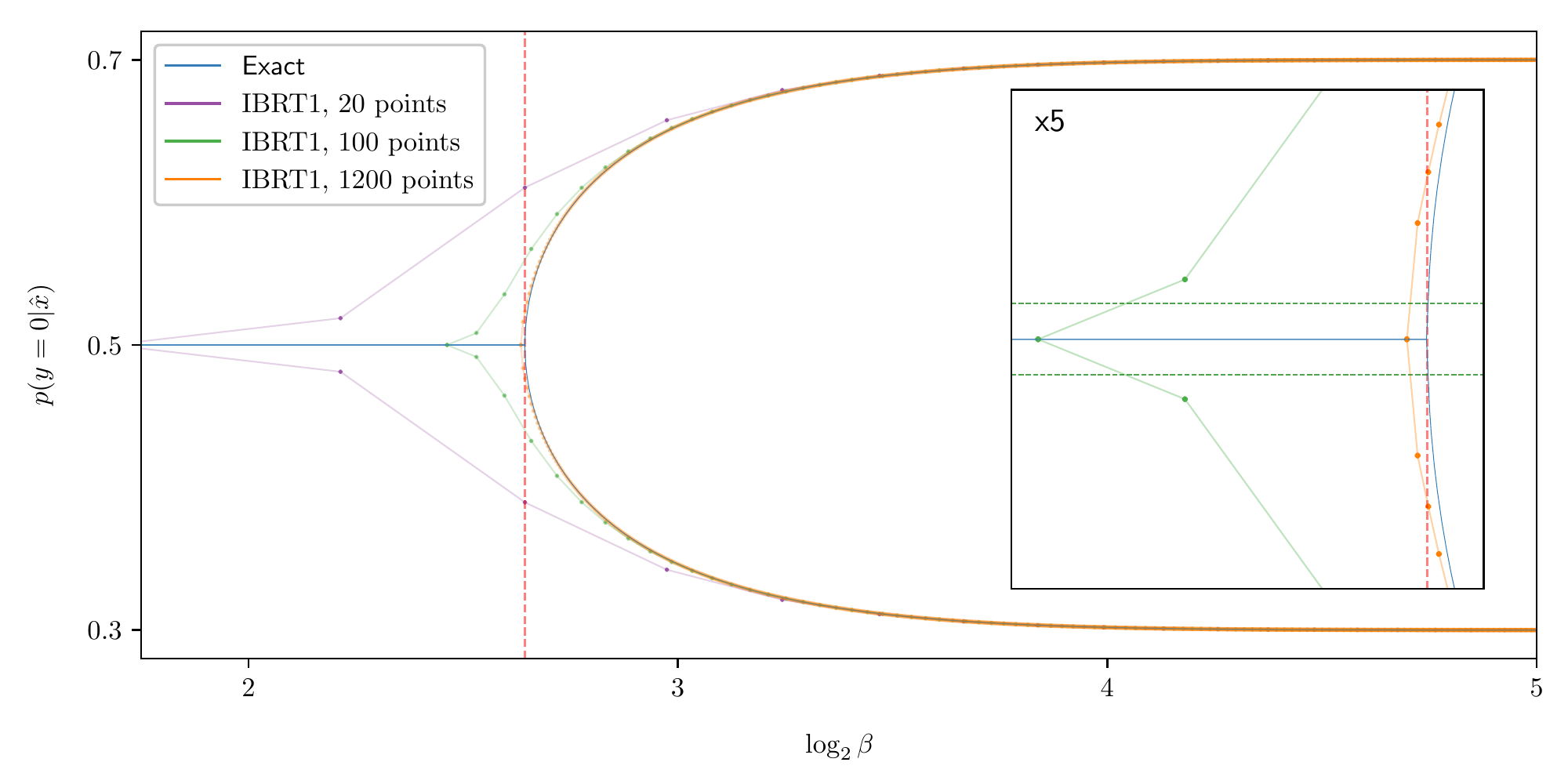}
	\else
	\includegraphics[width=1\textwidth]{figs/empty_figure}
	\fi
	\caption{
		\textbf{Clusters of the approximate IB roots generated by the IBRT1 Algorithm \ref{algo:IBRT1} for several step-sizes}, on top of the exact solutions of BSC$(0.3)$ with a uniform source (Appendix \ref{sec:analytical-IB-sol-for-BSC-appendix}). 
		Carefully note that only a \textit{single} IB root is plotted here; its two clusters merge at $\beta_c$, as seen in Figure \ref{fig:exact-sol-of-BSC} (Section \ref{sub:continuous-IB-bifs}). 
		At 20 and 100 grid points, the approximations overshoot the bifurcation, terminating due to (approximate) cluster collision. 
		While on 1200 grid points, the approximations pass too close to the bifurcation, terminating due to the nearby singularity. 
		This can be seen in \textbf{the inset to the right}: 
		The leftmost green marker has passed the cluster-merging threshold (dashed-green lines), and so was numerically reduced to the trivial (single-clustered) solution by the root-reduction Algorithm \ref{algo:root-reduction}. 
		On the other hand, the orange markers to the right are still far from the cluster-merging threshold; the leftmost one was reduced by the singularity-handling heuristic Algorithm \ref{algo:handle-ODE-singularity} since the IB ODE \eqref{eq:IB-beta-ODE-in-decoder-coords} is nearly-singular there.
		Indeed, the numerical derivative is about five orders of magnitude larger there than at the algorithm's initial condition (see Figure \ref{fig:norm-of-analytical-derivs-for-BSC}) due to the bifurcation's proximity. 
		The leftmost green and orange markers were drawn \textit{after} the reductions took place. 
		See main text and Section \ref{sub:IBRT1-algo-spec} for details, Figure \ref{fig:IBRT1-err-from-exact-sol-for-several-densities} for errors, and Figure \ref{fig:IBRT1-IB-curve-for-several-densitires} (in Section \ref{sec:introduction}) for the approximate IB curves.
		\newline
		The marginals $p(\xhat)$ are not shown, as these barely deviate from their true value in this problem. 
		For each step-size $\Delta \beta$, the algorithm was initialized at the problem's exact solution at $\beta = 2^5$, with thresholds set to $\delta_i = 10^{-2}$, for $i = 1, 2, 3$. 
		The lines connecting consecutive markers are for visualization only. 
	}
	\label{fig:IBRT1-decoders-for-several-densities}
\end{figure}

We discuss the numerical examples of this Section in light of the explanations provided in the previous Section \ref{sub:IBRT1-algo-spec}. 
The error of the IBRT1 Algorithm \ref{algo:IBRT1} generally improves as the step-size $|\Delta \beta|$ becomes smaller, as expected. 
The single BA-IB iteration added to Euler's method (in Section \ref{sec:euler-method}) typically allows one to achieve the same error by using much fewer grid points, thus lowering computational costs. 
For example, the two denser grids in Figure \ref{fig:IBRT1-decoders-for-several-densities} require about an order of magnitude fewer points to achieve the same error compared to Euler's method for the IB; this can be seen from Figure \ref{fig:Euler-method-for-IB-dec} (Section \ref{sec:euler-method}). 

In sparse grids, the approximations often pass too far away from a bifurcation for the root-reduction Algorithm \ref{algo:root-reduction} to detect it.
When overshooting it, the conditions for numerical reduction are generally met later on, as discussed in Section \ref{sub:IBRT1-discussion} below. 
Decreasing $|\Delta \beta|$ further often leads the approximations too close to a bifurcation, as can be seen in the densest grid of Figure \ref{fig:IBRT1-decoders-for-several-densities}. 
The implicit derivatives are typically very large at the proximity of a bifurcation, while the least accurate there (see Section \ref{sub:IBRT1-algo-spec}). 
As these might send subsequent grid points off-track, the heuristic Algorithm \ref{algo:handle-ODE-singularity} is invoked to handle the nearby singularity (see inset of Figure \ref{fig:IBRT1-decoders-for-several-densities}). 
As noted earlier, the computational difficulty in tracking IB roots (or root-tracking in general) stems from the presence of a bifurcation, manifested here by large approximation errors in its vicinity. 
While the algorithm's error peaks at the bifurcation, it typically decreases afterward when overshooting, as seen in Figure \ref{fig:IBRT1-err-from-exact-sol-for-several-densities}. See Section \ref{sub:IBRT1-discussion} for details.

\begin{figure}[h!]
	\centering
	\vspace*{10pt}
	\ifdefined\compilefigs
	\includegraphics[width=1\textwidth]{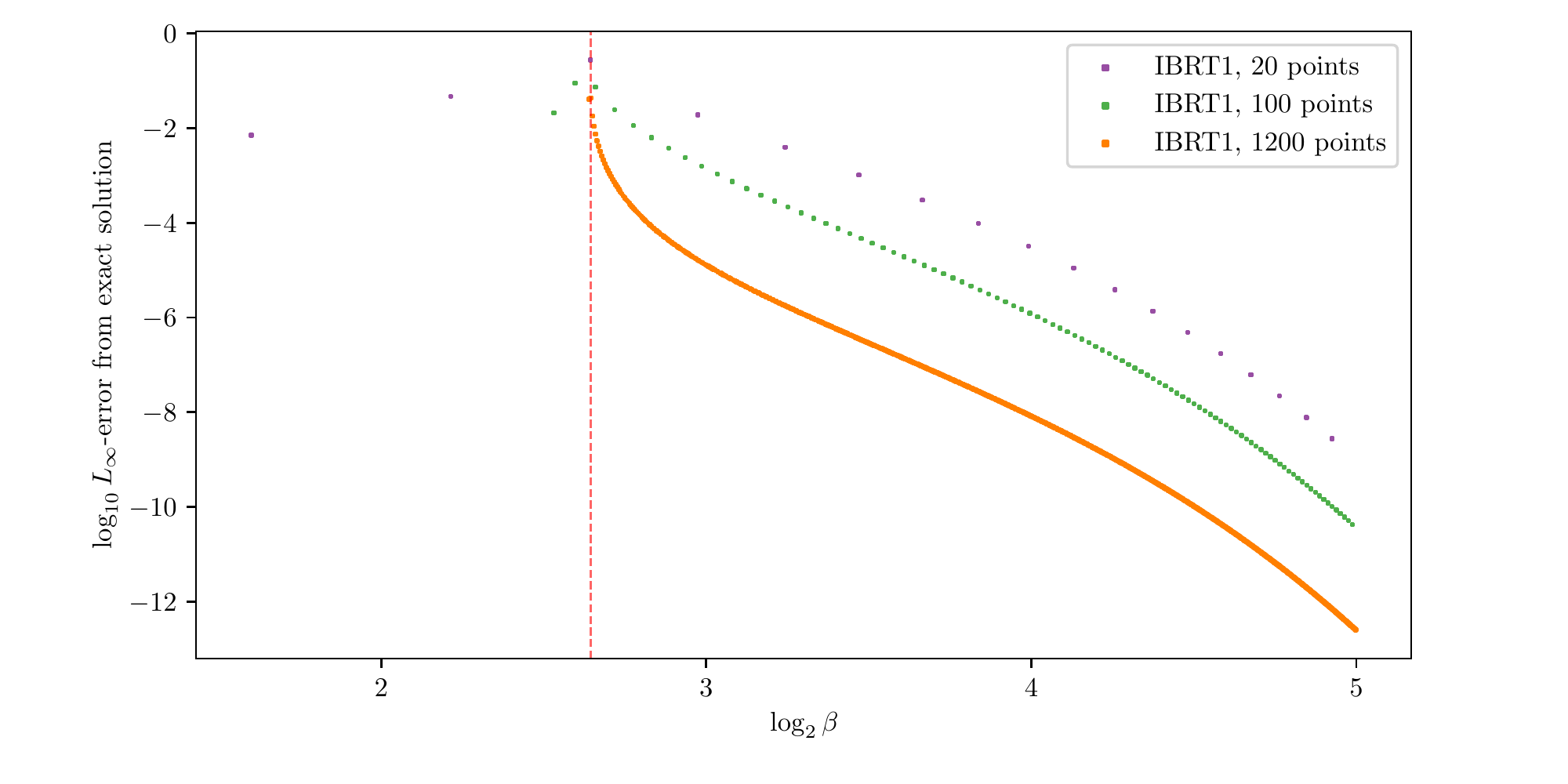}
	\else
	\includegraphics[width=1\textwidth]{figs/empty_figure}
	\fi
	\caption{
		\textbf{The error of the IBRT1 Algorithm \ref{algo:IBRT1} from the exact solution for several step-sizes}. 
		The figure shows the (log-) $L_\infty$-error of the numerical approximations in Figure \ref{fig:IBRT1-decoders-for-several-densities} from the exact solutions; the error is measured as in Figure \ref{fig:err-in-numerical-derivs-for-BSC-and-BAs-accuracy} (bottom). 
		Increasing the grid density decreases the error, as one might expect.
		While the error peaks at the bifurcation, it decreases afterward --- see main text and Section \ref{sub:IBRT1-discussion} below. 
		The rightmost marker for each grid density is missing since the initial error is zero. 
	}
	\label{fig:IBRT1-err-from-exact-sol-for-several-densities}
\end{figure}

\medskip 
\subsection{Basic properties of the IBRT1 Algorithm \ref{algo:IBRT1} and why does it work}
\label{sub:IBRT1-discussion}

Apart from presenting the basic properties of the IBRT1 Algorithm \ref{algo:IBRT1}, the primary purpose of this section is to understand why does it approximate the problem's true IB curve \eqref{eq:IB-curve-def} so well, despite its apparent errors in approximating the IB roots? 
While shown here only in Figures \ref{fig:IBRT1-IB-curve-for-several-densitires} and \ref{fig:IBRT1-decoders-for-several-densities} (Sections \ref{sub:IBRT1-numerical-results} and \ref{sec:introduction}), this behavior is consistent in the few numerical examples that we have tested. We offer an explanation why this may be true in general. 

\medskip 
To understand why the IBRT1 Algorithm \ref{algo:IBRT1} approximates the true IB curve \eqref{eq:IB-curve-def} so well, we first explain why overshooting is not a significant concern, as noted earlier in Section \ref{sub:IBRT1-algo-spec}. 
To that end, consider the implicit ODE \eqref{eq:implicit-beta-ODE}
\begin{equation*}
	\tfrac{d\bm{x}}{d\beta} = - (D_{\bm{x}} F)^{-1} D_\beta F \;,
\end{equation*}
from Section \ref{sec:introduction}. 
So long that $D_{\bm{x}} F$ and $D_\beta F$ at its right-hand side are well-defined, it defines a vector field on the entire \textit{phase space} of admissible $\bm{x}$ values, at least when $D_{\bm{x}} F$ is non-singular. 
That is, even for $\bm{x}$'s which are \textit{not} roots \eqref{eq:root-of-functional-eq-implicit} of $F$. 
Ignoring several technicalities, the IB ODE \eqref{eq:IB-beta-ODE-in-decoder-coords} therefore defines a vector field also \textit{outside} IB roots. 
Indeed, due to Conjecture \ref{conj:BA-IB-Jacob-in-decoder-coords-is-nonsingular-at-reduced-root}, the Jacobian of the IB operator $Id - BA_\beta$ \eqref{eq:IB-operator-def} is non-singular in the vicinity of a reduced root, \footnote{ By Equation \eqref{eq:BA-Jacob-wrt-decoder-coords-at-main-text} (in Section \ref{sec:IB-ODE}), $D_{\log p(\y|\xhat), \log p(\xhat)} BA_\beta$ is continuous in the distributions defining it, under mild assumptions. cf., Lemma \ref{lemma:sufficient-condition-for-IB-decoder-to-never-vanish} in Appendix \ref{sec:BA-IB-in-decoder-coords-appendix}. Thus, so are its eigenvalues. }. 
Now, suppose that $\bm{p}_\beta$ is an optimal IB root, and consider a point $\bm{p}' \neq \bm{p}_{\beta}$ in its vicinity. 
An argument based on a strong notion of Lyapunov stability (in Appendix \ref{sec:optimal-IB-root-is-Lyapunov-stable-in-decreasing-beta}) shows that $\bm{p}'$ flows along the IB's vector field towards $\bm{p}_{\beta}$ in regions that do not contain a bifurcation, though only if flowing in \textit{decreasing} $\beta$ as done by our IBRT Algorithm \ref{algo:IBRT1}. 
An approximation $\bm{p}'$ would then be ``pulled'' towards the true root. 
Stability at this direction of $\beta$ is very reasonable, considering that $\bm{p}_\beta$ follows a path of decreasingly informative representations as $\beta$ decreases. 
Indeed, all the paths to oblivion lead to one place --- the trivial solution, whose representation in reduced coordinates is unique. 
As a result, a numerical approximation $\bm{p}'$ would gradually settle in the vicinity of the true root $\bm{p}_\beta$ as seen in Figures \ref{fig:IBRT1-decoders-for-several-densities} and \ref{fig:IBRT1-err-from-exact-sol-for-several-densities}, so long that $\bm{p}_\beta$ does not change much and the step-size $|\Delta \beta|$ is small enough. 
While this explanation obviously breaks near a bifurcation, it does suggest that the approximation error should decrease when overshooting it (see Section \ref{sub:IBRT1-algo-spec}), once the true reduced root has settled down. 
In a sense, overshooting is similar to being in the right place but at the wrong time. 

The above suggests that the IBRT1 Algorithm \ref{algo:IBRT1} should generally approximate the true IB curve \eqref{eq:IB-curve-def} well, despite its errors in approximating IB roots. 
To see this, note that while $\beta^{-1}$ is the slope of the optimal curve \eqref{eq:IB-curve-def} of the IB, \cite[Equation (32)]{tishby1999}, for the IB ODE \eqref{eq:IB-beta-ODE-in-decoder-coords} it is merely a ``time-like'' independent variable. 
When solving for the optimal curve \eqref{eq:IB-curve-def}, one is not interested in an optimal root or at its $\beta$ value, but rather at its image $\big( I(X; \hat{X}), I(Y; \hat{X}) \big)$ in the information plane. 
As a result, achieving the optimal roots but on the wrong $\beta$ values does yield the true IB curve \eqref{eq:IB-curve-def}, as required. 
This is the reason that the true curve \eqref{eq:IB-curve-def} is achieved in Figure \ref{fig:IBRT1-IB-curve-for-several-densitires} (Section \ref{sec:introduction}) even on sparse grids, despite the apparent approximation errors in Figures \ref{fig:IBRT1-decoders-for-several-densities} and \ref{fig:IBRT1-err-from-exact-sol-for-several-densities} (Section \ref{sub:IBRT1-numerical-results}). 
With that, expect the approximate IB curve produced by the IBRT1 Algorithm \ref{algo:IBRT1} to be of lesser quality when there are more than two possible labels $y$. 
To see this, note that the space $\Delta[\mathcal{Y}]$ traversed by the approximate clusters is not one-dimensional then, and so it is possible to maneuver around clusters of an optimal IB root. 

\medskip 
Next, we briefly discuss the basic properties of the IBRT1 Algorithm \ref{algo:IBRT1}. 
Its \textbf{computational complexity} is determined by the complexity of a single grid point. 
The latter is readily seen to be dominated by the complexity $O\big(T^2 \cdot |\mathcal{Y}|^2 \cdot \left( |\mathcal{X}| + T\cdot |\mathcal{Y}| \right) \big)$ of computing the coefficients matrix of the IB ODE \eqref{eq:IB-beta-ODE-in-decoder-coords} and of solving it numerically (on line \ref{algo:IBRT1:solve-ODE}). 
To that, one should add the complexity of the BA-IB Algorithm \ref{algo:BA-IB} each time a root is reduced. 
However, the critical slowing down of BA-IB \citep{agmon2021critical} is avoided since we reduce the root before invoking BA-IB (see Section \ref{sub:continuous-IB-bifs}). 
The complexity is only linear in $|\mathcal{X}|$ thanks to the choice of decoder coordinates. Had we chosen one of the other coordinate systems in Section \ref{sec:coords-exchange-for-the-IB}, then solving the ODE would have been cubic in $|\mathcal{X}|$ rather than linear (see there).
The \textbf{computational difficulty} in following IB roots stems from the existence of bifurcations (Section \ref{sec:euler-method}), as it generally is with following an operator's root, \cite[Section 7.2]{agmon2022RTRD}. 

As noted in Section \ref{sec:euler-method}, \textbf{convergence guarantees} can be derived for Euler's method for the IB when away of bifurcation, in terms of the step-size $|\Delta \beta|$, in a manner similar to \cite[Theorem 5]{agmon2022RTRD} for RD. 
These imply similar guarantees for the IBRT1 Algorithm \ref{algo:IBRT1}, as adding a single BA-IB iteration in our modified Euler method improves its order (see there). 
These details are omitted for brevity, however. 

For a numerical method of order $d > 0$ (see Section \ref{sec:euler-method}) with a fixed step-size $|\Delta \beta|$ and a fixed computational cost per grid point, the \textbf{cost-to-error tradeoff} is given by
\begin{equation}
	error \propto cost^{-d} \;,
\end{equation}
as in \cite[Equation (3.6)]{agmon2022RTRD}, when $|\Delta \beta|$ is small enough. See \cite{butcher2016numerical} for example. 
Figure 3.4 in \citep{agmon2022RTRD} demonstrates for RD that methods of higher order achieve a better tradeoff, as expected, as in the fixed-order Taylor methods they employ. 
Since computing implicit derivatives of higher orders requires the calculation of many more derivative tensors of $Id - BA_\beta$ \eqref{eq:IB-operator-def} than done here, \cite[Section 2.2]{agmon2022RTRD}, we have used only first-order derivatives for simplicity. 
However, while the vanilla Euler method for the IB is of order $d = 1$, the discussion in Section \ref{sec:euler-method} (and Figure \ref{fig:Euler-method-for-IB-dec} in particular) suggests that the order $d$ of the modified Euler method used by the IBRT1 Algorithm \ref{algo:IBRT1} is nearly twice than that. 
cf., Section \ref{sub:IBRT1-numerical-results}. 

\medskip 
With that, we comment on the behavior of the IBRT1 Algorithm \ref{algo:IBRT1} at \textbf{discontinuous bifurcations}.
Consider the problem in Figure \ref{fig:support-switching-bif-in-IB} (Section \ref{sub:discontinuous-IB-bifs}), for example. 
When Algorithm \ref{algo:IBRT1} follows the optimal 2-clustered root there, the Jacobian's singularity (in Figure \ref{fig:support-switching-bif-eigs}) is detectable by it because the step size $\Delta	\beta$ is negative. 
cf., the discussion in Section \ref{sub:discontinuous-IB-bifs} there. 
Indeed, due to Conjecture \ref{conj:BA-IB-Jacob-in-decoder-coords-is-nonsingular-at-reduced-root} ff., the algorithm can detect discontinuous bifurcations in general. 
Whether a particular discontinuous bifurcation is detected by Algorithm \ref{algo:IBRT1} in practice depends on the details\footnote{ e.g., on the threshold value $\delta_3$ for detecting singularity and on the precise grid points layout.}, of course, as with continuous bifurcations. 
Indeed, the details may or may not cause a particular example to be detected by the conditions on lines \ref{algo:IBRT1:S-matrix-to-compute-kernel} and \ref{algo:IBRT1:singularity-test} (in Algorithm \ref{algo:IBRT1}). 
If missed, Algorithm \ref{algo:IBRT1} will continue to follow the 2-clustered root in Figure \ref{fig:support-switching-bif-in-IB} to the left of the bifurcation, where it is sub-optimal, just as BA-IB with reverse deterministic annealing would. 
Once detected, though, one may wonder whether the heuristic Algorithm \ref{algo:handle-ODE-singularity} works well also for discontinuous bifurcations. 
The example of Figure \ref{fig:support-switching-bif-in-IB} has just one single-clustered root to the left of the bifurcation. 
Thus, the BA-IB Algorithm \ref{algo:BA-IB} invoked on line \ref{algo:handle-ODE-singularity:BA-IB-after-merging} (of Algorithm \ref{algo:handle-ODE-singularity}) must converge to it. 
However, there may generally be more than a single root of smaller effective cardinality to the left of the bifurcation, to which BA-IB may converge. 
The handling of discontinuous bifurcations is left to future work. 
Such handling is expected to be easier in the IB than in RD. 
Since, in contrast to RD, the effective cardinality of an optimal IB root cannot decrease with $\beta$ (bottom of Section \ref{sub:discontinuous-IB-bifs}). 
See \cite[Problems 2.8-2.10]{berger71} for counter-examples in RD. 
This makes detecting discontinuous bifurcations easier in the IB and is also expected to assist with their handling. 

%

\medskip 
We list the \textbf{assumptions} used along the way for reference. 
These are needed to guarantee the optimality of the IBRT1 Algorithm \ref{algo:IBRT1} at the limit of small step-sizes $|\Delta \beta |$, except at a bifurcation's vicinity. 
In Section \ref{sec:introduction}, it was assumed without loss of generality\footnote{ Otherwise, one may remove symbols $\x$ with $p_X(\x) = 0$ from the source alphabet.} that the input distribution $p_X$ is of full support, $p(\x) > 0$ for every $\x$. 
The requirement $p(\y | \x) > 0$ was added in Section \ref{sec:IB-ODE} as a sufficient technical condition for exchanging to logarithmic coordinates (Lemma \ref{lemma:sufficient-condition-for-IB-decoder-to-never-vanish} in Appendix \ref{sec:BA-IB-in-decoder-coords-appendix}), and could perhaps be alleviated in alternative derivations. 
Together, these are equivalent to having a never-vanishing IB problem definition, $p(\y|\x) p(\x) > 0 $ for every $\x$ and $\y$. 
The algorithm's initial condition is assumed to be a reduced and optimal IB root, as reduction is needed by Conjecture \ref{conj:BA-IB-Jacob-in-decoder-coords-is-nonsingular-at-reduced-root} in Section \ref{sub:IB-as-an-RD-problem-and-non-singularity-conj}. 
Finally, the given IB problem is assumed to have only continuous bifurcations, except perhaps for its first (leftmost) one. 
While these assumptions are sufficient to guarantee optimality, we note that milder conditions might do in a particular problem.

\newpage
\section{Concluding remarks}
\label{sec:discussion}

The IB is intimately related to several problems in adjacent fields, \cite{zaidi2020information}, including coding problems, inference, and representation learning. 
Despite its importance, there are surprisingly few techniques to solve it numerically. 
This work attempts to fill this gap by exploiting the dynamics of IB roots.

The end result of this work is a new numerical algorithm for the IB, which follows the path of a root along the IB's optimal tradeoff curve \eqref{eq:IB-curve-def}. 
A combination of several novelties was required to achieve this goal. 
First, the dynamics underlying the IB-curve \eqref{eq:IB-curve-def} obeys an ODE, \cite{agmon2022thesis}. 
Following the discussion around Conjecture \ref{conj:BA-IB-Jacob-in-decoder-coords-is-nonsingular-at-reduced-root} (in Section \ref{sub:IB-as-an-RD-problem-and-non-singularity-conj}), the existence of such a dynamics stems from the analyticity of the IB's fixed-point Equations \eqref{eq:IB-eq-encoder}-\eqref{eq:IB-eq-marginal}, thus typically resulting in piece-wise smooth dynamics of IB roots. 
Several natural choices of a coordinate system for the IB were considered, both for computational purposes and to facilitate a clean treatment of IB bifurcations below. 
The IB's ODE \eqref{eq:IB-beta-ODE-in-decoder-coords} was derived anew in appropriate coordinates, allowing an efficient computation of implicit derivatives at an IB root. 
Combining BA-IB with Euler's method yields a modified numerical method whose order is higher than either. 

Second, one needs to understand where the IB ODE \eqref{eq:IB-beta-ODE-in-decoder-coords} is \textit{not} obeyed, thereby violating the differentiability of an optimal root with respect to $\beta$. 
To that end, one not only needs to detect IB bifurcations but also needs to identify their type in order to handle them properly. 
Unlike standard techniques, our approach is to remove redundant coordinates, following root-tracking for RD, \cite{agmon2022RTRD}; cf., Section \ref{sec:introduction}. 
To achieve a reduction, we follow the arguably better definition of the IB in \cite{harremoes2007information}. 
Namely, a finite IB problem is an RD problem on the continuous reproduction alphabet $\Delta[\mathcal{Y}]$. 
Therefore, the IB may be intuitively considered as a method of lossy compression of the information on $Y$ embedded in $X$. 
Viewing a finite IB problem as an infinite RD problem suggests a particular choice of a coordinate system for the IB, which enables reduction in the IB; this extends reduction in RD, \cite{agmon2022RTRD}. 
Furthermore, this point of view highlights subtleties due to computing finite-dimensional representations of IB roots. 
To our understanding, these subtleties hindered the understanding of IB bifurcations throughout the years. 

Combining the above allows us to translate an understanding of IB bifurcations to a new numerical algorithm for the IB (the IBRT1 Algorithm \ref{algo:IBRT1}). 
There are several directions that one could consider to improve our algorithm. 
Near bifurcations, one could improve its handling of discontinuous bifurcations. 
While we used implicit derivatives only of the first order for simplicity, higher-order derivatives generally offer a better cost-to-error tradeoff when away of bifurcations. 
See also \cite[Section 3.4]{agmon2022RTRD} on possible improvements for following an operator's root.

\newpage
\appendix
\part*{Appendix}

\medskip
\section{The BA-IB operator in decoder coordinates}
\label{sec:BA-IB-in-decoder-coords-appendix}

For reference, we give an explicit expression for the BA-IB operator in decoder coordinates, defined in Section \ref{sec:coords-exchange-for-the-IB}.

\medskip
Denote by $\bm{p}_{Y|\hat{X}}$ and $\bm{p}_{\hat{X}}$ the vectors whose coordinates are $\big( p(\y|\xhat), p(\xhat) \big)$.
We denote the evaluation of $BA_\beta$ at this point by $BA_\beta[\bm{p}_{Y|\hat{X}}, \bm{p}_{\hat{X}}]$. 
Its output is again a decoder-marginal pair, whose coordinates are denoted respectively $BA_\beta[\bm{p}_{Y|\hat{X}}, \bm{p}_{\hat{X}}](\y|\xhat)$ and $BA_\beta[\bm{p}_{Y|\hat{X}}, \bm{p}_{\hat{X}}](\xhat)$.
Explicitly, $BA_\beta$ in decoder coordinates is given by,
\begin{equation}		\label{eq:BA-operator-def-in-decoder-coords-appendix}
\begin{split}
	BA_\beta[\bm{p}_{Y|\hat{X}}, \bm{p}_{\hat{X}}](\y|\xhat) &:= 
	\sum_{\x} \frac{p(\y|\x)  p(\x)}{Z(\x, \beta)} \exp \big\{ -\beta \; D_{KL}\big[p(\yp|\x) || p(\yp|\xhat)\big] \big\}
	\quad \text{and}
	\\
	BA_\beta[\bm{p}_{Y|\hat{X}}, \bm{p}_{\hat{X}}](\xhat) &:= 
	\sum_{\x} \frac{p(\xhat) p(\x) }{Z(\x, \beta)} \exp \big\{ -\beta\;D_{KL}\big[p(\yp|\x) || p(\yp|\xhat)\big] \big\} \;,
\end{split}
\end{equation}
where $Z(\x, \beta)$ is defined in terms of $p(\y|\xhat)$ and $p(\xhat)$ as in the IB's encoder Equation \eqref{eq:IB-eq-encoder} (Section \ref{sec:introduction}).

\medskip
The following lemma is handy when exchanging to logarithmic coordinates in Section \ref{sec:IB-ODE}.
\begin{lemma}		\label{lemma:sufficient-condition-for-IB-decoder-to-never-vanish}
	Let $p(\y|\x) p(\x)$ define a finite IB problem, such that $p(\y|\x) > 0$ for every $\x$ and $\y$. 
	Let $p(\y|\xhat)$ be the decoder of an IB root, and $\xhatp$ such that $p(\xhatp) > 0$. 
	Then $p(\y|\xhatp) > 0$ for every $\y$.
\end{lemma}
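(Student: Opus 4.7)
The plan is to exploit the IB decoder equation \eqref{eq:IB-eq-decoder} directly, reducing the claim to a one-line contradiction. Since $p(\xhatp) > 0$ by assumption, the inverse encoder $p(\x|\xhatp) = \nicefrac{p(\xhatp|\x) p(\x)}{p(\xhatp)}$ is well defined via Bayes' rule and is itself a probability distribution on $\mathcal{X}$, so $\sum_{\x} p(\x|\xhatp) = 1$.

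I would then argue by contradiction. Suppose that $p(\y_0|\xhatp) = 0$ for some label $\y_0 \in \mathcal{Y}$. By the decoder Equation \eqref{eq:IB-eq-decoder},
\begin{equation*}
    0 \;=\; p(\y_0|\xhatp) \;=\; \sum_{\x} p(\y_0|\x) \, p(\x|\xhatp) \;.
\end{equation*}
This is a sum of non-negative terms, so every summand must vanish. But $p(\y_0|\x) > 0$ for every $\x$ by the hypothesis $p(\y|\x) > 0$, which forces $p(\x|\xhatp) = 0$ for all $\x \in \mathcal{X}$. This contradicts $\sum_{\x} p(\x|\xhatp) = 1$, and so $p(\y|\xhatp) > 0$ for every $\y$, as claimed.

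There is essentially no obstacle here: the entire content is that strict positivity of the channel $p(\y|\x)$ propagates through the Markov averaging defining the decoder, and the well-definedness of $p(\x|\xhatp)$ (which hinges on $p(\xhatp) > 0$ and $p(\x) > 0$, the latter being the paper's standing assumption on the source) is exactly what the hypothesis supplies. The only subtlety worth flagging is that one must use the decoder Equation \eqref{eq:IB-eq-decoder} rather than the encoder Equation \eqref{eq:IB-eq-encoder}; a direct approach through \eqref{eq:IB-eq-encoder} would be circular, since $p(\xhatp|\x) > 0$ there requires $D_{KL}[p(\y|\x) \,\|\, p(\y|\xhatp)] < \infty$, which is equivalent to the conclusion we are trying to establish.
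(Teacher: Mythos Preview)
Your proof is correct and takes essentially the same approach as the paper: both rely on the decoder Equation \eqref{eq:IB-eq-decoder} together with the fact that $p(\x|\xhatp)$ is a well-defined normalized distribution when $p(\xhatp) > 0$. The paper simply states this as immediate, while you spell out the contradiction argument explicitly; your added remark about the circularity of approaching via \eqref{eq:IB-eq-encoder} is a nice clarification not present in the original.
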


\begin{proof}[Proof of Lemma \ref{lemma:sufficient-condition-for-IB-decoder-to-never-vanish}]
	This follows immediately from the IB's decoder Equation \eqref{eq:IB-eq-decoder}, since $p(\x|\xhatp)$ is a well-defined normalized conditional probability distribution if $p(\xhatp) > 0$. 
\end{proof}

\medskip
\section{The first-order derivative tensors of Blahut-Arimoto for the IB}
\label{sec:first-order-deriv-tensors-of-BA-IB-appendix}

We calculate the first-order derivative tensors of the Blahut-Arimoto operator $BA_\beta$ in log-decoder coordinates (see Sections \ref{sec:coords-exchange-for-the-IB} and \ref{sec:IB-ODE}).
Namely, its Jacobian matrix $D_{\log p(\xhat|\x), \log p(\xhat)} BA_\beta$, and the vector $D_{\beta} BA_\beta$ of its partial derivatives with respect to $\beta$. 
cf., Appendix \ref{sec:BA-IB-in-decoder-coords-appendix} for explicit formulae of $BA_\beta$ in decoder coordinates. 

While these are ``just'' differentiations, many subtleties are involved in getting the math right. 
For example, one needs to correctly identify the inputs and outputs of $BA_\beta$, when considered as an operator on log-decoder coordinates. 
For another, one must take special care as to which variable depends on which, and especially on which does it not depend, as multiple variables are involved. 
Above all, these calculations require a deep understanding of the chain rule. 
With that, a common caveat in such calculations is that the $BA_\beta$ operator (and the equations defining it) should be differentiated \textit{before} they are evaluated. 
While this is obvious for real functions, where $f'(3)$ stands for the derivative function of $f(x)$ evaluated at $x=3$, for the $BA_\beta$ operator, this might get obfuscated by the myriad of variables and variable-dependencies of which it is comprised. 
Although calculating the derivative of $BA_\beta$ (at an arbitrary point) first and only then evaluating at a fixed point might appear as a mere technical necessity, it is required by this work. For example, when considering the vector field defined by the IB operator \eqref{eq:IB-operator-def} at Section \ref{sub:IBRT1-discussion}. 
cf., \cite[Section 5]{agmon2022RTRD}, for the derivative tensors of Blahut's algorithm \citep{blahut1972} for RD, of arbitrary order. 

The subtitles involved in these differentiations are discussed in Appendix \ref{sub:BA-IB-jacobian-appendix:calculation-setup-and-goals}, with the bulk of the calculations carried out in \ref{subsub:BA-IB-jacobian-appendix:differentiating-along-depend-graph}. 
The latter are gathered and simplified in Appendix \ref{sub:BA-IB-jacobian-appendix:decoder-deriv-matrix} to obtain the Jacobian matrix $D_{\log p(\xhat|\x), \log p(\xhat)} BA_\beta$, and in Appendix \ref{sub:BA-IB-beta-deriv-appendix-wrt-decoder-coords} to obtain the partial-derivatives vector $D_{\beta} BA_\beta$. 
The results provided here naturally depend on the choice of coordinate system. 
To compare results between log-decoder and log-encoder coordinates in Section \ref{sec:coords-exchange-for-the-IB} (e.g., in Figure \ref{fig:norm-of-analytical-derivs-for-BSC}), we derive in Appendix \ref{sub:coordinates-exchange-jacobians-appendix} the coordinate-exchange Jacobians between these coordinate systems.

\medskip
\subsection{Calculation setups and partial derivatives of unnamed functions}
\label{sub:BA-IB-jacobian-appendix:calculation-setup-and-goals}

We explain the mathematical subtitles relevant to the sequel. 

\medskip 
As we are interested in the derivatives of the Blahut-Arimoto Algorithm \ref{algo:BA-IB} for the IB (in Section \ref{sec:introduction}), we shall follow its notation. 
Namely, distributions are subscripted $i$ or $i+1$ by the algorithm's iteration number. 
A subscript $i$ is usually considered an input distribution, and a subscript $i+1$ is usually considered an output distribution. 
e.g., $p_i(\xhat)$ or $p_{i+1}(\y|\xhat)$. 
These need \textit{not} be IB roots but rather are arbitrary distributions. 
On the other hand, a subscript $\beta$ denotes a distribution of an IB root at a tradeoff value $\beta$, as in $p_{\beta}(\y|\xhat)$ for a root's decoders. 
To avoid subtleties due to zero-mass clusters, we usually assume $p_i(\xhat) \neq 0$ in the sequel, for any $\xhat$. 
cf., Sections \ref{sec:coords-exchange-for-the-IB} and \ref{sub:IB-as-an-RD-problem-and-non-singularity-conj} on root-reduction in the IB. 

It is important to distinguish which variables are dependent and which are independent in a particular calculation. 
e.g., in Appendix \ref{sub:BA-IB-beta-deriv-appendix-wrt-decoder-coords}. 
Since this task is easier for a single real variable (as opposed to distributions, for example), we consider simplifications to the real case. 
Note that each of the equations \algref{algo:BA-IB}{eq:IB-BA-cluster_marginal} through \algref{algo:BA-IB}{eq:IB-BA-new-direct-enc} defining the BA-IB Algorithm \ref{algo:BA-IB} yields a new distribution in terms of already-specified ones. 
These define \textit{unnamed functions}, whose variables and values are probability distributions. For example, one could have formally defined $p_i(x|\hat{x})$ in \algref{algo:BA-IB}{eq:IB-BA-bayes-for-computing-inverse-enc} by the function
\begin{equation}			\label{eq:bayes-for-inverse-enc-formal}
	\mathcal{F}\left[ p_i(\hat{x}|x), p_i(\hat{x}) \right]\left(x, \hat{x}\right) := 
	\nicefrac{p_i(\hat{x}|x)p(\x)}{p_i(\hat{x})} \;,
\end{equation}
where $p_i(\hat{x}|x)$ and $p_i(\hat{x})$ are the variables of $\mathcal{F}$, and its output is a conditional probability distribution, with $x$ conditioned upon $\hat{x}$. 
As the input and representation alphabets $\mathcal{X}$ and $\hat{\mathcal{X}}$ are finite, $N := |\mathcal{X}|$ and $T := |\hat{\mathcal{X}}|$, the arguments $p_i(\hat{x}|x), p_i(\hat{x})$ and values $p_i(x|\hat{x})$ of $\mathcal{F}$ \eqref{eq:bayes-for-inverse-enc-formal} are merely real vectors. 
Thus, enumerating the variables $x_1, \dots, x_N$ and $\hat{x}_1, \dots, \hat{x}_T$ allows to spell-out \eqref{eq:bayes-for-inverse-enc-formal} by its coordinates, 
\begin{equation}			\label{eq:bayes-for-inv-enc-formal-explicit}
	\mathcal{F}\left[ p_i(\hat{x}_1|x_1), p_i(\hat{x}_1|x_2), \dots, p_i(\hat{x}_1|x_N), \dots, p_i(\hat{x}_T|x_N),  p_i(\hat{x}_1), \dots, p_i(\hat{x}_T) \right]\left(x, \hat{x}\right) :=
	\nicefrac{p_i(\hat{x}|x)p(\x)}{p_i(\hat{x})} \;.
\end{equation}
While \eqref{eq:bayes-for-inv-enc-formal-explicit} is too cumbersome to work with, it does highlight that $\mathcal{F}$ is merely a vector of $N\cdot T$ real vector-valued functions, in $T + N\cdot T$ real variables. 
This allows us to use partial derivatives rather than their infinite-dimensional counterparts (namely, variational derivatives), as in 
\begin{equation}			\label{eq:partial-deriv-formal-wrt-encoder-coord}
	\frac{\partial \mathcal{F}\left[ p_i(\hat{x}|x), p_i(\hat{x}) \right]}{\partial p_i(\hat{x}_j|x_k)} := 
	\lim_{h\to 0} \frac{\mathcal{F}\left[p_i(\hat{x}_1|x_1), \dots, p_i(\hat{x}_j|x_k) + h, \dots, p_i(\hat{x}_T) \right] - \mathcal{F}\left[\dots, p_i(\hat{x}_j|x_k), \dots \right]}{h} \;.
\end{equation}
This is the derivative of $\mathcal{F}$ \eqref{eq:bayes-for-inv-enc-formal-explicit} with respect to a particular $(j, k)$-entry of its argument, by definition. 
However, to maintain a concise notation, we shall carry on with un-named function definitions, writing $\nicefrac{\partial p_i(x|\hat{x})}{\partial p_i(\hat{x}_j|x_k)}$ for the partial derivative of \eqref{eq:bayes-for-inverse-enc-formal} rather than its explicit form \eqref{eq:partial-deriv-formal-wrt-encoder-coord}.
If disoriented, the reader is encouraged to return to the definitions \eqref{eq:partial-deriv-formal-wrt-encoder-coord}. 

We often exchange variables implicitly to logarithmic coordinates, as in Section \ref{sec:IB-ODE}. 
For example, $\frac{\partial \mathcal{F}\left[ p_i(\hat{x}|x), p_i(\hat{x}) \right]}{\plog p_i(\hat{x}_i|x_j)}$ is to be understood as exchanging variables to $u_i(\hat{x}, x) := \log p_i(\hat{x}|x)$, with $\mathcal{G}\left[u_i(\hat{x},x), u_i(\hat{x}) \right] := \mathcal{F}\left[ \exp u_i(\hat{x},x), \exp u_i(\hat{x}) \right]$ now differentiated with respect to its variables $u_i(\hat{x},x)$ and $u_i(\hat{x})$,
\begin{equation}
	\frac{\partial \mathcal{F}\left[ p_i(\hat{x}|x), p_i(\hat{x}) \right]}{\plog p_i(\hat{x}_i|x_j)} = \frac{\partial \mathcal{F}\left[ \exp u_i(\hat{x},x), \exp u_i(\hat{x}) \right]}{\partial u_i(\hat{x},x)} =:
	\frac{\partial \mathcal{G}\left[u_i(\hat{x},x), u_i(\hat{x}) \right]}{\partial u_i(\hat{x},x)} 
\end{equation}
The output of $\mathcal{F}$ may similarly be exchanged to logarithmic coordinates, as in $\log \mathcal{F}\left[ \exp u_i(\hat{x},x), \exp u_i(\hat{x}) \right]$. 

To proceed, carefully note the dependencies between the various variables in a BA-IB iteration, at \algref{algo:BA-IB}{eq:IB-BA-cluster_marginal} through \algref{algo:BA-IB}{eq:IB-BA-new-direct-enc}. These are summarized compactly by the following diagram,
\begin{equation}			\label{eq:dependencies-graph-for-BA-IB-variables}
	\xymatrix@C=1.8em{	
		\\
		\dots\ar[r] & p_i(\hat{x}|x)\ar[r]\ar@(u,u)[rr]|(0.72)\hole & p_i(\hat{x})\ar[r]\ar@(u,u)[rrr]\ar@(d,l)[rrd] & p_i(x|\hat{x})\ar[r] & p_i(y|\hat{x})\ar[r]\ar[d] & p_{i+1}(\hat{x}|x)\ar[r] & \dots \\
		&  &  &  & Z_i(x, \beta)\ar@(r,d)[ur]
	}
\end{equation}
by their order of appearance in the BA-IB Algorithm \ref{algo:BA-IB}. 
This diagram proceeds to both sides by the iteration number $i$. 
Each node in \eqref{eq:dependencies-graph-for-BA-IB-variables} serves both as a function of the nodes preceding it and as a variable for those succeeding it, and so it is a ``function-variable''. 

To differentiate along the dependencies graph \eqref{eq:dependencies-graph-for-BA-IB-variables}, we shall need the multivariate chain rule 
\begin{equation}			\label{eq:multivariate-chain-rule}
	\frac{df}{dy} = \frac{\partial f}{\partial y} + \frac{\partial f}{\partial z}\frac{d z}{d y} \;,
\end{equation}
for a function $f\big(y, z(y)\big)$.
As the dependencies graph \eqref{eq:dependencies-graph-for-BA-IB-variables} involves multiple function-variables, such as $z(y)$, we pause on the definition's subtleties. 
The partial derivative of a function $g$ in several variables $x_1, \dots, x_N$ with respect to its $i$-th entry is defined by
\begin{equation}			\label{eq:partial-deriv-def}
	\frac{\partial g}{\partial x_i} := \lim_{h\to 0} \frac{g(x_1, \dots, x_i + h, \dots, x_N) - g(x_1, \dots, x_i, \dots, x_N)}{h} \;.
\end{equation}
We emphasize that variables $x_1, \dots, x_{i-1}, x_{i+1}, \dots, x_N$ \textit{other} than $x_i$ are fixed when calculating $\frac{\partial g}{\partial x_i}$. And so, it makes no difference in \eqref{eq:partial-deriv-def} whether or not they depend on $x_i$, as in $x_j = x_j(x_i)$ for $j\neq i$. 

Next, suppose we would like to calculate how changing an input distribution affects some output distribution. 
This is relevant in Appendix \ref{sub:BA-IB-jacobian-appendix:decoder-deriv-matrix} for example, when considering how does a change in a coordinate of an input decoder $p_i(\y|\xhat)$ or marginal $p_i(\xhat)$ affect a particular coordinate of the output decoder or marginal. 
For exposition's simplicity, though, suppose that we would like to calculate how a change in the $(k_1, k_2)$ coordinate $p_i(\hat{x}_{k_1}|x_{k_2})$ of an input encoder affects the $(j_1, j_2)$ coordinate $p_{i+1}(\hat{x}_{j_1}|x_{j_2})$ of the output encoder. 
That is, deriving the rightmost node in \eqref{eq:dependencies-graph-for-BA-IB-variables} with respect to a coordinate of the leftmost one, 
\begin{equation}			\label{eq:computation-goal-log-encoder-deriv}
	\frac{d\log p_{i+1}(\hat{x}_{j_1}|x_{j_2})}{d\log p_i(\hat{x}_{k_1}|x_{k_2})} \;,
\end{equation}
where we have exchanged to logarithmic coordinates to simplify calculations. 
To calculate \eqref{eq:computation-goal-log-encoder-deriv}, one needs to apply the multivariate chain rule \eqref{eq:multivariate-chain-rule} along all the possible dependencies of the output $\log p_{i+1}(\hat{x}_{j_1}|x_{j_2})$ on the input coordinate $\log p_i(\hat{x}_{k_1}|x_{k_2})$. 
This amounts to following all the paths in \eqref{eq:dependencies-graph-for-BA-IB-variables} connecting these two nodes, summing the contributions of \textit{every} possible path. 
For example, traversing from the input $p_i(\hat{x}_{k_1}|x_{k_2})$ rightwards at \eqref{eq:dependencies-graph-for-BA-IB-variables} to $p_i(\hat{x})$, then downwards to $Z_i(x, \beta)$ and then to the output $p_{i+1}(\hat{x}_{j_1}|x_{j_2})$ yields the term
\begin{equation*}
	\frac{\partial\log p_i(\hat{x}'')}{\partial\log p_i(\hat{x}_{k_1}| x_{k_2})}
	\frac{\partial\log Z_i(x, \beta)}{\partial\log p_i(\hat{x}'')} 
	\frac{\plog p_{i+1}(\hat{x}_{j_1}|x_{j_2})}{\plog Z_i(x, \beta)} 
\end{equation*}
corresponding to this path, at particular $x$ and $\hat{x}''$ coordinates.
To collect the contribution from every intermediate function-variable coordinate, we need to sum the latter over $x$ and $\hat{x}''$. 
Writing down all such paths, one has for \eqref{eq:computation-goal-log-encoder-deriv},
\begin{multline}			\label{eq:log-encoder-deriv-chain-rule}
	\frac{d\log p_{i+1}(\hat{x}_{j_1}|x_{j_2})}{d\log p_i(\hat{x}_{k_1} | x_{k_2})} \\ = 
	\frac{\partial\log p_i(\hat{x}'')}{\partial\log p_i(\hat{x}_{k_1}| x_{k_2})} \cdot \Bigg\{ \frac{\plog p_{i+1}(\hat{x}_{j_1}|x_{j_2})}{\plog Z_i(x, \beta)} \cdot \frac{\partial\log Z_i(x, \beta)}{\partial\log p_i(\hat{x}'')} 
	+ \frac{\partial\log p_{i+1}(\hat{x}_{j_1} | x_{j_2})}{\partial\log p_i(\hat{x}'')} \\
	+ \left[\frac{\plog p_{i+1}(\hat{x}_{j_1}|x_{j_2})}{\plog Z_i(x, \beta)} \cdot \frac{\plog Z_i(x, \beta)}{\plog p_i(y|\hat{x})} + \frac{\plog p_{i+1}(\hat{x}_{j_1}|x_{j_2})}{\plog p_i(y|\hat{x})} \right] \cdot
	\frac{\plog p_i(y|\hat{x})}{\plog p_i(x'|\hat{x}')}\cdot \frac{\plog p_i(x'|\hat{x}')}{\plog p_i(\hat{x}'')} \Bigg\} \\
	+ \left[\frac{\plog p_{i+1}(\hat{x}_{j_1}|x_{j_2})}{\plog Z_i(x, \beta)} \cdot \frac{\plog Z_i(x, \beta)}{\plog p_i(y|\hat{x})} + \frac{\plog p_{i+1}(\hat{x}_{j_1}|x_{j_2})}{\plog p_i(y|\hat{x})} \right] \cdot
	\frac{\plog p_i(y|\hat{x})}{\plog p_i(x'|\hat{x}')}\cdot \frac{\plog p_i(x'|\hat{x}')}{\plog p_i(\hat{x}_{k_1}|x_{k_2})}
\end{multline}
Repeated unbounded variables are understood to be summed over, as in Einstein's summation convention.

\medskip
\subsubsection{Differentiating along the dependencies graph}

\label{subsub:BA-IB-jacobian-appendix:differentiating-along-depend-graph}

Next, we differentiate each edge in (the logarithm of) the dependency graph \eqref{eq:dependencies-graph-for-BA-IB-variables}. 
These are necessary to evaluate derivatives along dependency paths, that underlie the subsequent sections' calculations.

\medskip
Equation \algref{algo:BA-IB}{eq:IB-BA-cluster_marginal} in the BA-IB Algorithm \ref{algo:BA-IB} defines the cluster marginal in terms of the direct encoder,
\begin{multline}			\label{eq:cluster-marginal-by-direct-enc-deriv}
	\frac{\plog p_{i}(\hat{x})}{\plog p_{i}(\hat{x}'|x')} \overset{\algref{algo:BA-IB}{eq:IB-BA-cluster_marginal}}{=}
	\frac{1}{p_{i}(\hat{x})} \sum_{\x} p(\x) \frac{\partial }{\plog p_{i}(\hat{x}'|x')} p_{i}(\hat{x}|x) \\ =
	\frac{1}{p_{i}(\hat{x})} \sum_{\x} p(\x) p_{i}(\hat{x}|x) \frac{\plog p_{i}(\hat{x}|x)}{\plog p_{i}(\hat{x}'|x')} \overset{\algref{algo:BA-IB}{eq:IB-BA-bayes-for-computing-inverse-enc}}{=}
	p_{i}(x'|\hat{x}) \cdot \delta_{\hat{x}, \hat{x}'}
\end{multline}
In the first and second equalities we have used the identity $\tfrac{\partial}{\partial x} y = y \tfrac{\partial}{\partial x} \log y$ for the differentiation of a function's logarithm, when $y$ is a function of $x$.

Following the comments around the definition \eqref{eq:partial-deriv-def} of a partial derivative, note that  \algref{algo:BA-IB}{eq:IB-BA-bayes-for-computing-inverse-enc} defines the inverse encoder $\log p_{i}(x|\hat{x})$ as a function of the variables $\log p_{i}(\hat{x}|x)$ and $\log p_{i}(\hat{x})$ (and $p(\x)$, which we ignore under differentiation).
Thus, differentiating this equation with respect to an entry of the variable $\log p_{i}(\hat{x}|x)$ implies that the entries of the other variable $\log p_{i}(\hat{x})$ are held fixed, and vice versa. 
So, for the Bayes rule \algref{algo:BA-IB}{eq:IB-BA-bayes-for-computing-inverse-enc} we have
\begin{equation}			\label{eq:inv-encoder-deriv-wrt-cluster-marginal}
	\frac{\plog p_{i}(x_{j_1}|\hat{x}_{j_2})}{\plog p_{i}(\hat{x})} = \frac{\partial }{\plog p_{i}(\hat{x})} \big[ \cancel{\log p_{i}(\hat{x}_{j_2}|x_{j_1})} - \log p_{i}(\hat{x}_{j_2}) \big] = -\delta_{\hat{x}, \hat{x}_{j_2}}
\end{equation}
where $\log p_{i}(\hat{x}_{j_2}|x_{j_1})$ at the right-hand side is different from the variable $\log p_{i}(\hat{x})$ of differentiation, and so its partial derivative vanishes.
Next, differentiating \algref{algo:BA-IB}{eq:IB-BA-bayes-for-computing-inverse-enc} with respect to a coordinate of its other variable $\log p_{i}(\hat{x}|x)$,
\begin{equation}
	\frac{\plog p_{i}(x_{j_1}|\hat{x}_{j_2})}{\plog p_{i}(\hat{x}'|x')} = 
	\frac{\plog p_{i}(\hat{x}_{j_2}|x_{j_1})}{\plog p_{i}(\hat{x}'|x')} - \cancel{\frac{\plog p_{i}(\hat{x}_{j_2})}{\plog p_{i}(\hat{x}'|x')}} = \delta_{x_{j_1}, x'} \cdot \delta_{\hat{x}_{j_2}, \hat{x}'}
\end{equation}

Using again the logarithmic derivative identity $\tfrac{\partial}{\partial x} y = y \tfrac{\partial}{\partial x} \log y$, by the decoder Equation \algref{algo:BA-IB}{eq:IB-BA-decoder-eq} we have
\begin{multline}			\label{eq:deriving-decider-by-inv-enc}
	\frac{\plog p_{i}(y|\hat{x}'')}{\plog p_i(x_{k_1}|\hat{x}_{k_2})} =
	\frac{1}{p_{i}(y|\hat{x}'')} \sum_{x'''} p(y|x''') \frac{\partial }{\plog p_i(x_{k_1}|\hat{x}_{k_2})} p_i(x'''|\hat{x}'') \\
	= \frac{1}{p_{i}(y|\hat{x}'')} \sum_{x'''} p(y|x''') p_i(x'''|\hat{x}'') \; \delta_{\hat{x}_{k_2}, \hat{x}''} \cdot \delta_{x_{k_1}, x'''}
	= \delta_{\hat{x}_{k_2}, \hat{x}''}  \cdot \frac{p(y|x_{k_1}) p_i(x_{k_1}|\hat{x}'')}{p_{i}(y|\hat{x}'')}
\end{multline}

Next, consider the KL-divergence term in the definition \algref{algo:BA-IB}{eq:IB-BA-partition-func} of the partition function $Z_i$,
\begin{equation}			\label{eq:KL-decoder-deriv}
	\frac{\partial }{\plog p_{i}(y|\hat{x}'')} D_{KL}\big[p(y|x'') || p_{i}(y|\hat{x})\big] 
	= -\sum_{y'} p(y'|x'') \underset{\delta_{\hat{x}, \hat{x}''}\cdot \delta_{y, y'}}{\underbrace{\frac{\partial }{\plog p_{i}(y|\hat{x}'')} \log p_{i}(y'|\hat{x})}}
	= -\delta_{\hat{x}, \hat{x}''} \cdot p(y|x'')
\end{equation}
Since the partition function \algref{algo:BA-IB}{eq:IB-BA-partition-func} depends on the decoder $p_i(y|\hat{x})$ only via the KL-divergence,
\begin{multline}
	\frac{\partial Z_{i}(x'', \beta)}{\plog p_{i}(y|\hat{x}'')} = 
	\frac{\partial }{\plog p_{i}(y|\hat{x}'')}
	\sum_{\hat{x}} p_{i}(\hat{x}) \exp{\big\{ -\beta \; D_{KL}\big[p(y|x'') || p_{i}(y|\hat{x})\big] \big\}} \\ =
	-\beta \sum_{\hat{x}} p_{i}(\hat{x}) \exp{\big\{ -\beta \; D_{KL}\big[p(y|x'') || p_{i}(y|\hat{x})\big] \big\}} \frac{\partial }{\plog p_{i}(y|\hat{x}'')} D_{KL}\big[p(y|x'') || p_{i}(y|\hat{x})\big] \\ \overset{\eqref{eq:KL-decoder-deriv}}{=}
	\beta \; p_{i}(\hat{x}'') \exp{\big\{ -\beta \; D_{KL}\big[p(y|x'') || p_{i}(y|\hat{x}'')\big] \big\}} p(y|x'')
	\\ \overset{\algref{algo:BA-IB}{eq:IB-BA-new-direct-enc}}{=}
	\beta \; p_{i+1}(\hat{x}''|x'') Z_i(x'', \beta) p(y|x'')
\end{multline}
Hence,
\begin{equation}			\label{eq:partit_func_deriv_wrt_decoder}
	\frac{\plog Z_{i}(x'', \beta)}{\plog p_{i}(y|\hat{x}'')} = 
	\beta \; p_{i+1}(\hat{x}''|x'') p(y|x'')
\end{equation}

For the derivative of the partition function with respect to the marginal $p_i(\hat{x})$,
\begin{multline}
	\frac{\partial Z_{i}(x,\beta)}{\plog p_i(\hat{x}')} \overset{\algref{algo:BA-IB}{eq:IB-BA-partition-func}}{=}
	\frac{\partial}{\plog p_i(\hat{x}')}\sum_{\hat{x}} p_{i}(\hat{x}) \exp{\big\{ -\beta \; D_{KL}\big[p(\y|\x) || p_{i}(y|\hat{x})\big] \big\}} \\ =
	\sum_{\hat{x}} p_{i}(\hat{x}) \exp{\big\{ -\beta \; D_{KL}\big[p(\y|\x) || p_{i}(y|\hat{x})\big] \big\}} \frac{\plog p_i(\hat{x})}{\plog p_i(\hat{x}')} \\ =
	\sum_{\hat{x}} p_{i}(\hat{x}) \exp{\big\{ -\beta \; D_{KL}\big[p(\y|\x) || p_{i}(y|\hat{x})\big] \big\}} \cdot \delta_{\hat{x}, \hat{x}'} \overset{\algref{algo:BA-IB}{eq:IB-BA-new-direct-enc}}{=}
	Z_i(x, \beta) \cdot p_{i+1}(\hat{x}'|x)
\end{multline}
Where the second equality follows from the logarithmic derivative identity. Hence,
\begin{equation}			\label{eq:log-partit-func-wrt-marginal-deriv}
	\frac{\plog Z_{i}(x,\beta)}{\plog p_i(\hat{x}')} = p_{i+1}(\hat{x}'|x)
\end{equation}

Finally, for the encoder Equation \algref{algo:BA-IB}{eq:IB-BA-new-direct-enc},
\begin{equation}			\label{eq:deriving-direct-enc-wrt-dec}
	\log p_{i+1}(\hat{x}'|x') :=
	\log p_{i}(\hat{x}') - \log Z_{i}(x',\beta) - \beta \; D_{KL}\big[p(y|x') || p_{i}(y|\hat{x}')\big]
\end{equation}
The first two terms to the right, $p_{i}(\hat{x})$ and $Z_{i}(x, \beta)$, take the role of a variable in Equation \algref{algo:BA-IB}{eq:IB-BA-new-direct-enc}.
In contrast, we consider the last divergence term as a shorthand for summing over $p_{i}(y|\hat{x})$. Thus, the latter \textit{is} a variable of \eqref{eq:deriving-direct-enc-wrt-dec}.
With \eqref{eq:KL-decoder-deriv}, we thus have
\begin{equation}
	\frac{\plog p_{i+1}(\hat{x}'|x')}{\plog p_{i}(y|\hat{x}'')} = 
	\beta \; \delta_{\hat{x}', \hat{x}''} \cdot p(y|x') \;.
\end{equation}
For the other derivatives of the encoder equation \algref{algo:BA-IB}{eq:IB-BA-new-direct-enc},
\begin{equation}
	\frac{\plog p_{i+1}(\hat{x}'|x')}{\plog Z_{i}(x'', \beta)} = 
	-\frac{\partial\log Z_i(x', \beta)}{\plog Z_i(x'', \beta)} = -\delta_{x', x''}
\end{equation}
And,
\begin{equation}
	\frac{\plog p_{i+1}(\hat{x}|x)}{\plog p_i(\hat{x}')} = 
	\frac{\plog p_{i}(\hat{x})}{\plog p_i(\hat{x}')} - \cancel{\frac{\plog Z_{i}(x, \beta)}{\plog p_i(\hat{x}')}} -\beta \cancel{\frac{\partial D_{KL}\big[p(\y|\x) || p_{i}(y|\hat{x})\big]}{\plog p_i(\hat{x}')}} =
	\delta_{\hat{x}, \hat{x}'}
\end{equation}
where the variable $p_i(\hat{x})$ of Equation \algref{algo:BA-IB}{eq:IB-BA-new-direct-enc} differs from the variables $Z_i$ and $p_{i}(y|\hat{x})$, on which the crossed-out terms depend.

\medskip
We summarize the calculations of this subsection in the following diagram:
\begin{equation}			\label{eq:log-variable-dep-w-derivatives-no-normalization}
	\xymatrix@C=2em@R=4em{
		\log p_i(\hat{x}|x)\ar[rr]_{p_{i}(x'|\hat{x})\; \delta_{\hat{x}, \hat{x}'}}\ar@(u,u)[rrr]^{\delta_{x, x'} \; \delta_{\hat{x}, \hat{x}'}}|(0.75)\hole & &
		\log p_i(\hat{x})\ar[r]_{-\delta_{\hat{x}, \hat{x}'}}\ar@(u,u)[rrrrr]^{\delta_{\hat{x}, \hat{x}'}}\ar@(d,l)[rrrd]_{p_{i+1}(\hat{x}'|x)} & 
		\log p_i(x|\hat{x})\ar[rr]^{\frac{p(y|x') p_i(x'|\hat{x})}{p_{i}(y|\hat{x})} \cdot \delta_{\hat{x}', \hat{x}}} & & 
		\log p_i(y|\hat{x})\ar[rr]_{\beta \; \delta_{\hat{x}, \hat{x}'} p(y'|x)}\ar[d]_{\beta \; p_{i+1}(\hat{x}'|x) p(y'|x)} & &
		\log p_{i+1}(\hat{x}|x)\\
		&  &  &	&	& \log Z_i(x, \beta)\ar@(r,d)[urr]_{-\delta_{x, x'}}
	}
\end{equation} 
A differentiation variable is denoted with commas, at an arrow's source in this diagram. 
A coordinate of the function which we differentiate is written without commas, at an arrow's end. e.g.,
\begin{equation*}
	\xymatrix{
		\log p_i(\hat{x}|x)\ar[rrr]_{\frac{\plog p_i(x|\hat{x})}{\plog p_i(\hat{x}'|x')} \;=\; \dots} & & & \log p_i(x|\hat{x})
	}
\end{equation*}

\medskip
\subsection{The Jacobian matrix of BA-IB in log-decoder coordinates}

\label{sub:BA-IB-jacobian-appendix:decoder-deriv-matrix}

By gathering the results of Appendix \ref{subsub:BA-IB-jacobian-appendix:differentiating-along-depend-graph} and following the lines of \ref{sub:BA-IB-jacobian-appendix:calculation-setup-and-goals}, we calculate the Jacobian matrix \eqref{eq:BA-Jacob-wrt-decoder-coords-at-main-text} (in Section \ref{sec:IB-ODE}) of the Blahut-Arimoto operator $BA_\beta$ in log-decoder coordinates, defined in Section \ref{sec:coords-exchange-for-the-IB}. 

\medskip
The derivative of $BA_\beta$ in decoder coordinates boils down to the four quantities: 
the effect $\tfrac{d\log p_{i+1}(\y|\xhat)}{d \log p_i(\yp|\xhatp)}$ that varying a coordinate $\log p_i(\yp|\xhatp)$ of an input cluster has on a coordinate $\log p_{i+1}(\y|\xhat)$ of an output cluster, 
the effect $\frac{d\log p_{i+1}(\y|\xhat)}{d \log p_i(\xhatp)}$ that varying an input marginal coordinate $\log p_i(\xhatp)$ has on a coordinate $\log p_{i+1}(\y|\xhat)$ of an output cluster, and so forth. 
And so, the Jacobian $D_{\log p(\y|\xhat), \log p(\xhat)} BA_\beta$ it is a block matrix,
\begin{equation}		\label{eq:BA-Jacob-wrt-decoder-coords-as-block-matrix-implicit}
	\left(
		\begin{array}{c|c}
			\\	\frac{d\log p_{i+1}(\y|\xhat)}{d \log p_i(\yp|\xhatp)}	&
			\frac{d\log p_{i+1}(\y|\xhat)}{d \log p_i(\xhatp)}	\\	\\
			\hline	\\
			\frac{d\log p_{i+1}(\xhat)}{d \log p_i(\yp|\xhatp)}	&
			\frac{d\log p_{i+1}(\xhat)}{d \log p_i(\xhatp)}	\\	\;
		\end{array}
	\right)
\end{equation}
Its rows correspond to the output coordinates of $BA_\beta$.
We index its upper rows by $\y \in \mathcal{Y}$ and $\xhat \in \{1, \dots, T\}$, while its lower rows are indexed by $\xhat$ alone. Similarly, its columns correspond to the input coordinates of $BA_\beta$.
We index its leftmost columns by $\yp$ and $\xhatp$, and its rightmost columns by $\xhatp$ alone.
Each block in \eqref{eq:BA-Jacob-wrt-decoder-coords-as-block-matrix-implicit} is comprised of contributions along all the distinct paths connecting two vertices in the dependencies graph \eqref{eq:dependencies-graph-for-BA-IB-variables}.
For example, the lower-left block in \eqref{eq:BA-Jacob-wrt-decoder-coords-as-block-matrix-implicit} is comprised of the contributions along all the paths in \eqref{eq:dependencies-graph-for-BA-IB-variables} connecting $p_i(\yp|\xhatp)$ to $p_{i+1}(\xhat)$.

We now spell out the paths contributing to each block in \eqref{eq:BA-Jacob-wrt-decoder-coords-as-block-matrix-implicit}, with repeated dummy indices understood to be summed over.
Afterward, we shall calculate the contributing paths explicitly, carrying out the summations. The upper-left block of \eqref{eq:BA-Jacob-wrt-decoder-coords-as-block-matrix-implicit} is comprised of
\begin{multline}		\label{eq:BA-Jacob-wrt-decoder-coords-upper-left-block}
	\frac{d\log p_{i+1}(\y|\xhat)}{d \log p_i(\yp|\xhatp)} =
	\frac{\plog p_{i+1}(\y|\xhat)}{\plog p_{i+1}(x_1|\hat{x}_2)} \cdot \left[
		\frac{\plog p_{i+1}(x_1|\hat{x}_2)}{\plog p_{i+1}(\hat{x}_3)}
		\frac{\plog p_{i+1}(\hat{x}_3)}{\plog p_{i+1}(\hat{x}_4|x_5)} +
		\frac{\plog p_{i+1}(x_1|\hat{x}_2)}{\plog p_{i+1}(\hat{x}_4|x_5)}
	\right]	\\ \cdot \left[
		\frac{\plog p_{i+1}(\hat{x}_4|x_5)}{\plog p_i(\yp|\xhatp)} +
		\frac{\plog p_{i+1}(\hat{x}_4|x_5)}{\plog Z_i(x_6, \beta)}
		\frac{\plog Z_i(x_6, \beta)}{\plog p_i(\yp|\xhatp)}
	\right]
\end{multline}
This Equation \eqref{eq:BA-Jacob-wrt-decoder-coords-upper-left-block} encodes the fours paths connecting the vertex $p_i(\yp|\xhatp)$ to $p_{i+1}(\y|\xhat)$ in \eqref{eq:dependencies-graph-for-BA-IB-variables}. 
When accumulating the contributions in \eqref{eq:BA-Jacob-wrt-decoder-coords-upper-left-block}, one must carefully sum only over repeated dummy indices that appear in the given term.  
e.g., the two paths in \eqref{eq:BA-Jacob-wrt-decoder-coords-upper-left-block} which traverse the edge $\tfrac{\plog p_{i+1}(x_1|\hat{x}_2)}{\plog p_{i+1}(\hat{x}_4|x_5)}$ (pointing from $p_{i+1}(\xhat|x)$ to $p_{i+1}(x|\xhat)$) do \textit{not} involve a summation over $\hat{x}_3$. In contrast, the two paths involving $\tfrac{\plog p_{i+1}(x_1|\hat{x}_2)}{\plog p_{i+1}(\hat{x}_3)} \tfrac{\plog p_{i+1}(\hat{x}_3)}{\plog p_{i+1}(\hat{x}_4|x_5)}$ there do entail a summation over $\hat{x}_3$.
This is relevant for the calculations below, as in \eqref{eq:BA-Jacob-wrt-decoder-coords-upper-left-block-intermid-calcs} for example.

Similarly, for the upper-right block of \eqref{eq:BA-Jacob-wrt-decoder-coords-as-block-matrix-implicit},
\begin{multline}		\label{eq:BA-Jacob-wrt-decoder-coords-upper-right-block}
	\frac{d\log p_{i+1}(\y|\xhat)}{d \log p_i(\xhatp)} =
	\frac{\plog p_{i+1}(\y|\xhat)}{\plog p_{i+1}(x_1|\hat{x}_2)} \cdot \left[
	\frac{\plog p_{i+1}(x_1|\hat{x}_2)}{\plog p_{i+1}(\hat{x}_3)}
	\frac{\plog p_{i+1}(\hat{x}_3)}{\plog p_{i+1}(\hat{x}_4|x_5)} +
	\frac{\plog p_{i+1}(x_1|\hat{x}_2)}{\plog p_{i+1}(\hat{x}_4|x_5)}
	\right]	\\ \cdot \left\{
		\frac{\plog p_{i+1}(\hat{x}_4|x_5)}{\plog p_i(\xhatp)} +
		\frac{\plog p_{i+1}(\hat{x}_4|x_5)}{\plog p_i(y_7|\hat{x}_8)} \frac{\plog p_i(y_7|\hat{x}_8)}{\plog p_i(x_9|\hat{x}_{10})} \frac{\plog p_i(x_9|\hat{x}_{10})}{\plog p_i(\xhatp)} \right. \\ \left. +
		\frac{\plog p_{i+1}(\hat{x}_4|x_5)}{\plog Z_i(x_6, \beta)} \left[
			\frac{\plog Z_i(x_6, \beta)}{\plog p_i(\xhatp)} +
			\frac{\plog Z_i(x_6, \beta)}{\plog p_i(y_7|\hat{x}_8)}
			\frac{\plog p_i(y_7|\hat{x}_8)}{\plog p_i(x_9|\hat{x}_{10})} \frac{\plog p_i(x_9|\hat{x}_{10})}{\plog p_i(\xhatp)}
		\right]
	\right\}
\end{multline}
For the lower-left block of \eqref{eq:BA-Jacob-wrt-decoder-coords-as-block-matrix-implicit},
\begin{equation}		\label{eq:BA-Jacob-wrt-decoder-coords-lower-left-block}
	\frac{d\log p_{i+1}(\xhat)}{d \log p_i(\yp|\xhatp)} =
	\frac{\plog p_{i+1}(\xhat)}{\plog p_{i+1}(\hat{x}_1|x_2)} \left[
		\frac{\plog p_{i+1}(\hat{x}_1|x_2)}{\plog p_i(\yp|\xhatp)} +
		\frac{\plog p_{i+1}(\hat{x}_1|x_2)}{\plog Z_i(x_3, \beta)} \frac{\plog Z_i(x_3, \beta)}{\plog p_i(\yp|\xhatp)}
	\right]
\end{equation}
Last, for the lower-right block of \eqref{eq:BA-Jacob-wrt-decoder-coords-as-block-matrix-implicit},
\begin{multline}		\label{eq:BA-Jacob-wrt-decoder-coords-lower-right-block}
	\frac{d\log p_{i+1}(\xhat)}{d \log p_i(\xhatp)} \\ =
	\frac{\plog p_{i+1}(\xhat)}{\plog p_{i+1}(\hat{x}_1|x_2)} \cdot \left\{
		\frac{\plog p_{i+1}(\hat{x}_1|x_2)}{\plog p_i(\xhatp)} +
		\frac{\plog p_{i+1}(\hat{x}_1|x_2)}{\plog p_i(y_3|\hat{x}_4)} 
		\frac{\plog p_i(y_3|\hat{x}_4)}{\plog p_i(x_5|\hat{x}_6)} \frac{\plog p_i(x_5|\hat{x}_6)}{\plog p_i(\xhatp)} \right. \\ \left. +
		\frac{\plog p_{i+1}(\hat{x}_1|x_2)}{\plog Z_i(x_7, \beta)} \left[
			\frac{\plog Z_i(x_7, \beta)}{\plog p_i(\xhatp)} +
			\frac{\plog Z_i(x_7, \beta)}{\plog p_i(y_3|\hat{x}_4)}
			\frac{\plog p_i(y_3|\hat{x}_4)}{\plog p_i(x_5|\hat{x}_6)} 
			\frac{\plog p_i(x_5|\hat{x}_6)}{\plog p_i(\xhatp)}
		\right]
	\right\}
\end{multline}

Next, by using the intermediate results summarized in \eqref{eq:log-variable-dep-w-derivatives-no-normalization} (Section \ref{subsub:BA-IB-jacobian-appendix:differentiating-along-depend-graph}), we calculate each of the four blocks of \eqref{eq:BA-Jacob-wrt-decoder-coords-as-block-matrix-implicit} explicitly. 
For the upper-left block \eqref{eq:BA-Jacob-wrt-decoder-coords-upper-left-block} we have
\begin{multline}		\label{eq:BA-Jacob-wrt-decoder-coords-upper-left-block-explicit}
	\frac{d\log p_{i+1}(\y|\xhat)}{d \log p_i(\yp|\xhatp)} =
	\frac{p(\y|x_1) p_{i+1}(x_1|\xhat)}{p_{i+1}(\y|\xhat)} \; \delta_{\xhat, \hat{x}_2} \cdot
	\left[
		\left( - \delta_{\hat{x}_2, \hat{x}_3} \right)
		p_{i+1}(x_5|\hat{x}_3) \delta_{\hat{x}_3, \hat{x}_4} +
		\delta_{x_1, x_5} \delta_{\hat{x}_2, \hat{x}_4}
	\right]	\\ \cdot \left[
		\beta \delta_{\hat{x}_4, \xhatp} p(\yp|x_5) +
		( -\delta_{x_5, x_6} )
		\beta p_{i+1}(\xhatp|x_6) p(\yp|x_6)
	\right]
\end{multline}
For clarity, we elaborate on each step needed to complete the calculation of the upper-left block \eqref{eq:BA-Jacob-wrt-decoder-coords-upper-left-block} while providing only the main steps for the other blocks.
To carry out the summations over the dummy variables $x_1, \hat{x}_2, \hat{x}_3, \hat{x}_4, x_5$ and $x_6$ in \eqref{eq:BA-Jacob-wrt-decoder-coords-upper-left-block-explicit}, we carefully sum only over repeated dummy indices, as explained after \eqref{eq:BA-Jacob-wrt-decoder-coords-upper-left-block}. 
We carry out one summation at a time, starting with $\hat{x}_2$. This yields,
\begin{multline}		\label{eq:BA-Jacob-wrt-decoder-coords-upper-left-block-intermid-calcs}
	\beta \; \frac{p(\y|x_1) p_{i+1}(x_1|\xhat)}{p_{i+1}(\y|\xhat)} \cdot
	\left[
		- \delta_{\xhat, \hat{x}_3} p_{i+1}(x_5|\hat{x}_3) \delta_{\hat{x}_3, \hat{x}_4} +
		\delta_{x_1, x_5} \delta_{\xhat, \hat{x}_4}
	\right]	\\ \cdot \left[
		\delta_{\hat{x}_4, \xhatp} p(\yp|x_5)
		-\delta_{x_5, x_6} \; p_{i+1}(\xhatp|x_6) p(\yp|x_6)
	\right]	\\ =
	\beta \cdot \left[
		- \delta_{\xhat, \hat{x}_3} p_{i+1}(x_5|\hat{x}_3) \delta_{\hat{x}_3, \hat{x}_4} +
		\delta_{\xhat, \hat{x}_4} \frac{p(\y|x_5) p_{i+1}(x_5|\xhat)}{p_{i+1}(\y|\xhat)}
	\right]	\\ \cdot \left[
		\delta_{\hat{x}_4, \xhatp} p(\yp|x_5)
		-\delta_{x_5, x_6} \; p_{i+1}(\xhatp|x_6) p(\yp|x_6)
	\right]	\\ =
	\beta \cdot p_{i+1}(x_5|\xhat) \; \left[
		- \delta_{\xhat, \hat{x}_4} +
		\delta_{\xhat, \hat{x}_4} \frac{p(\y|x_5) }{p_{i+1}(\y|\xhat)}
	\right]	\cdot \left[
		\delta_{\hat{x}_4, \xhatp} p(\yp|x_5)
		-\delta_{x_5, x_6} \; p_{i+1}(\xhatp|x_6) p(\yp|x_6)
	\right]	\\ =
	- \beta \cdot p_{i+1}(x_5|\xhat)\; \left[
		1 -	\frac{p(\y|x_5) }{p_{i+1}(\y|\xhat)}
	\right]	\cdot \left[
		\delta_{\xhat, \xhatp} p(\yp|x_5)
		-\delta_{x_5, x_6} \; p_{i+1}(\xhatp|x_6) p(\yp|x_6)
	\right]	\\ =
	- \beta \cdot p(\yp|x_5) p_{i+1}(x_5|\xhat) \; \left[
		1 -  \frac{p(\y|x_5) }{p_{i+1}(\y|\xhat)}
	\right]	\cdot \left[
		\delta_{\xhat, \xhatp} 
		- p_{i+1}(\xhatp|x_5) 
	\right]	\\ =
	-\beta \sum_{\x} p(\yp|x) p_{i+1}(x|\xhat) \cdot \left[
		1 -
		\frac{p(\y|x) }{p_{i+1}(\y|\xhat)}
	\right]	\cdot \Big[
		\delta_{\xhat, \xhatp} 
		- p_{i+1}(\xhatp|x) 
	\Big]	
\end{multline}
In the first equality above we carried out the summation over $x_1$, in the second over $\hat{x}_3$, in the third over $\hat{x}_4$, in the fourth over $x_6$, and in the fifth over $x_5$.

To simplify the notation, we replace summations over $x$ with definitions as in Equation \eqref{eq:B_defs-for-BA-Jacob-wrt-decoder-coords} (Section \ref{sec:IB-ODE}),
\begin{align}			\label{eq:B_C_defs-for-BA-Jacob-wrt-decoder-coords-appendix}
\begin{split}
	C(\xhat, \xhatp; i)_{\y, \yp} := 	&\sum_{\x} p(\y|x) p(\yp|x) p_{i}(\xhatp|x) p_{i}(x|\xhat)	\\
	B(\xhat, \xhatp; i)_{\y} := 		&\sum_{\x} p(\y|x) p_{i}(\xhatp|x) p_{i}(x|\xhat) 
	= \sum_{\yp} C(\xhat, \xhatp; i)_{\y, \yp} \\
	A(\xhat, \xhatp; i) := 				&\sum_{\x} p_{i}(\xhatp|x) p_{i}(x|\xhat) = 
	\sum_{\y} B(\xhat, \xhatp; i)_{\y}	\\
	D(\xhat; i)_{\y, \yp} := 			&\frac{1}{p_i(\y|\xhat) } \sum_{\x} p(\y|x) p(\yp|x) p_{i}(x|\xhat) =
	\frac{1}{p_i(\y|\xhat) } \sum_{\xhatp} C(\xhat, \xhatp; i)_{\y, \yp}
\end{split}
\end{align}
and note that
\begin{equation}			\label{eq:B-tensor-sums-up-to-decoder}
	\sum_{\yp, \xhatp} C(\xhat, \xhatp; i)_{\y, \yp} = p_i(\y|\xhat) \;. 
\end{equation}
The quantities $A, B$, and $C$ involve two IB clusters. They are a scalar, a vector, and a matrix, respectively. 
The definition of $D$ involves only one IB cluster and coincides with $C_Y$ in \cite[3.2 in Part III]{zaslavsky2019thesis}. 
The relations to the right of \eqref{eq:B_C_defs-for-BA-Jacob-wrt-decoder-coords-appendix} show that each can be expressed in terms of $C(\xhat, \xhatp; i)_{\y, \yp}$. 
Equation \eqref{eq:B-tensor-sums-up-to-decoder} shows that the latter can be rewritten as a right-stochastic matrix, up to trivial manipulations. 
As seen below, the Jacobian matrix \eqref{eq:BA-Jacob-wrt-decoder-coords-as-block-matrix-implicit} of a BA-IB step in log-decoder coordinates can be computed in terms of the quantities in \eqref{eq:B_C_defs-for-BA-Jacob-wrt-decoder-coords-appendix}.

With the latter definitions \eqref{eq:B_C_defs-for-BA-Jacob-wrt-decoder-coords-appendix}, \eqref{eq:BA-Jacob-wrt-decoder-coords-upper-left-block-intermid-calcs} can be rewritten as,
\begin{multline}				\label{eq:BA-Jacob-wrt-decoder-coords-upper-left-block-simplified}
	\frac{d\log p_{i+1}(\y|\xhat)}{d \log p_i(\yp|\xhatp)} \overset{\eqref{eq:BA-Jacob-wrt-decoder-coords-upper-left-block-intermid-calcs}}{=}
	-\beta \sum_{\x} \Big[
		\delta_{\xhat, \xhatp} \; p(\yp|x) p_{i+1}(x|\xhat) 
		- p(\yp|x) p_{i+1}(\xhatp|x) p_{i+1}(x|\xhat) 
	\Big. \\ \Big.
		- \delta_{\xhat, \xhatp} \; \tfrac{1}{p_{i+1}(\y|\xhat)} 
		p(\y|x) p(\yp|x) p_{i+1}(x|\xhat) +
		\tfrac{1}{p_{i+1}(\y|\xhat)} 
		p(\yp|x) p(\y|x) p_{i+1}(\xhatp|x) p_{i+1}(x|\xhat) 
	\Big] \\ \overset{\eqref{eq:B_C_defs-for-BA-Jacob-wrt-decoder-coords-appendix}}{=}
	-\beta \Big[
		\delta_{\xhat, \xhatp} \; p_{i+1}(\yp|\xhat) 
		- B(\xhat, \xhatp; i+1)_{\yp}
	\Big. \\ \Big. 
		- \delta_{\xhat, \xhatp} \; D(\xhat; i+1)_{\y, \yp}
		+ \tfrac{1}{p_{i+1}(\y|\xhat)} C(\xhat, \xhatp; i+1)_{\y, \yp}
	\Big] \\ 
	=
	\beta \sum_{\xhatpp, \ypp} 
		\left( \delta_{\xhatpp, \xhatp} - \delta_{\xhat, \xhatp} \right)
		\left( 1 - \tfrac{\delta_{\ypp, \y}}{p_{i+1}(\y|\xhat) } \right)
	C(\xhat, \xhatpp; i+1)_{\yp, \ypp} 
\end{multline}
The third equality above follows from \eqref{eq:B-tensor-sums-up-to-decoder}, the identities to the right of \eqref{eq:B_C_defs-for-BA-Jacob-wrt-decoder-coords-appendix} and simple algebra.

For the upper-right block \eqref{eq:BA-Jacob-wrt-decoder-coords-upper-right-block},
\begin{multline}		\label{eq:BA-Jacob-wrt-decoder-coords-upper-right-block-explicit}
	\frac{d\log p_{i+1}(\y|\xhat)}{d \log p_i(\xhatp)} =
	\frac{p(\y|x_1) p_{i+1}(x_1|\hat{x}_2) }{p_{i+1}(\y|\hat{x}_2)} \delta_{\hat{x}_2, \xhat}
	\cdot \Big[
		\left( - \delta_{\hat{x}_2, \hat{x}_3} \right)
		p_{i+1}(x_5|\hat{x}_3) \delta_{\hat{x}_3, \hat{x}_4} +
		\delta_{x_1, x_5} \delta_{\hat{x}_2, \hat{x}_4}
	\Big] \\ \cdot \left\{
		\delta_{\hat{x}_4, \xhatp} +
		\beta \delta_{\hat{x}_4, \hat{x}_8} p(y_7|x_5)
		\frac{p(y_7|x_9) p_i(x_9|\hat{x}_8) }{p_i(y_7|\hat{x}_8)} \delta_{\hat{x}_8, \hat{x}_{10}}
		\left( -\delta_{\hat{x}_{10}, \xhatp} \right)
		\right. \\ \left. +
		( -\delta_{x_5, x_6} ) \left[
			p_{i+1}(\xhatp|x_6) +
			\beta p_{i+1}(\hat{x}_8|x_6) p(y_7|x_6)
			\frac{p(y_7|x_9) p_i(x_9|\hat{x}_8) }{p_i(y_7|\hat{x}_8)} \delta_{\hat{x}_8, \hat{x}_{10}}
			\left( -\delta_{\hat{x}_{10}, \xhatp} \right)
		\right]
	\right\}
\end{multline}
In a manner similar to \eqref{eq:BA-Jacob-wrt-decoder-coords-upper-left-block-intermid-calcs}, summing over all ten dummy variables other than $x_1$ and $x_5$ yields,
\begin{multline}		\label{eq:BA-Jacob-wrt-decoder-coords-upper-right-block-intermid}
	\left( 1 - \beta \right) \cdot
	\frac{p(\y|x_1) p_{i+1}(x_1|\xhat) }{p_{i+1}(\y|\xhat)} \cdot 
	\Big( \delta_{x_1, x_5} - p_{i+1}(x_5|\xhat) \Big) \cdot 
	\Big( \delta_{\xhat, \xhatp} - p_{i+1}(\xhatp|x_5) \Big) \\ =
		\left( 1 - \beta \right) \cdot \Big(
		\tfrac{-1}{p_{i+1}(\y|\xhat)} \sum_{\x} p(\y|x) p_{i+1}(\xhatp|x) p_{i+1}(x|\xhat) 
		+ \sum_{\x} p_{i+1}(\xhatp|x) p_{i+1}(x|\xhat)
	\Big) \\ =
	\left( 1 - \beta \right) \cdot 
	\sum_{\x} \left( 1 - \tfrac{p(\y|x) }{p_{i+1}(\y|\xhat)} \right) p_{i+1}(\xhatp|x) p_{i+1}(x|\xhat)
\end{multline}
The two terms involving $\delta_{\xhat, \xhatp}$ cancel out when summing over $x_1$ and $x_5$ at the first equality. Rewriting with the definitions \eqref{eq:B_C_defs-for-BA-Jacob-wrt-decoder-coords-appendix} of $A$ and $B$ further simplifies \eqref{eq:BA-Jacob-wrt-decoder-coords-upper-right-block-intermid} to,
\begin{multline}			\label{eq:BA-Jacob-wrt-decoder-coords-upper-right-block-simplified}
	\left( 1 - \beta \right) \cdot \Big[ 
		A(\xhat, \xhatp; i+1) - 
		\tfrac{1}{p_{i+1}(\y|\xhat)} B(\xhat, \xhatp; i+1)_{\y}
	\Big] \\ =
	\left( 1 - \beta \right) \cdot \sum_{\ypp} \Big[ 
		1 - \tfrac{\delta_{\ypp, \y}}{p_{i+1}(\y|\xhat)} 
	\Big] B(\xhat, \xhatp; i+1)_{\ypp} 
\end{multline}

For the lower-left block \eqref{eq:BA-Jacob-wrt-decoder-coords-lower-left-block},
\begin{equation}		\label{eq:BA-Jacob-wrt-decoder-coords-lower-left-block-explicit}
	\frac{d\log p_{i+1}(\xhat)}{d \log p_i(\yp|\xhatp)} =
	p_{i+1}(x_2|\xhat) \delta_{\xhat, \hat{x}_1} \left[
		\beta \delta_{\hat{x}_1, \xhatp} p(\yp|x_2) +
		\left( -\delta_{x_2, x_3} \right)
		\beta p_{i+1}(\xhatp|x_3) p(\yp|x_3)
	\right]
\end{equation}
Summing over dummy variables and simplifying yields,
\begin{equation}		\label{eq:BA-Jacob-wrt-decoder-coords-lower-left-block-intermid}
	\beta \cdot \Big[
		\delta_{\xhat, \xhatp} \; p_{i+1}(\yp|\xhat) -
		\sum_{\x} p(\yp|x) p_{i+1}(\xhatp|x) p_{i+1}(x|\xhat) 
	\Big]
\end{equation}
In terms of definitions \eqref{eq:B_C_defs-for-BA-Jacob-wrt-decoder-coords-appendix}, this simplifies to
\begin{equation}		\label{eq:BA-Jacob-wrt-decoder-coords-lower-left-block-simplified}
	\beta \cdot \Big[
		\delta_{\xhat, \xhatp} \; p_{i+1}(\yp|\xhat) -
		B(\xhat, \xhatp; i+1)_{\yp}
	\Big]
\end{equation}

Finally, for the lower-right block \eqref{eq:BA-Jacob-wrt-decoder-coords-lower-right-block},
\begin{multline}		\label{eq:BA-Jacob-wrt-decoder-coords-lower-right-block-explicit}
	\frac{d\log p_{i+1}(\xhat)}{d \log p_i(\xhatp)} \\ =
	p_{i+1}(x_2|\xhat) \delta_{\xhat, \hat{x}_1} \cdot \left\{
		\delta_{\hat{x}_1, \xhatp} +
		\beta \delta_{\hat{x}_1, \hat{x}_4} p(y_3|x_2)
		\frac{p(y_3|x_5) p_i(x_5|\hat{x}_4) }{p_i(y_3|\hat{x}_4)} \delta_{\hat{x}_4, \hat{x}_6}
		\left( -\delta_{\hat{x}_6, \xhatp} \right)
		\right. \\ \left. +
		\left( -\delta_{x_2, x_7} \right) \left[
			p_{i+1}(\xhatp|x_7) +
			\beta p_{i+1}(\hat{x}_4|x_7) p(y_3|x_7)
			\frac{p(y_3|x_5) p_i(x_5|\hat{x}_4) }{p_i(y_3|\hat{x}_4)} \delta_{\hat{x}_4, \hat{x}_6}
			\left( -\delta_{\hat{x}_6, \xhatp} \right)
		\right]
	\right\}
\end{multline}
This simplifies to,
\begin{equation}		\label{eq:BA-Jacob-wrt-decoder-coords-lower-right-block-intermid}
	\left(1 - \beta \right)
	\left(\delta_{\xhat, \xhatp} - \sum_{\x} p_{i+1}(\xhatp|x) p_{i+1}(x|\xhat) \right)
\end{equation}
With definitions \eqref{eq:B_C_defs-for-BA-Jacob-wrt-decoder-coords-appendix}, this can be written as
\begin{equation}		\label{eq:BA-Jacob-wrt-decoder-coords-lower-right-block-simplified}
	\left(1 - \beta \right)
	\Big( \delta_{\xhat, \xhatp} - A(\xhat, \xhatp; i+1) \Big)
\end{equation}

\medskip
Collecting the results from \eqref{eq:BA-Jacob-wrt-decoder-coords-upper-left-block-simplified}, \eqref{eq:BA-Jacob-wrt-decoder-coords-upper-right-block-simplified}, \eqref{eq:BA-Jacob-wrt-decoder-coords-lower-left-block-simplified} and \eqref{eq:BA-Jacob-wrt-decoder-coords-lower-right-block-simplified} back into \eqref{eq:BA-Jacob-wrt-decoder-coords-as-block-matrix-implicit}, BA's Jacobian in these coordinates is
\begin{equation}		\label{eq:BA-Jacob-wrt-decoder-coords-as-block-matrix-explicit}
	\left(
	\begin{array}{c|c}
		\substack{
			\beta \sum_{\xhatpp, \ypp} 
			\left( \delta_{\xhatpp, \xhatp} - \delta_{\xhat, \xhatp} \right)
		\Big. \\ \Big.
			\cdot \left(1 - \tfrac{\delta_{\ypp, \y} }{p_{i+1}(\y|\xhat)} \right)
			C(\xhat, \xhatpp; i+1)_{\yp, \ypp} 
		} &
		\left( 1 - \beta \right) \cdot \sum_{\ypp} \Big[ 
		1 - \tfrac{\delta_{\ypp, \y}}{p_{i+1}(\y|\xhat)} 
		\Big] B(\xhat, \xhatp; i+1)_{\ypp}	\\	\\
		\hline	\\
		\beta \cdot \Big[
		\delta_{\xhat, \xhatp} \; p_{i+1}(\yp|\xhat) -
		B(\xhat, \xhatp; i+1)_{\yp}
		\Big] &
		\left(1 - \beta \right)
		\Big( \delta_{\xhat, \xhatp} - A(\xhat, \xhatp; i+1) \Big)
	\end{array}
	\right)
\end{equation}
When evaluated at an IB root, this is Equation \eqref{eq:BA-Jacob-wrt-decoder-coords-at-main-text} of Section \ref{sec:IB-ODE}. 
Equivalently, it can be written in the following form, which is more convenient for implementation
\begin{equation}		\label{eq:BA-Jacob-wrt-decoder-coords-as-block-matrix-explicit-implementation-friendly}
	\left(
	\begin{array}{c|c}
		\substack{
			\beta \Big[
				B(\xhat, \xhatp; i+1)_{\yp}
				- \delta_{\xhat, \xhatp} \; p_{i+1}(\yp|\xhat) 
			\Big. \\ \Big. 
				+ \delta_{\xhat, \xhatp} \; D(\xhat; i+1)_{\y, \yp}
				- \tfrac{1}{p_{i+1}(\y|\xhat)} C(\xhat, \xhatp; i+1)_{\y, \yp}
			\Big] 
		} &
		\left( 1 - \beta \right) \cdot \Big[ 
		A(\xhat, \xhatp; i+1) - 
		\tfrac{1}{p_{i+1}(\y|\xhat)} B(\xhat, \xhatp; i+1)_{\y}
		\Big]	\\	\\
		\hline	\\
		\beta \cdot \Big[
		\delta_{\xhat, \xhatp} \; p_{i+1}(\yp|\xhat) -
		B(\xhat, \xhatp; i+1)_{\yp}
		\Big] &
		\left(1 - \beta \right)
		\Big( \delta_{\xhat, \xhatp} - A(\xhat, \xhatp; i+1) \Big)
	\end{array}
	\right)
\end{equation}

\medskip 
\subsection{The partial $\beta$-derivatives of BA-IB in log-decoder coordinates}

\label{sub:BA-IB-beta-deriv-appendix-wrt-decoder-coords}

We calculate the vector $D_\beta BA_\beta$ of partial derivatives of the $BA_\beta$ operator in log-decoder coordinates (of Section \ref{sec:coords-exchange-for-the-IB}), which appears at the right-hand side of the IB-ODE \eqref{eq:IB-beta-ODE-in-decoder-coords} (in Section \ref{sec:IB-ODE}). 

\medskip 
To that end, we differentiate backward along the dependencies graph \eqref{eq:dependencies-graph-for-BA-IB-variables} (in Appendix \ref{sub:BA-IB-jacobian-appendix:calculation-setup-and-goals}) with respect to $\beta$, starting at the output coordinates $p_{i+1}(\y|\xhat)$ and $p_{i+1}(\xhat)$ of $BA_\beta$. 
After differentiating, we mind our independent variables. 
Here, these are $\beta$, and the input coordinates $p_{i}(\y|\xhat)$ and $p_{i}(\xhat)$ of $BA_\beta$. 
The differentiation of these with respect to $\beta$ vanishes (except for $\tfrac{d\beta}{d\beta} = 1$), as they are independent. 
Finally, we compose the differentiations to obtain the effect $D_\beta BA_\beta$ of changing $\beta$ on BA's output. 
We note that, in principle, one can differentiate the explicit formulae \eqref{eq:BA-operator-def-in-decoder-coords-appendix} of $BA_\beta$ in decoder coordinates (Appendix \ref{sec:BA-IB-in-decoder-coords-appendix}) with respect to $\beta$. 
However, we find that to be cumbersome and far more error-prone than our approach, and so proceed in the spirit of the previous Appendix \ref{sub:BA-IB-jacobian-appendix:decoder-deriv-matrix}. 

We start by differentiating each of the equations defining the Blahut-Arimoto Algorithm \ref{algo:BA-IB} with respect to $\beta$, as if all its variables are dependent.
For the cluster marginal Equation \algref{algo:BA-IB}{eq:IB-BA-cluster_marginal},
\begin{equation}			\label{eq:partial-beta-deriv-calcs-marginal-eq}
	\dbeta{} p_{i}(\xhat) = \sum_{\x} p(\x) \dbeta{} p_i(\xhat|x)
\end{equation}
For the inverse-encoder Equation \algref{algo:BA-IB}{eq:IB-BA-bayes-for-computing-inverse-enc},
\begin{equation}			\label{eq:partial-beta-deriv-calcs-inv-enc-eq}
	\dbeta{} p_i(\x|\xhat) = 
	\frac{p(\x)}{p_i(\xhat)} \dbeta{p_i(\xhat|\x)} -
	\frac{p_i(\xhat|\x)p(\x)}{p_i(\xhat)^2} \dbeta{p_i(\xhat)} 
\end{equation}
For the decoder Equation \algref{algo:BA-IB}{eq:IB-BA-decoder-eq}
\begin{equation}			\label{eq:partial-beta-deriv-calcs-decoder-eq}
	\dbeta{} p_{i}(\y|\xhat) =
	\sum_{\x} p(\y|x) \dbeta{} p_{i}(x|\xhat)
\end{equation}
For the KL-divergence,
\begin{equation}			\label{eq:partial-beta-deriv-calcs-KL-divergence}
	\dbeta{} D_{KL}\big[ p(\y|\x) || p_{i}(\y|\xhat) \big] =
	\dbeta{} \sum_{\ypp} p(\ypp|\x) \log \frac{p(\ypp|\x) }{p_{i}(\ypp|\xhat) } =
	- \sum_{\ypp} \frac{p(\ypp|\x) }{p_{i}(\ypp|\xhat) } \dbeta{} p_{i}(\ypp|\xhat) 
\end{equation}
And its exponent,
\begin{multline}			\label{eq:partial-beta-deriv-calcs-exp-beta-divergence}
	\dbeta{} \exp{\big\{ -\beta \; D_{\x, \xhat} \big\}} =
	- \left( D_{\x, \xhat} + \beta \dbeta{D_{\x, \xhat}} \right) \cdot \exp{\big\{ -\beta \; D_{\x, \xhat} \big\}} 
	\\ \overset{\eqref{eq:partial-beta-deriv-calcs-KL-divergence}}{=}
	- \left( D_{\x, \xhat} - \beta \sum_{\ypp} \frac{p(\ypp|\x) }{p_{i}(\ypp|\xhat) } \dbeta{} p_{i}(\ypp|\xhat)  \right) \cdot \exp{\big\{ -\beta \; D_{\x, \xhat} \big\}} 
\end{multline}
Where we have written $D_{\x, \xhat} := D_{KL}\big[ p(\y|\x) || p_{i}(\y|\xhat) \big]$ for short.
Thus, for the partition function's Equation \algref{algo:BA-IB}{eq:IB-BA-partition-func} we have,
\begin{multline}			\label{eq:partial-beta-deriv-calcs-partition-func-eq}
	\dbeta{} Z_{i}(\x, \beta) =
	\dbeta{} \sum_{\xhatpp} p_{i}(\xhatpp) \exp{\big\{ -\beta \; D_{\x, \xhatpp} \big\}}
	\\ \overset{\eqref{eq:partial-beta-deriv-calcs-exp-beta-divergence}}{=}
	\sum_{\xhatpp} \left( \dbeta{p_{i}(\xhatpp) } - p_{i}(\xhatpp) D_{\x, \xhatpp} + \beta \; p_{i}(\xhatpp) \sum_{\ypp} \frac{p(\ypp|\x) }{p_{i}(\ypp|\xhatpp) } \dbeta{} p_{i}(\ypp|\xhatpp) \right) 
	\cdot \exp{\big\{ -\beta \; D_{\x, \xhatpp} \big\}} 
\end{multline}
Finally, for the encoder Equation \algref{algo:BA-IB}{eq:IB-BA-new-direct-enc} we have
\begin{multline}			\label{eq:partial-beta-deriv-calcs-encoder-eq}
	\dbeta{} p_{i+1}(\xhat|\x) =
	\dbeta{} \left( \frac{p_{i}(\xhat) e^{-\beta\;D_{\x, \xhat} }}{Z_{i}(\x,\beta)} \right)
	\\ \overset{\eqref{eq:partial-beta-deriv-calcs-exp-beta-divergence}}{=}
	\frac{p_{i}(\xhat) e^{-\beta\;D_{\x, \xhat} }}{Z_{i}(\x,\beta)} \left[ \frac{1}{p_{i}(\xhat) } \dbeta{p_{i}(\xhat)}
	- \left( D_{\x, \xhat} - \beta \sum_{\ypp} \frac{p(\ypp|\x) }{p_{i}(\ypp|\xhat) } \dbeta{} p_{i}(\ypp|\xhat)  \right) - \frac{1}{Z_{i}(\x,\beta)} \dbeta{Z_{i}(\x,\beta)} \right]
	\\ \overset{\algref{algo:BA-IB}{eq:IB-BA-new-direct-enc} }{=}
	p_{i+1}(\xhat|\x) \cdot \left[ \frac{1}{p_{i}(\xhat) } \dbeta{p_{i}(\xhat)}
	- \left( D_{\x, \xhat} - \beta \sum_{\ypp} \frac{p(\ypp|\x) }{p_{i}(\ypp|\xhat) } \dbeta{} p_{i}(\ypp|\xhat)  \right) - \frac{1}{Z_{i}(\x,\beta)} \dbeta{Z_{i}(\x,\beta)} \right]
\end{multline}

\medskip
Next, picking $\beta$ and the inputs $\log p_i(\y|\xhat)$ and $\log p_i(\xhat)$ of $BA_\beta$ as our independent variables, we compose the differentiations above to obtain $D_\beta BA_\beta$ at an output coordinate.
That is, we seek $\dbeta{} \log p_{i+1}(\y|\xhat)$ and $\dbeta{} \log p_{i+1}(\xhat)$.
By the chain rule, we trace the dependencies graph \eqref{eq:dependencies-graph-for-BA-IB-variables} (Section \ref{sub:BA-IB-jacobian-appendix:calculation-setup-and-goals}) backwards, from the output nodes $p_{i+1}(\y|\xhat)$ and $p_{i+1}(\xhat)$ back to the input nodes. 
The derivatives of the latter with respect to $\beta$ vanish, as these are our independent variables.

Starting with a decoder output coordinate,
\begin{multline}			\label{eq:beta-deriv-of-log-output-decoder-composition-step}
	\dbeta{} \log p_{i+1}(\y|\xhat) =
	\frac{1}{p_{i+1}(\y|\xhat) } \dbeta{} p_{i+1}(\y|\xhat) \overset{\eqref{eq:partial-beta-deriv-calcs-decoder-eq}}{=}
	\frac{1}{p_{i+1}(\y|\xhat) } \sum_{\x} p(\y|x) \dbeta{} p_{i+1}(x|\xhat) 
	\\ \overset{\eqref{eq:partial-beta-deriv-calcs-inv-enc-eq}}{=}
	\frac{1}{p_{i+1}(\y|\xhat) } \sum_{\x} p(\y|\x) \left[ \frac{p(\x)}{p_{i+1}(\xhat)} \dbeta{p_{i+1}(\xhat|\x)} -
	\frac{p_{i+1}(\xhat|\x)p(\x)}{p_{i+1}(\xhat)^2} \dbeta{p_{i+1}(\xhat)} \right] 
	\\ \overset{\eqref{eq:partial-beta-deriv-calcs-marginal-eq}}{=}
	\sum_{\x} \frac{p(\y|\x) p(\x)}{p_{i+1}(\y|\xhat) p_{i+1}(\xhat)} \left[ \dbeta{p_{i+1}(\xhat|\x)} -
	\frac{p_{i+1}(\xhat|\x) }{p_{i+1}(\xhat)} \sum_{\xp} p(\xp) \dbeta{p_{i+1}(\xhat|\xp) } \right] 
	\\ \overset{\eqref{eq:partial-beta-deriv-calcs-encoder-eq}}{=}
	\sum_{\x} \frac{p(\y|\x) p(\x)}{p_{i+1}(\y|\xhat) p_{i+1}(\xhat)} \left\{ 
		p_{i+1}(\xhat|\x) \cdot \left[ \frac{1}{p_{i}(\xhat) } \dbeta{p_{i}(\xhat)}
		- \left( D_{\x, \xhat} - \beta \sum_{\ypp} \frac{p(\ypp|\x) }{p_{i}(\ypp|\xhat) } \dbeta{p_{i}(\ypp|\xhat) } \right) - \frac{1}{Z_{i}(\x,\beta)} \dbeta{Z_{i}(\x,\beta)} \right] \right. \\ \left.
		- \frac{p_{i+1}(\xhat|\x) }{p_{i+1}(\xhat)} \sum_{\xp} p(\xp) \left(
		p_{i+1}(\xhat|\xp) \cdot \left[ \frac{1}{p_{i}(\xhat) } \dbeta{p_{i}(\xhat)}
		- \left( D_{\xp, \xhat} - \beta \sum_{\ypp} \frac{p(\ypp|\xp) }{p_{i}(\ypp|\xhat) } \dbeta{p_{i}(\ypp|\xhat) } \right) - \frac{1}{Z_{i}(\xp,\beta)} \dbeta{Z_{i}(\xp,\beta)} \right]\right)
	\right\}
\end{multline}
Since $p_{i}(\y|\xhat)$ and $p_{i}(\xhat)$ are independent input variables, their derivatives with respect to the independent variable $\beta$ vanish, yielding 
\begin{multline}			\label{eq:beta-deriv-of-log-output-decoder-composition-step-B}
	- \sum_{\x} \frac{p(\y|\x) p(\x)}{p_{i+1}(\y|\xhat) p_{i+1}(\xhat)} \left\{ 
		p_{i+1}(\xhat|\x) \cdot \left[ D_{\x, \xhat} + \frac{1}{Z_{i}(\x,\beta)} \dbeta{Z_{i}(\x,\beta)} \right] 
		\right. \\ \left.
		- \frac{p_{i+1}(\xhat|\x) }{p_{i+1}(\xhat)} \sum_{\xp} p(\xp) 
		p_{i+1}(\xhat|\xp) \cdot \left[ D_{\xp, \xhat} + \frac{1}{Z_{i}(\xp,\beta)} \dbeta{Z_{i}(\xp,\beta)} \right]
	\right\}
\end{multline}
To complete the calculation at \eqref{eq:beta-deriv-of-log-output-decoder-composition-step}, note that the same argument can be used for two of the three summands in \eqref{eq:partial-beta-deriv-calcs-partition-func-eq}, reducing it to
\begin{equation}			\label{eq:partial-beta-deriv-calcs-partition-func-reduced}
	\dbeta{Z_{i}(\x, \beta)} =
	- \sum_{\xhatpp} p_{i}(\xhatpp) D_{\x, \xhatpp} e^{ -\beta \; D_{\x, \xhatpp} } 
\end{equation}
since $p_{i}(\y|\xhat)$ and $p_{i}(\xhat)$ are considered as independent variables. Therefore,
\begin{multline}
	\dbeta{} \log p_{i+1}(\y|\xhat) \underset{\eqref{eq:partial-beta-deriv-calcs-partition-func-reduced}}{\overset{\eqref{eq:beta-deriv-of-log-output-decoder-composition-step-B}}{=}}
	- \sum_{\x} \frac{p(\y|\x) p(\x)}{p_{i+1}(\y|\xhat) p_{i+1}(\xhat)} \left\{ 
		p_{i+1}(\xhat|\x) \cdot \left[ D_{\x, \xhat} - \sum_{\xhatpp} \left(\frac{p_{i}(\xhatpp) }{Z_{i}(\x,\beta)} e^{ -\beta \; D_{\x, \xhatpp} } \right) D_{\x, \xhatpp} \right] 
	\right. \\ \left.
		- \frac{p_{i+1}(\xhat|\x) }{p_{i+1}(\xhat)} \sum_{\xp} p(\xp) 
		p_{i+1}(\xhat|\xp) \cdot \left[ D_{\xp, \xhat} - \sum_{\xhatpp} \left(\frac{p_{i}(\xhatpp) }{Z_{i}(\xp,\beta)} e^{ -\beta \; D_{\xp, \xhatpp} } \right) D_{\xp, \xhatpp} \right]
	\right\} \\ \overset{\algref{algo:BA-IB}{eq:IB-BA-new-direct-enc}}{=}
	- \sum_{\x} \frac{p(\y|\x) p(\x)}{p_{i+1}(\y|\xhat) p_{i+1}(\xhat)} \left\{ 
		p_{i+1}(\xhat|\x) \cdot \left[ D_{\x, \xhat} - \sum_{\xhatpp} p_{i+1}(\xhatpp|\x) D_{\x, \xhatpp} \right] 
	\right. \\ \left.
		- \frac{p_{i+1}(\xhat|\x) }{p_{i+1}(\xhat)} \sum_{\xp} p(\xp) 
		p_{i+1}(\xhat|\xp) \cdot \left[ D_{\xp, \xhat} - \sum_{\xhatpp} p_{i+1}(\xhatpp|\xp) D_{\xp, \xhatpp} \right]
	\right\} 
	\\ \underset{\algref{algo:BA-IB}{eq:IB-BA-decoder-eq}}{\overset{\algref{algo:BA-IB}{eq:IB-BA-bayes-for-computing-inverse-enc}}{=}}
	\sum_{\x} p_{i+1}(\x|\xhat) D_{\x, \xhat} 
	- \sum_{\x} \frac{p(\y|\x)}{p_{i+1}(\y|\xhat) } p_{i+1}(\x|\xhat) D_{\x, \xhat} \\
	+ \sum_{\x, \xhatpp} \frac{p(\y|\x)}{p_{i+1}(\y|\xhat) } p_{i+1}(\xhatpp|\x) p_{i+1}(\x|\xhat) D_{\x, \xhatpp} 
	- \sum_{\x, \xhatpp} p_{i+1}(\xhatpp|\x) p_{i+1}(\x|\xhat) D_{\x, \xhatpp} \\ =
	\sum_{\x} \left[ 1 - \frac{p(\y|\x)}{p_{i+1}(\y|\xhat) } \right] p_{i+1}(\x|\xhat) D_{\x, \xhat} 
	- \sum_{\x, \xhatpp} \left[ 1 - \frac{p(\y|\x)}{p_{i+1}(\y|\xhat) } \right] p_{i+1}(\xhatpp|\x) p_{i+1}(\x|\xhat) D_{\x, \xhatpp} 
\end{multline}
At the second equality to the bottom we started with the third summand, then with the first, and only then with the third and fourth summands.
And so,
\begin{equation}			\label{eq:partial-beta-deriv-calcs-beta-deriv-of-output-decoder}
	\dbeta{} \log p_{i+1}(\y|\xhat) =
	\sum_{\x, \xhatpp} \left[ 1 - \frac{p(\y|\x)}{p_{i+1}(\y|\xhat) } \right] \cdot \Big[ \delta_{\xhat, \xhatpp} - p_{i+1}(\xhatpp|\x) \Big] \cdot p_{i+1}(\x|\xhat) D_{\x, \xhatpp} 
\end{equation}

Next, consider a cluster marginal output coordinate,
\begin{multline}			\label{eq:partial-beta-deriv-calcs-beta-deriv-of-output-margianl}
	\dbeta{} \log p_{i+1}(\xhat) = 
	\frac{1}{p_{i+1}(\xhat)} \dbeta{} p_{i+1}(\xhat) \overset{\eqref{eq:partial-beta-deriv-calcs-marginal-eq}}{=}
	\frac{1}{p_{i+1}(\xhat)} \sum_{\x} p(\x) \dbeta{} p_{i+1}(\xhat|\x)
	\\ \overset{\eqref{eq:partial-beta-deriv-calcs-encoder-eq}}{=}
	\frac{1}{p_{i+1}(\xhat)} \sum_{\x} p(\x) p_{i+1}(\xhat|\x) \cdot \left[ \frac{1}{p_{i}(\xhat) } \dbeta{p_{i}(\xhat)}
	- \left( D_{\x, \xhat} - \beta \sum_{\ypp} \frac{p(\ypp|\x) }{p_{i}(\ypp|\xhat) } \dbeta{} p_{i}(\ypp|\xhat)  \right) - \frac{1}{Z_{i}(\x,\beta)} \dbeta{Z_{i}(\x,\beta)} \right]
\end{multline}
Since $p_{i}(\y|\xhat)$ and $p_{i}(\xhat)$ are independent variables, their derivative with respect to $\beta$ vanish, yielding 
\begin{multline}
	- \frac{1}{p_{i+1}(\xhat)} \sum_{\x} p_{i+1}(\xhat|\x) p(\x) \left[ 
	D_{\x, \xhat} + \frac{1}{Z_{i}(\x,\beta)} \dbeta{Z_{i}(\x,\beta)} \right]
	\\ \overset{\eqref{eq:partial-beta-deriv-calcs-partition-func-reduced}}{=}
	- \frac{1}{p_{i+1}(\xhat)} \sum_{\x} p_{i+1}(\xhat|\x) p(\x) \left[ D_{\x, \xhat} 
	- \sum_{\xhatpp} \left( \frac{p_{i}(\xhatpp) }{Z_{i}(\x,\beta)} e^{ -\beta \; D_{\x, \xhatpp} } \right) D_{\x, \xhatpp} \right]
	\\ \underset{\algref{algo:BA-IB}{eq:IB-BA-bayes-for-computing-inverse-enc}}{\overset{\algref{algo:BA-IB}{eq:IB-BA-new-direct-enc}}{=}}
	- \sum_{\x, \xhatpp} \Big[ \delta_{\xhat, \xhatpp} - p_{i+1}(\xhatpp|\x) \Big] \cdot p_{i+1}(\x|\xhat) D_{\x, \xhatpp} 
\end{multline}
Thus, for the marginals' coordinates, we have obtained 
\begin{equation}			\label{eq:partial-beta-deriv-calcs-beta-deriv-of-output-marginal}
	\dbeta{} \log p_{i+1}(\xhat) = 
	- \sum_{\x, \xhatpp} \Big[ \delta_{\xhat, \xhatpp} - p_{i+1}(\xhatpp|\x) \Big] \cdot p_{i+1}(\x|\xhat) D_{\x, \xhatpp} 
\end{equation}

When evaluated at an IB root, Equations \eqref{eq:partial-beta-deriv-calcs-beta-deriv-of-output-decoder} and \eqref{eq:partial-beta-deriv-calcs-beta-deriv-of-output-marginal} form respectively the decoder and marginal coordinates of $D_\beta BA_\beta$, which appears at the right-hand side of the IB ODE \eqref{eq:IB-beta-ODE-in-decoder-coords} (note the extra minus sign in the implicit ODE \eqref{eq:implicit-beta-ODE}).

\medskip
\subsection{The coordinates exchange Jacobians between log-decoder and log-encoder coordinates}

\label{sub:coordinates-exchange-jacobians-appendix}

Following the discussion in Section \ref{sec:coords-exchange-for-the-IB} on the pros and cons of each coordinate system, we leverage the observations of Appendix \ref{sub:BA-IB-jacobian-appendix:calculation-setup-and-goals} in order to derive the coordinate exchange Jacobians, between the log-decoder and log-encoder coordinate systems. 
Exchanging between the other coordinate system pairs adds little to the below and thus is omitted. 

\medskip 
Given the encoder's logarithmic derivative $\dbeta{} \log p_\beta(\xhatp|\xp)$, we would like to compute from it the logarithmic derivative $\big(\dbeta{} \log p_\beta(\y|\xhat), \dbeta{} \log p_\beta(\xhat)\big)$ in decoder coordinates, and vice versa. 
To that end, recall that an (arbitrary) encoder $p(\xhatp|\xhat)$ determines a decoder-marginal pair $\big( p(\y|\xhat), p(\xhat) \big)$ and vice versa (e.g., Equation \eqref{eq:coordinate-sets-parameterizing-an-IB-root} in Section \ref{sec:coords-exchange-for-the-IB}). 
So, one can follow the dependencies graph \eqref{eq:dependencies-graph-for-BA-IB-variables} (in Appendix \ref{sub:BA-IB-jacobian-appendix:calculation-setup-and-goals}) backward between these coordinate systems to exchange the coordinates of an implicit derivative. 
For example, consider $p_i(\y|\xhat)$ and $p_i(\xhat)$ as functions of the encoder $p_i(\xhatp|\xp)$ preceding it in the graph \eqref{eq:dependencies-graph-for-BA-IB-variables}. 
When at an IB root, multiplying by the coordinates exchange Jacobian yields 
\begin{align}
	\dbeta{\log p_\beta(\y|\xhat)} &=			\label{eq:coordinates-exchange-enc-to-dec-implicit-dec:appendix}
	\frac{d\log p_\beta(\y|\xhat)}{d\log p_\beta(\xhatp|\xp)} \dbeta{\log p_\beta(\xhatp|\xp)} 
	\quad \text{and} \\
	\dbeta{\log p_\beta(\xhat)} &=				\label{eq:coordinates-exchange-enc-to-dec-implicit-marginal:appendix}
	\frac{d\log p_\beta(\xhat)}{d\log p_\beta(\xhatp|\xp)} \dbeta{\log p_\beta(\xhatp|\xp)} \;.
\end{align}
Similarly, considering an encoder $p_\beta(\xhat|\x)$ as a function of $p_\beta(\yp|\xhatp)$ and $p_\beta(\xhatp)$,
\begin{equation}								\label{eq:coordinates-exchange-dec-to-enc-implicit-enc:appendix}
	\dbeta{\log p_\beta(\xhat|\x)} = 
	\frac{d\log p_\beta(\xhat|\x)}{d\log p_\beta(\yp|\xhatp)} \dbeta{\log p_\beta(\yp|\xhatp)} +
	\frac{d\log p_\beta(\xhat|\x)}{d\log p_\beta(\xhatp)} \dbeta{\log p_\beta(\xhatp)} +
	\pbeta{\log p_\beta(\xhat|\x)} \;.
\end{equation}
The last term $\pbeta{\log p_\beta(\xhat|\x)}$ in \eqref{eq:coordinates-exchange-dec-to-enc-implicit-enc:appendix} stems from the fact that the encoder Equation \algref{algo:BA-IB}{eq:IB-BA-new-direct-enc} depends explicitly on $\beta$, unlike the decoder and marginal Equations 
\algref{algo:BA-IB}{eq:IB-BA-decoder-eq} and \algref{algo:BA-IB}{eq:IB-BA-cluster_marginal}. 
cf., the comments around \eqref{eq:partial-deriv-def} in Appendix \ref{sub:BA-IB-jacobian-appendix:calculation-setup-and-goals}. 

The matrices $\frac{d\log p_\beta(\y|\xhat)}{d\log p_\beta(\xhatp|\xp)} $ and $\frac{d\log p_\beta(\xhat)}{d\log p_\beta(\xhatp|\xp)} $ for exchanging from encoder to decoder coordinates follow from the chain rule, and are calculated in Appendix \ref{subsub:encoder-to-decoder-coords-jacobian-appendix} below, at Equations \eqref{eq:coordinates-exchange-enc-to-dec-marginal-in-terms-of-enc:appendix} and \eqref{eq:coordinates-exchange-enc-to-dec-decoder-in-terms-of-enc:appendix}.
Similarly, the matrices $\frac{d\log p_\beta(\xhat|\x)}{d\log p_\beta(\yp|\xhatp)} $ and $\frac{d\log p_\beta(\xhat|\x)}{d\log p_\beta(\xhatp)} $ and the partial derivative $\pbeta{\log p_\beta(\xhat|\x)}$ for exchanging from decoder to encoder coordinates are Equations \eqref{eq:coordinates-exchange-dec-to-enc-enc-in-terms-of-dec:appendix}, \eqref{eq:coordinates-exchange-dec-to-enc-enc-in-terms-of-marginal:appendix} and \eqref{eq:coordinates-exchange-dec-to-enc-enc-in-terms-of-beta:appendix}, in Appendix \ref{subsub:decoder-to-encoder-coords-jacobian-appendix}.

\medskip
\subsubsection{Exchanging from encoder to decoder coordinates}

\label{subsub:encoder-to-decoder-coords-jacobian-appendix}

An input encoder $p_i(\xhatp|\xp)$ determines a decoder $p_i(\y|\xhat)$ and a marginal $p_i(\xhat)$.
As in previous subsections, we follow the dependencies graph \eqref{eq:dependencies-graph-for-BA-IB-variables} along all the paths between these. 

\medskip 
Using diagram \eqref{eq:log-variable-dep-w-derivatives-no-normalization} from Section \ref{subsub:BA-IB-jacobian-appendix:differentiating-along-depend-graph}, for the marginal one has 
\begin{equation}			\label{eq:coordinates-exchange-enc-to-dec-marginal-in-terms-of-enc:appendix}
	\frac{d\log p_i(\xhat)}{d\log p_i(\xhatp|\xp)} =
	p_i(\xp|\xhatp) \; \delta_{\xhat, \xhatp} \;.
\end{equation}
While for the decoder,
\begin{multline}
	\frac{d\log p_i(\y|\xhat)}{d\log p_i(\xhatp|\xp)} =
	\frac{\partial \log p_i(\y|\xhat)}{\partial \log p_i(x_1|\hat{x}_2)} \left[ 
		\frac{\partial \log p_i(x_1|\hat{x}_2)}{\partial \log p_i(\xhatp|\xp)} +
		\frac{\partial \log p_i(x_1|\hat{x}_2)}{\partial \log p_i(\hat{x}_3)} 
		\frac{\partial \log p_i(\hat{x}_3)}{\partial \log p_i(\xhatp|\xp)}
	\right] \\ =
	\frac{p(\y|x_1) p_i(x_1|\hat{x}_2)}{p_i(\y|\hat{x}_2)} \; \delta_{\hat{x}_2, \xhat} \Big[ 
		\delta_{x_1, \xp} \; \delta_{\hat{x}_2, \xhatp} 
		-\delta_{\hat{x}_2, \hat{x}_3} \; p_i(\xp|\hat{x}_3) \; \delta_{\hat{x}_3, \xhatp}
	\Big] 
\end{multline}
Summing over the three dummy variables as before, the latter simplifies to
\begin{equation}			\label{eq:coordinates-exchange-enc-to-dec-decoder-in-terms-of-enc:appendix}
	\frac{d\log p_i(\y|\xhat)}{d\log p_i(\xhatp|\xp)} =
	\left[ 
		\frac{p(\y|\xp) }{p_i(\y|\xhat)} - 1 
	\right] p_i(\xp|\xhat) \; \delta_{\xhat, \xhatp} \;.
\end{equation}

\medskip
\subsubsection{Exchanging from decoder to encoder coordinates}

\label{subsub:decoder-to-encoder-coords-jacobian-appendix}

In the other way around, a decoder $p_i(\yp|\xhatp)$ and a marginal $p_i(\xhatp)$ determine the subsequent encoder $p_{i+1}(\xhat|\x)$.
Using diagram \eqref{eq:log-variable-dep-w-derivatives-no-normalization}, one has
\begin{multline}
	\frac{d\log p_{i+1}(\xhat|\x)}{d\log p_i(\yp|\xhatp)} =
	\frac{\partial \log p_{i+1}(\xhat|\x)}{\partial \log p_i(\yp|\xhatp)} +
	\frac{\partial \log p_{i+1}(\xhat|\x)}{\partial \log Z_i(x_1)} \frac{\partial \log Z_i(x_1)}{\partial \log p_i(\yp|\xhatp)} \\ =
	\beta \; \delta_{\xhat, \xhatp} \; p(\yp|\x) -
	\delta_{\x, x_1} \; \beta \; p_{i+1}(\xhatp|x_1) p(\yp|x_1)
\end{multline}
Summing over the dummy variable $x_1$, this is the coordinates exchange Jacobian $J_{\text{dec}}^{\text{enc}}$ mentioned in Section \ref{sec:coords-exchange-for-the-IB},
\begin{equation}				\label{eq:coordinates-exchange-dec-to-enc-enc-in-terms-of-dec:appendix}
	\frac{d\log p_{i+1}(\xhat|\x)}{d\log p_i(\yp|\xhatp)} =
	\beta \; p(\yp|\x) \Big[ \delta_{\xhat, \xhatp} - p_{i+1}(\xhatp|\x) \Big]
\end{equation}

Next, for the derivative with respect to the marginal,
\begin{multline}
	\frac{d\log p_{i+1}(\xhat|\x)}{d\log p_i(\xhatp)} =
	\frac{\partial \log p_{i+1}(\xhat|\x)}{\partial \log p_i(\xhatp)} +
	\frac{\partial \log p_{i+1}(\xhat|\x)}{\partial \log Z_i(x_1)} \frac{\partial \log Z_i(x_1)}{\partial \log p_i(\xhatp)} \\ +
	\left[\frac{\partial \log p_{i+1}(\xhat|\x)}{\partial \log Z_i(x_1)} \frac{\partial \log Z_i(x_1)}{\partial \log p_i(y_2|\hat{x}_3)} + \frac{\partial \log p_{i+1}(\xhat|\x)}{\partial \log p_i(y_2|\hat{x}_3)} \right] 
	\frac{\partial \log p_i(y_2|\hat{x}_3)}{\partial \log p_i(x_4|\hat{x}_5)} \frac{\partial \log p_i(x_4|\hat{x}_5)}{\partial \log p_i(\xhatp)}
	\\ =
	\delta_{\xhat, \xhatp} -
	\delta_{\x, x_1} \; p_{i+1}(\xhatp|x_1) \\ +
	\Big[ -\delta_{\x, x_1} \; \beta \; p_{i+1}(\hat{x}_3|x_1) p(y_2|x_1) + \beta \; \delta_{\hat{x}_3, \xhat} \; p(y_2|\x) \Big]
	\frac{p(y_2|x_4) p_i(x_4|\hat{x}_3)}{p_i(y_2|\hat{x}_3)} \; \delta_{\hat{x}_3, \hat{x}_5} \cdot (-\delta_{\hat{x}_5, \xhatp})
\end{multline}
Summing over the five dummy variables, this is the coordinates exchange Jacobian $J_{\text{mrg}}^{\text{enc}}$ from Section \ref{sec:coords-exchange-for-the-IB},
\begin{equation}				\label{eq:coordinates-exchange-dec-to-enc-enc-in-terms-of-marginal:appendix}
	\frac{d\log p_{i+1}(\xhat|\x)}{d\log p_i(\xhatp)} =
	\left( 1 - \beta \right) \Big[ \delta_{\xhat, \xhatp} - p_{i+1}(\xhatp|\x) \Big]
\end{equation}

Finally, note that the encoder Equation \algref{algo:BA-IB}{eq:IB-BA-new-direct-enc} depends on $\beta$ explicitly, rather than indirectly only via its other variables. 
So, to calculate the partial derivative term $\pbeta{\log p_{i+1}(\xhat|\x)}$ in \eqref{eq:coordinates-exchange-dec-to-enc-implicit-enc:appendix}, write as follows for $\log Z$, 
\begin{multline}			\label{eq:partial-beta-deriv-of-Z-in-appendix}
	\frac{\partial}{\partial \beta} Z_i(x, \beta) \overset{\algref{algo:BA-IB}{eq:IB-BA-partition-func}}{=}
	\sum_{\hat{x}} p_{i}(\hat{x}) \frac{\partial}{\partial \beta} \exp{\left\{ -\beta \; D_{KL}\big[p(\y|\x) || p_{i}(y|\hat{x})\big] \right\}} \\ =
	-\sum_{\hat{x}} p_{i}(\hat{x}) D_{KL}\big[p(\y|\x) || p_{i}(y|\hat{x})\big] \exp{\left\{ -\beta \; D_{KL}\big[p(\y|\x) || p_{i}(y|\hat{x})\big] \right\}} 
\end{multline}
Thus,
\begin{multline}			\label{eq:partial-beta-deriv-of-log-Z-in-appendix}
	\frac{\partial}{\partial \beta} \log Z_i(x, \beta) =
	\frac{1}{Z_i(x, \beta)} \frac{\partial}{\partial \beta} Z_i(x, \beta) \\ \overset{\eqref{eq:partial-beta-deriv-of-Z-in-appendix}}{=}
	-\sum_{\hat{x}} \frac{p_{i}(\hat{x}) \exp{\left\{ -\beta \; D_{KL}\big[p(\y|\x) || p_{i}(y|\hat{x})\big] \right\}}}{Z_i(x, \beta)} D_{KL}\big[p(\y|\x) || p_{i}(y|\hat{x})\big] \\ \overset{\algref{algo:BA-IB}{eq:IB-BA-new-direct-enc}}{=}
	-\sum_{\hat{x}} p_{i+1}(\hat{x}|x) D_{KL}\big[p(\y|\x) || p_{i}(y|\hat{x})\big] \;.
\end{multline}

And so, from the encoder Equation \algref{algo:BA-IB}{eq:IB-BA-new-direct-enc} we have
\begin{multline}				\label{eq:coordinates-exchange-dec-to-enc-enc-in-terms-of-beta:appendix}
	\pbeta{\log p_{i+1}(\xhat|\x)} =
	\pbeta{\log p_i(\xhat)} 
	- \pbeta{\log Z_i(x, \beta)}
	- \pbeta{} \Big(\beta D_{KL}\big[p(\y|\x) || p_i(\y|\xhat)\big] \Big)
	\\ \overset{\eqref{eq:partial-beta-deriv-of-log-Z-in-appendix}}{=}
	\sum_{\xhatpp} p_{i+1}(\xhatpp|\x) D_{KL}\big[p(\y|\x) || p_{i}(\y|\xhatpp)\big] 
	- D_{KL}\big[p(\y|\x) || p_i(\y|\xhat)\big] 
\end{multline}
where the term $\pbeta{\log p_i(\xhat)}$ vanishes since it is considered as an independent variable here.

\medskip
\section{Proof of Lemma \ref{lem:smaller-matrix-for-ker-of-J-in-decoder-coords}, on the kernel of the Jacobian of the IB operator in log-decoder coordinates}

\label{sec:kernel-of-BA-IB-J-decoder-coords:appendix}

We prove Lemma \ref{lem:smaller-matrix-for-ker-of-J-in-decoder-coords} from Section \ref{sec:IB-ODE}, using the results of Appendix \ref{sec:first-order-deriv-tensors-of-BA-IB-appendix}. 

\medskip
In the first direction, suppose that $\big( \left(v_{\y, \xhat}\right)_{\y, \xhat}, \left(u_{\xhat}\right)_{\xhat} \big)$ is a vector in the \textit{left} kernel of the Jacobian of the IB operator \eqref{eq:IB-operator-def} in log-decoder coordinates, $I - D_{\log p(\y|\xhat), \log p(\xhat)} BA_\beta$, as in \eqref{eq:IB-beta-ODE-in-decoder-coords} in Section \ref{sec:IB-ODE}.
Using the Jacobian's implicit form \eqref{eq:BA-Jacob-wrt-decoder-coords-as-block-matrix-implicit} (Appendix \ref{sub:BA-IB-jacobian-appendix:decoder-deriv-matrix}), this is to say that
\begin{align}
	v_{\yp, \xhatp}	&=			\label{eq:lkernel-vec-decoder-dist-coords}
	\sum_{\y, \xhat}	v_{\y, \xhat}	\frac{d\log p_{i+1}(\y|\xhat)}{d \log p_i(\yp|\xhatp)} +
	\sum_{\xhat} 		u_{\xhat}		\frac{d\log p_{i+1}(\xhat)}{d \log p_i(\yp|\xhatp)} \quad \text{and} \\
	u_{\xhatp}		&=			\label{eq:lkernel-vec-decoder-marginal-coords}
	\sum_{\y, \xhat}	v_{\y, \xhat}	\frac{d\log p_{i+1}(\y|\xhat)}{d \log p_i(\xhatp)} +
	\sum_{\xhat} 		u_{\xhat}		\frac{d\log p_{i+1}(\xhat)}{d \log p_i(\xhatp)}
\end{align}
hold, for every $\yp$ and $\xhatp$.
We spell out and manipulate these equations to obtain the desired result.

By the Jacobian's explicit form \eqref{eq:BA-Jacob-wrt-decoder-coords-as-block-matrix-explicit} from Appendix \ref{sub:BA-IB-jacobian-appendix:decoder-deriv-matrix}, Equation \eqref{eq:lkernel-vec-decoder-dist-coords} spells out as
\begin{multline}			\label{eq:lkernel-vec-decoder-dist-coords-spelled-out}
	v_{\yp, \xhatp} =
	\beta \cdot \sum_{\y, \xhat}	v_{\y, \xhat} \sum_{\xhatpp, \ypp} 
	\left( \delta_{\xhatpp, \xhatp} - \delta_{\xhat, \xhatp} \right)
	\cdot \left(1 - \tfrac{\delta_{\ypp, \y} }{p_{i+1}(\y|\xhat)} \right)
	C(\xhat, \xhatpp; i+1)_{\yp, \ypp} \\
	+ \beta \cdot \sum_{\xhat} u_{\xhat}
	\Big[ \delta_{\xhat, \xhatp} \; p_{i+1}(\yp|\xhat) - B(\xhat, \xhatp; i+1)_{\yp} \Big] \;,
\end{multline}
while the second Equation \eqref{eq:lkernel-vec-decoder-marginal-coords} spells out as
\begin{multline}			\label{eq:lkernel-vec-decoder-marginal-coords-spelled-out}
	u_{\xhatp} =
	\left( 1 - \beta \right) \cdot \sum_{\y, \xhat} v_{\y, \xhat} \sum_{\ypp} \Big[ 
	1 - \tfrac{\delta_{\ypp, \y}}{p_{i+1}(\y|\xhat)} 
	\Big] B(\xhat, \xhatp; i+1)_{\ypp} 
	\\ +
	\left(1 - \beta \right) \cdot \sum_{\xhat} u_{\xhat} \Big( \delta_{\xhat, \xhatp} - A(\xhat, \xhatp; i+1) \Big) \;.
\end{multline}
Next, we expand and simplify each of the terms in \eqref{eq:lkernel-vec-decoder-dist-coords-spelled-out} and \eqref{eq:lkernel-vec-decoder-marginal-coords-spelled-out}, using the definition \eqref{eq:B_C_defs-for-BA-Jacob-wrt-decoder-coords-appendix} of $A, B$ and $C$ from Appendix \ref{sub:BA-IB-jacobian-appendix:decoder-deriv-matrix}. 

For the first summand to the right of \eqref{eq:lkernel-vec-decoder-dist-coords-spelled-out},
\begin{multline}			\label{eq:vector-is-in-lker-decoder-jacobian-subblock-explicit}
	\beta \cdot \sum_{\y, \xhat} v_{\y, \xhat} 
	\sum_{\xhatpp, \ypp} 
	\left( \delta_{\xhatpp, \xhatp} - \delta_{\xhat, \xhatp} \right)
	\left(1 - \tfrac{\delta_{\ypp, \y} }{p_{i+1}(\y|\xhat)} \right)
	C(\xhat, \xhatpp; i+1)_{\yp, \ypp} 
	\\ \overset{\eqref{eq:B_C_defs-for-BA-Jacob-wrt-decoder-coords-appendix}}{=}
	\beta \cdot \sum_{\y, \xhat} v_{\y, \xhat} \sum_{\xhatpp, \ypp} 
	\left( \delta_{\xhatpp, \xhatp} - \delta_{\xhat, \xhatp} \right)
	\left(1 - \tfrac{\delta_{\ypp, \y} }{p_{i+1}(\y|\xhat)} \right)
	\sum_{\x} p(\yp|\x) p(\ypp|\x) p_{i+1}(\xhatpp|\x) p_{i+1}(\x|\xhat)
\end{multline}
We simplify each of the four addends to the right of \eqref{eq:vector-is-in-lker-decoder-jacobian-subblock-explicit} while temporarily ignoring the $\beta$ coefficient. 
For the $\delta_{\xhatpp, \xhatp} \cdot 1$ term,
\begin{multline}			\label{eq:ldecoder-jacobian-subblock-term1}
	\sum_{\y, \xhat} v_{\y, \xhat} \sum_{\xhatpp, \ypp} 
	\delta_{\xhatpp, \xhatp} \sum_{\x} p(\yp|\x) p(\ypp|\x) p_{i+1}(\xhatpp|\x) p_{i+1}(\x|\xhat) \\ =
	\sum_{\y, \xhat} v_{\y, \xhat} \sum_{\x} p(\yp|\x) p_{i+1}(\xhatp|\x) p_{i+1}(\x|\xhat) 
\end{multline}
For the $- \delta_{\xhatpp, \xhatp} \cdot \frac{\delta_{\ypp, \y} }{p_{i+1}(\y|\xhat)}$ term,
\begin{multline}			\label{eq:ldecoder-jacobian-subblock-term2}
	- \sum_{\y, \xhat} v_{\y, \xhat} 
	\sum_{\xhatpp, \ypp} \delta_{\xhatpp, \xhatp} \delta_{\ypp, \y} 
	\sum_{\x} \tfrac{1}{p_{i+1}(\y|\xhat)} \; p(\yp|\x) p(\ypp|\x) p_{i+1}(\xhatpp|\x) p_{i+1}(\x|\xhat) \\ =
	- \sum_{\y, \xhat} v_{\y, \xhat} \sum_{\x} \tfrac{1}{p_{i+1}(\y|\xhat)} \; p(\yp|\x) p(\y|\x) p_{i+1}(\xhatp|\x) p_{i+1}(\x|\xhat) 
\end{multline}
For the $-\delta_{\xhat, \xhatp}\cdot 1$ term, 
\begin{multline}			\label{eq:ldecoder-jacobian-subblock-term3}
	-\sum_{\y, \xhat} v_{\y, \xhat} \sum_{\xhatpp, \ypp} 
	\delta_{\xhat, \xhatp} \sum_{\x} p(\yp|\x) p(\ypp|\x) p_{i+1}(\xhatpp|\x) p_{i+1}(\x|\xhat) \\ =
	-\sum_{\y} v_{\y, \xhatp} \sum_{\x} p(\yp|\x) p_{i+1}(\x|\xhatp) =
	-\sum_{\y} v_{\y, \xhatp} \; p_{i+1}(\yp|\xhatp) 
\end{multline}
And for the last $-\delta_{\xhat, \xhatp} \cdot \tfrac{-\delta_{\ypp, \y} }{p_{i+1}(\y|\xhat)}$ term,
\begin{multline}			\label{eq:ldecoder-jacobian-subblock-term4}
	\sum_{\y, \xhat} v_{\y, \xhat} \sum_{\xhatpp, \ypp} 
	\delta_{\xhat, \xhatp} \cdot \tfrac{\delta_{\ypp, \y} }{p_{i+1}(\y|\xhat)}
	\sum_{\x} p(\yp|\x) p(\ypp|\x) p_{i+1}(\xhatpp|\x) p_{i+1}(\x|\xhat) \\ =
	\sum_{\y} \frac{v_{\y, \xhatp} }{p_{i+1}(\y|\xhatp)} \sum_{\x} p(\yp|\x) p(\y|\x) p_{i+1}(\x|\xhatp) 
\end{multline}
Collecting \eqref{eq:ldecoder-jacobian-subblock-term1}, \eqref{eq:ldecoder-jacobian-subblock-term2}, \eqref{eq:ldecoder-jacobian-subblock-term3} and \eqref{eq:ldecoder-jacobian-subblock-term4} back into \eqref{eq:vector-is-in-lker-decoder-jacobian-subblock-explicit}, we obtain
\begin{multline}			\label{eq:bottom-line-of-lkernel-vec-for-upper-left-block}
	\beta \cdot \sum_{\y, \xhat} v_{\y, \xhat} \sum_{\x} p(\yp|\x) p_{i+1}(\xhatp|\x) p_{i+1}(\x|\xhat) \Big[ 1 - \tfrac{p(\y|\x) }{p_{i+1}(\y|\xhat)} \Big]
	\\ +
		\beta \cdot \sum_{\y} v_{\y, \xhatp} \tfrac{1 }{p_{i+1}(\y|\xhatp)} \sum_{\x} p(\y|\x) p(\yp|\x) p_{i+1}(\x|\xhatp) 
		- \beta \cdot p_{i+1}(\yp|\xhatp) \sum_{\y} v_{\y, \xhatp} 
\end{multline}
for the first summand to the right of \eqref{eq:lkernel-vec-decoder-dist-coords-spelled-out}.

The second summand to the right of \eqref{eq:lkernel-vec-decoder-dist-coords-spelled-out} equals,
\begin{multline}			\label{eq:bottom-line-of-lkernel-vec-for-upper-right-block}
	\beta \cdot \sum_{\xhat} u_{\xhat}
	\Big[ \delta_{\xhat, \xhatp} \; p_{i+1}(\yp|\xhat) - B(\xhat, \xhatp; i+1)_{\yp} \Big] 
	\\ \overset{\eqref{eq:B_C_defs-for-BA-Jacob-wrt-decoder-coords-appendix}}{=}
	\beta \cdot u_{\xhatp} \; p_{i+1}(\yp|\xhatp) - 
	\beta \cdot \sum_{\x} p(\yp|\x) p_{i+1}(\xhatp|\x) \sum_{\xhat} u_{\xhat} \; p_{i+1}(\x|\xhat) 
\end{multline}

Combining \eqref{eq:bottom-line-of-lkernel-vec-for-upper-left-block} and \eqref{eq:bottom-line-of-lkernel-vec-for-upper-right-block}, Equation \eqref{eq:lkernel-vec-decoder-dist-coords-spelled-out} is equivalent to
\begin{multline}			\label{eq:bottom-line-of-lkernel-for-decoder-decoder-coords}
	\tfrac{1}{\beta} \cdot v_{\yp, \xhatp} 
	+ p_{i+1}(\yp|\xhatp) \sum_{\y} v_{\y, \xhatp} 
	- u_{\xhatp} \; p_{i+1}(\yp|\xhatp) 
	\\ = \sum_{\y, \xhat} v_{\y, \xhat} \sum_{\x} p(\yp|\x) p_{i+1}(\xhatp|\x) p_{i+1}(\x|\xhat) \Big[ 1 - \tfrac{p(\y|\x) }{p_{i+1}(\y|\xhat)} \Big]
	\\ 
	+ \sum_{\y} v_{\y, \xhatp} \sum_{\x} p(\yp|\x) p_{i+1}(\x|\xhatp) \tfrac{p(\y|\x) }{p_{i+1}(\y|\xhatp)} 
	- \sum_{\x} p(\yp|\x) p_{i+1}(\xhatp|\x) \sum_{\xhat} u_{\xhat} \; p_{i+1}(\x|\xhat) 
\end{multline}
for any $\yp$ and $\xhatp$.
Summing \eqref{eq:bottom-line-of-lkernel-for-decoder-decoder-coords} over $\yp$ and simplifying, we obtain
\begin{multline}			\label{eq:yp-marginalization-over-bottom-line-of-lkernel-for-decoder-decoder-coords}
	\tfrac{1 }{\beta} \cdot \sum_{\y} v_{\y, \xhatp} - u_{\xhatp} \\ =
	\sum_{\y, \xhat} v_{\y, \xhat} \sum_{\x} p_{i+1}(\xhatp|\x) p_{i+1}(\x|\xhat) \Big[ 1 - \tfrac{p(\y|\x) }{p_{i+1}(\y|\xhat)} \Big]
	- \sum_{\x} p_{i+1}(\xhatp|\x) \sum_{\xhat} u_{\xhat} \; p_{i+1}(\x|\xhat) 
\end{multline}
for any $\xhatp$.

Next, we expand and simplify Equation \eqref{eq:lkernel-vec-decoder-marginal-coords-spelled-out}. Using the definition \eqref{eq:B_C_defs-for-BA-Jacob-wrt-decoder-coords-appendix} of $B$, the first summand to its right can be written as
\begin{equation}			\label{eq:lkernel-decoder-marginal-coords-first-term}
	\left( 1 - \beta \right) \cdot \sum_{\y, \xhat} v_{\y, \xhat} \sum_{\x} p_{i+1}(\xhatp|\x) p_{i+1}(\x|\xhat) \Big[ 
	1 - \tfrac{p(\y|\x) }{p_{i+1}(\y|\xhat)} \Big] \;.
\end{equation}
Similarly, the second summand to the right of \eqref{eq:lkernel-vec-decoder-marginal-coords-spelled-out} can be written as
\begin{equation}			\label{eq:lkernel-decoder-marginal-coords-second-term}
	\left(1 - \beta \right) \cdot \Big[ u_{\xhatp} - \sum_{\x} p_{i+1}(\xhatp|\x) \sum_{\xhat} u_{\xhat} \; p_{i+1}(\x|\xhat) \Big] \;.
\end{equation}
Combining \eqref{eq:lkernel-decoder-marginal-coords-first-term} and \eqref{eq:lkernel-decoder-marginal-coords-second-term}, Equation \eqref{eq:lkernel-vec-decoder-marginal-coords-spelled-out} can now be written explicitly, 
\begin{equation}			\label{eq:bottom-line-of-lkernel-for-decoder-marginal-coords}
	\tfrac{\beta}{1 - \beta} \cdot u_{\xhatp} =
	\sum_{\y, \xhat} v_{\y, \xhat} \sum_{\x} p_{i+1}(\xhatp|\x) p_{i+1}(\x|\xhat) \Big[ 1 - \tfrac{p(\y|\x) }{p_{i+1}(\y|\xhat)} \Big] 
	- \sum_{\x} p_{i+1}(\xhatp|\x) \sum_{\xhat} u_{\xhat} \; p_{i+1}(\x|\xhat) 
\end{equation}
for every $\xhatp$.
Next, subtracting \eqref{eq:bottom-line-of-lkernel-for-decoder-marginal-coords} from \eqref{eq:yp-marginalization-over-bottom-line-of-lkernel-for-decoder-decoder-coords}, we obtain
\begin{equation}		\label{eq:relation-between-u-to-sum_y-of-v}
	u_{\xhat} = \tfrac{1 - \beta}{\beta} \cdot \sum_{\y} v_{\y, \xhat} 
\end{equation}
for any $\xhat$.

Substituting \eqref{eq:relation-between-u-to-sum_y-of-v} into \eqref{eq:bottom-line-of-lkernel-for-decoder-decoder-coords} and using the decoder Equation \algref{algo:BA-IB}{eq:IB-BA-decoder-eq} to expand $p_{i+1}(\yp|\xhatp)$ there,
\begin{multline}			\label{eq:towards-v-coords-are-beta-inv-eigenvector-in-proof}
	\tfrac{1}{\beta} \cdot v_{\yp, \xhatp} 
	= \sum_{\y, \xhat} v_{\y, \xhat} \sum_{\x} p_{i+1}(\xhatp|\x) p(\yp|\x) p_{i+1}(\x|\xhat) \Big[ \tfrac{2\beta - 1}{\beta} - \tfrac{p(\y|\x) }{p_{i+1}(\y|\xhat)} \Big]
	\\ 
	- \sum_{\y} v_{\y, \xhatp} \sum_{\x} p(\yp|\x) p_{i+1}(\x|\xhatp) \cdot \tfrac{2\beta - 1}{\beta} 
	+ \sum_{\y} v_{\y, \xhatp} \sum_{\x} p(\yp|\x) p_{i+1}(\x|\xhatp) \tfrac{p(\y|\x) }{p_{i+1}(\y|\xhatp)} 
\end{multline}
Next, inserting $\sum_{\xhat} \delta_{\xhat, \xhatp}$ into the sums on the last line,
\begin{multline}
	\tfrac{1}{\beta} \cdot v_{\yp, \xhatp} 
	= \sum_{\y, \xhat} v_{\y, \xhat} \sum_{\x} p_{i+1}(\xhatp|\x) p(\yp|\x) p_{i+1}(\x|\xhat) \Big[ \tfrac{2\beta - 1}{\beta} - \tfrac{p(\y|\x) }{p_{i+1}(\y|\xhat)} \Big]
	\\ 
	- \sum_{\y, \xhat} v_{\y, \xhat} \sum_{\x} \delta_{\xhat, \xhatp} \; p(\yp|\x) p_{i+1}(\x|\xhat) \Big[ \tfrac{2\beta - 1}{\beta} - \tfrac{p(\y|\x) }{p_{i+1}(\y|\xhat)} \Big]
\end{multline}
Finally, this simplifies to
\begin{equation}			\label{eq:v-coords-are-beta-inv-eigenvector-in-proof}
	v_{\yp, \xhatp} 
	= \sum_{\y, \xhat} v_{\y, \xhat} \sum_{\x} p(\yp|\x) \Big[ \delta_{\xhat, \xhatp} - p_{i+1}(\xhatp|\x) \Big] p_{i+1}(\x|\xhat) \Big[ \beta \cdot \tfrac{p(\y|\x) }{p_{i+1}(\y|\xhat)} + \left(1 - 2\beta\right) \Big] 
\end{equation}
The latter is to say that $\left(v_{\y, \xhat}\right)_{\y, \xhat}$ is a left-eigenvector of the eigenvalue 1 of the matrix to the right. At an IB root, this is precisely the matrix $S$ \eqref{eq:smaller-matrix-for-ker-of-IB-operator-in-dec-coords-main-text} from the Lemma's statement, as desired.

As a side note, we comment that Equations \eqref{eq:lkernel-vec-decoder-dist-coords} and \eqref{eq:lkernel-vec-decoder-marginal-coords} also imply
\begin{equation}		\label{eq:normalization-v-sum-over-clusters}
	\forall \y \; \sum_{\xhat} v_{\y, \xhat} = 0
	\quad \text{and} \quad
	\sum_{\xhat} u_{\xhat} = 0 \;,
\end{equation}
which can be seen by summing \eqref{eq:bottom-line-of-lkernel-for-decoder-decoder-coords} and \eqref{eq:bottom-line-of-lkernel-for-decoder-marginal-coords} respectively over $\xhatp$, and simplifying.

\medskip
At the other direction, let $\bm{v} := \left(v_{\y, \xhat}\right)_{\y, \xhat}$ be a left-eigenvector of the eigenvalue $1$ of $S$ \eqref{eq:smaller-matrix-for-ker-of-IB-operator-in-dec-coords-main-text}. That is, assume that Equation \eqref{eq:v-coords-are-beta-inv-eigenvector-in-proof} holds. Define a vector $\bm{u} := (u_{\xhat})_{\xhat}$ by Equation \eqref{eq:relation-between-u-to-sum_y-of-v}.
Reversing the algebra, \eqref{eq:v-coords-are-beta-inv-eigenvector-in-proof} is equivalent to \eqref{eq:towards-v-coords-are-beta-inv-eigenvector-in-proof}.
Substituting \eqref{eq:relation-between-u-to-sum_y-of-v} into the latter yields back \eqref{eq:bottom-line-of-lkernel-for-decoder-decoder-coords}, which is equivalent to the explicit form \eqref{eq:lkernel-vec-decoder-dist-coords-spelled-out} of Equation \eqref{eq:lkernel-vec-decoder-dist-coords}.
Next, summing \eqref{eq:bottom-line-of-lkernel-for-decoder-decoder-coords} over $\yp$ and simplifying yields \eqref{eq:yp-marginalization-over-bottom-line-of-lkernel-for-decoder-decoder-coords}. Adding the latter to \eqref{eq:relation-between-u-to-sum_y-of-v} yields back \eqref{eq:bottom-line-of-lkernel-for-decoder-marginal-coords}, which is equivalent to Equation \eqref{eq:lkernel-vec-decoder-marginal-coords-spelled-out}, the explicit form of \eqref{eq:lkernel-vec-decoder-marginal-coords}.
To conclude, both of the Equations \eqref{eq:lkernel-vec-decoder-dist-coords} and \eqref{eq:lkernel-vec-decoder-marginal-coords} hold, as claimed.

\medskip 
\section{Approximate error analysis for deterministic annealing and for Euler's method with BA}

\label{sec:approx-error-analysis-for-BA-and-Euler-for-IB-appendix}
Complementing the results of Section \ref{sec:euler-method}, we provide an approximate error analysis for two computation methods for the IB: deterministic annealing and Euler's method combined with a fixed number of BA iterations.

\medskip 
First, we recap the linearization argument around \cite[Equation (10)]{agmon2021critical}. 
Denote repeated BA iterations initialized at $\bm{p}_0$ by
\begin{equation}
	\bm{p}_{k+1} := BA_\beta[\bm{p}_k] \;.
\end{equation}
Linearizing around a fixed-point $\bm{p}_\beta$ of BA,
\begin{equation}
	BA [\bm{p}_k] \simeq \bm{p}_\beta + D \; BA_\beta \rvert_{\bm{p}_\beta} \cdot \left( \bm{p}_k - \bm{p}_\beta \right) \;,
\end{equation}
where $D \; BA_\beta\rvert_{\bm{p}_\beta}$ denotes the Jacobian matrix of $BA_\beta$ evaluated at $\bm{p}_\beta$. 
Rewriting in terms of the error $\delta \bm{p}_k := \bm{p}_k - \bm{p}_\beta$ of the $k$-th iterate,
\begin{equation}
	\delta\bm{p}_{k+1} \simeq D \; BA_\beta \rvert_{\bm{p}_\beta} \cdot \delta \bm{p}_k \;.
\end{equation}
Thus, to first order, repeated applications of $BA_\beta$ reduce the initial error according to
\begin{equation}		\label{eq:linearized-error-of-repeated-BA-iterations}
	\norm{\delta\bm{p}_{k+1}} \simeq 
	\norm{ \left(D \; BA_\beta \rvert_{\bm{p}_\beta}\right)^k \cdot \delta \bm{p}_0 } \;.
\end{equation}

\medskip 
Next, consider $k > 0$ applications of $BA_{\beta+\Delta \beta}$ to a root $\bm{p}_\beta$ at $\beta$. This is similar to deterministic annealing, but with a capped number of BA iterations. 
Plugging the initial error $\delta \bm{p}_0 := \bm{p}_\beta - \bm{p}_{\beta + \Delta\beta} \simeq -\Delta \beta \; \tfrac{d\bm{p}}{d\beta}\big\rvert_\beta$ into Equation \eqref{eq:linearized-error-of-repeated-BA-iterations} shows that this method is of the first order,
\begin{equation}
	\norm{\delta\bm{p}_{k+1} } \simeq 
	|\Delta \beta| \cdot \norm{\left(D \; BA_{\beta + \Delta \beta} \rvert_{\bm{p}_{\beta + \Delta \beta}}\right)^k \; \tfrac{d\bm{p}}{d\beta}\big\rvert_\beta } \;.
\end{equation}

\medskip 
Finally, we combine BA with Euler's method for the IB, Equation \eqref{eq:first-order-approx}. 
Consider $k > 0$ applications of $BA_{\beta+\Delta \beta}$ to the approximation $\bm{p}_\beta + \Delta \beta \; \tfrac{d\bm{p}}{d\beta}\big\rvert_\beta$ produced by an Euler method step. 
Its initial error is
\begin{equation}
	\delta \bm{p}_0 := 
	\bm{p}_\beta + \Delta \beta \; \tfrac{d\bm{p}}{d\beta}\big\rvert_\beta
	- \bm{p}_{\beta + \Delta\beta} = 
	-\tfrac{1}{2} (\Delta \beta)^2 \; \tfrac{d^2\bm{p}}{d\beta^2}\big\rvert_{\beta'} \;,
\end{equation}
where the last equality follows from the second-order expansion $\bm{p}_{\beta + \Delta\beta} = \bm{p}_{\beta} + \Delta \beta \; \tfrac{d\bm{p}}{d\beta}\big\rvert_\beta + \tfrac{1}{2} (\Delta \beta)^2 \; \tfrac{d^2\bm{p}}{d\beta^2}\big\rvert_{\beta'}$, with $\beta' \in [\beta, \beta + \Delta \beta]$. 
Similar to before, plugging this into Equation \eqref{eq:linearized-error-of-repeated-BA-iterations} shows that this method is of the second order,
\begin{equation}
	\norm{\delta\bm{p}_{k+1} } \simeq 
	\tfrac{1}{2} |\Delta \beta|^2 \cdot \norm{\left(D \; BA_{\beta + \Delta \beta} \rvert_{\bm{p}_{\beta + \Delta \beta}}\right)^k \; \tfrac{d^2\bm{p}}{d\beta^2}\big\rvert_{\beta'} } \;.
\end{equation}

\medskip
\section{An exact solution for a binary symmetric channel}

\label{sec:analytical-IB-sol-for-BSC-appendix}

Define an IB problem by $Y\sim \text{Bernoulli}(\tfrac{1}{2})$ and $X := Y \oplus Z$ for $Z\sim \text{Bernoulli}(\alpha)$ independent of $Y$, $0 < \alpha <\tfrac{1}{2}$, where $\oplus$ denotes addition modulo 2. Explicitly, it is given by $p_{Y|X} = \mat{1-\alpha & \alpha \\ \alpha & 1-\alpha}$ and $p_X = (\tfrac{1}{2}, \tfrac{1}{2})$. We synthesize exact solutions for this problem using Mrs. Gerber's Lemma \citep{wyner1973MrsGerbersLemma} and by following \citep{witsenhausen1975conditional}.

\medskip
Let $h(p) := -p \log p - (1-p) \log (1-p)$ be the binary entropy, with $h(0) := h(1) := 0$. It is injective on $[0, \tfrac{1}{2}]$, with a maximal value of $\log 2$ at $p = \tfrac{1}{2}$. So, its inverse function $h^{-1}$ is well defined on $[0, \log 2]$. 
Given a constraint $I_X\in [0, \log 2]$ on $I(\hat{X}; X)$, $I(\hat{X}; X) \leq I_X$, define a random variable $V \sim \text{Bernoulli}(\delta)$ and set $\hat{X} := X \oplus V$, where $\delta$ is defined by $h(\delta) = \log 2 - I_X$ or equivalently in terms of $h^{-1}$ by $\delta := h^{-1}(\log 2 - I_X)$. 
Explicitly, $p(\xhat|x) = \mat{1-\delta & \delta \\ \delta & 1-\delta}$, with its rows indexed by $\xhat$ and columns by $\x$. 
$\hat{X}$ is also a $\text{Bernoulli}(\tfrac{1}{2})$ variable since $X$ is, and so
\begin{equation}			\label{eq:BSC-I_X-constraint-holds}
	I(\hat{X}; X) = 
	H(\hat{X}) - H(\hat{X}|X) =
	\log 2 - h(\delta) = I_X,
\end{equation}
showing that the constraint on $I(\hat{X}; X)$ holds.
The chain $\hat{X} \to X \to Y$ of random variables is readily seen to be Markov. 
By \cite[Corollary 4]{wyner1973MrsGerbersLemma}, it follows that $I(\hat{X}; Y) \leq \log 2 - h(\alpha * \delta)$, where $a*b := a(1-b) + b(1-a)$. Finally, equality follows by Theorem 1 there.
Thus, the above $p(\xhat|x)$ is IB-optimal.

The above defines an IB solution $p(\xhat|x)$ as a function of $I_X$. However, our numerical computations are phrased in terms of the IB's Lagrange multiplier $\beta$. To that end, \cite[IV.A]{witsenhausen1975conditional} show that
\begin{equation}		\label{eq:beta-as-a-func-of-delta-for-BSC-witsenhausen75}
	\beta \cdot (1 - 2\alpha) \log \frac{1 - \alpha * \delta}{\alpha * \delta} = 
	\log \frac{1 - \delta}{\delta} \;,
\end{equation}
and that the bifurcation of this problem occurs at
\begin{equation}
	\beta_c = \frac{1}{(1 - 2\alpha)^2} \;.
\end{equation}
To conclude, we have $\beta = \beta(\delta)$ as a function of $\delta$, $\delta = \delta(I_X)$ as a function of $I_X$, and the encoder $p(\xhat|x)$ as a function of $\delta$.
These functional dependencies are summarized as follows,
\begin{equation}
	\xymatrix@C=4em{
		p(\xhat|x)\ar[r]		&		\delta\ar[dl]	\\
		I_X						&		\beta\ar[u]
	}
\end{equation}
where the variable at the tail of each arrow is a function of that at its head.

Writing $\bm{p} = \big(p(\xhat|x)\big)_{\xhat, x}$, its derivative with respect to $\beta$ can be calculated by the chain rule,
\begin{equation}			\label{eq:expression-for-encoder-deriv-wrt-beta-BSC-example}
	\frac{d\bm{p}}{d\beta} =
	\frac{d}{d\beta} \Big( \bm{p}\big(\beta^{-1}(\delta)\big) \Big) =
	\frac{d\bm{p}}{d\delta} \left(\frac{d\beta}{d\delta}\right)^{-1} \;,
\end{equation}
where we have applied the derivative of an inverse function $(f^{-1})' = \nicefrac{1}{f'}$ to $\beta(\delta)$ in \eqref{eq:beta-as-a-func-of-delta-for-BSC-witsenhausen75}, to differentiate $\delta(\beta)$. 
From the argument around \eqref{eq:BSC-I_X-constraint-holds}, $\frac{d\bm{p}}{d\delta} = \mat{-1 & 1 \\ 1 & -1}$.
While this yields an analytical expression for the derivative $\tfrac{d\bm{p}}{d\beta}$, both of the terms to the right of \eqref{eq:expression-for-encoder-deriv-wrt-beta-BSC-example} are evaluated at $\delta(\beta)$, for a given $\beta$ value. 
Although it is straightforward to compute $\delta(\beta)$ numerically from \eqref{eq:beta-as-a-func-of-delta-for-BSC-witsenhausen75}, this entails numerical error, especially as $\delta$ approaches $\nicefrac{1}{2}$ near the bifurcation.
For the solution with respect to decoder coordinates, an immediate application of the Bayes rule shows that
\begin{equation}
	p(\xhat) = \frac{1}{2}		\quad \quad \text{and} \quad \quad 
	p(\y|\xhat) = 
	\mat{
		\alpha * (1 - \delta)		&		\alpha * \delta		\\
		\alpha * \delta				&		\alpha * (1 - \delta)
	} \;,
\end{equation}
where the rows of $p(\y|\xhat)$ are indexed by $\y$, and columns by $\xhat$.
Along with $\tfrac{dp(\y|\xhat)}{d\delta} = (2\alpha-1)\cdot \mat{1 & -1 \\ -1 & 1}$, its derivatives with respect to $\beta$ follow as in \eqref{eq:expression-for-encoder-deriv-wrt-beta-BSC-example}.

\medskip
\section{Equivalent conditions for cluster-merging bifurcations}

\label{sec:equivalent-conds-for-cluster-splitting-appendix}

We briefly discuss the equivalent conditions for cluster-merging bifurcations in the IB (Subsection \ref{sub:continuous-IB-bifs}) found in the literature.

\medskip 
\cite[Section 4]{rose1990statistical} derive a condition for cluster-splitting phase transitions (Equation (17) there) in the context of fuzzy clustering. 
Following this, \cite[3.2 in Part III]{zaslavsky2019thesis} derives an analogous condition for cluster splitting in the IB,
\begin{equation}		\label{eq:Talis-C_X-condition-for-bif}
	\left( I - \beta \; C_X(\xhat; \beta) \right) \bm{u} = \bm{0} \;,
\end{equation}
which is Equation (12) there. 
Namely, for a cluster $\xhat$ to split it is necessary that $\nicefrac{1}{\beta}$ would be an eigenvalue of an $|\mathcal{X}|$-by-$|\mathcal{X}|$ matrix $C_X(\xhat; \beta)$, whose entries at an IB root are given by 
\begin{equation}		\label{eq:Cxx'_def}
	C_X(\xhat; \beta)_{\x, \xp} := \sum_{\y} \frac{p(\y|\x) p(\y|\xp) p_\beta(\xp|\xhat)}{p_\beta(\y|\xhat) } \;,
\end{equation}
and $I$ is the identity. 
While the coefficients matrix \eqref{eq:Cxx'_def} for the IB differs from the one for fuzzy clustering, inter-cluster interactions are explicitly neglected in both derivations (see therein). 
Indeed, the definition \eqref{eq:Cxx'_def} of $C_X$ involves the coordinates of cluster $\xhat$ alone, as one might expect when considering a root in either decoder or in inverse-encoder coordinates (Section \ref{sec:coords-exchange-for-the-IB}). 
Reversing the dynamics in $\beta$, condition \eqref{eq:Talis-C_X-condition-for-bif} characterizes cluster-merging bifurcations in the IB (Subsection \ref{sub:continuous-IB-bifs}). 

\cite{zaslavsky2019thesis} notes that \eqref{eq:Talis-C_X-condition-for-bif} is closely related to the bifurcation analysis of \cite{gedeon2012mathematical}. 
The latter provides a condition to identify the critical $\beta$ values of IB bifurcations, given in their Theorem 5.3. 
Indeed, their condition is equivalent to \eqref{eq:Talis-C_X-condition-for-bif}, and therefore it also characterizes cluster-merging bifurcations. 
To see this, the necessary condition they give for a phase transition at $\beta$ is that $\nicefrac{1}{\beta}$ must be an eigenvalue of a matrix $V$ (Equation (21) there). When written in our notation, this matrix is given by
\begin{equation}			\label{eq:V-matrix-from-GPD12}
	V(\xhat; \beta)_{\x, \xp} := 
	\sum_{\y} \frac{p(\xp, \y) p(\x, \y) p_\beta(\xhat | \x) }{ p_\beta(\y, \xhat) p(\xp) } \;.
\end{equation}
However, $V$ \eqref{eq:V-matrix-from-GPD12} is readily seen to be the transpose of $C_X$ \eqref{eq:Cxx'_def}, and so they have the same eigenvalues. 

\medskip 
\section{Lyapunov-stability of an optimal IB root}

\label{sec:optimal-IB-root-is-Lyapunov-stable-in-decreasing-beta}

We provide the essential parts of a proof that an optimal IB root is Lyapunov uniformly asymptotically stable on closed intervals which do not contain a bifurcation when following the flow dictated by the IB's ODE \eqref{eq:IB-beta-ODE-in-decoder-coords} in \textit{decreasing} $\beta$. 
Definitions for the below are as in \citep{slotine1991applied} (see especially Section 4.2 there). 
See Subsection \ref{sub:IBRT1-discussion} for a discussion of the results below.

\medskip 
Let $\bm{p}^*(\beta)$ be an optimal IB root. 
We start by rewriting it as an equilibrium of a non-autonomous ODE, as in \cite[Equation (4.1)]{slotine1991applied}. 
Consider the implicit ODE \eqref{eq:implicit-beta-ODE} $\tfrac{d\bm{p}}{d\beta} = - (D_{\bm{p}} F)^{-1} D_\beta F$, specialized to the IB by setting $F := Id - BA_\beta$ \eqref{eq:IB-operator-def}. 
Denote $\delta \bm{p} := \bm{p} - \bm{p}^*$, for an arbitrary $\bm{p}$. 
Subtracting the ODE at $\bm{p}$ from that at $\bm{p}^*$ yields a non-autonomous ODE in the error $\delta\bm{p}$ from the optimal root,
\begin{equation}		\label{eq:IB-ODE-for-error-from-optimal}
	\tfrac{d\delta\bm{p}}{d\beta} = 
	(D_{\bm{p}} F)^{-1} D_\beta F\rvert_{\bm{p}^*} - (D_{\bm{p}} F)^{-1} D_\beta F\rvert_{\bm{p}^* + \delta\bm{p}}
\end{equation}
This rewrites the given root $\bm{p}^*$ as an equilibrium $\delta\bm{p} = \bm{0}$ of this ODE \eqref{eq:IB-ODE-for-error-from-optimal}, simplifying the below. 

Next, we define a Lyapunov function for the flow of the equilibrium $\delta\bm{p} = \bm{0}$ along the ODE \eqref{eq:IB-ODE-for-error-from-optimal}, when its dynamics in $\beta$ is reversed. 
Consider the IB's Lagrangian $\mathcal{L}_\beta := I(X; \hat{X}) - \beta \cdot I(Y; \hat{X})$ as a functional in $\bm{p}$, and let $\mathcal{L}_\beta^* := \mathcal{L}_\beta[\bm{p}^*]$ be its optimal value at $\beta$. Then,
\begin{equation}		\label{eq:IB-free-energy-diff}
	\left(\mathcal{L}_\beta - \mathcal{L}_\beta^*\right)(\delta\bm{p})
\end{equation}
is the desired Lyapunov function. 
Specifically, (i) $\mathcal{L}_\beta - \mathcal{L}_\beta^*$ is positive definite and (ii) $\dbeta{}\left(\mathcal{L}_\beta - \mathcal{L}_\beta^*\right)$ is negative definite, when the dynamics in $\beta$ are reversed. 
Theorem 4.1 in \citep{slotine1991applied} then implies that $\delta\bm{p} = \bm{0}$ is uniformly asymptotically stable, \cite[Definition 4.6]{slotine1991applied}. 

For (i), $\mathcal{L}_\beta - \mathcal{L}_\beta^*$ \eqref{eq:IB-free-energy-diff} is immediately seen to be positive semi-definite from the definition of $\mathcal{L}_\beta^*$, up to technicalities ignored here\footnote{ cf., \cite[Definition 4.7]{slotine1991applied}.}. 
The results of Subsection \ref{sub:discontinuous-IB-bifs} (after Proposition \ref{prop:bif-is-detectable-only-on-enough-clusters}) imply that representing $\bm{p}$ in reduced log-decoder coordinates renders \eqref{eq:IB-free-energy-diff} strictly positive definite. 
Indeed, $D(Id - BA_\beta)$ is non-singular in a reduced representation in these coordinates, as mentioned there, and so an optimal root $\bm{p}^*$ is locally unique. 
As for condition (ii), from the definition of $\mathcal{L}_\beta$ we have
\begin{equation}		\label{eq:beta-deriv-of-IB-functional}
	\tfrac{d}{d\beta}\mathcal{L}_\beta = 
	\tfrac{d}{d\beta} I(X; \hat{X}) - \beta \tfrac{d}{d\beta} I(Y; \hat{X}) - I(Y; \hat{X}) =
	- I(Y; \hat{X}) \;,
\end{equation}
where $\tfrac{d}{d\beta} I(X; \hat{X}) = \beta \tfrac{d}{d\beta} I(Y; \hat{X})$ in the last equality follows by direct calculations similar to those in the Appendix of \cite[Part III]{zaslavsky2019thesis}. 
Thus, for the $\beta$-derivative of \eqref{eq:IB-free-energy-diff} we have
\begin{equation}		\label{eq:beta-deriv-of-IB-free-energy}
	\tfrac{d}{d\beta} \left(\mathcal{L}_\beta - \mathcal{L}_\beta^*\right)(\delta\bm{p}) =
	I(Y; \hat{X})\rvert_{\bm{p}^*} - I(Y; \hat{X})\rvert_{\bm{p}} \;. 
\end{equation}
The latter is always positive semi-definite around $\bm{p}^*$, since by definition \eqref{eq:IB-curve-def} $\bm{p}^*$ yields the maximal $Y$-information subject to a constraint on the $X$-information. 
The same argument as above shows that it is strictly positive definite. 
Finally, reversing the dynamics in $\beta$ leaves the ODE \eqref{eq:IB-ODE-for-error-from-optimal} unaffected but flips the sign of \eqref{eq:beta-deriv-of-IB-free-energy}, rendering it negative definite as required.

\medskip 
\section{Introducing degeneracies cannot increase the nullity of the IB operator in decoder coordinates}
\label{sec:degeneracies-of-IB-operator}

We show that evaluating the kernel of the IB operator on a degenerate representation cannot increase its nullity rank.

\medskip 
Let $\bm{p} \in \Delta\left[\Delta[\mathcal{Y}]\right]$ be an IB root of effective cardinality $T_1$. 
A $T$-clustered representation of a root (e.g., in decoder coordinates) is a function $\pi: \Delta\left[\Delta[\mathcal{Y}]\right] \to \bb{R}^{(|\mathcal{Y}| + 1)\cdot T}$, defined on some neighborhood of the root. 
In the other way around, one can consider the inclusion $i: \bb{R}^{(|\mathcal{Y}| + 1) \cdot T} \rightarrow \Delta\left[\Delta[\mathcal{Y}]\right]$, defined on normalized decoder coordinates in the obvious way. 
Let $\pi$ be a representation of $\bm{p}$ in its effective cardinality $T_1$, and $\tilde{\pi}$ a degenerate one on $T_2 > T_1$ clusters.  
These satisfy
\begin{equation}			\label{eq:reduce-a-roots-representations}
	\pi = reduc \circ \tilde{\pi}
\end{equation}
where $reduc$ is the reduction map\footnote{ Defined similar to the root-reduction Algorithm \ref{algo:root-reduction}, by setting its thresholds to zero, $\delta_1 = \delta_2 = 0$, and replacing its strict inequalities with non-strict ones. Note that Algorithm \ref{algo:root-reduction} has a well-defined output for every input.}. 
In the other way around, one can pick a particular degenerating map $degen$ (e.g., ``split the third cluster to two copies of probability ratio 1:2''). Applying a particular degeneracy and then reducing is the identity,
\begin{equation}			\label{eq:reduc-is-left-inv-of-degen}
	reduc \circ degen = Id \;,
\end{equation}
though not the other way around. 
Let $i$ and $\tilde{i}$ be the inclusions corresponding to $\pi$ and $\tilde{\pi}$ respectively. 
Similar to \eqref{eq:reduce-a-roots-representations}, introducing degeneracy to a root has no effect before including it in $\Delta\left[\Delta[\mathcal{Y}]\right]$,
\begin{equation}			\label{eq:degen-inclusion}
	i = \tilde{i} \circ degen
\end{equation}
Recall from Subsection \ref{sub:IB-as-an-RD-problem-and-non-singularity-conj} (before Conjecture \ref{conj:BA-IB-Jacob-in-decoder-coords-is-nonsingular-at-reduced-root}) that $BA_\beta$ in decoder coordinates may be considered as an operator on $\Delta\left[\Delta[\mathcal{Y}]\right]$. 
To summarize, we have the following diagram,
\begin{equation}			\label{eq:reductions-at-bif-in-dec-coords}
	\xymatrix@C=4em@R=.5em{
		\bb{R} \ar[rr]^(.43){\bm{p}} &&
		\Delta\left[\Delta[\mathcal{Y}]\right] \ar@(ul, ur)^{BA_\beta} \ar@(r,l)[drr]^{\tilde{\pi}} \ar@(dr,l)[ddddrr]^{\pi} && \\
		&& && \bb{R}^{(|\mathcal{Y}| + 1) \cdot T_2} \ar@<5pt>[ddd]^(.45){reduc} \ar@(u,ur)[llu]_{\tilde{i}} \\ \\ \\
		&& && \bb{R}^{(|\mathcal{Y}| + 1) \cdot T_1} \ar@(dl,dl)[uuuull]^i \ar@<5pt>[uuu]^(.55){degen}
	}
\end{equation}

Next, consider the representations of the IB operator $Id - BA_\beta$ \eqref{eq:IB-operator-def} on $T_1$ and $T_2$ clusters. 
These amount to pre-composing with the inclusions and post-composing with the representation maps. 
Denote by $Id_i$ the identity operator on $\bb{R}^{(|\mathcal{Y}| + 1) \cdot T_i}$. 
By identities \eqref{eq:reduce-a-roots-representations}, \eqref{eq:reduc-is-left-inv-of-degen} and \eqref{eq:degen-inclusion}, we have
\begin{multline}
	Id_1 - \pi\circ BA_\beta \circ i =
	reduc \circ degen - reduc \circ \tilde{\pi} \circ BA_\beta \circ \tilde{i} \circ degen \\ = 
	reduc \circ \left[ Id_2 - \tilde{\pi} \circ BA_\beta \circ \tilde{i} \right] \circ degen
\end{multline}
Differentiating, by the chain rule we have
\begin{equation}
	D\left( Id_1 - \pi\circ BA_\beta \circ i \right) =
	D\left(reduc\right) D\left(Id_2 - \tilde{\pi} \circ BA_\beta \circ \tilde{i} \right) D\left(degen\right) \;.
\end{equation}
Multiplying matrices can only enlarge the kernel, $\dim \ker AB \geq \dim \ker A$, and so
\begin{equation}
	\dim \ker D\left( Id_1 - \pi\circ BA_\beta \circ i \right) \geq
	\dim \ker D\left(Id_2 - \tilde{\pi} \circ BA_\beta \circ \tilde{i} \right) 
\end{equation}
Thus, introducing degeneracies to the IB operator in decoder coordinates cannot increase its nullity rank.

\newpage
\bibliographystyle{plain}
\bibliography{my_bib}

\end{document}